\documentclass[11pt]{article}

\usepackage[margin=1in]{geometry}
\usepackage{microtype}
\usepackage{float}
\usepackage[section]{placeins} %
\usepackage{graphicx}
\usepackage{svg}
\usepackage{tikz}
\usepackage{algorithm}
\usepackage{algpseudocode}
\usepackage{tcolorbox}
\usepackage{titlesec}
\tcbuselibrary{breakable}
\usepackage{subcaption}
\usepackage{booktabs}
\usepackage{enumerate}
\usepackage{amsmath, amssymb, amsthm}
\usepackage{thmtools}
\usepackage{mathtools}
\usepackage[mathscr]{euscript}
\usepackage{xparse}
\usepackage{mathabx} %

\usepackage[numbers]{natbib}
\usepackage[colorlinks,citecolor=blue,pagebackref=true]{hyperref}
\usepackage[capitalise,nameinlink]{cleveref}
\usepackage{autonum}
\crefname{equation}{}{}

\newcommand{\st}{\text{ such that }}
\newcommand{\expp}[1]{\exp\paren{#1}}

\newcommand{\tsum}[1]{\sum_{i=1}^{t}{#1}}

\newif\ifverbose
\verbosefalse

\newcommand{\verbose}[1]{\ifverbose\textcolor{gray}{\textit{verbose $\rightarrow$}}\quad #1\textcolor{gray}{\textit{non-verbose $\rightarrow$}}\quad \fi}

\newcommand{\jupyter}[1]{\href[pdfnewwindow=true]{#1}{\smash{\begingroup
\setbox0=\hbox{\includegraphics[height=1.5em]{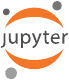}}%
\parbox{\wd0}{\box0}\endgroup}}}
\newcommand{\figcode}[2][py]{\jupyter{https://github.com/chasehmathis/asymptotic-confidence-horizons/blob/main/#2.#1}}

\let\oldstar\star
\renewcommand{\star}[1]{#1^{\oldstar}}

\newcommand{\R}{\mathbb{R}}
\newcommand{\N}{\mathbb{N}}
\newcommand{\Z}{\mathbb{Z}}
\newcommand{\Q}{\mathbb{Q}}

\def\ddefloop#1{
  \ifx
    \ddefloop#1
  \else\ddef{#1}
    \expandafter
    \ddefloop
  \fi}
\def\ddef#1{\expandafter\newcommand\csname c#1\endcsname{\ensuremath{\mathcal{#1}}}}

\ddefloop
ABCDEFGHIJKLMNOPQRSTUVWXYZ
\ddefloop

\renewcommand{\P}{\mathsf P}
\renewcommand{\Q}{\mathsf Q}

\newcommand{\eps}{\varepsilon}

\newcommand{\paren}[1]{\mathchoice{\left(#1\right)}{(#1)}{(#1)}{(#1)}}
\newcommand{\abs}[1]{%
\mathchoice{\left\lvert#1\right\rvert}%
           {\lvert#1\rvert}%
           {\lvert#1\rvert}%
       {\lvert#1\rvert}}
\let\Tilde\widetilde
\NewDocumentCommand{\pspace}{s O{\P}}{%
  \IfBooleanTF{#1}
    {\paren{\Tilde\Omega, \Tilde\cF, \Tilde{#2}}}
    {\paren{\Omega, \cF, #2}}%
}
\newcommand{\bracket}[1]{\left[ #1 \right]}
\newcommand{\curly}[1]{\mathchoice{\left\{#1\right\}}{\{#1\}}{\{#1\}}{\{#1\}}}

\newcommand{\norm}[2][]{\left\lVert #2 \right\rVert_{#1}}
\newcommand{\floor}[1]{\left\lfloor #1 \right\rfloor}

\newcommand{\n}{^{-1}}

\newcommand{\dd}{\mathrm{d}}
\newcommand{\dQ}{\dd \mathsf Q}
\newcommand{\dP}{\dd \P}
\newcommand{\dX}{\dd x}
\newcommand{\dZ}{\dd z}
\newcommand{\dT}{\dd t}
\newcommand{\dV}{\dd v}

\NewDocumentCommand{\EE}{m o}{%
  \IfValueTF{#2}{\mathbb{E}_{#2}\!\bracket{#1}}{\mathbb{E}\!\bracket{#1}}%
}
\NewDocumentCommand{\Var}{m o}{%
  \IfValueTF{#2}{\operatorname{Var}_{#2}\!\bracket{#1}}{\operatorname{Var}\!\bracket{#1}}%
}

\NewDocumentCommand{\p}{m o}{%
  \IfValueTF{#2}{p_{#2}\paren{#1}}{p\paren{#1}}%
}

\NewDocumentCommand{\PM}{m o}{%
  \IfValueTF{#2}{\mathsf{P}_{#2}}{\mathsf{P}}\mathchoice{\!}{}{}{}\paren{#1}%
}

\newcommand{\indic}[1]{1\{{#1}\}}
\newcommand{\simiid}{\mathrel{\smash{\overset{\text{i.i.d.}}{\sim}}\vphantom{\sim}}}
\newcommand{\iid}{{\text{i.i.d.}}}

\ExplSyntaxOn
\bool_new:N \l_chase_infseqt_first_bool
\NewDocumentCommand{\infseqt}{m}{%
  \paren{%
    \bool_set_true:N \l_chase_infseqt_first_bool
    \clist_map_inline:nn { #1 } {
      \bool_if:NTF \l_chase_infseqt_first_bool
        { \bool_set_false:N \l_chase_infseqt_first_bool }
        { ,\, }
      ##1 \sb { t }
    }%
  } \sb { t = 1 } \sp { \infty }
}
\ExplSyntaxOff

\tikzset{
  nv/.style ={circle, color=red, fill=red, inner sep=0.5mm},
  rv/.style ={circle, draw, thick, minimum size=6.5mm, inner sep=0.5mm},
  fv/.style ={rectangle, draw, thick, minimum size=6mm, inner sep=0.5mm},
  lv/.style ={circle, color=red, fill=gray!30, draw, thick, minimum size=6.5mm, inner sep=0.5mm},
  sv/.style ={circle, double, draw, thick, minimum size=6.5mm, inner sep=0.5mm},
  rve/.style={ellipse, draw, thick, minimum size=6.5mm, inner sep=0.5mm},
  rvs/.style ={circle, draw, thick, minimum size=5.5mm, inner sep=0.5mm},
  fvs/.style ={rectangle, draw, thick, minimum size=5mm, inner sep=0.5mm},
  lvs/.style ={circle, color=red, fill=gray!30, draw, thick, minimum size=5.5mm, inner sep=0.5mm},
  svs/.style ={circle, double, draw, thick, minimum size=5.5mm, inner sep=0.5mm},
  rves/.style={ellipse, draw, thick, minimum size=5.5mm, inner sep=0.5mm},
  deg/.style ={->, very thick, color=blue},
  sdeg/.style={*->, very thick, color=blue},
  degl/.style={->, very thick, color=red},
  beg/.style ={<->, very thick, color=red},
  cdeg/.style={{Circle[length=+2pt 2.5,width=+2pt 2.5, fill=none]}->, very thick, color=blue},
  cceg/.style={{Circle[length=+2pt 2.5,width=+2pt 2.5, fill=none]}-{Circle[length=+2pt 2.5,width=+2pt 2.5, fill=none]}, very thick},
  uceg/.style={{Circle[length=+2pt 2.5,width=+2pt 2.5, fill=none]}-, very thick},
  ueg/.style ={very thick}
}

\theoremstyle{plain}
\newtheorem{theorem}{Theorem}[section]
\newtheorem{lemma}[theorem]{Lemma}
\newtheorem{corollary}[theorem]{Corollary}
\newtheorem{proposition}[theorem]{Proposition}
\newtheorem{definition}{Definition}

\newtheorem{condition}{Condition}

\theoremstyle{definition}

\theoremstyle{remark}
\newtheorem{remark}[theorem]{Remark}

\usepackage[normalem]{ulem}
\newcommand{\nto}{{n \to \infty}}

\newcommand{\Deltabar}{{\overline \Delta}}

\title{Confidence Horizons}
\author{Chase Mathis \& Ian Waudby-Smith\vspace{0.2cm}\\{\small Department of Statistics} \\{\small University of California, Berkeley}\vspace{0.2cm}\\
{\small \texttt{\{cmathis,ianws\}@berkeley.edu}}}
\date{\today}

\begin{document}
\maketitle
\begin{abstract}
    Anytime-valid inference enables analysts to continuously monitor their data and stop experiments early. However, the majority of these methods incur a certain conservativeness by remaining valid on infinite time horizons. In practice, a bound on the horizon may be imposed due to budgetary, practical, or ethical constraints. In this paper, we ask the question: \emph{``Is it possible to obtain sharper large-sample anytime-valid inference by forgoing validity beyond some finite time horizon?''}. We provide a positive answer to this question by proposing a family of statistical objects that we call ``confidence horizons''. These objects can be viewed as large-sample confidence sequences on bounded time horizons, or alternatively as group sequential repeated confidence intervals with a maximal number of interim peeking times. We make explicit connections to the group sequential boundaries of Pocock [1977], O'Brien--Fleming [1979], and Wang--Tsiatis [1987].
We derive closed-form distribution functions of certain statistics which can be used to calculate the asymptotic quantiles of confidence horizons exactly, sidestepping the repeated integration typically employed in group sequential methods. 
We illustrate the use of confidence horizons for treatment effect estimation in sequentially randomized experiments under adaptive Neyman allocation.

\end{abstract}
\section{Introduction}
While classical frequentist statistical procedures are only valid at fixed, pre-specified sample sizes, sequential inference provides one way to halt data collection adaptively and make decisions early. One fundamental object in sequential inference is the confidence sequence (CS). Confidence sequences are sequentially valid analogues of confidence intervals (CIs), enjoying frequentist guarantees under repeated ``peeking''. Despite their desirable guarantees, confidence sequences can be conservative in certain real-world settings. This conservativeness tends to be the result of two key factors.

The first is that the vast majority of confidence sequences in the literature are nonasymptotic, meaning that their type-I error guarantees hold in finite samples. Methods with nonasymptotic validity tend to be conservative when compared to their asymptotic counterparts. Analogously, confidence intervals based on the central limit theorem (CLT) are much tighter than those based on a Chernoff bound, and the same intuition carries over to the ``anytime-valid'' sequential setting (see, e.g., \citep[Figure 7]{waudby-smith_time-uniform_2024}). In the same way that it is possible to use a CLT-based confidence interval in place of a nonasymptotic one, there exists work on deriving asymptotic analogues of confidence sequences which trade off finite-sample validity for power and the ability to make weaker moment assumptions \citep{robbins_boundary_1970,waudby-smith_time-uniform_2024,bibaut2022near,waudby-smith_distribution-uniform_2026}. Making this tradeoff does effectively remove the first source of conservativeness.

The second factor is that confidence sequences enable considerable flexibility by permitting optional stopping at arbitrarily large sample sizes. This comes at a cost of ``power'' (broadly defined) at those sample sizes that are typically encountered in practice. For example, a pharmaceutical company is unlikely---and in many cases, unable---to enroll hundreds of thousands or even tens of thousands of patients. Large technology companies cannot perform randomized experiments on more units than the number of users they have, which although may be large, is ultimately finite. While practitioners in both settings may benefit from the ability to stop adaptively on their respective time horizons, they both have practical limits on how long those horizons can be. In this work, we focus on eliminating this second source of conservativeness by deriving methods resembling asymptotic confidence sequences but that are only valid on bounded time horizons. We call these methods ``asymptotic confidence horizons'' (AsympCHs). Empirical differences between AsympCHs and AsympCSs as well as CLT-based CIs are illustrated in \cref{fig:intro-fig}. The following informal theorem summarizes the essence of a confidence horizon, leaving more detailed statements to \cref{def:asympch}, \cref{prop:asympch-q-1/2}, and \cref{theorem:general-mean}.

\begin{tcolorbox}[breakable]
\begin{theorem}[Confidence horizons; \emph{brief \& informal}]\label{theorem:brief-informal}
Suppose $\infseqt{X}$ are independent (or martingale-dependent) random variables from a distribution $\P$ with means $\mu_\P$. Then for a start time $m \in \N$ and a multiplicative horizon $\Delta \geq 1$, we say that $\paren{\bar C_{t}^{\smash{(\Delta)}}}_{t \in [m,\Delta m]}$ forms a $(1-\alpha)$-confidence horizon for $\mu_\P$ if 
\begin{equation}\label{eq:asympch guarantee brief and informal}
\liminf_{m \to \infty} \PM{\forall t \in [m, \Delta m]:  \mu_\P \in \bar C_{t}^{(\Delta)}} \geq 1-\alpha.
\end{equation}
In particular, we derive a distribution function $\Psi(1-\alpha; \Delta)$ for which \eqref{eq:asympch guarantee brief and informal} holds with
\[
\widebar C_t^{(\Delta)} = \frac{1}{t}\sum_{i=1}^t X_i \pm \widehat \sigma_t \frac{\Psi \n(1-\alpha; \Delta)}{\sqrt{t}}.
\]
\end{theorem}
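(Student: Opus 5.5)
The plan is to reduce the event in \eqref{eq:asympch guarantee brief and informal} to a boundary-crossing event for Brownian motion on a compact time interval. Write $S_t = \sum_{i=1}^t (X_i - \mu_\P)$. Then $\mu_\P \in \bar C_t^{(\Delta)}$ for all $t\in[m,\Delta m]$ is equivalent to
\[
\sup_{t \in [m,\Delta m]} \frac{\abs{S_t}}{\widehat\sigma_t \sqrt{t}} \leq \Psi\n(1-\alpha;\Delta).
\]
So it suffices to identify the limiting law of the left-hand side as $m\to\infty$ and to define $\Psi(\cdot;\Delta)$ as its distribution function.

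\textbf{Step 1 (functional CLT).} For each $m$, consider the rescaled process $W_m(s) = S_{\floor{sm}}/(\sigma\sqrt{m})$ for $s\in[1,\Delta]$. Under i.i.d.\ sampling with finite variance, Donsker's invariance principle gives $W_m \Rightarrow W$ in $D[1,\Delta]$, where $W$ is standard Brownian motion. In the martingale-dependent case I would instead invoke a martingale FCLT (e.g.\ McLeish or Hall--Heyde), assuming a Lindeberg condition and convergence of the normalized conditional variances. The key point is that we only need convergence on the compact interval $[1,\Delta]$, which is exactly what the bounded horizon buys. No strong approximation or law of the iterated logarithm is required, unlike for asymptotic CSs.

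\textbf{Step 2 (variance and continuous mapping).} Show that $\sup_{t\in[m,\Delta m]}\abs{\widehat\sigma_t/\sigma - 1}\to 0$ in probability. For i.i.d.\ data this follows from the strong law of large numbers, since $m\to\infty$ forces every $t$ in the window to diverge. The map $f\mapsto \sup_{s\in[1,\Delta]}\abs{f(s)}/\sqrt{s}$ is continuous at continuous paths. By Slutsky's theorem and the continuous mapping theorem, the statistic above therefore converges in law to
\[
M_\Delta=\sup_{s\in[1,\Delta]}\abs{W(s)}/\sqrt{s}.
\]
Set $\Psi(x;\Delta)=\PM{M_\Delta\le x}$. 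Because $\Psi$ is continuous in $x$ (justified in Step 3), the Portmanteau theorem gives equality in the limit, and in particular the $\liminf$ bound.

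\textbf{Step 3 (closed form for $\Psi$; main obstacle).} The validity claim itself only needs the existence and continuity of $\Psi$. The substantive content is a usable formula for it. Apply the time change $U(u)=e^{-u/2}W(e^{u})$, which turns $M_\Delta$ into $\sup_{u\in[0,\log\Delta]}\abs{U(u)}$ for a stationary Ornstein--Uhlenbeck process $U$ started from $N(0,1)$. The distribution of the OU two-sided first-passage time has an eigenfunction expansion in parabolic cylinder functions, so $\Psi$ can be written as a series; integrating over the $N(0,1)$ initial law yields the distribution function. Continuity and strict monotonicity of $\Psi$ in $x$ come from the nondegeneracy of Gaussian suprema, e.g.\ Tsirelson's theorem. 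That guarantees the quantile $\Psi\n(1-\alpha;\Delta)$ is well defined and that the limit is attained. Getting a genuinely closed form is the hardest step. Failing that, a discretized (group-sequential) version over finitely many looks reduces to multivariate normal rectangle probabilities, and Step 1 carries over verbatim.
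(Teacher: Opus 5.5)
Your argument is correct for the statement as posed (fixed $\P$, fixed $\Delta$), but it takes a genuinely different route from the paper.

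\textbf{The paper's route.} The paper never uses a functional CLT. The formal versions (\cref{prop:asympch-q-1/2}, \cref{theorem:general-mean}, \cref{theorem:mgale-asympcv}) all go through \cref{lem:general-lemma}. That lemma has two inputs. The first is a variance estimator consistent at rate $1/\log t$. The second is a strong coupling $\abs{(S_t - t\tilde\mu_{\P,t})/(t\tilde\sigma_{\P,t}) - W(t)/t} = \bar o_{\cP}(t^{\gamma}\log t)$ with $\gamma<-1/2$. This coupling comes from a distribution-uniform KMT-type approximation (\cref{lemma:dist-uniform-kmt}, which is where the $\kappa>2$ moments enter) or from Strassen's martingale approximation (\cref{lemma:strong-approximation-mgale-dependence}). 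The coverage probability is then sandwiched between values of $F$ at perturbed arguments. These are controlled by a Cameron--Martin total-variation anticoncentration bound (\cref{lemma:cameron-martin-lemma}), which plays the role of your appeal to Tsirelson. A P\'olya-type lemma (\cref{lemma:dist-uniform-polya-continuity}) then upgrades pointwise convergence to uniform convergence.

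\textbf{What the paper's machinery buys.} It delivers uniformity over $\P\in\cP$, over $\alpha$, and over $\Delta\in[1,\Deltabar]$. When $q\ne 1/2$ this extends to all $\Delta\ge 1$, because the polynomial coupling rate tames the far end of the window. It also gives uniformity over arbitrary look sets $\cT$, which is what yields the $K$-uniform group sequential results.

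\textbf{What your route buys.} Your Donsker/martingale-FCLT plus continuous-mapping argument gives simplicity and weaker hypotheses for the pointwise claim. In the i.i.d.\ case, finite variance suffices. The variance estimator need only be consistent uniformly over the window, with no $\log t$ rate. In the martingale case, a conditional Lindeberg condition together with $V_{\lfloor sm\rfloor}/m\to\sigma^2 s$ in probability suffices; both follow from \cref{condition:non-vanishing-variances} and \cref{condition:lindeberg-type-mgale}. What your route does not give by itself is distribution-uniformity, which would need a uniform invariance principle. Nor does it give uniformity over unbounded $\Delta$, which would need a maximal inequality for the tail of the window.

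\textbf{Step 3.} Your Step 3 agrees with \cref{lemma:maximum-equiv-lamperti} up to a time scaling ($s=e^{u}$ versus the paper's $s=e^{2t}$). The paper conditions on $W(1)=z$ and integrates the hitting-time expansions against $\phi(z)$ in closed form. It uses \citet{alili_representations_2005} for the one-sided case and \citet{blanchet-scalliet_distribution_2025} for the two-sided case, the latter written with Kummer functions.
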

\end{tcolorbox}

\begin{figure}
    \centering
    \includegraphics[width=0.8\linewidth]{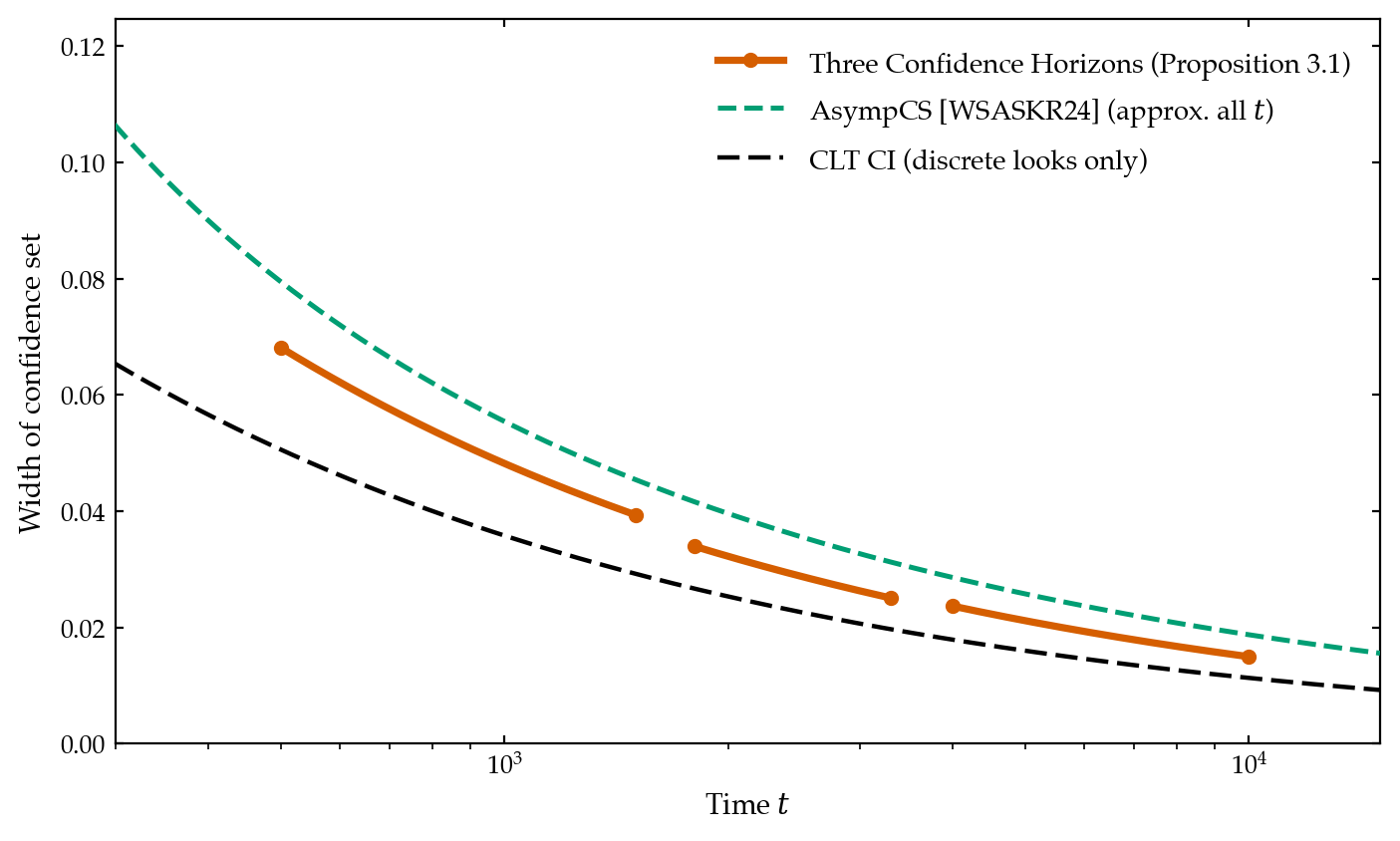}
    \caption{A comparison of widths of CLT-based confidence intervals, asymptotic confidence sequences (AsympCSs) \citep{waudby-smith_time-uniform_2024,bibaut2022near,waudby-smith_distribution-uniform_2026}, and three particular asymptotic confidence horizons (AsympCHs). The dashed green line uses the AsympCS of \citet[Theorem 2.8]{waudby-smith_time-uniform_2024}, the dotted black line uses the usual CLT-based CI, and the three solid orange lines are three separate AsympCHs with different start times and multiplicative horizons.}
    \label{fig:intro-fig}
\end{figure}

To make matters more formal, suppose that data $X_1, X_2, \dots$ are observed \iid{} in an online stream from a distribution $\P$ and it is of interest to estimate the mean $\mu = \EE{X_1}[\P]$.\footnote{We consider non-independent and non-identically distributed data in \cref{sec:martingales}.} A classical (fixed-$n$) 0.95-confidence interval $C_n$ for $\mu$ is an interval satisfying
\begin{equation}\label{eq:intro-ci}
    \lim_\nto \P \left ( \mu \in C_n \right ) = 0.95,
\end{equation}
while an asymptotically valid confidence sequence is a sequence $\paren{C_k^{(m)}}_{k = m}^\infty$ satisfying 
\begin{equation}\label{eq:intro-cs}
    \lim_{m \to \infty} \P \left ( \forall k \in \{m, m+1, \dots \},\ \mu \in C_k^{(m)} \right ) = 0.95.
\end{equation}
See \citep{waudby-smith_time-uniform_2024,bibaut2022near,waudby-smith_distribution-uniform_2026} for example constructions satisfying \eqref{eq:intro-cs}.
Clearly, \eqref{eq:intro-cs} provides more flexibility to the analyst than \eqref{eq:intro-ci} in the sense that inference can be carried out at sample sizes of $m, m+1, m+2$, and so on, without penalties for multiple data-dependent peeks. However, \eqref{eq:intro-ci} ``invests'' all of its error budget on a single sample size of $n$, whereas \eqref{eq:intro-cs} spreads that budget over infinitely many and arbitrarily large sample sizes. In this work, we develop \emph{confidence horizons} that aim to occupy a middle ground between \eqref{eq:intro-ci} and \eqref{eq:intro-cs}. While these are defined formally in \cref{def:asympch}, the essential property they must satisfy (in juxtaposition with \eqref{eq:intro-ci} and \eqref{eq:intro-cs}) is that for a fixed multiplicative horizon window size $\Delta \geq 1$,
\begin{equation}\label{eq:intro-ch}
    \lim_{m \to\infty} \P \left ( \forall k \in \{m, \dots, \Delta m \},\ \mu \in C_k^{(m)} \right ) = 0.95.
\end{equation}
Taking $\Delta = 1$ and $\Delta \to \infty$ recover the guarantees in \eqref{eq:intro-ci} and \eqref{eq:intro-cs}, respectively. We note that if the horizon $\{m ,\dots, \Delta m\}$ were replaced by a subset with a small and evenly spaced number of peeking times (e.g., $\sim 10$), then \eqref{eq:intro-ch} would be reminiscent of group sequential repeated confidence intervals \citep{jennison_group_2000}.
We make explicit connections with (and some theoretical upgrades to) group sequential methods in \cref{sec:theoretical-conseq-gs}.
 See \cref{fig:methods-validity-windows} for an illustration of the difference in guarantees between AsympCHs and other sequential inference methods found in the literature.
\begin{figure}[!htbp]
    \centering
    \includegraphics[width=0.9\linewidth]{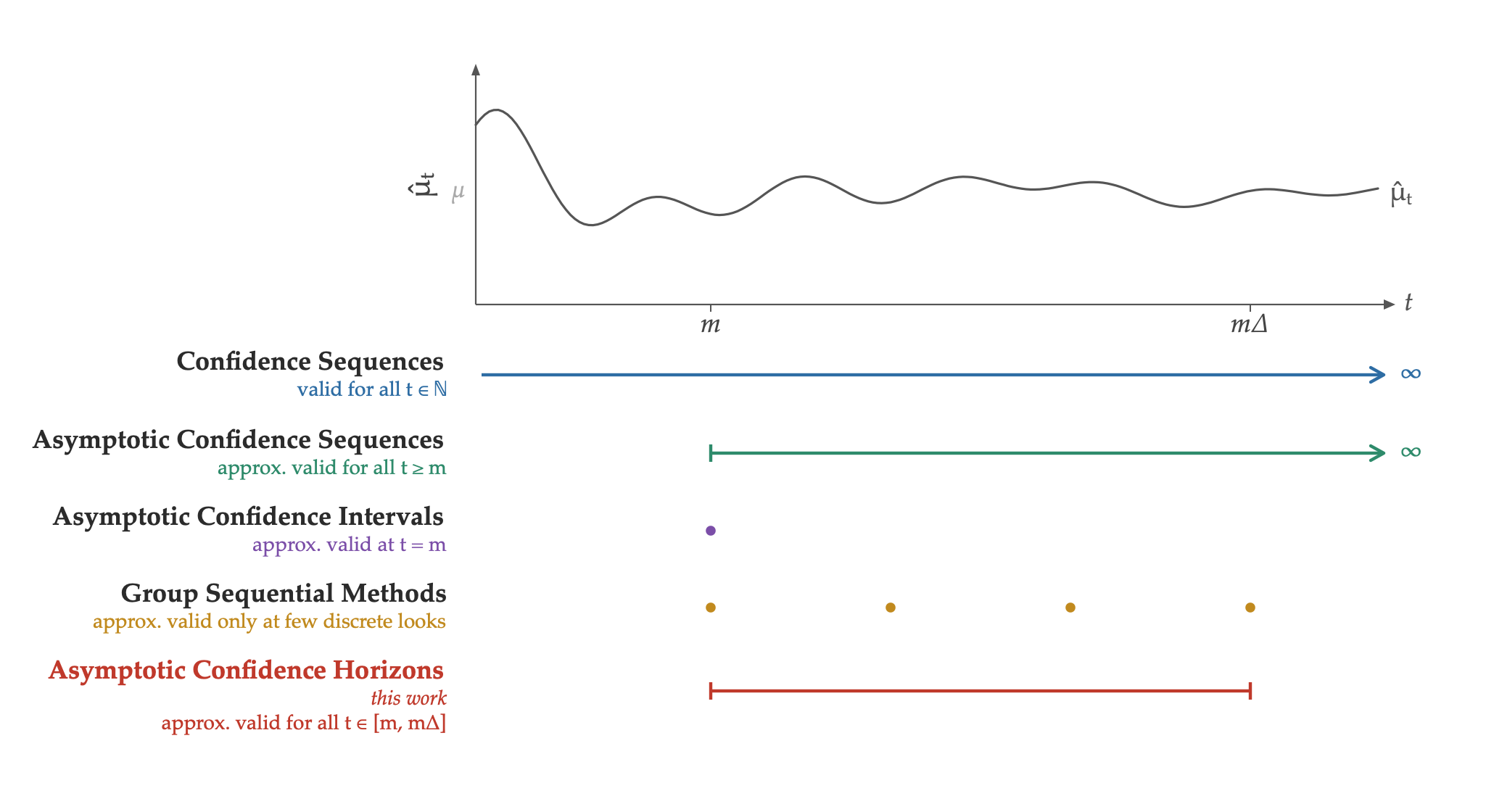}
    \caption{An illustration of how confidence horizons compare to other objects in the literature on sequential inference. The lines (or dots in the case of group sequential methods) signify where ``peeking'' at data is allowed (formally the stopping times at which inference remains valid). Confidence sequences are valid for all $t \in \N$, asymptotic ones are approximately valid at all $t = m, m+1, \dots$ for large $m$; group sequential methods are approximately valid at a (small) finite set of prespecified times $t_1, t_2, \dots, t_K$ for large $t_1$; asymptotic confidence horizons are approximately valid for all $t \in \N \cap [m, \Delta m]$ and for large $m$. 
    } 
    \label{fig:methods-validity-windows}
\end{figure}

\paragraph{Outline.}
The rest of the paper proceeds as follows. In \cref{sec:preliminaries}, we provide background on nonasymptotic confidence sequences and their asymptotic counterparts. We provide a definition for asymptotic confidence horizons in \cref{sec:power-q} and explicitly derive such horizons for means with \iid{} data in \cref{theorem:general-mean} and with martingale-dependent data in \cref{theorem:mgale-asympcv}. In \cref{sec:neyman-allocation}, we illustrate the use of asymptotic confidence horizons for estimating treatment effects in adaptive sequential experiments. In \cref{sec:theoretical-conseq-gs}, we draw connections between asymptotic confidence horizons and repeated confidence intervals in group sequential designs. As a byproduct of our analyses, we show that group sequential designs are valid under martingale dependence and in various uniform senses. In \cref{sec:simulation-study}, we conduct simulations studying coverage and widths of confidence horizons. Each figure has a clickable Jupyter logo $\smash{\begingroup
\setbox0=\hbox{\includegraphics[height=1.5em]{figures/jupyter-logo.pdf}}%
\parbox{\wd0}{\box0}\endgroup}$, which links to the code that generates that figure.\footnote{\href{https://github.com/chasehmathis/asymptotic-confidence-horizons}{https://github.com/chasehmathis/asymptotic-confidence-horizons}}

\paragraph{Notation.}
For a probability distribution $\P$, we let $\pspace$ be a filtered probability space. We use the notation $\pspace[\cP]$ to represent a collection of probability spaces $\pspace_{\P \in \cP}$. 
For a random variable $X$, we write its expectation as $\EE{X}[\P] = \int x \,\dP$. For a sequence of random variables $\infseqt{X}$ we write $\infseqt{X} \sim \cP$ to mean that $\infseqt{X}$ are defined on $(\Omega, \cF)$ with potentially different laws for each $\P \in \cP$.
For a sequence of random variables $\infseqt{X}$, we define the conditional means and variances by
\[
\mu_{\P, t} = \EE{X_t \mid X_{1}, \dots, X_{t-1}}[\P]\quad\text{and} \quad \sigma^2_{\P, t} = \EE{X_t^2 \mid X_{1}, \dots, X_{t-1}}[\P] - \mu_{\P, t}^2,
\]
respectively.
If $\infseqt{X}$ are independent, the above reduce to $\mu_{\P,t} = \EE{X_t}[\P]$ and $\sigma^2_{\P,t} = \EE{X_t^2}[\P] - \mu_{\P,t}^2$.
Define the
cumulative mean and variance parameters:
\[
\tilde \mu_{\P, t} = t \n \tsum{\mu_{\P, i}}\quad\text{and} \quad V_{\P, t} = \tsum{\sigma^2_{\P, i}}, \quad \tilde \sigma^2_{\P, t} = t \n V_{\P, t}.
\]

\paragraph{Brief Literature Review.} Various works have studied problems at the intersection of sequential inference and bounded time horizons. For instance, \citet{waudby2024estimating}, \citet{voracek2026star},
\citet{taga_learning_2026}, and \citet{clerico_time-sensitive_2026} consider nonasymptotic anytime-valid inference but tune certain parameters so that confidence intervals and tests are most effective at a fixed time horizon. Taking a different approach, \citet{koning_anytime_2026} focuses on the ability to construct nonasymptotic anytime-valid tests $\phi_1, \dots, \phi_N$ such that $\phi_N$ is the same test as some classical (non-sequential) test at time $N$. 
Since we operate in the asymptotic regime, the two most closely related works to confidence horizons are (1) asymptotic confidence sequences \citep{waudby-smith_time-uniform_2024,bibaut2022near,waudby-smith_distribution-uniform_2026} and (2) group sequential designs \citep{obrien1979multiple,pocock_group_1977,jennison_interim_1989,jennison_distribution_1997}. Given their close relationship, we review the former in \cref{sec:preliminaries} and compare to (and provide theoretical upgrades for) the latter in \cref{sec:connection-gs}. Finally, we mention the work of \citet{chugg2026post} and \citet{massiani2026asymptotic} who study asymptotic analogues of so-called \emph{e-processes}, which are otherwise central objects in nonasymptotic sequential inference as they satisfy Ville's maximal inequality. We neither make use of e-processes nor Ville's inequality in this work.

\section{Preliminaries}\label{sec:preliminaries}
In this section, we review relevant facts about nonasymptotic confidence sequences, their asymptotic counterparts, and various notions of uniformity that are important to consider in the asymptotic regime.

\subsection{Confidence Sequences}\label{sec:preliminaries-conf-seq}
Let $\infseqt{X} = (X_1, X_2, \ldots)$ be an \iid{} sequence of random variables drawn from a distribution $\P$ equipped with a parameter of interest, $\theta_{\P}$ (e.g.~the mean $\mu_{\P}$). There exist many methods to construct a nonasymptotic confidence interval $\dot C_t = \dot C(X_1, \ldots, X_t)$ under different assumptions on $\P$, such as boundedness or sub-Gaussian tails \citep{hoeffding_probability_1963,mcdiarmid_method_1989}. When using one of these methods, $\dot C_t$ has the property that
\begin{equation}\label{eq:conf-int}
\forall t \in \N, \PM{\theta_{\P} \in \dot C_t} \geq 1-\alpha.
\end{equation} 
The probability statement in \cref{eq:conf-int} is valid only for a \emph{prespecified} number $t \in \N$. 
Said differently, the analyst cannot continuously compute $\dot C_1, \dot C_2, \ldots, \dot C_\tau$, and draw inferences from $C_\tau$ at a data dependent stopping time $\tau$.

Confidence sequences (CS) are sequences of confidence intervals $\infseqt{\bar C}$ so the analyst \emph{can} continuously calculate $\bar C_1, \dots, \bar C_\tau$, stopping at a data-dependent stopping time $\tau$ without sacrificing control of type-I error.\footnote{We use the bar notation, $\bar C_t$, in contrast to dot notation, $\dot C_t$, to indicate that we are interested in cumulative coverage over $i = 1, 2, \ldots, t$ instead of single-point coverage just at time $t$.} To be precise, a CS for a parameter $\theta_{\P}$ is a sequence of confidence intervals $\infseqt{\bar C}$ 
that are functions of the data $\infseqt{X} \sim \P$ with the property that
\begin{equation}\label{eq:conf-seq}
   \PM{\forall t \in \N, \theta_{\P} \in \bar C_t} \geq 1-\alpha, ~~~\text{or equivalently,}~~~\PM{\theta_{\P} \in \bar C_\tau} \geq 1-\alpha
\end{equation}for all stopping times $\tau$ \citep[Lemma 3]{howard_time-uniform_2021}. 
The key difference between \cref{eq:conf-int} and \cref{eq:conf-seq} is that the universal quantifier ``$\forall t$'' lies \emph{inside} the probability statement for confidence sequences. 
This distinction in requirements is often called ``time-uniformity'' or ``anytime validity''.\footnote{In the time-uniform inference literature, the terms ``time'' and ``sample size'' are often used interchangeably and we do the same throughout this work.} 

Notice that the inequalities in \eqref{eq:conf-int} and \cref{eq:conf-seq} are valid in finite samples; in other words, they are nonasymptotically valid. While desirable in certain problems, nonasymptotic methods tend to be conservative relative to their asymptotic counterparts and necessarily require stronger moment or tail assumptions; see \citet{bahadur_nonexistence_1956} for a formal description of the latter claim. For this reason, when faced with fixed sample sizes in non-sequential settings, confidence intervals used in practice are typically derived via the central limit theorem and only enjoy asymptotic validity.

In the same way that \eqref{eq:conf-seq} is a time-uniform analogue of \eqref{eq:conf-int}, there exist time-uniform analogues of CLT-based confidence intervals. Such objects tend to be referred to as ``asymptotic confidence sequences''; see \citet{robbins_boundary_1970,waudby-smith_time-uniform_2024,waudby-smith_distribution-uniform_2026}, and \citet{bibaut2022near}. We review these objects here.
\begin{definition}[Asymptotic Confidence Sequences]\label{def:AsympCS}
Let $\infseqt{X} \sim \P$ and $m \in \N$. The sequence of intervals $(\bar C_{t}^{\smash{(m)}})_{t \geq m}$ is said to form a $(1-\alpha)$-\uline{Asymptotic Confidence Sequence} (AsympCS) for a parameter $\theta_\P$ if 
\begin{equation}\label{eq:def-lim-asympcs}
\liminf_{m \to \infty} \PM{\forall t \geq m: \theta_{\P}  \in \bar C_{t}^{(m)}} \geq 1-\alpha.
\end{equation}
If the limit inferior is a limit and the inequality is an equality, then the AsympCS is said to be \uline{sharp}.
\end{definition}
The statement in \cref{eq:def-lim-asympcs} can be interpreted as saying: for large enough initial data collection $m$, 
the analyst can peek at new data and stop early without inflating type-I error too much. The methods that we derive in this work will all be ``sharp'', but we provide the more general definition for completeness.
\begin{remark}
It is important to distinguish AsympCSs from delayed-start
\emph{nonasymptotic} confidence sequences. For instance, \citep[Eq.~20]{robbins_statistical_1970} gives a formula to construct a nonasymptotic $\bar{C}^{\rm R}_{t,m}$ for some $m\geq 1$ that has the property 
\[
\PM{\forall t \geq m: \theta_{\P} \in \bar{C}^{\rm R}_{t,m}}  \geq 1-\alpha.
\]Note that this probability must hold for all finite samples greater than or equal to some finite $m$ whereas the probability statement in \cref{eq:def-lim-asympcs} need only hold in a limiting sense as $m \to \infty$.
\end{remark}

\section{Confidence Horizons}\label{sec:power-q}
In this section, we define and derive our main methodological contribution: \emph{asymptotic confidence horizons} or simply ``confidence horizons'' for short. We show how it is possible to form confidence horizons that decay at a rate $t^{-q}$ for any given $q \in \R$ over a bounded time horizon. The class of power $q$ boundaries differs from recent work in anytime-valid inference, where boundaries tend to scale as $\bar C_t \asymp (\log t/t)^{1/2}$ or $(\log \log t/t)^{1/2}$ \citep{waudby-smith_time-uniform_2024, howard_time-uniform_2021}. For $q \in \{1/2, 1\}$, we recover a continuous version of common group sequential designs from \citet{pocock_group_1977} and \citet{obrien1979multiple}; see \cref{sec:connection-gs}.

\begin{figure}[!hb]
    \centering
\includegraphics[width=0.8\linewidth]{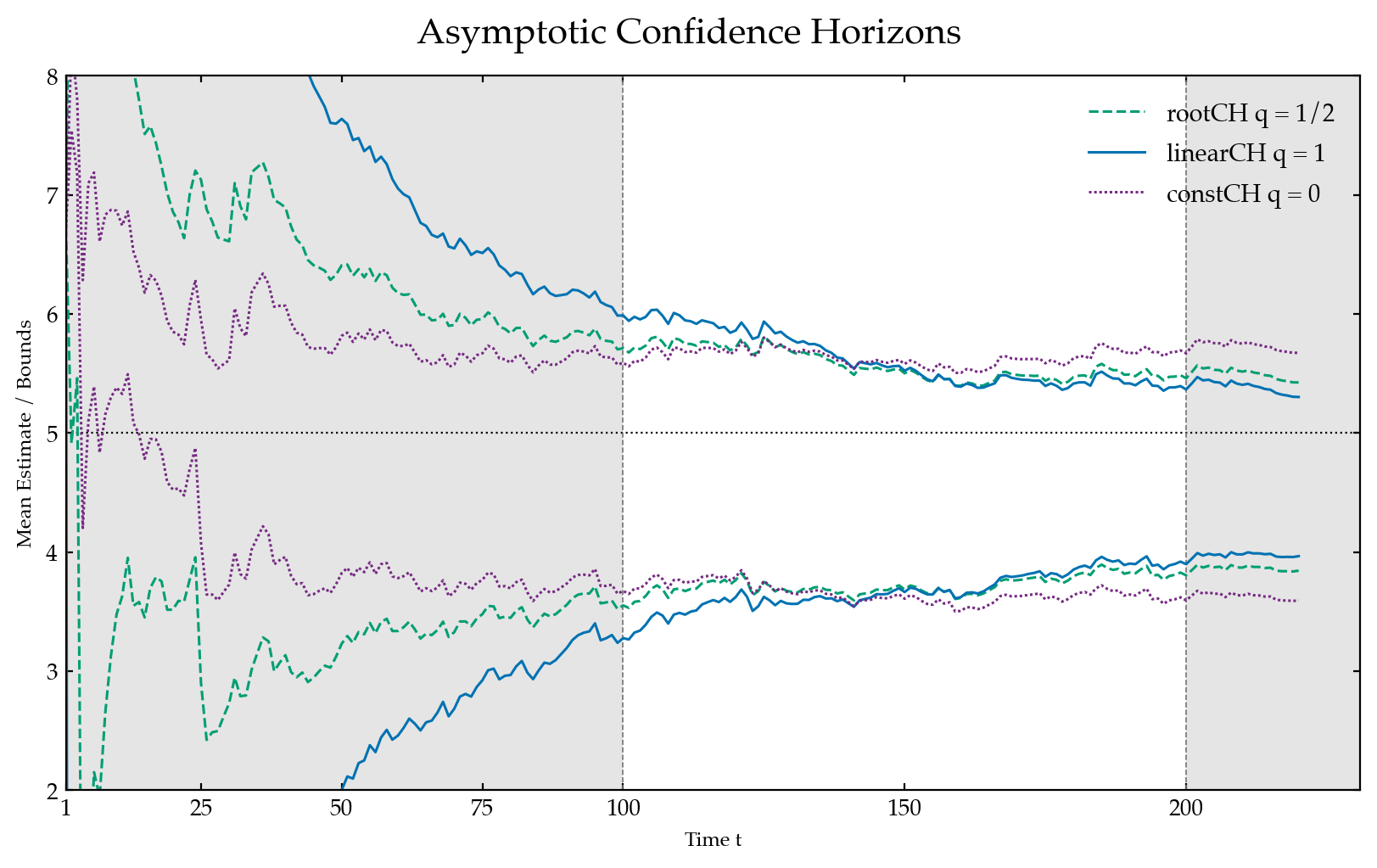}
    \hfill
   \caption{Example of possible confidence horizons illustrating their differences from confidence sequences. The rootCH ($q=1/2$, green) and the linearCH ($q=1$, solid blue) are the continuous analogues of Pocock and O'Brien--Fleming respectively. While the rootCH is tighter for most of the window, the linearCH is more aggressive at the end of the window. 
   \figcode{figure3}}
\end{figure}

\subsection{Defining Confidence Horizons}
Recall from \cref{def:AsympCS} that AsympCSs require approximate uniform coverage for an infinite horizon $\{m, m+1, m+2,\dots \}$ beyond some initial start time $m$.
A confidence horizon aims to forgo that coverage beyond $m \Delta$ for some $\Delta \geq 1$, hence ultimately aiming for approximate validity in the 
window $[m, \Delta m]$. A formal definition follows.

\begin{definition}[Asymptotic Confidence Horizon]\label{def:asympch}
Let $\infseqt{X} \sim \cP$ with parameters of interest $\paren{\theta_\P}_{\P \in \cP}$. Let $m \in \N$ and $\Delta \geq 1$. The sequence of intervals $\paren{\bar C_{t}^{\smash{(m,\Delta)}}}_{t \in [m, \Delta m]}$ is said to form a $(1-\alpha)$-\uline{Asymptotic Confidence Horizon} (AsympCH) for $\paren{\theta_\P}_{\P \in \cP}$ if for every $\P \in \cP$
\begin{equation}\label{eq:asympch-def}
 \liminf_{m \to \infty} \PM{\forall t \in [m, \Delta m]: \theta_\P \in \bar C_{t}^{(m, \Delta)}} \geq 1-\alpha.
\end{equation}If the limit inferior is a limit and the inequality is an equality, then the AsympCH is said to be \uline{sharp}.
\end{definition}
Since we do not consider nonasymptotic confidence horizons in this work, we sometimes omit the ``asymptotic'' qualifier when referring to asymptotic confidence horizons.
The statement in \cref{eq:asympch-def} can be interpreted as saying: for large enough initial sample size $m$, the analyst can peek at new data and stop early without inflating type-I error, up until some prespecified horizon $m\Delta$. 

The guarantee in \cref{eq:asympch-def} is pointwise over $\cP, \Delta,$ and $\alpha \in (0,1)$. We now consider uniform variants of \cref{def:asympch}. 

\begin{definition}[Weakly and Strongly Uniform Confidence Horizons]\label{def:uniform-asympch}
Let $\infseqt{X} \sim \cP$ with parameters of interest $\paren{\theta_\P}_{\P \in \cP}$. Let $m \in \N$ and $\Delta \geq 1$. Fix $\Deltabar \in [1, \infty)$. The collection of intervals $\paren{\bar C_{t}^{\smash{(m,\Delta)}}}_{t \in [m, \Delta m]}$ is said to be a $\cP$-\uline{weakly uniform} $(1-\alpha)$-AsympCH for $\paren{\theta_\P}_{\P \in \cP}$ if
\begin{equation}\label{eq:all-uniform}
\lim_{m \to \infty}\sup_{\Delta \in [1, \Deltabar]}\sup_{\alpha \in (0, 1)}\sup_{\P \in \cP} \abs{\PM{\forall t \in [m, \Delta m]: \theta_\P \in \bar C_{t}^{(m, \Delta)}}- (1-\alpha)} = 0.
\end{equation}
It is said to be $\cP$-\uline{strongly uniform} if the same holds with $\Deltabar = \infty$. 
\end{definition}

 For all $q \neq 1/2$, the confidence horizons are strongly uniform; when $q = 1/2$ the confidence horizon is shown to be weakly uniform. 
\subsection{Confidence Horizons for Independent Data}
We now use confidence horizons for a canonical inference problem: repeated inference of the mean of an \iid{} sequence. As a warmup, we state a special case of our main result (\cref{theorem:general-mean}) for $q = 1/2$ meaning the boundary decays at a $(1/t)^{1/2}$ rate. We may call this fundamental AsympCH the ``rootCH".

\begin{proposition}[Confidence Horizons for the Mean when $q=1/2$]\label{prop:asympch-q-1/2}
Let $\infseqt{X} \simiid \P$ with mean $\mu_\P$, variance $\sigma^2_\P > 0$, and for $\kappa > 2, \EE{\abs{X_1}^\kappa} < \infty$. Let $\hat \mu_t$ be the sample mean and $\hat \sigma^2_t$ the sample variance. Define, for a standard Wiener process $W(t)$, the random variable
\begin{equation}\label{eq:x-delta-q}
\zeta(\Delta, q) = \sup_{s \in [1,\Delta]}\abs{W(s)}s^{q-1},
\end{equation}with a distribution function $\Psi(x; \Delta, q) = \PM{\zeta(\Delta, q) \leq x}$. Then we have
\begin{equation}\label{eq:asympch-q-1/2}
\bar C_{t}^{(\Delta)} \equiv \hat \mu_t \pm \frac{\hat \sigma_t \Psi \n(1-\alpha; \Delta, 1/2)}{\sqrt{t}}.
\end{equation}forms a sharp $(1-\alpha)$-AsympCH for $\mu_\P$ over $[m, \Delta m]$, meaning 
\[
\lim_{m \to \infty}\PM{\forall t \in [m, \Delta m]: \mu_\P \in \bar C_{t}^{(\Delta)}} = 1-\alpha.
\]
\end{proposition}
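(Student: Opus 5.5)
The plan is to reduce the coverage event to a functional of a rescaled partial-sum process and then pass to the Wiener limit. Write $S_t = \sum_{i=1}^t (X_i - \mu_\P)$. The event $\{\forall t \in [m,\Delta m]: \mu_\P \in \bar C_t^{(\Delta)}\}$ equals
\[
\Bigl\{\sup_{t \in [m,\Delta m]} \frac{\abs{S_t}}{\hat\sigma_t\sqrt{t}} \le \Psi\n(1-\alpha;\Delta,1/2)\Bigr\}.
\]
Define the process $W_m(s) = S_{\floor{sm}}/(\sigma_\P\sqrt m)$ for $s \in [1,\Delta]$. With $t=\floor{sm}$, we have $\abs{S_t}/(\sigma_\P\sqrt t) = \abs{W_m(s)}\,(m/t)^{1/2}$, and $(m/t)^{1/2}$ converges uniformly on $[1,\Delta]$ to $s^{-1/2} = s^{q-1}$ with $q=1/2$. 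So the statistic is asymptotically $\sup_{s\in[1,\Delta]}\abs{W_m(s)}s^{-1/2}$, times a correction factor $\sigma_\P/\hat\sigma_t$.

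The first step is Donsker's invariance principle. Since $\sigma_\P^2<\infty$, $W_m \Rightarrow W$ in $D[0,\Delta]$, with limit in $C[0,\Delta]$. The functional $f \mapsto \sup_{s\in[1,\Delta]}\abs{f(s)}s^{-1/2}$ is continuous in the sup norm on $[1,\Delta]$, where the weight is bounded. By the continuous mapping theorem, $\sup_{s\in[1,\Delta]} \abs{W_m(s)}s^{-1/2} \Rightarrow \zeta(\Delta,1/2)$.

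The second step handles the variance estimate and the rounding.
\begin{itemize}
\item \emph{Variance.} I would show $\sup_{t\ge m}\abs{\hat\sigma_t/\sigma_\P - 1}\to 0$ almost surely. This follows from the strong law of large numbers applied to $X_i$ and $X_i^2$. The finite $\kappa>2$ moment is more than enough here; I expect it is assumed only because the general theorem needs it, for instance via a strong approximation.
\item \emph{Rounding.} The error from replacing $t$ by $\floor{sm}$ and from the factor $(m/t)^{1/2}$ versus $s^{-1/2}$ is a uniformly $O(1/m)$ multiplicative perturbation. Since $\sup_{s\in[1,\Delta]}\abs{W_m(s)}$ is tight, this contributes an $o_P(1)$ term.
\end{itemize}
Slutsky's lemma then gives
\[
\sup_{t\in[m,\Delta m]}\abs{S_t}/(\hat\sigma_t\sqrt t) \Rightarrow \zeta(\Delta,1/2).
\]

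The last step, which I expect to be the main obstacle, is going from weak convergence to convergence of the probability at the specific point $x_\alpha=\Psi\n(1-\alpha;\Delta,1/2)$. This requires $\Psi(\cdot;\Delta,1/2)$ to be continuous at $x_\alpha$ and the quantile to be well defined, which in effect means $\Psi$ is continuous and strictly increasing on $(0,\infty)$.
\begin{itemize}
\item \emph{Continuity.} The supremum of a Gaussian process has a continuous distribution away from a possible atom at the left endpoint of its support. I would cite Tsirelson's theorem, or argue directly: $\zeta\ge\abs{W(1)}$ forces the support to be $[0,\infty)$, and for $x>0$ the level set $\{\zeta=x\}$ is null.
\item \emph{Fallback.} Alternatively, I would borrow the closed-form distribution function that the paper derives later, but as an intermediate step I would rely on Tsirelson's result.
\end{itemize}
With continuity in hand, the portmanteau theorem gives
\[
\PM{\forall t\in[m,\Delta m]:\mu_\P\in\bar C_t^{(\Delta)}}\to\Psi(x_\alpha;\Delta,1/2)=1-\alpha,
\]
which is the claimed sharpness.
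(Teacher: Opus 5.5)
Your proposal is correct, but it takes a different route from the paper. The paper does not prove this proposition on its own. It is the $q=1/2$, singleton-$\cP$ case of \cref{theorem:general-mean}, which is deduced from \cref{lem:general-lemma}. That argument has four ingredients:
\begin{enumerate}
\item a KMT-type strong coupling of the normalized partial sums with $W(t)/t$ at rate $\bar o_{\cP}(t^{\gamma}\log t)$ with $\gamma<-1/2$ (\cref{lemma:dist-uniform-kmt}), which is where the $\kappa>2$ moment enters;
\item a rate $o(1/\log t)$ for the variance estimator (\cref{condition:strong-consistent-var-estimator});
\item Cameron--Martin anti-concentration of $\zeta$ (\cref{lemma:cameron-martin-lemma}), used to absorb the resulting perturbations of the threshold;
\item a P\'olya-type upgrade to uniformity in $x$ (\cref{lemma:dist-uniform-polya-continuity}).
\end{enumerate}

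Your Donsker, continuous-mapping and SLLN/Slutsky argument is more elementary. As you suspected, it shows that for this pointwise statement finite variance suffices and no rate is needed for $\hat\sigma_t$. What it does not give is what the heavier machinery is built for:
\begin{itemize}
\item uniformity over $\P\in\cP$, $\alpha\in(0,1)$ and $\Delta\in[1,\Deltabar]$ (all $\Delta\ge 1$ when $q\neq 1/2$);
\item uniformity over grids $\cT$, which the group-sequential results rely on;
\item the extension to martingale-dependent data.
\end{itemize}
A weak-convergence argument for a fixed law yields none of these without a uniform invariance principle.

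On your last step, the ``argue directly'' clause only restates what has to be shown, namely that $\{\zeta=x\}$ is null for $x>0$. You should rely on one of two genuine arguments instead:
\begin{itemize}
\item Tsirelson's theorem, applied to the centered Gaussian process $(s,\epsilon)\mapsto \epsilon W(s)s^{-1/2}$ on $[1,\Delta]\times\{\pm 1\}$. There is no atom at the left endpoint $0$ of the support, because $\zeta\ge|W(1)|$.
\item The paper's Cameron--Martin bound $\mathsf P(\zeta\in(x,x+\eta])\le 2\eta\,(1+\tfrac14\log\Delta)^{1/2}$, which gives continuity directly.
\end{itemize}
Strict monotonicity of $\Psi$ is not actually needed. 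Continuity of $\Psi$ at $x_\alpha$ already gives $\Psi(\Psi^{-1}(1-\alpha))=1-\alpha$ for the generalized inverse, and that is all the portmanteau step uses.
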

\cref{prop:asympch-q-1/2} is the consequence of a more general result in \cref{theorem:general-mean}. However, we first presented the \iid{} case with a square root boundary due to its similarity to familiar CLT-based confidence intervals. In fact, one can trivially obtain the latter as a corollary of \cref{prop:asympch-q-1/2}. 

\begin{corollary}\label{corollary:wald-type}
Assume the same as in \cref{prop:asympch-q-1/2}. Taking $\Delta = 1$ returns the familiar large-sample confidence interval for the mean of \iid{} random variables. That is, 
\begin{equation}\label{eq:wald-type-ci}
\dot C_{m}^{\rm CLT} = \hat \mu_m \pm \frac{\hat \sigma_m z_{1-\alpha/2}}{\sqrt m}
\end{equation}is an asymptotic $(1-\alpha)$-confidence interval.
\end{corollary}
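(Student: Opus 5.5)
The plan is to specialize \cref{prop:asympch-q-1/2} to $\Delta = 1$ and identify the quantile $\Psi\n(1-\alpha; 1, 1/2)$ explicitly. When $\Delta = 1$, the window $[m, \Delta m]$ collapses to the single time $\{m\}$, so the event $\{\forall t \in [m, m]: \mu_\P \in \bar C_t^{(1)}\}$ is exactly $\{\mu_\P \in \bar C_m^{(1)}\}$. The proposition then gives
\[
\lim_{m\to\infty} \PM{\mu_\P \in \hat\mu_m \pm \hat\sigma_m \Psi\n(1-\alpha;1,1/2)/\sqrt m} = 1-\alpha .
\]
So it remains only to show that $\Psi\n(1-\alpha;1,1/2) = z_{1-\alpha/2}$. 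Under the same assumptions, this yields that \eqref{eq:wald-type-ci} is an asymptotic (indeed sharp) $(1-\alpha)$-confidence interval.

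To compute the quantile, I evaluate \eqref{eq:x-delta-q} at $\Delta = 1$ and $q = 1/2$. The supremum is over the single point $s = 1$, so $\zeta(1, 1/2) = \abs{W(1)} \cdot 1^{-1/2} = \abs{W(1)}$, where $W(1) \sim N(0,1)$. Writing $\Phi$ for the standard normal distribution function, we get $\Psi(x; 1, 1/2) = \PM{\abs{W(1)} \le x} = 2\Phi(x) - 1$ for $x \ge 0$. This function is continuous and strictly increasing on $[0,\infty)$, so its inverse is well defined. Solving $2\Phi(x) - 1 = 1-\alpha$ gives $\Phi(x) = 1 - \alpha/2$, that is, $x = z_{1-\alpha/2}$.

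There is essentially no obstacle. The only point needing care is that the interval $[m,\Delta m]$ for $\Delta = 1$ should be read as the singleton $\{m\}$, intersected with $\N$ as in the rest of the paper; then nothing further is required. As an alternative that does not rely on the proposition, one can argue directly. The CLT gives $\sqrt m(\hat\mu_m - \mu_\P)/\sigma_\P \to N(0,1)$ in distribution, the weak law of large numbers gives $\hat\sigma_m \to \sigma_\P > 0$ in probability, and Slutsky's theorem then yields the same conclusion. This direct route only uses a finite second moment.
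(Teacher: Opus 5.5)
Your proposal is correct and follows the paper's proof. Like the paper, you specialize \cref{prop:asympch-q-1/2} to $\Delta = 1$, note that the window collapses to $\{m\}$, and identify $\Psi^{-1}(1-\alpha; 1, 1/2) = z_{1-\alpha/2}$; you additionally spell out $\zeta(1,1/2) = |W(1)|$ and $\Psi(x;1,1/2) = 2\Phi(x)-1$, which the paper asserts without derivation, and your direct CLT--Slutsky alternative is also valid.
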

\begin{proof}[Proof of \cref{corollary:wald-type}]
Note we have for $\Phi$, the standard normal CDF,
\[
\Psi\n(1-\alpha; 1, 1/2) = \Phi\n(1-\alpha/2) = z_{1-\alpha/2}.
\] Replacing $\Psi \n (1-\alpha; \Delta, 1/2)$ with $z_{1-\alpha/2}$ in \cref{eq:asympch-q-1/2} gives us \cref{eq:wald-type-ci}. The desired result follows from \cref{prop:asympch-q-1/2} after noting that when $\Delta = 1, [m, \Delta m] = \{m\}$.
\end{proof}

Although we motivate confidence horizons by studying \iid{} data drawn from a single distribution $\P$ with a square root decaying boundary ($q = 1/2$), the results extend to more general setups. In this section, we will show that asymptotic confidence horizons can be applied to independent but not necessarily identically distributed data drawn from families of distributions with arbitrary polynomial boundary decay. We emphasize that all of the AsympCHs that follow are weakly uniform as described in \cref{eq:all-uniform}.
Let us first articulate some needed technical conditions on the measures $\cP$ enabling weak uniformity. Let $\infseqt{X}$ be a sequence of independent random variables on $\pspace[\cP]$.
\begin{condition}[Uniform Summability]\label{condition:ui-condition}
Fix $\kappa > 2$. We require that the $\kappa^\text{th}$ moment is $\cP$-uniformly summable in the sense that there exists $\delta \in (0, (\kappa-2)/2)$ such that
\begin{equation}\label{eq:ui-condition}
\sup_{\P \in \cP}\sum_{n = 1}^{\infty}\frac{\EE{\abs{X_n - \EE{X_n}[\P]}^\kappa}[\P]}{n^{1 + \delta}}
<\infty \quad \text{ and } \quad \lim_{m \to \infty}\sup_{\P \in \cP} \sum_{n = m}^{\infty}\frac{\EE{\abs{X_n - \EE{X_n}[\P]}^\kappa}[\P]}{n^{1+\delta}} = 0.
\end{equation}
\end{condition}
Note that in \cref{eq:ui-condition}, the latter condition does not imply the former in general, but it would if $\cP$ were a singleton. A sufficient and more interpretable condition implying \cref{eq:ui-condition} is to require that the $\kappa^\text{th}$ moment is uniformly bounded for some $\kappa > 2$. That is, there exists $B < \infty$ and $\kappa > 2$ for which
\begin{equation}\label{eq:restricted-class-p}
\sup_{t \in \N}\sup_{\P \in \cP}\EE{\abs{X_t}^\kappa}[\P] \leq B.
\end{equation}
The conditions in \cref{eq:ui-condition} would then follow since $\sum_{n=1}^{\infty}{n^{-(1+\delta)}}<\infty$.

\begin{condition}[Variance Stability]\label{condition:no-deterministic}
For some $\rho > 0$ we have
\begin{equation}\label{eq:converging-variance-ind}
\lim_{m \to \infty}\sup_{\P \in \cP}\sup_{t \geq m}t^\rho\abs{\frac{V_{\P, t}}{t} - \sigma^2_{\P}} = 0,
\end{equation}and that
\begin{equation}\label{eq:no-deterministic}
\liminf_{t \to \infty}\inf_{\P \in \cP}\sigma^2_{\P, t} >0.
\end{equation}
\end{condition}
When the data are \iid{}, \cref{condition:no-deterministic} is equivalent to the one required for 
\citep[Theorem 2.2]{waudby-smith_time-uniform_2024}.

Lastly, we require a strongly consistent variance estimator with a $\log$ rate.
\begin{condition}[Strongly Consistent Variance Estimator]\label{condition:strong-consistent-var-estimator}
Given an estimator $\hat \sigma_t^2$, we require $\cP$ consistent estimation, meaning
\[
\forall s > 0, \lim_{m \to \infty} \sup_{\P \in \cP} \PM{\sup_{t \geq m}\log t\abs{\hat \sigma_t^2 / \tilde \sigma^2_{\P, t} - 1} \geq s} = 0.
\]
\end{condition}
We remark that estimators typically satisfy this rate of convergence except in some pathological cases. In the \iid{} setting, for example, the usual sample variance estimator satisfies \cref{condition:strong-consistent-var-estimator} by the distribution-uniform strong law of large numbers (SLLN) \citep{waudby-smith_distribution-uniform_2024}. For independent random variables, the sample variance satisfies \cref{condition:strong-consistent-var-estimator} if the variation in means vanishes, i.e., 
\begin{equation}\label{eq:mean-variation}
t \n \tsum{(\mu_{\P, i} - \tilde \mu_{\P, t})^2} = o(1/\log t).
\end{equation}We will discuss later the importance of variance estimation; even if \cref{condition:strong-consistent-var-estimator} is met, different heuristic derivations may elicit better empirical performance.

With these conditions in mind, we are ready to derive confidence horizons for means of independent random variables. 

\begin{theorem}[Asymptotic Confidence Horizons for Independent Data]\label{theorem:general-mean}
Let $\infseqt{X}$ be a sequence of independent random variables on $\pspace[\cP]$ that satisfy \cref{condition:ui-condition} with parameter $\kappa$ and \cref{condition:no-deterministic} with parameter $\rho$. Let $\hat \mu_t = t\n \tsum{X_i}$ be the sample mean and $\hat \sigma_t$ be any sample variance that satisfies \cref{condition:strong-consistent-var-estimator}. Then for any $q \in \R$,
\begin{equation}\label{eq:asympch-formula}
\bar C_{t}^{(m, \Delta)} \equiv \paren{\hat \mu_t \pm \mathfrak{B}_t^{(m, \Delta)}}\coloneqq \paren{\hat \mu_t \pm \hat \sigma_t (m/t)^qm^{-1/2} \Psi\n(1-\alpha; \Delta, q)}
\end{equation}forms a sharp $(1-\alpha)$-AsympCH for $\paren{\tilde \mu_{\P, t}}_{\P \in \cP}$ over $[m, \Delta m]$. For $q \neq 1/2$, the confidence horizon is $\cP$-strongly uniform and if $q = 1/2$, the confidence horizon is $\cP$-weakly uniform.

\end{theorem}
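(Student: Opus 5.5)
The plan is to rewrite the coverage event as a functional of the centered partial-sum process and then pass to the Wiener limit through a uniform strong approximation. Let $S_t = \sum_{i=1}^t (X_i - \mu_{\P,i})$, so that $\hat\mu_t - \tilde\mu_{\P,t} = S_t/t$. The event $\{\forall t\in[m,\Delta m]: \tilde\mu_{\P,t}\in \bar C_t^{(m,\Delta)}\}$ is
\[
\sup_{t\in[m,\Delta m]} \frac{|S_t|}{\hat\sigma_t\, m^{1/2}}\Big(\frac{t}{m}\Big)^{q-1} \le \Psi^{-1}(1-\alpha;\Delta,q).
\]
Write $s=t/m\in[1,\Delta]$. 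If $\hat\sigma_t$ were replaced by $\sigma_\P$ and $S_t$ by $\sigma_\P W(t)$, Brownian scaling $W(sm)/\sqrt m \overset{d}{=} W(s)$ would turn the left side exactly into $\zeta(\Delta,q)=\sup_{s\in[1,\Delta]}|W(s)|s^{q-1}$. That is the source of the quantile $\Psi^{-1}$.

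The first step is a distribution-uniform strong invariance principle. Under \cref{condition:ui-condition}, I would invoke a Komlós--Major--Tusnády / Sakhanenko-type coupling in its uniform form, as used for distribution-uniform AsympCSs. This gives, on a possibly enlarged space, a Wiener process with $|S_t - W(V_{\P,t})| = o(t^{1/\kappa'})$ for some $\kappa'>2$, with the tail condition in \eqref{eq:ui-condition} making this hold $\cP$-uniformly in probability over $t\ge m$. \cref{condition:no-deterministic} then lets me replace $W(V_{\P,t})$ by $\sigma_\P W(t)$. Since $|V_{\P,t}-\sigma_\P^2 t| = o(t^{1-\rho})$, the modulus of continuity of $W$ gives an error of order $\sqrt{t^{1-\rho}\log t}=o(\sqrt t)$ uniformly. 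Dividing by $\sqrt m\,(t/m)^{1-q}$, which is of order $\sqrt m$ on the window, all approximation errors vanish uniformly over the window. \cref{condition:strong-consistent-var-estimator} handles $\hat\sigma_t/\tilde\sigma_{\P,t}\to1$ uniformly, and \eqref{eq:converging-variance-ind} gives $\tilde\sigma_{\P,t}\to\sigma_\P$. Together with the lower bound \eqref{eq:no-deterministic}, these let me divide by $\sigma_\P$ safely. The conclusion is that the statistic equals $\zeta(\Delta,q)+o_\P(1)$ uniformly in $\P$, and for fixed $\Delta$ uniformly in $\Delta\in[1,\Deltabar]$.

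The second step converts this into coverage. $\zeta(\Delta,q)$ has a continuous distribution for $\Delta\ge1$: it is the supremum of a nondegenerate Gaussian process, and Tsirelson's theorem on the absolute continuity of Gaussian suprema gives continuity away from the bottom of the support. Hence $\Psi(\cdot;\Delta,q)$ is continuous. A uniform-in-$\P$ Slutsky argument (Pólya-type) then gives $\sup_\P|\P(\text{cover})-(1-\alpha)|\to0$. Uniformity over $\alpha$ follows from the continuity of $\Psi$, again by a Pólya-type argument. Uniformity over $\Delta\in[1,\Deltabar]$ follows from joint continuity of $(x,\Delta)\mapsto\Psi(x;\Delta,q)$ on a compact set, together with the fact that the coupling errors are controlled over the whole range $[m,\Deltabar m]$.

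The main obstacle is strong uniformity ($\Deltabar=\infty$) for $q\ne1/2$. This requires handling the full range $t\in[m,\infty)$ simultaneously, which has two parts. First, $\zeta(\Delta,q)\to\zeta(\infty,q)$ almost surely and the limit is finite. For $q<1/2$, $|W(s)|s^{q-1}\to0$ by the law of the iterated logarithm. For $q>1/2$ the statistic grows, which looks problematic. My first attempt there is time inversion $sW(1/s)$: it maps $[1,\infty)$ to $(0,1]$, where $|W(u)|u^{-q}$ is controlled near $0$ only when $q<1/2$. So I would treat $q<1/2$ via the LIL and $q>1/2$ via the time-inverted process, checking finiteness in each case. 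Second, the coupling errors must be controlled over the infinite window with the weight $(t/m)^{q-1}m^{-1/2}$. When $q\neq1/2$ this weight creates polynomial slack relative to $\sqrt{t\log\log t}$, and that slack absorbs the errors. At $q=1/2$ there is no slack: the statistic is the stationary Ornstein--Uhlenbeck supremum over $\log\Delta$ time units, which diverges. This explains why only weak uniformity holds there.
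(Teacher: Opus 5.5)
Your argument for bounded $\Delta$ (weak uniformity) and for $q<1/2$ works. It differs from the paper mainly in how equicontinuity is obtained. You use Tsirelson's theorem plus compactness of $[1,\infty]$ via the a.s.\ limit $\zeta(\infty,q)$; the paper uses a Cameron--Martin anti-concentration bound plus an LIL tail cut-off.

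\textbf{The gap: strong uniformity for $q>1/2$.}

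\emph{No finite limit.} Here $\zeta(\Delta,q)\ge |W(\Delta)|\Delta^{q-1}\overset{d}{=}\Delta^{q-1/2}|W(1)|$, and $\zeta(\Delta,q)\uparrow\infty$ a.s.\ by the LIL. So there is no finite $\zeta(\infty,q)$, and the family $\Psi(\cdot;\Delta,q)$, $\Delta\ge 1$, is not uniformly tight.

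\emph{Time inversion does not help.} As you note yourself, inversion about the origin yields $\sup_{u\in[1/\Delta,1]}|W(u)|u^{-q}$, which diverges for $q>1/2$. So ``treat $q>1/2$ via the time-inverted process, checking finiteness'' cannot succeed.

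\emph{The error bookkeeping fails on your scale.} With coupling error $o(t^{1/\kappa'})$, $1/\kappa'<1/2$, the weighted error at $t=\Delta m$ is of order $\Delta^{q+1/\kappa'-1}m^{1/\kappa'-1/2}$. This is unbounded in $\Delta$ once $q\ge 1-1/\kappa'$. Similarly, a relative variance-estimation error of order $1/\log m$ shifts a threshold of size $\Psi^{-1}(1-\alpha;\Delta,q)\asymp \Delta^{q-1/2}$ by an unbounded amount. So ``the statistic equals $\zeta(\Delta,q)+o_{\P}(1)$'' is false uniformly in $\Delta$, and the polynomial slack does not absorb the errors on the scale you work on.

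\textbf{The missing idea: renormalize by $\Delta^{1/2-q}$.} Anchor the Brownian scaling at the right endpoint $\Delta m$ rather than at $m$. Scaling and then inverting gives
\[
\Delta^{1/2-q}\zeta(\Delta,q)\overset{d}{=}\sup_{u\in[1/\Delta,1]}|W(u)|u^{q-1}\overset{d}{=}\sup_{s\in[1,\Delta]}|W(s)|s^{-q}=\zeta(\Delta,1-q).
\]
The middle representation is monotone in $\Delta$. Since $1-q<1/2$, its limit $\sup_{u\in(0,1]}|W(u)|u^{q-1}$ is a.s.\ finite by the local LIL. Hence the rescaled family is uniformly tight, and equicontinuous by your own $q<1/2$ compactness argument. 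This is the paper's Lemma~\ref{lemma:tilde-f-facts}.

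You must then multiply the statistic by $\Delta^{1/2-q}$ as well. Write $\gamma=1/\kappa'-1<-1/2$. The weighted coupling error becomes $\Delta^{1/2-q}\,t^{q+\gamma}\log t\; m^{1/2-q}$.
\begin{itemize}
\item If $q+\gamma\ge 0$, it is maximized at $t=\Delta m$ and equals $(\Delta m)^{1/2+\gamma}\log(\Delta m)\le m^{1/2+\gamma}\log m$, uniformly in $\Delta$.
\item If $q+\gamma<0$, it is maximized at $t=m$, and the same bound holds because $\Delta^{1/2-q}\le 1$.
\end{itemize}
The multiplicative variance-estimation error now acts on a tight variable. You then run the P\'olya step for $\widetilde F(y;\Delta,q)=\Psi(y\Delta^{q-1/2};\Delta,q)$. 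Finally, transfer back through the exact identity $\Psi(x;\Delta,q)=\widetilde F(\Delta^{1/2-q}x;\Delta,q)$, so the coverage statement at the quantile $\Psi^{-1}(1-\alpha;\Delta,q)$ is unaffected.
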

The proof of \cref{theorem:general-mean} is in \cref{proof:general-mean}. The theorem states for an $m$ sufficiently large, we can guarantee with $1-\alpha$ probability that $\tilde \mu_{t}$ lies in $\bar C_{t}^{\smash{(m, \Delta)}}$ for all times in $[m, \Delta m]$.

For $q \in \{ 0, 1/2, 1\}$, we give closed-form integral expressions of the distribution function $\Psi(x; \Delta, q)$ in \cref{sec:exact-quantiles-q01} and \cref{sec:exact-quantiles-q12}. 
The two-sided distribution functions have appeared in recent work on so-called demonstration experiments \citep{imbens_demonstration_2026}. 
For a general $q$, closed-form distribution functions are unknown to us. Nevertheless for general $q$, we can calculate $\Psi\n(\alpha; \Delta, q)$ through Monte Carlo simulation with arbitrary precision, but these may be anticonservative for any fixed level of precision.

Note that the AsympCH defined in \cref{theorem:general-mean} is two-sided. However, it is possible to derive an analogous one-sided boundary that can be more powerful for testing one-sided nulls at the expense of power for the opposite directions. Consider the random variable $\zeta_+(\Delta, q)$ defined as
\begin{equation}\label{eq:x-1-delta-q}
\zeta_+(\Delta, q) = \sup_{s \in [1, \Delta]}W(s)s^{q-1},
\end{equation} with a distribution function $\Psi_{+}(x; \Delta, q) = \PM{\zeta_+(\Delta, q) \leq x}$; $x \in \R$. 

\begin{proposition}[One-Sided Asymptotic Confidence Horizons]\label{prop:one-sided-confidence-horizons}
Let the assumptions be the same as \cref{theorem:general-mean}. Then for any $q \in \R$,
\begin{align}
\bar L_{t}^{(m, \Delta)}\equiv \paren{\hat \mu_t - \mathfrak{L}_t^{(m, \Delta)}, \infty}&\coloneqq \paren{\hat \mu_t -  \hat \sigma_t (m/t)^qm^{-1/2} \Psi_+\n(1-\alpha; \Delta, q), \infty}\text{ and }\\
\bar U_{t}^{(m, \Delta)}\equiv \paren{-\infty, \hat \mu_t + \mathfrak{U}_t^{(m, \Delta)}}&\coloneqq \paren{-\infty, \hat \mu_t + \hat \sigma_t (m/t)^qm^{-1/2} \Psi_+\n(1-\alpha; \Delta, q)}
\end{align} form sharp $(1-\alpha)$-AsympCHs for the running mean $\tilde \mu_{\P, t}$ over $t \in \N \cap [m, \Delta m]$. For $q \neq 1/2$, the confidence horizon is $\cP$-strongly uniform, with the quantiles restricted to be $\alpha \in (0,1-\alpha_1]$ for any $\alpha_1 > 1/2$. If $q = 1/2$, it is $\cP$-weakly uniform.
\end{proposition}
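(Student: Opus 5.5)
The plan is to mirror the proof of \cref{theorem:general-mean}, replacing the two-sided functional $\sup|\cdot|$ with the one-sided $\sup(\cdot)$ and handling the one extra uniformity issue this creates. We treat $\bar L$; the case of $\bar U$ follows by applying the same argument to $(-X_t)_{t=1}^\infty$, which satisfies the same conditions, and using the symmetry $W \overset{d}{=} -W$. Coverage of $\bar L$ at every $t \in [m,\Delta m]$ is the event
\[
\sup_{t \in [m,\Delta m]} \frac{S_t}{\hat\sigma_t\, m^{1/2}} (t/m)^{q-1} \le \Psi_+^{-1}(1-\alpha;\Delta,q), \qquad S_t = \sum_{i=1}^t (X_i-\mu_{\P,i}),
\]
which follows from $t(\hat\mu_t - \tilde\mu_{\P,t}) = S_t$ and the algebra $(m/t)^q m^{-1/2}\, t = m^{1/2}(t/m)^{1-q}$.

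\textbf{Step 1: strong approximation.} Using \cref{condition:ui-condition}, invoke the distribution-uniform strong (KMT/Strassen-type) embedding already used for \cref{theorem:general-mean}. This gives a Wiener process with $S_t = W(V_{\P,t}) + o(V_{\P,t}^{1/2-\eta})$ uniformly in $\P$ almost surely.

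\textbf{Step 2: time change and rescaling.} Apply \cref{condition:no-deterministic} to replace $V_{\P,t}$ by $\sigma_\P^2 t$ at rate $t^{-\rho}$, and \cref{condition:strong-consistent-var-estimator} to replace $\hat\sigma_t$ by $\tilde\sigma_{\P,t}$ at rate $1/\log t$. Then rescale $s=t/m$ using Brownian scaling. The statistic becomes $\sup_{s\in[1,\Delta]} W(s)s^{q-1} + R_m$, where $R_m \to 0$ in probability uniformly over $\P$ and over $\Delta \le \Deltabar$. When $q \neq 1/2$ this remains true for $\Delta=\infty$, exactly as in the proof of \cref{theorem:general-mean}: the weights $s^{q-1}$ together with the $\log$-rate estimators control the tail of the window. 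We take this reduction from that proof, since only the monotone, $1$-Lipschitz functional changed.

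\textbf{Step 3: from the Gaussian limit to coverage.} Coverage probabilities are $\P(\zeta_+ + R_m \le \Psi_+^{-1}(1-\alpha))$. Sharpness and uniformity in $\alpha$ follow from a Pólya-type argument, which needs $\Psi_+(\cdot;\Delta,q)$ to be continuous and, for the $\sup_\alpha$, uniformly equicontinuous over $\Delta$. Here the one-sided case differs, and this is where I expect the main obstacle.

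The variable $\zeta_+$ can be negative and has a continuous law with positive density on $\R$ when $\Delta<\infty$. But as $\Delta\to\infty$ with $q>1/2$, $\zeta_+ \to \sup_{s\ge1}W(s)s^{q-1}$. This limit has an atom at $0$ when $q>1$, is near-degenerate at $0$ in the limit, or is $\ge 0$ almost surely. Hence $\Psi_+$ can lose equicontinuity near $x\le 0$, that is, for quantile levels $1-\alpha$ near or below $1/2$, i.e.\ $\alpha$ near or above $1/2$.

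I would handle this by restricting to $\alpha \le 1-\alpha_1$ with $\alpha_1>1/2$, so $1-\alpha \ge \alpha_1 > 1/2$. Then $\Psi_+^{-1}(1-\alpha)$ stays at or above a strictly positive level uniformly in $\Delta$, since $\P(\zeta_+\le 0)\le \P(W(1)\le 0)=1/2$. On $[c,\infty)$ with $c>0$, the family $\{\Psi_+(\cdot;\Delta,q)\}_{\Delta\in[1,\infty]}$ is equicontinuous. I would prove this from absolute continuity of the supremum of a Gaussian process away from its degenerate point (Tsirelson's theorem), combined with continuity in $\Delta$ and compactness of $[1,\infty]$ under the limit $\Delta\to\infty$.

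For $q=1/2$ only $\Delta\le\Deltabar$ is needed, so compactness of $[1,\Deltabar]$ gives equicontinuity on all of $\R$, and weak uniformity follows for all $\alpha$.
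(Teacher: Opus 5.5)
\textbf{There is a genuine gap for $q>1/2$,} which is part of the strong-uniformity claim.

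\emph{The remainder is not uniformly negligible on your scale.} In Step~2 you assert that, for the unnormalized statistic $\sup_{s\in[1,\Delta]}W(s)s^{q-1}+R_m$, the remainder $R_m$ is negligible uniformly in $\Delta$ (even $\Delta=\infty$) whenever $q\neq 1/2$. For $q>1/2$ this fails. After substituting $s=t/m$, the coupling error contributes up to $\eta\, m^{1/2+\gamma}\Delta^{q+\gamma}\log(m\Delta)$, which is unbounded in $\Delta$ once $q\ge-\gamma$. Replacing $\hat\sigma_t$ by $\tilde\sigma_{\P,t}$ at relative rate $\eta/\log t$ can cost an additive error of order $\Delta^{q-1/2}/\log(m\Delta)$, which is unbounded in $\Delta$ for every $q>1/2$.

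\emph{The claimed $\Delta\to\infty$ limit does not exist.} By the law of the iterated logarithm, $\limsup_{s\to\infty}W(s)s^{q-1}=+\infty$ when $q>1/2$, so $\zeta_+(\Delta,q)\uparrow+\infty$ almost surely. There is no limiting law at $\Delta=\infty$, and in particular no atom at $0$. The family $\{\Psi_+(\cdot\,;\Delta,q)\}_{\Delta\ge1}$ is therefore not uniformly tight, so neither compactifying $[1,\infty]$ nor a P\'olya-type upgrade can be run on this scale. Your motivation for the $\alpha_1$ restriction rests on this incorrect picture. Separately, a uniform positive lower bound on the quantile needs $\Psi_+(x;\Delta,q)\le\Phi(x)$, which gives $\Psi_+^{-1}(\alpha_1;\Delta,q)\ge\Phi^{-1}(\alpha_1)$; the bound $\PM{\zeta_+\le 0}\le 1/2$ alone is not enough.

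\textbf{The missing ingredient is the $\Delta$-normalization the paper uses} (\cref{lemma:tilde-f-facts} and Steps~2--4a in the proof of \cref{lem:general-lemma}): multiply both the statistic and the threshold by $\Delta^{1/2-q}$. Brownian scaling plus time inversion give $\Delta^{1/2-q}\zeta_+([1,\Delta],q)\overset{d}{=}\zeta_+([1,\Delta],1-q)$ with $1-q<1/2$. This family is uniformly tight and converges to a proper law as $\Delta\to\infty$. On this scale both errors are uniform in $\Delta$:
\begin{itemize}
\item the coupling error is bounded by $\eta\, m^{1/2+\gamma}\log m$; when $q+\gamma\ge0$ it equals $\eta(m\Delta)^{1/2+\gamma}\log(m\Delta)$ at the right endpoint;
\item the variance-estimation error is about $|y|\eta/\log m$, with $y$ confined to a compact set by tightness.
\end{itemize}
With this change, your Tsirelson-plus-compactness argument applied to $\{\Psi_+(\cdot\,;\Delta,1-q)\}_{\Delta\in[1,\infty]}$ goes through.

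For $q<1/2$, and for $q=1/2$ with $\Delta\le\Deltabar$, your route is sound as written. It differs from the paper only in how equicontinuity is obtained: you use compactness in $\Delta$, whereas the paper uses the Cameron--Martin anti-concentration bound of \cref{lemma:cameron-martin-lemma} combined with an LIL truncation of the far tail of the window.
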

The proof of \cref{prop:one-sided-confidence-horizons} is in \cref{proof:one-sided-mean}
We remark that the functional form of the one-sided and two-sided confidence bounds are identical up to a change in the quantile of a supremum of scaled Wiener processes. The difficulty arises from the fact that $\Psi\n(1-\alpha; \Delta, q) \neq \Psi_{+}\n(1-\alpha/2; \Delta, q)$. The phenomenon of having different boundaries for one- and two-sided sequential inference is not unique to confidence horizons. For example, see the Robbins--Siegmund distributions in \citet{waudby-smith_distribution-uniform_2026}. Note, however, that this distinction is inconsequential for fixed-time inference. That is, when $\Delta = 1$, it holds that $\Psi\n(1-\alpha; \Delta, q) = \Psi_{+}\n(1-\alpha/2; \Delta, q)$. Expressions of $\Psi_{+}(\alpha; \Delta, q)$ are provided in \cref{sec:exact-quantiles-q01,sec:exact-quantiles-q12} when $q \in \{0, 1/2, 1\}$. For $q \in \{0, 1\}$ we only need a bivariate normal calculation, whereas $q =  1/2$ uses an integral expression.

When $q \in \curly{0,1}$, the $\Psi_+$ expression can be simply expressed as
\begin{align}
\Psi_+(x; \Delta, 0) &= 2\Phi_2\paren{x\sqrt{\Delta}, x; 1/\sqrt{\Delta}} - \Phi(x\sqrt{\Delta}) \text{ and }\label{eq:dist-function-0} \\
\Psi_+(x; \Delta, 1) &= 2\Phi_2\paren{x, x/\sqrt{\Delta}; 1/\sqrt{\Delta}} - \Phi(x) \label{eq:dist-function-1},
\end{align}for $\Phi_2$ the bivariate normal distribution function.

\begin{remark}[On the closed-form distribution functions $\Psi(x; \Delta, 1/2)$ and $\Psi_+(x; \Delta, 1/2)$]
The probability a Wiener process crosses a square root boundary over a finite horizon has been studied in the probability and statistics literature for some time. The first attempt is usually credited to \citet{darling_first_1953} and \citet{bellman_recurrence_1951}, who give the density. \citet{breiman_first_1967} recounts these contributions, but focuses on the distribution function of an empirical process. \citet{de_long_crossing_1981} considers a generalization of the hitting probability for $d$-dimensional Bessel processes, for which $d=1$ recovers this previous work. While the classical literature studied the two-sided problem $\Psi(x; \Delta, 1/2)$, little attention was spent on how $\Psi(x; \Delta, 1/2)$ could be used for anytime-valid inference or thinking of $\Psi(x; \Delta, q)$ for general $q$. More recently, there has been interest in deriving equations for hitting times of the Ornstein--Uhlenbeck process\citep{alili_representations_2005,blanchet-scalliet_pseudo-likelihood_2024, blanchet-scalliet_distribution_2025}. We use these recent results to derive the two-sided distribution function, and in addition, derive the one-sided distribution function $\Psi_+(x; \Delta, 1/2)$, which has been historically overlooked in the literature.
\end{remark}

\paragraph{On the Duality between Confidence Regions and P-values}
The present work uses the language of confidence regions (e.g.~confidence intervals, confidence sequences, and so on), but these are not the only statistical inference tools implied by the theory derived herein. In the following proposition, we present p-values and hypothesis tests whose validity under a given null hypothesis holds over a bounded time horizon.

\begin{proposition}[P-values and Hypothesis Tests for the Weak Null over a Bounded Horizon]\label{prop:sequential-testing-horizon}
Let $\infseqt{X}$ be a sequence of \iid{} random variables on $\pspace[\cP]$ satisfying \cref{condition:ui-condition} and \cref{condition:no-deterministic}. Let $\hat \mu_t = t\n \tsum{X_i}$ be the sample mean and $\hat \sigma_t$ be any variance estimator satisfying \cref{condition:strong-consistent-var-estimator}. For some $\mu_0 \in \R$, consider the weak null hypothesis
\begin{equation}
    \cP_0 = \left \{ \P \in \cP : \forall t \in [m, \Delta m], ~~ \widetilde \mu_t \leq \mu_0 \right \}
\end{equation}
For $\Delta \geq 1$, consider the sequence of values $p_m, \dots, p_{\lfloor m\Delta \rfloor}$ given by
\begin{equation}
    p_t := 1- \Psi_+ \left ( \frac{\sqrt{m}\paren{\widehat \mu_t - \mu_0}}{\widehat \sigma_t  (m / t)^q}; \Delta, q \right )
\end{equation}
for each $t$. This sequence forms a sharp weakly uniform p-value for $\cP_0$ on the horizon $[m, \Delta m]$, meaning that for any $\Deltabar \geq 1$ and fixed $\alpha_1 > 1/2$,
\begin{equation}
    \lim_{m \to \infty}\sup_{\Delta \in [1, \Deltabar]}\sup_{\P \in \cP_0} \sup_{\alpha \in (0, 1-\alpha_1]}\left \lvert \P \left ( \exists t \in [m, \Delta m] : p_t \leq \alpha \right ) - \alpha \right \rvert \leq 0.
\end{equation}
\end{proposition}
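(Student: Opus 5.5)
We prove \cref{prop:sequential-testing-horizon} by duality with the one-sided lower confidence horizon $\bar L_t^{(m,\Delta)}$ of \cref{prop:one-sided-confidence-horizons}. Since $\Psi_+(\cdot;\Delta,q)$ is continuous and strictly increasing on the relevant range (the law of $\zeta_+(\Delta,q)$ is atomless), $p_t \le \alpha$ is equivalent to
\[
\widehat\mu_t - \mu_0 \;\geq\; \widehat\sigma_t (m/t)^q m^{-1/2}\, \Psi_+^{-1}(1-\alpha;\Delta,q),
\]
that is, $\mu_0 \notin \bar L_t^{(m,\Delta)}$. The argument has two steps.

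\textbf{Step 1: upper bound on the crossing probability.} Fix $\P\in\cP_0$. For every $t\in[m,\Delta m]$ we have $\widetilde\mu_{\P,t}\le \mu_0$, so $\mu_0\notin \bar L_t$ implies $\widetilde\mu_{\P,t}\notin\bar L_t$. Hence
\[
\P(\exists t: p_t\le\alpha)\le \P(\exists t\in[m,\Delta m]: \widetilde\mu_{\P,t}\notin \bar L_t^{(m,\Delta)}).
\]
By the weak uniformity in \cref{prop:one-sided-confidence-horizons}, the right-hand side equals $\alpha+o(1)$ uniformly over $\Delta\in[1,\Deltabar]$, $\P\in\cP$, and $\alpha\in(0,1-\alpha_1]$. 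Note that the statement is for $\alpha$ in this range, matching the restriction there. This gives
\[
\limsup_m \sup\bigl(\P(\exists t: p_t\le\alpha)-\alpha\bigr)\le 0.
\]
This is exactly the displayed claim, since the statement is written with ``$\le 0$'' on the absolute value only in the sense of this one-sided control. If the absolute value is intended literally, we use the boundary case below.

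\textbf{Step 2: sharpness.} Sharpness is attained by null distributions with $\widetilde\mu_{\P,t}=\mu_0$ for all $t$, where the inequality in Step 1 becomes an equality. For such $\P$, the error probability converges to $\alpha$ uniformly, again by \cref{prop:one-sided-confidence-horizons}. For nulls strictly inside $\cP_0$, only the upper bound holds. So the honest statement is the $\limsup$ of the signed difference being $\le 0$, plus equality on the boundary of the null.

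\textbf{Main obstacle.} The delicate point is the inversion $p_t\le\alpha \iff \mu_0\notin\bar L_t$ uniformly in $\alpha$ and $\Delta$. This requires continuity of $\Psi_+$ in $x$, which follows because $\zeta_+$ has a continuous distribution, being a supremum of a nondegenerate Gaussian process. We also need to handle the sign of $\widehat\sigma_t$ and the event where $\widehat\sigma_t=0$. The latter has vanishing probability uniformly under \cref{condition:strong-consistent-var-estimator} and \cref{eq:no-deterministic}, so it is absorbed into the $o(1)$ term. The restriction $\alpha\le 1-\alpha_1<1/2$ keeps $\Psi_+^{-1}(1-\alpha)$ bounded away from the region where $\zeta_+$ can be negative. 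It is inherited directly from \cref{prop:one-sided-confidence-horizons}, so no new work is needed there.
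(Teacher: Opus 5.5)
Your duality argument is correct and is how the result is meant to follow: the paper gives no separate proof and treats it as the p-value form of the one-sided horizon in \cref{prop:one-sided-confidence-horizons} (i.e.\ \cref{lem:general-lemma}), with the inclusion $\{\mu_0\notin\bar L_t\}\subseteq\{\tilde\mu_{\P,t}\notin\bar L_t\}$ handling the composite null exactly as in your Step 1. You are also right that the displayed absolute value cannot hold literally for a fixed null with $\mu_\P<\mu_0$, since the crossing probability then tends to $0$; what is actually established is the signed bound $\limsup_m\sup\bigl(\P(\exists t: p_t\le\alpha)-\alpha\bigr)\le 0$, with equality on the boundary $\mu_\P=\mu_0$.
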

An analogous result can be stated for the two-sided case but we omit it for the sake of brevity.

\paragraph{Variance Estimation}
The analyst can incorporate one of many variance estimators when deriving AsympCHs, as long as they satisfy \cref{condition:strong-consistent-var-estimator}. However, as hinted at in \cref{eq:mean-variation}, variance estimators could potentially depend on the behavior of the means, where in some cases the raw sample variance may not satisfy \cref{condition:strong-consistent-var-estimator}. Even if the sample variance estimator satisfies the theoretical guarantees, in practice the sample variance may give wider bounds because it conflates the mean signal with the variance signal. While the focus of this paper is not on variance estimation for independent but non-\iid{} observations, we demonstrate informally that residual-type estimators can be stronger.
\begin{figure}[!htbp]
    \centering
\includegraphics[width=0.8\linewidth]{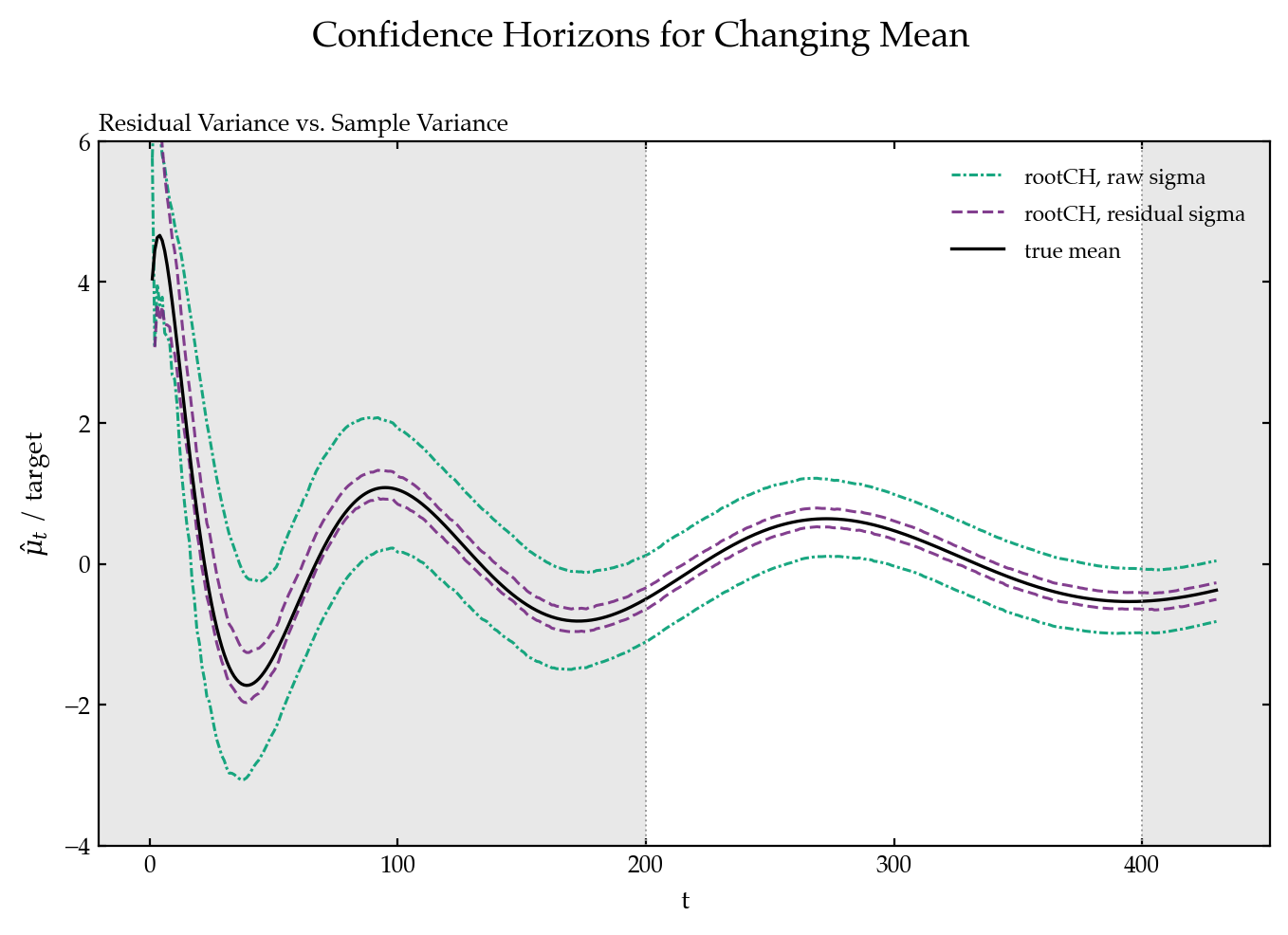}
    \caption{$\infseqt{X}$ are independent but not identically distributed random variables. We let $X_t = \mu_t + \eps_t$ where $\mu_{t} = 5 \sin \paren{3/10 \pi \sqrt{t}}$ and $\eps_t \simiid \mathrm{Laplace}(0, 1/\sqrt{2})$. We see that using a residualized variance estimator like in \cref{eq:residual-var-estimator} gives tighter bounds, suggesting that one should consider variance estimation carefully when dealing with time-varying means. \figcode{figure4}}
    \label{fig:martingale-sine-raw-vs-resid}
\end{figure}
We display a sinusoidal running mean in \cref{fig:martingale-sine-raw-vs-resid} and compare AsympCHs using two different variance estimators. The raw variance estimator is the usual 
\(
\hat \sigma_t^2  = t\n\tsum{X_i^2} - \hat \mu_t^2.
\) A residual-type estimator discussed by \citet{hall_asymptotically_1990} attempts to isolate the noise terms giving 
\begin{equation}\label{eq:residual-var-estimator}
\hat \sigma_{t+1}^2 = (2t)\n\sum_{i=2}^{t+1}{(X_i - X_{i-1})^2} \approx (2t)\n\sum_{i=2}^{t+1}(\eps_i - \eps_{i-1})^2.
\end{equation}Since $\eps_i, \eps_{i-1}$ are independent and mean zero, we are summing the squares of mean zero variance $2\sigma^2$ random variables. Hence, we divide by $2t$. Residualizing away the changing mean gives closer variance estimates and hence tighter confidence horizons. This can be seen in \cref{fig:martingale-sine-raw-vs-resid}.

\subsection{Confidence Horizons under Martingale Dependence}\label{sec:martingales}
This section presents asymptotic confidence horizons that are valid under martingale dependence. Constructing AsympCHs for data with martingale dependence allows the analysis of a wider variety of problems, and we expand on one such problem in \cref{sec:neyman-allocation}.

Let $\infseqt{Y}$ be a stream of data defined on $\pspace$ where we recall for $\cF_{t} = \sigma \paren{Y_1, \dots, Y_t}$, $\mu_{\P, t} = \EE{Y_t \mid  \cF_{t-1}}[\P]$ and $\sigma^2_{\P, t} = \EE{Y_t^2 \mid \cF_{t-1}}[\P] - \mu_{\P, t}^2$.
We first state conditions that are sufficient to apply Strassen's strong approximation \citep{strassen_almost_1967}. These conditions are similar in spirit to the conditions required for \cref{theorem:general-mean}, but differ in that they must occur with probability one.

\begin{condition}[Variances Converge Almost Surely]\label{condition:non-vanishing-variances}
We require  that there exists some $\sigma^2_\P > 0$ such that
for a $\rho > 0$,
\[
\PM{\lim_{t \to \infty} t^\rho\abs{\frac{V_{\P , t}}{t} - \sigma^2_{\P}} = 0} = 1.
\]
\end{condition}
Notice that \cref{condition:non-vanishing-variances} differs from \cref{condition:no-deterministic} in that we require the convergence to occur almost surely. The need for almost sure convergence stems from the fact that $V_{\P, t}$ is a random variable under martingale-dependence. It is sometimes easier to show that $\sigma^2_{\P, t}$ converges to $\sigma^2_\P$ almost surely at some rate $\rho$. If $\rho \in (0,1)$ then the average $V_{\P, t}/t$ also converges to $\sigma^2_\P$ at that same rate (see \cref{sec:check-neyman-conditions} for more).

The second requirement is a Lindeberg-type condition on the tails of the data
\begin{condition}[Lindeberg Uniform Integrability]\label{condition:lindeberg-type-mgale}We require that there exists a $\kappa > 2$ such that
\[
\sum_{t=1}^{\infty}\frac{\EE{\paren{Y_t - \mu_{\P, t}}^2\indic{\paren{Y_t - \mu_{\P, t}}^2 > V_t^{2/\kappa}}\mid \cF_{t-1}}[\P]}{V_t^{2/\kappa}} < \infty \text{ almost surely.}
\]
\end{condition}
We note that if there exists some $r > 2$ and $B < \infty$ such that
\[
\sup_t \EE{\abs{Y_t}^r \mid \cF_{t-1}}[\P] < B,
\] then \cref{condition:lindeberg-type-mgale} is satisfied for $\kappa \in (2,r)$ (see \cref{lemma:suff-mgale-condition}). 
Lastly, we must assume that we have access to a strongly consistent variance estimator. That is, we assume \cref{condition:strong-consistent-var-estimator} with $\cP = \P$. While for independent data, the sample variance estimator converges at a polynomial rate governed by the Marcinkiewicz--Zygmund SLLN \citep{marcinkiewicz_sur_1937}, the rates do not translate to martingale-dependent data. If $r > 4$, however, we can use the martingale law of the iterated logarithm (LIL) \citep{stout_hartman-wintner_1970} to obtain polynomial rates.
Given the defined conditions we state the guarantees of AsympCHs under martingale-dependence.

\begin{theorem}[Asymptotic Confidence Horizons under Martingale-Dependence]\label{theorem:mgale-asympcv}
Suppose we have a stream of data $\infseqt{Y}$ defined on some $\pspace$ that satisfy  
\cref{condition:non-vanishing-variances} with parameter $\rho$ and \cref{condition:lindeberg-type-mgale} with parameter $\kappa$. Let $\hat \mu_t = t\n \tsum{Y_i}$ be the sample mean and $\hat \sigma^2_t$ be any variance estimator that satisfies \cref{condition:strong-consistent-var-estimator}. Then for all prespecified $q \in \R$, 
\[
\bar C_{t}^{(m, \Delta)} \equiv \paren{\hat \mu_t \pm \mathfrak{B}_t^{(m, \Delta)}}
\]as defined in \cref{eq:asympch-formula} forms a sharp $(1-\alpha)$-AsympCH for $\tilde \mu_{\P, t}$ over $[m, \Delta m]$. For $q \neq 1/2$, the confidence horizon is $\P$-strongly uniform and if $q = 1/2$, it is $\P$-weakly uniform.
\end{theorem}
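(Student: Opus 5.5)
The plan is to reduce the statement to a Brownian-motion functional via Strassen's martingale strong approximation, following the route presumably taken for \cref{theorem:general-mean} but with Strassen's theorem replacing the independent-data embedding. Write $S_t = \sum_{i\le t}(Y_i-\mu_{\P,i}) = t(\hat\mu_t-\tilde\mu_{\P,t})$. Under \cref{condition:lindeberg-type-mgale} with parameter $\kappa$, Strassen's theorem \citep{strassen_almost_1967} gives, on a possibly enlarged probability space, a standard Wiener process $W$ with
\[
S_t - W(V_{\P,t}) = o\paren{V_{\P,t}^{1/2 - \gamma}} \quad\text{almost surely}
\]
for some $\gamma>0$ depending on $\kappa$. \cref{condition:non-vanishing-variances} gives $V_{\P,t} = \sigma_\P^2 t(1+o(t^{-\rho}))$ almost surely. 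Combining this with the modulus of continuity of Brownian motion (Lévy/Csörgő--Révész increments over windows of length $t^{1-\rho}$) yields
\[
\abs{W(V_{\P,t}) - W(\sigma^2_\P t)} = o(t^{1/2-\gamma'})
\]
almost surely for some $\gamma'>0$. Hence $\sup_{t\ge m} t^{-1/2}\abs{S_t - \sigma_\P W(t)} \to 0$ almost surely, and therefore in probability, as $m\to\infty$.

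Next I rescale the event. The coverage event $\{\forall t\in[m,\Delta m]: \abs{\hat\mu_t-\tilde\mu_{\P,t}}\le \mathfrak B_t^{(m,\Delta)}\}$ is equivalent to
\[
\sup_{t\in[m,\Delta m]} \frac{\abs{S_t}}{\hat\sigma_t\sqrt m}\,(t/m)^{q-1} \le \Psi^{-1}(1-\alpha;\Delta,q).
\]
I replace $S_t$ by $\sigma_\P W(t)$ and $\hat\sigma_t$ by $\sigma_\P$. For the first replacement, the error is $\sup_t t^{-1/2}\abs{S_t-\sigma_\P W(t)}\cdot (t/m)^{q-1/2}$, which tends to $0$ in probability since $(t/m)^{q-1/2}\le \max(1,\Delta^{q-1/2})$. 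For the second, \cref{condition:strong-consistent-var-estimator} with $\cP=\{\P\}$, together with $\tilde\sigma^2_{\P,t}\to\sigma^2_\P$ almost surely, handles it. By Brownian scaling, $W(sm)/\sqrt m$ is again a Wiener process, so the limiting statistic is exactly $\sup_{s\in[1,\Delta]}\abs{\tilde W(s)}s^{q-1}$, whose law is $\zeta(\Delta,q)$. The discrete grid $\N\cap[m,\Delta m]$ versus the continuum $[m,\Delta m]$ differs by a negligible amount, again by the modulus of continuity. Since $\zeta(\Delta,q)$ has a continuous distribution function for $\Delta\ge1$ (it dominates $\abs{W(1)}$, and sup-of-Gaussian-process laws are continuous away from the atom at 0), Slutsky-type arguments give the limit $1-\alpha$, i.e. sharpness.

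For the uniformity claims, uniformity over $\alpha$ follows from Pólya's theorem: the sequence of distribution functions converges pointwise to a continuous limit, hence uniformly. Uniformity over $\Delta\in[1,\Deltabar]$ follows because all the approximation errors above were bounded by quantities not depending on $\Delta$ beyond the factor $\max(1,\Deltabar^{q-1/2})$. Continuity of $(x,\Delta)\mapsto\Psi(x;\Delta,q)$ then lets me upgrade to joint uniform convergence on the compact set of $\Delta$ values.

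The main obstacle is strong uniformity ($\Deltabar=\infty$) for $q\ne1/2$. For $q<1/2$ I would use that $\abs{W(s)}s^{q-1}\to0$ almost surely, so $\zeta(\Delta,q)$ stabilises as $\Delta\to\infty$. I then bound the contribution from $t\ge Km$ uniformly: the Strassen error there is controlled by $\sup_{t\ge m}t^{-1/2}\abs{S_t-\sigma_\P W(t)}$ times $(t/m)^{q-1/2}\le1$. For $q>1/2$ the weight $(t/m)^{q-1/2}$ blows up, so I instead rewrite the statistic with the reference time flipped. Large $s$ then carries a vanishing threshold and the probability of crossing is dominated near $s=1$. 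I would show both the prelimit and limit probabilities are uniformly close to their $\Delta=K$ values for large $K$, and reduce to the weak-uniform case. At $q=1/2$ this fails because the law of the iterated logarithm makes $\zeta(\Delta,1/2)\to\infty$, which is why only weak uniformity is claimed there.
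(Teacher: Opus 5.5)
Your proposal is correct and follows essentially the paper's route: Strassen's Theorem~4.4 with $f(v)=v^{2/\kappa}$, a Wiener-increment bound to pass from $W(V_{\P,t})$ to $W(\sigma^2_{\P} t)$, a multiplicative sandwich for $\hat\sigma_t$, Brownian scaling to $\zeta(\Delta,q)$, continuity of the limit law plus a P\'olya-type upgrade, and the inversion $\Delta^{1/2-q}\zeta(\Delta,q)\overset{d}{=}\zeta(\Delta,1-q)$ for $q>1/2$ (the paper packages these as \cref{lemma:strong-approximation-mgale-dependence} and \cref{lem:general-lemma}). The differences are only in supporting tools: you obtain continuity of the limit laws from Tsirelson-type results and $\Delta$-uniformity by compactness/truncation, whereas the paper uses the Cameron--Martin bound of \cref{lemma:cameron-martin-lemma} with an LIL cutoff; also, for $q>1/2$ your truncation of the prelimit statistic is unnecessary, since after multiplying by $\Delta^{1/2-q}$ the coupling error carries the weight $(t/(m\Delta))^{q-1/2}\le 1$ uniformly in $\Delta$.
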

The proof of \cref{theorem:mgale-asympcv} is in \cref{proof:mgale-asympch}. The form of the AsympCH applied to martingale-dependent data is identical to independent data. What differs are the conditions and the uniformity over $\cP$. A one-sided AsympCH for martingale-dependent data can also be developed using the $\Psi_+$ one-sided quantiles described for independent data in \cref{prop:one-sided-confidence-horizons}.

\section{Illustration: Neyman Allocation for Treatment Effect Estimation}\label{sec:neyman-allocation}
In this section, we highlight an application of confidence horizons to online causal inference. Consider running an adaptive experiment with many arms where the analyst aims to simultaneously track estimates of treatment effects and stop early if there exists significant signal that one arm is outperforming others.

Neyman allocation \citep{neyman_two_1934} is a method for assigning treatments to subjects in a way that yields the most efficient estimates of treatment effects. Suppose an experiment has a fixed number of experimental units $n$ and two arms (control and treatment). Neyman demonstrates that in order to minimize the variance of a treatment effect estimator, one should allocate subjects to treatment and control groups proportionally to the respective standard deviations of their outcomes.  

More concretely, let us adopt the potential outcomes framework \citep{neyman_application_1923, rubin_estimating_1974} and let $Y_i(j)$ be the potential outcome the $i$th experimental unit would have received if assigned to arm $j$ and  $\sigma^2_j = \Var{Y(j)}[\P]$. If using an inverse-probability-weighted (IPW) estimator, then the variance of that estimator is minimized by assigning
\begin{equation}\label{eq:neyman-allocation}
    n_j = \frac{\sigma_j}{\sum_{\ell=1}^{J}\sigma_\ell} n
\end{equation}
many units to arm $j$.
Notice that \eqref{eq:neyman-allocation} suggests that arms with higher variance should be assigned more units.
However, $n_j$ depends on the unknown quantities $\sigma_1, \dots, \sigma_J$. The works of \citet{kato2020efficient}, \citet{dai_clip-ogd_2023}, \citet{cook2024semiparametric}, and \citet{neopane_logarithmic_2024} focus on addressing this concern by either analyzing the behavior of plug-in estimators of $\sigma_1, \dots, \sigma_J$ or constructing bespoke online algorithms. Broadly speaking, these algorithms yield estimators whose variances are asymptotically the same as those that would have been attained by Neyman allocation.

We follow the approach of \citet[Section 3]{kato2020efficient} as their methods are simple and admit central limit theorem-type behavior of estimators, which is central to the use of confidence horizons.
Roughly, the data collection algorithm from \citep{kato2020efficient} has a warmup phase where all arms are sampled to obtain some initial estimate of variances. After this warmup phase we allocate according to Neyman's equation \cref{eq:neyman-allocation}. For more details, refer to \citep{kato2020efficient}. 

Let our stream of data be $\paren{A_t, Y_t(1), \dots, Y_t(J)}_{t=1}^{\infty} \simiid \P$ where $A_t \in [J]$ are the different treatment arms and $Y_t(j)$ are the potential outcomes which are \iid{}.\footnote{We use the $J$ notation since $K$ is used for number of groups in a group sequential method.} For each $j \in [J]$, $Y_t(j) \allowbreak \simiid \P_j$ where $\mu_j \equiv \EE{Y_1(j)}[\P]$ and $\sigma^2_{j} \equiv \Var{Y_1(j)}[\P]$. That is, each arm comes from some distribution with its own mean and variance. We assume for some $r > 2$ and $B < \infty$:
\begin{align}\label{eq:potential-outcomes-conditions}
\EE{\abs{Y_t(j)}^r}[\P] < B \quad \text{ and }\quad\inf_{j}\sigma^2_j >0.
\end{align}
The potential outcomes are never jointly observed. Thus, we define the observed outcome 
\[
Y_t = \sum_{j=1}^{J}Y_t(j)\indic{A_t = j}.
\] The output of \citep[Algorithm 1]{kato2020efficient} is a sequence of random variables $\infseqt{A, Y}$ that are notably dependent.
The outcomes of previous units determine the probability of different arms being chosen in the future. Hence, it is necessary to appeal to the martingale structure. To be precise, the sequence $\infseqt{A, Y}$ is equipped with a filtration $\infseqt{\cF}$ which we take to be $\cF_t = \sigma \paren{A_1, \dots, A_t, Y_1 \dots, Y_t}$ and a particular functional obeys a nice martingale-dependence. 

First define the following estimators of the mean of arm $j$ at time $t$
\[
\hat \mu_t(j) = \frac{\tsum{Y_i \indic{A_i = j}}}{\tsum{\indic{A_i = j}}}
\]Essentially, this estimator subsets to units that received arm $j$ and takes the average of this subset. Now define $ Z_t(j)$ to be
\[
Z_{t}(j) = \frac{\paren{Y_t - \hat \mu_{t-1}(j)}\indic{A_t = j}}{\PM{A_t = j \mid \cF_{t-1}}} + \hat \mu_{t-1}(j) \quad \text{ and } \quad \bar Z_t(j) = t\n \sum_{i=1}^{t} Z_i(j).
\]
The estimator $Z_t(j)$ is equivalent to the A2IPW estimator from \citep{kato2020efficient}. It is straightforward to check that $\EE{Z_{t}(j)  \mid \cF_{t-1}} = \mu_j$. The other conditions required for \cref{theorem:mgale-asympcv} are less straightforward and are discussed in \cref{sec:check-neyman-conditions}. We indeed have the following guarantee from \cref{theorem:mgale-asympcv}.

\begin{proposition}[Confidence Horizons under Neyman Allocation]\label{prop:conf-horizons-neyman-alloc}
Let $\infseqt{A, Y(1), \dots, Y(J)}$ be an \iid{} sequence of random variables on $\pspace$ that satisfy \cref{eq:potential-outcomes-conditions}. Assume the confidence horizon variance estimator $\hat \nu^2_t$ satisfies \cref{condition:strong-consistent-var-estimator}. Using the data collection algorithm from \citet{kato2020efficient}, we can construct a sharp AsympCH that gives valid coverage:
\begin{align}
\lim_{m \to \infty}\PM{\forall t \in [m, \Delta m]: \mu_j \in  \bar Z_t \pm \mathfrak{B}_t^{(m,  \Delta)}} = 1-\alpha.
\end{align}
\end{proposition}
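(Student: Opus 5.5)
The plan is to apply \cref{theorem:mgale-asympcv} to the sequence $\infseqt{Z}$ with $Z_t = Z_t(j)$ for a fixed arm $j$, adapted to the filtration $\cF_t$. First, $\mu_{\P,t} = \EE{Z_t(j)\mid\cF_{t-1}}[\P] = \mu_j$ for every $t$. To see this, use that $\hat\mu_{t-1}(j)$ and $\pi_t(j) := \PM{A_t = j\mid\cF_{t-1}}$ are $\cF_{t-1}$-measurable. In addition, $A_t$ is drawn from the allocation rule given $\cF_{t-1}$, independently of the fresh potential outcome $Y_t(j)$. Hence $\EE{(Y_t(j)-\hat\mu_{t-1}(j))\indic{A_t=j}\mid\cF_{t-1}} = \pi_t(j)(\mu_j - \hat\mu_{t-1}(j))$. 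It follows that $\tilde\mu_{\P,t} = \mu_j$, so the target in \cref{theorem:mgale-asympcv} is exactly $\mu_j$. Sharpness and the stated limit then follow directly once the hypotheses of that theorem are verified, since the variance estimator $\hat\nu_t^2$ is assumed to satisfy \cref{condition:strong-consistent-var-estimator}.

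The next step is \cref{condition:lindeberg-type-mgale}, which I would verify through the sufficient conditional-moment bound (\cref{lemma:suff-mgale-condition}). I need $\sup_t \EE{\abs{Z_t(j)}^r\mid\cF_{t-1}} < B'$ for some $r>2$, and this reduces to two facts. The first is that $\pi_t(j)$ is bounded below by a constant $c>0$. This should hold by construction of \citep[Algorithm 1]{kato2020efficient}: variance estimates are clipped or truncated, or more simply the limiting allocation $\sigma_j/\sum_\ell\sigma_\ell$ is bounded away from zero using $\inf_j\sigma_j^2>0$. The second is that $\hat\mu_{t-1}(j)$ is bounded, which I would get either from a truncation step in the algorithm or from a.s. boundedness eventually by the SLLN; if the latter, the argument may have to use a localized version of the bound. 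With both in hand, $\abs{Z_t(j)}\le c^{-1}(\abs{Y_t(j)}+\abs{\hat\mu_{t-1}(j)})+\abs{\hat\mu_{t-1}(j)}$, and the $r$-th moment bound in \cref{eq:potential-outcomes-conditions} gives the claim.

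The main obstacle is \cref{condition:non-vanishing-variances}. Compute
\[
\sigma^2_{\P,t} = \frac{\EE{(Y_t(j)-\hat\mu_{t-1}(j))^2}[\P]}{\pi_t(j)} - (\mu_j-\hat\mu_{t-1}(j))^2 ,
\]
where the expectation in the numerator is over $Y_t(j)$ only. Then
\[
\sigma^2_{\P,t} = \frac{\sigma_j^2 + (\mu_j-\hat\mu_{t-1}(j))^2}{\pi_t(j)} - (\mu_j-\hat\mu_{t-1}(j))^2 .
\]
This converges a.s. to $\sigma^2_\P := \sigma_j^2/\pi^\star(j) > 0$, where $\pi^\star(j) = \sigma_j/\sum_\ell\sigma_\ell$. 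I also need a polynomial rate $t^{\rho'}$. For $\hat\mu_{t-1}(j)-\mu_j$ I would use the Marcinkiewicz--Zygmund SLLN (or the martingale LIL), since the number of pulls of arm $j$ grows linearly. For $\pi_t(j)-\pi^\star(j)$, the allocation is a smooth function of plug-in variance estimates, so I would apply Marcinkiewicz--Zygmund to the second moments (using $r>2$, possibly with an exponent below $r/2$). This yields $\abs{\sigma^2_{\P,t}-\sigma^2_\P} = o(t^{-\rho'})$ a.s. for some $\rho'\in(0,1)$. Averaging, as noted after \cref{condition:non-vanishing-variances}, transfers the same rate to $V_{\P,t}/t$. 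Finally, invoke \cref{theorem:mgale-asympcv} with $q$ and $\Delta$ fixed.
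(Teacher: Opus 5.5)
Your proposal is correct and matches the paper's own verification: reduce to \cref{theorem:mgale-asympcv} for $Z_t(j)$; compute the conditional variance $\sigma_j^2/\pi_{t-1} + \frac{1-\pi_{t-1}}{\pi_{t-1}}(\mu_j-\hat\mu_{t-1})^2$; obtain a polynomial rate for the allocation probabilities from the Marcinkiewicz--Zygmund SLLN applied to the i.i.d.\ per-arm variance estimates (the paper takes $\rho = 1-4/(2+r)$); and deduce \cref{condition:lindeberg-type-mgale} from the $r$-th moment bound. Your Lindeberg step is more explicit than the paper's, and the ``localized'' version you worry about is harmless: \cref{condition:lindeberg-type-mgale} is an almost-sure summability condition, so the proof of \cref{lemma:suff-mgale-condition} goes through pathwise with a random but a.s.\ finite bound on $\sup_t \EE{\abs{Z_t(j)}^r \mid \cF_{t-1}}$.
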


The proof of \cref{prop:conf-horizons-neyman-alloc} is in \cref{sec:check-neyman-conditions}. It amounts to showing that the conditions required for \cref{theorem:mgale-asympcv} are satisfied.

Adaptive experiments \citep{liang_experimental_2023, hadad_confidence_2021} have gained much attention with the contributions from both bandit \citep{lattimore_bandit_2020} and causal inference \citep{neyman_application_1923, rubin_estimating_1974} literatures. There is sometimes a tension between optimizing cumulative rewards (i.e. finding the best arm) and optimizing for inference; see e.g., \citet[Proposition 6.2]{dai_clip-ogd_2023}. The algorithm from \citet{kato2020efficient}, and  Neyman allocation more broadly, falls squarely in the latter regime. This need not mean that the current illustration is \emph{only} designed for this latter regime. For instance, \citet{hadad_confidence_2021} gives methods to do asymptotic inference after a bandit-like algorithm is run. Confidence horizons should, although we do not verify, connect with the framework developed by \citet{hadad_confidence_2021}. However, not every bandit algorithm can be applied. Assumptions 1, 2, and 3 in \citep{hadad_confidence_2021} are in fact very similar to our conditions and rule out some algorithms. 

\paragraph{Numeric Experiment}
We now simulate an adaptive experiment that uses \citep[Algorithm 1]{kato2020efficient} with $J = 6$ different arms. We let $\infseqt{Y(j)}\simiid 2{\rm Beta}(a_j, b_j) - 1$, choosing parameters $a_j, b_j$ to give various means, variances, and skewness. The distributions of the outcome variables are shown in \cref{fig:outcome-y-arms}.

We first show the results of one realization of the adaptive experiment. The shaded regions correspond to AsympCH regions instantiated with $q = 1/2$. Subsequently, we focus on the top two arms (arm 5 and arm 6) and compare against uniform allocation and also against AsympCSs.
\begin{figure}[!ht]
    \centering
\includegraphics[width=0.95\linewidth]{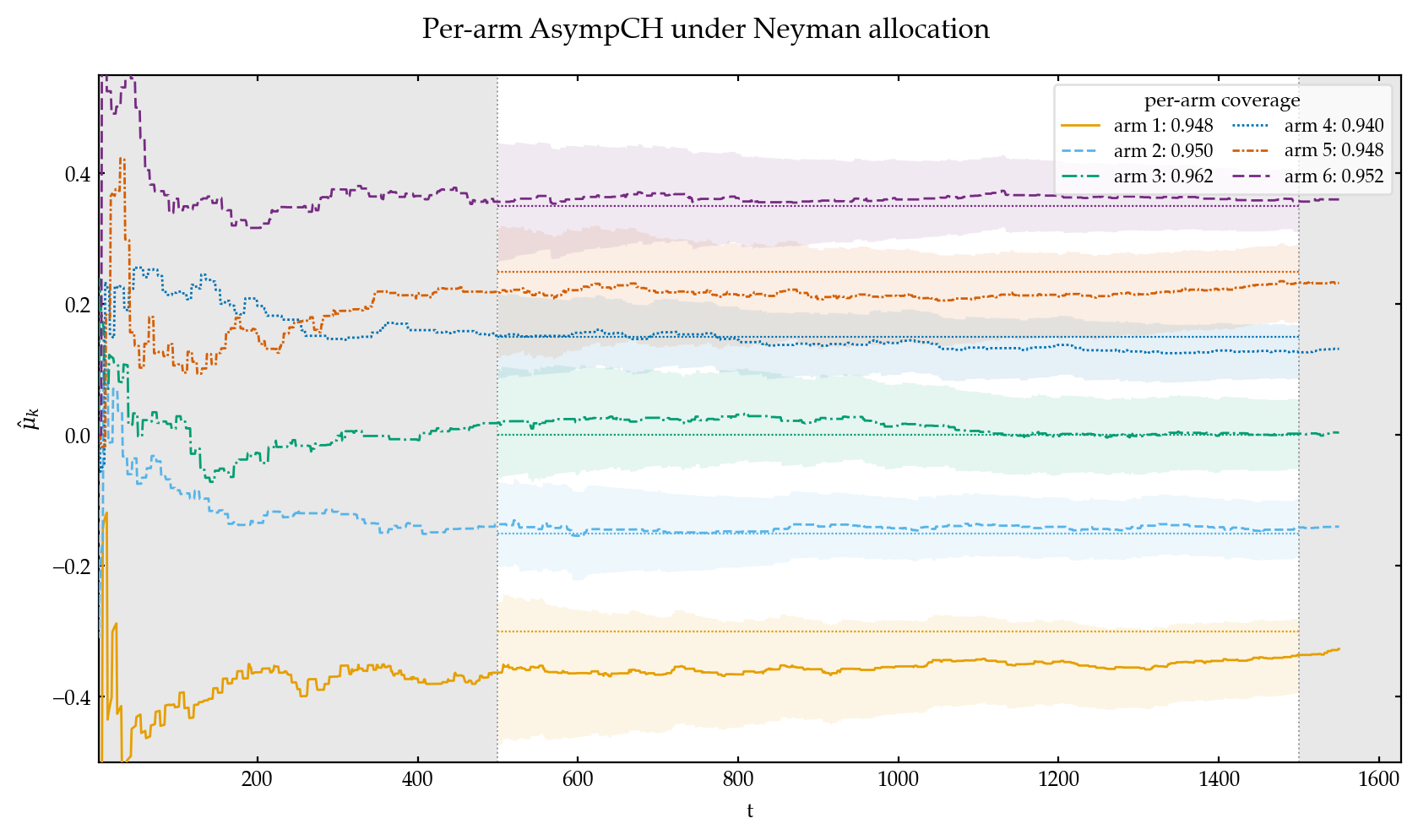}
    \caption{One realization of an adaptive experiment with 6 arms, each with different means and variances. Using \citep[Algorithm 1]{kato2020efficient} induces dependence across $t$ but applying \cref{prop:conf-horizons-neyman-alloc} allows us to bypass the issues created by the dependence. The top right legend shows per-arm coverage estimated by a Monte Carlo simulation. Note per-arm $\alpha$-values can be adjusted using a correction to achieve a family-wise error guarantee. While all per-arm coverages are close to 0.95, some fall beneath. This is a byproduct of the finite-sample performance of an asymptotic guarantee. \figcode{figure8}}
    \label{fig:per-arm-neyman-vs-uniform}
\end{figure}
At $t=500$, the beginning of the window in \cref{fig:per-arm-neyman-vs-uniform}, the confidence horizons for different arms overlap. However, at the end of the window, each arm's confidence horizon is tight enough to separate from other arms. The per-arm coverage in the upper right of \cref{fig:per-arm-neyman-vs-uniform} is calculated using a Monte-Carlo simulation for which the present figure is one such realization. We recall that a miscoverage event is when the mean is not contained in any single point in the region $t \in [500, 1500]$.

\begin{remark}[Ability to wait and see]
While the visualization shows the full path until $t = 1500$, if the experimenter is content at time $t \approx 800$ when arm 6 shows a statistically significant improvement upon arm 5, the experimenter can stop early. Perhaps more importantly, the experimenter can stop early at $t \approx 800$ or can continue to monitor the experiment. We highlight \emph{this} ability of confidence horizons in comparison to simulation-based methods \citep{fischer_sequential_2025, besag_sequential_1991}.
\end{remark}

\begin{figure}[!htbp]
    \centering
\includegraphics[width=0.95\linewidth]{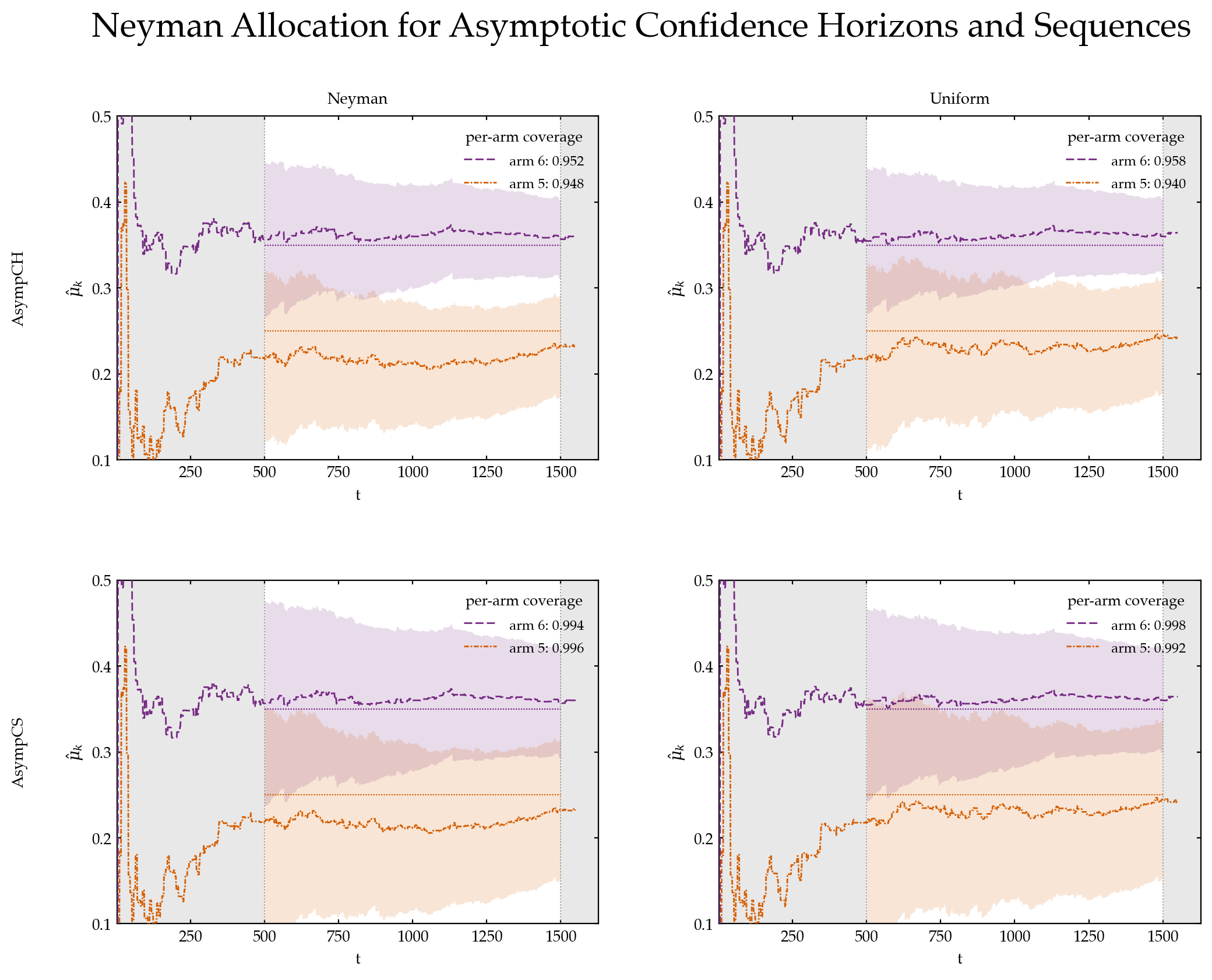}
    \caption{We focus on the top two arms and compare AsympCHs against AsympCSs and \citep[Algorithm 1]{kato2020efficient} against uniform allocation. We see that moving up and to the left, confidence regions are tighter and experimenters can stop the experiment earlier. In the case of both AsympCS experiments, the experimenter can never show with high probability that arm 6 has a higher mean than arm 5. In the first row, which utilizes AsympCHs, the experimenter can stop early for either allocations scheme. Moving from the top right to the top left, we see improvements for both AsympCHs and AsympCSs \figcode{figure9}}
    \label{fig:per-arm-neyman-vs-uniform-top2}
\end{figure}

AsympCSs that are valid under martingale dependence can be deployed under certain adaptive experimentation schemes like the one described in \citep[Algorithm 1]{kato2020efficient}; see e.g., \citet{cook2024semiparametric}. In the following experiment, we compare AsympCHs against AsympCS and the benefits of an adaptive experiment against a uniform one. Adaptive experiments are more complex than traditional ones that allocate treatment uniformly across arms. Therefore, these results illustrate the strength of utilizing a potentially more complicated method. In order to simplify the illustration, we ``zoom" in on the top two performing arms: arm 5 and arm 6. Other arms are still receiving allocation, but for visualization purposes they are omitted from the plots (see \cref{fig:full-neyman-experiment} for the full experiment). In general, we see major improvements using AsympCHs against AsympCSs and substantial improvements using Neyman allocation compared to uniform allocation.

\section{Confidence Horizons and Group Sequential Methods}\label{sec:theoretical-conseq-gs}
In this section, we draw some connections between confidence horizons and group sequential methods (GSMs). We show that the confidence horizons discussed in \cref{sec:power-q} can be viewed as continuous extensions of the tests of Wang \& Tsiatis \citep{wang_approximately_1987}, and by extension, those of \citet{pocock_group_1977} and \citet{obrien1979multiple}. 
\subsection{Background on Group Sequential Methods}
We first give a simplified introduction to the tests of \citet{pocock_group_1977} and \citet{obrien1979multiple}; see also the introduction to the textbook of \citet[Ch. 2]{jennison_group_2000}. Suppose $\paren{X_i}_{i=1}^{\infty} \simiid \cN(\mu, \sigma^2)$, with known variance $\sigma^2 > 0$. The data arrive in groups, indexed by $k = 1, \dots, K$ with cumulative sizes $\paren{n_1, \dots, n_K}$. A key parameter is the ratio between the last look and the first look $\Delta = n_K/n_1$, what we have been referring to thus far as the ``multiplicative horizon''. Consider the standardized statistic:
\begin{equation}\label{eq:standardized-statistic}
Z_k = \frac{1}{\sqrt{ n_k\sigma^2}}\sum_{i=1}^{n_k}(X_{i} - \mu) \sim \cN(0,1).
\end{equation}
For a given $\alpha \in (0, 1)$, the essence of Pocock's test is to find a calibrated constant $c_{\rm P}(\alpha)$ for which
\begin{equation}
    \P \left ( \max_{1 \leq k \leq K} |Z_k| \geq c_{\rm P} (\alpha) \right ) = \alpha.
\end{equation}
In other words, Pocock's test aims to calibrate a boundary for the centered partial sum statistic $\sigma^{-1}\sum_{i=1}^n (X_i - \mu)$ with a shape of $\sqrt{n}$ at times $n_1, \dots, n_K$.

Similarly, from this vantage point, the test of \citet{obrien1979multiple} aims to calibrate a boundary for the same statistic but with a constant shape. That is, \citet{obrien1979multiple} calibrate a constant $c_{\rm OBF} (\alpha)$ for which
\begin{equation}
    \P \left ( \max_{1 \leq k \leq K} \sqrt{n_k}\left \lvert Z_k \right \rvert \geq c_{\rm OBF}(\alpha) \sqrt{n_K} \right ) = \alpha.
\end{equation}

\citet{wang_approximately_1987} propose a generalized sequential test that encompasses both the test of Pocock and of O'Brien--Fleming. The test of \citet{wang_approximately_1987} is indexed by a parameter $q \in [1/2, 1]$, recovering the aforementioned two tests at the endpoints.\footnote{The paper of \citet{wang_approximately_1987} indexes their boundaries by a parameter denoted by the symbol $\Delta$. Note that this can be thought of as $1-q$ in the language of our paper, and has no relationship to the multiplicative horizon $\Delta$ appearing throughout our work.} That is, they calibrate a constant $c_{\rm WT}(\alpha)$ for which 
\begin{equation}\label{eq:wt-decision}
\P \left ( \max_{1 \leq k \leq K} (n_k)^{q - 1/2}\abs{Z_k} \geq c_{\rm WT}\paren{\alpha} (n_K)^{q - 1/2} \right ) = \alpha.
\end{equation}
We have left two important questions unanswered thus far: (1) How does one compute the appropriate constant $c_{\rm WT}$? (2) How can these boundaries be used for non-Gaussian data with unknown variances? Let us now discuss answers to these questions.

\paragraph{Calculating the Cutoff}
Notice that the test statistics $Z_1, \dots, Z_K$ in \cref{eq:standardized-statistic} are jointly multivariate Gaussian. That is, $\paren{Z_1, \dots, Z_K} \sim \cN_K(0, \Sigma)$, where 
\begin{equation}\label{eq:gsm-sigma}
\Sigma_{i,j} = \sqrt{\frac{\min(n_i,n_j)}{\max(n_i,n_j)}}.
\end{equation}
In order to find $c_{\rm WT}$ satisfying \cref{eq:wt-decision},
recursive methods such as those employed in \citet{armitage_repeated_1969} can be used to calculate this probability by leveraging the Markovian structure of the sequential statistics and appealing to numerical integration. In total, the algorithms that calculate this probability require $K$ separate univariate integrals. Note that these integrals are performed in a dynamic programming-type structure and hence cannot be fully parallelized. Analysts may also use Monte Carlo simulation to estimate the cutoff values. Despite the similarity between numerical integration and Monte Carlo simulation, calculating cutoffs with simulation appears to be less common in the literature, with numerical integration being broadly preferred (see, for instance \citep{jennison_group_2000}). For more information on how to calculate cutoffs see \citep[Ch 19]{jennison_group_2000} and the \textsf{gsDesign} package \citep{anderson_gsdesign_2026} for modern implementation details.

\begin{remark}[Pitfalls of Repeated Numerical Integration \& Approximate Methods]
One may suggest taking the number of groups $K$ to be the size of the window $m\Delta - m$ in order to replicate the confidence horizon. As we discuss here, computing the quantile $c_{\rm WT}(\alpha, K, q)$ requires computing for its objective function $K$ many integrals. \textsf{gsDesign} \citep{anderson_gsdesign_2026}, the popular software package for GSMs, limits $K \leq 30$ for the tests of \citet{wang_approximately_1987}. The algorithmic dependence is linear in $K$, which may be overcome with modern computational power. Another bottleneck for the $K$-ary numerical integration is the need to increase the resolution parameter (i.e., how accurate is the approximate integral?). As $K$ increases, each individual integral must be made more precise so that these errors do not compound. As such, the resolution parameter must be made finer as $K$ increases. We contrast the explicit dependence on $K$ with the exact quantile formulas we find for AsympCHs in \cref{sec:exact-quantiles-q01} and \cref{sec:exact-quantiles-q12}. The exact quantiles we derive are at most only one numerical integration. See \cref{fig:integration-runtime} in \cref{sec:ld-approx-comparison} for the numerical challenges facing repeated numerical integration. 

One may also suggest bypassing numerical integration by using approximate $\alpha$-spending methods. \citet{lan_discrete_1983} provide one such $\alpha$-spending approximation to the Pocock and O'Brien--Fleming boundaries. The $\alpha$-spending approach obviates the need of repeated numerical integration, thus allowing $K$ to grow larger without increasing discretization error. However, the algorithms face numerical instability when $K$ grows that cannot be fixed with a finer discretization grid. This, combined with the fact that when $K$ is small the approximations falter (see \cref{fig:ld-ch} in \cref{sec:ld-approx-comparison}), illustrates how $\alpha$-spending has its own challenges. 
\end{remark}
\paragraph{Asymptotic Approximations}
In this work, we are interested not in making exact normality assumptions but instead appealing to approximate normality of statistics for large sample sizes. While group sequential trials began by assuming Gaussian data (see e.g. \citep{pocock_group_1977}), asymptotic theory emerged thereafter with work from Jennison \& Turnbull \citep{jennison_group-sequential_1997, jennison_distribution_1997}. Subsequently, \citet{scharfstein_semiparametric_1997} showed that if asymptotic approximations hold for individual analyses then under a broad set of semi-parametric statistics we can apply sequential-type analyses. Thus censored-survival data, which is studied in \citep{tsiatis_sequential_1995}, is one such example of the broad set \citet{scharfstein_semiparametric_1997} considered.\footnote{For a larger literature review on asymptotic approximations for group sequential methods see \citep[Chapter 11.7]{jennison_group_2000}.} We now recount the established asymptotic framework for GSMs.

Let $\infseqt{X} \simiid \P$ where $\EE{X}[\P] = \mu_\P$ and $\Var{X}[\P] = \sigma_\P^2 < \infty$. We prespecify a finite set of groups $k = 1, \dots, K$ with cumulative sizes $n_1, \dots, n_K$. We again construct the standardized statistic $Z_k$, which we define in \cref{eq:standardized-statistic}. However, in \cref{eq:standardized-statistic} the data were normal and therefore $Z_k$ was normal. Since each $Z_k$ was normal we had that the vector $\paren{Z_1, \dots, Z_K}$ was multivariate normal. In the current setting, we do not assume the data follow a Gaussian distribution and thus have to directly appeal to asymptotics.

The predominant way GSMs appeal to asymptotics is through the multivariate CLT \citep[Eq 11.16]{jennison_group_2000}. Collecting the standardized statistics, we have as $n_1 \to \infty$ the vector
\begin{equation}\label{eq:multivariate-normal-clt}
\paren{ Z_1, \dots, Z_K} \rightsquigarrow \cN_K(\mathbf{0}_K, \Sigma),
\end{equation}where $\Sigma$ is determined by known covariances described in \cref{eq:gsm-sigma}. Therefore, calibrated cutoffs can be calculated using Gaussian quantiles. Appealing to a weak or strong law of large numbers under suitable moment conditions allows one to replace the true variance $\sigma_\P^2$ with the sample variance $\hat \sigma_n^2$.

To the best of our knowledge, group sequential methods have only been established with pointwise guarantees over $\cP, \Delta$, and $K$. Let us consider uniform analogues of group sequential methods starting with the following definition.
\begin{definition}[Uniform Group Sequential Methods]\label{def:uniform-gs-methods}
Let $\infseqt{X} \sim \cP$ with parameters of interest $\paren{\theta_\P}_{\P \in \cP}$. Let $m \in \N, K \in \N$ and $\Delta \geq 1$. Fix $\Deltabar \in [1, \infty)$. A collection of intervals $\paren{\bar C_{t}}_{t \in \curly{n_1, \dots, n_K}}$ is said to be a $\cP$-\uline{weakly uniform} $(1-\alpha)$-GSM for $\paren{\theta_\P}_{\P \in \cP}$ if
\begin{equation}\label{eq:gs-all-uniform}
\lim_{m \to \infty}\sup_{\Delta \in [1, \Deltabar]}\sup_{K \in \N}\sup_{\alpha \in (0, 1)}\sup_{\P \in \cP} \abs{\PM{\forall t \in \curly{n_1, \dots, n_K}: \theta_\P \in \bar C_{t}}- (1-\alpha)} = 0.
\end{equation}
It is said to be $\cP$-\uline{strongly uniform} if the same holds with $\Deltabar = \infty$. 
\end{definition}
With this definition in mind, we now demonstrate that existing GSMs enjoy uniform guarantees, and we provide explicit connections between those methods and confidence horizons.
\subsection{Explicit Connections with Group Sequential Methods}\label{sec:connection-gs}
We can now investigate the connections between GSMs and AsympCHs. We will focus our attention on the tests of \citet{wang_approximately_1987}.  We demonstrate that by using the language of AsympCHs, it is straightforward to see that the tests of \citet{wang_approximately_1987} can be applied to independent but not identically distributed and martingale-dependent data and that these approximations are uniform in a variety of ways. In order to make this comparison, we will invert the sequential tests to make the implied confidence regions. 

First, in order to keep notation consistent and help in the proofs, we will define the \citet{wang_approximately_1987} distribution function $\Xi$ where for $W(t)$ a Wiener process,
\begin{align}
\Xi(x; \curly{1, n_2/n_1 \dots, \Delta}, q)= \PM{\max_{s \in \curly{1, n_2/n_1, \dots, \Delta}}\abs{W(s)}s^{q-1} \leq x}\label{eq:wt-dist-function}.
\end{align}
Note that this distribution $\Xi$ is defined as the inverse of quantiles given in \eqref{eq:wt-decision}. Formally, by \cref{lemma: calc-quantile} we have that
\begin{align}
c_{\rm WT}(\alpha, \curly{n_1, \dots, n_K}, q) \Delta^{q-1/2} = \Xi \n(1-\alpha; \curly{1,n_2/n_1, \dots, \Delta}, q).
\end{align}
The following proposition demonstrates that the GSM of \citet{wang_approximately_1987} defined through $\Xi^{-1}$ enjoys uniform guarantees in the sense of \cref{def:uniform-gs-methods}.

\begin{proposition}[Uniformly Valid Group Sequential Methods for Independent Data]\label{prop:unif-valid-gsm-ind}
Suppose $\paren{X_n}_{n=1}^{\infty}$ is an infinite sequence of independent random variables on $\pspace[\cP]$ that satisfy \cref{condition:ui-condition} with parameter $\kappa$ and \cref{condition:no-deterministic} with parameter $\rho$. Let $\hat \mu_n$ be the sample mean and $\hat \sigma_n$ be any variance estimator that satisfies \cref{condition:strong-consistent-var-estimator}. Then,
\begin{equation}\label{eq:radius-wt}
\bar C_{n_k}^{\rm WT} \equiv \paren{\hat \mu_{n_k} \pm \mathfrak{B}_{n_k}^{\rm WT}}\coloneqq\paren{\hat \mu_{n_k} \pm \frac{\hat \sigma_{n_k} \Xi\n\paren{1-\alpha; \curly{1, n_2/n_1, \dots, \Delta},  q} n_1^{ q-1/2}}{n_k^{ q}}}
\end{equation} is a sharp $(1-\alpha)$-GSM on $\{ n_1, \dots, n_K \}$. For $q \neq 1/2$, the GSM is $\cP$-strongly uniform and for $q = 1/2$, it is $\cP$-weakly uniform.
\end{proposition}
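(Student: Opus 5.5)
The plan is to reduce the GSM statement to the same strong-approximation argument behind \cref{theorem:general-mean}. The only change is that the supremum over the continuum $[m,\Delta m]$ is replaced by a maximum over the finite grid $\{n_1,\dots,n_K\}$, with $n_1=m$ and $n_K=\Delta m$. First I would rewrite the miscoverage event. Put $S_n=\sum_{i\le n}(X_i-\mu_{\P,i})$ and $s_k=n_k/n_1\in[1,\Delta]$. Then $\tilde\mu_{\P,n_k}\in\bar C^{\rm WT}_{n_k}$ for all $k$ if and only if
\begin{equation}
\max_{k} \frac{\abs{S_{n_k}}}{\hat\sigma_{n_k}\sqrt{n_1}}\, s_k^{q-1} \le \Xi^{-1}\paren{1-\alpha; \curly{1,s_2,\dots,\Delta}, q}.
\end{equation}
This is exactly the statistic in \cref{eq:wt-dist-function}, with the Brownian path evaluated at the grid points $s_k$.

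Next I would invoke the uniform coupling already used in the proof of \cref{theorem:general-mean}. Under \cref{condition:ui-condition} and \cref{condition:no-deterministic}, on a common (enlarged) space there is a Wiener process $W$ with
\[
\sup_{n\ge m}\abs{S_n - \sigma_\P W(n)}/\sqrt n \to 0
\]
in probability, uniformly over $\P\in\cP$. It also gives $\hat\sigma_n/\sigma_\P\to1$ uniformly in $n\ge m$ by \cref{condition:strong-consistent-var-estimator}. By Brownian scaling, $W(n_1 s)/\sqrt{n_1}$ is again a standard Wiener process in $s$. Hence the statistic above equals $\max_k \abs{W(s_k)}s_k^{q-1}+R_m$, where the remainder $R_m$ is bounded by $\sup_{s\in[1,\Delta]}$ of the approximation error. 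That bound is free of the grid, so $R_m\to0$ in probability uniformly over $\P$, over $K$, and over all designs with $n_K/n_1\le\Deltabar$. The factors $s^{q-1}\le \max(1,\Delta^{q-1})$ are where $\Deltabar$ enters. For the strongly uniform case $q\ne1/2$, I would reuse the tail argument from \cref{theorem:general-mean}: for $q>1/2$ the weight $s^{q-1}$ decays relative to $\sqrt s$ only if $q<1/2$, so the cases have to be handled as in that proof by controlling the contribution from large $s$ through the LIL or a time inversion. The conclusion to aim for is that the remainder is negligible uniformly in $\Delta\in[1,\infty)$.

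The final step converts the vanishing perturbation into coverage $1-\alpha$, uniformly in $\alpha$. This needs equicontinuity of the family of distribution functions $\Xi(\cdot;\curly{s_k},q)$ over all finite grids in $[1,\Deltabar]$. I expect this to be the main obstacle. I would prove it through a uniform anti-concentration bound. Each such maximum is a maximum of $\abs{\text{Gaussians}}$ with variances $s_k^{2q-1}$ bounded below, and it always includes the $k=1$ coordinate. I would bound $\P(x<\max_k \abs{W(s_k)}s_k^{q-1}\le x+\eps)$ by the corresponding quantity for the supremum over the continuum, plus a small term. A cleaner route is to condition on $W(1)$ and exploit the density of the first coordinate: the maximum is $\abs{W(1)}\vee M'$, where $M'$ depends on increments independent of $W(1)$. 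Given $M'$, the probability of landing in a window of width $\eps$ is at most $2\eps\cdot\phi(0)$, so the Lévy concentration is $O(\eps)$ uniformly over grids and $\Delta$. This gives uniformity over $\alpha\in(0,1)$, and the usual sandwich argument $\Xi(x-\eps)-o(1)\le \P(\cdot)\le\Xi(x+\eps)+o(1)$ then yields sharpness and the claimed weak or strong uniformity.
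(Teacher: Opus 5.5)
Your rewriting of the coverage event and your coupling step are sound and follow the paper. The paper proves one lemma uniformly over all index sets $\cT\subseteq[1,\Delta]$ with $\inf\cT=1$ and $\sup\cT=\Delta$. It gets uniformity in $K$ exactly as you describe: the coupling and variance-estimation errors are controlled by a supremum over the whole interval, so they do not depend on the grid.

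The genuine gap is the step you flag as the main obstacle: anti-concentration of $\max_k\abs{W(s_k)}s_k^{q-1}$ uniformly over grids. Neither of your routes delivers it.

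\emph{Comparison with the continuum supremum} cannot be uniform. For a coarse grid such as $\curly{1,\Delta}$, the grid maximum differs from the continuum supremum by an amount of order one.

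\emph{The conditioning route} fails twice:
\begin{itemize}
\item $M'=\max_{k\ge 2}\abs{W(s_k)}s_k^{q-1}$ is a function of $W(1)+\paren{W(s_k)-W(1)}$, so it is not independent of $W(1)$.
\item Even if it were, given $M'=b$ the statistic $\abs{W(1)}\vee b$ has an atom of mass $\P(\abs{W(1)}\le b)$ at $b$. So when $b\in(x,x+\eps]$ the window is hit with probability of order one, not $2\eps\phi(0)$.
\end{itemize}

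\emph{Generic bounds} do not rescue it either. Nazarov-type anti-concentration for maxima of Gaussians with variances bounded below carries a factor $\sqrt{\log K}$, which destroys the uniformity in $K$. The argument has to use that every coordinate is driven by one Brownian path.

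The paper does this with the Cameron--Martin shift $g(s)=\eta s^{1-q}$, which moves every $W(s)s^{q-1}$ by exactly $\eta$. Hence the probability of the window $(x,x+\eta]$ is at most $\mathrm{TV}(W,W-g)\le\norm[\cH]{g}=\eta\,C_q(1,\Delta)$, simultaneously for all $\cT\subseteq[1,\Delta]$ (twice that in the two-sided case).

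A repaired version of your idea also works. Condition instead on the increments $\paren{W(s)-W(1)}_{s\ge 1}$, which \emph{are} independent of $W(1)$. The statistic is then a convex piecewise-linear function of $W(1)$ whose slopes have modulus at least $\min_k s_k^{q-1}$. This gives the bound $2\phi(0)\eps\max\curly{1,\Delta^{1-q}}$.

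Either bound is uniform over grids in $[1,\Deltabar]$, which settles the weakly uniform case, where tightness is automatic. Neither is uniform in $\Delta$, and for the strongly uniform claim your proposal leaves the needed ingredients vague; the sentence about $q>1/2$ contradicts itself. What is needed:
\begin{itemize}
\item For $q<1/2$: an LIL truncation at a level $R$ beyond which $\abs{W(s)}s^{q-1}$ is small with high probability, so only grid points in $[1,R]$ matter. Add the bound $\P(\zeta(\cT,q)\le x+\eps)\le\P(\abs{W(1)}\le x+\eps)$ for small $x$.
\item For $q>1/2$: pass to the rescaled statistic $\Delta^{1/2-q}\max_k\abs{W(s_k)}s_k^{q-1}$, which by time inversion has the law of $\zeta(\Delta\cT^{-1},1-q)$. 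This reduces to the case $1-q<1/2$, turns the coupling error into at most $C\eps(s/\Delta)^{q-1/2}\le C\eps$, and yields a uniformly tight family.
\end{itemize}

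Uniform tightness is also what your closing sandwich requires. The variance-estimation error is multiplicative, so at the large quantiles reached as $\alpha\to 0$ (or as $\Delta\to\infty$ without rescaling), an additive $\pm\eps$ sandwich does not absorb it.
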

The proof of \cref{prop:unif-valid-gsm-ind} is in \cref{proof:unif-valid-gsm-ind}. Identifying $n_1$ with $m$ and $n_1,\dots, n_K$ with $k \in [m, \Delta m]$ in \cref{theorem:general-mean}, we see that our AsympCHs can be viewed as ``continuous'' analogues of the boundaries of \citet{wang_approximately_1987}. This phenomenon is illustrated empirically in \cref{fig:gs-comparison}.
\begin{figure}
    \centering
\includegraphics[width=0.95\linewidth]{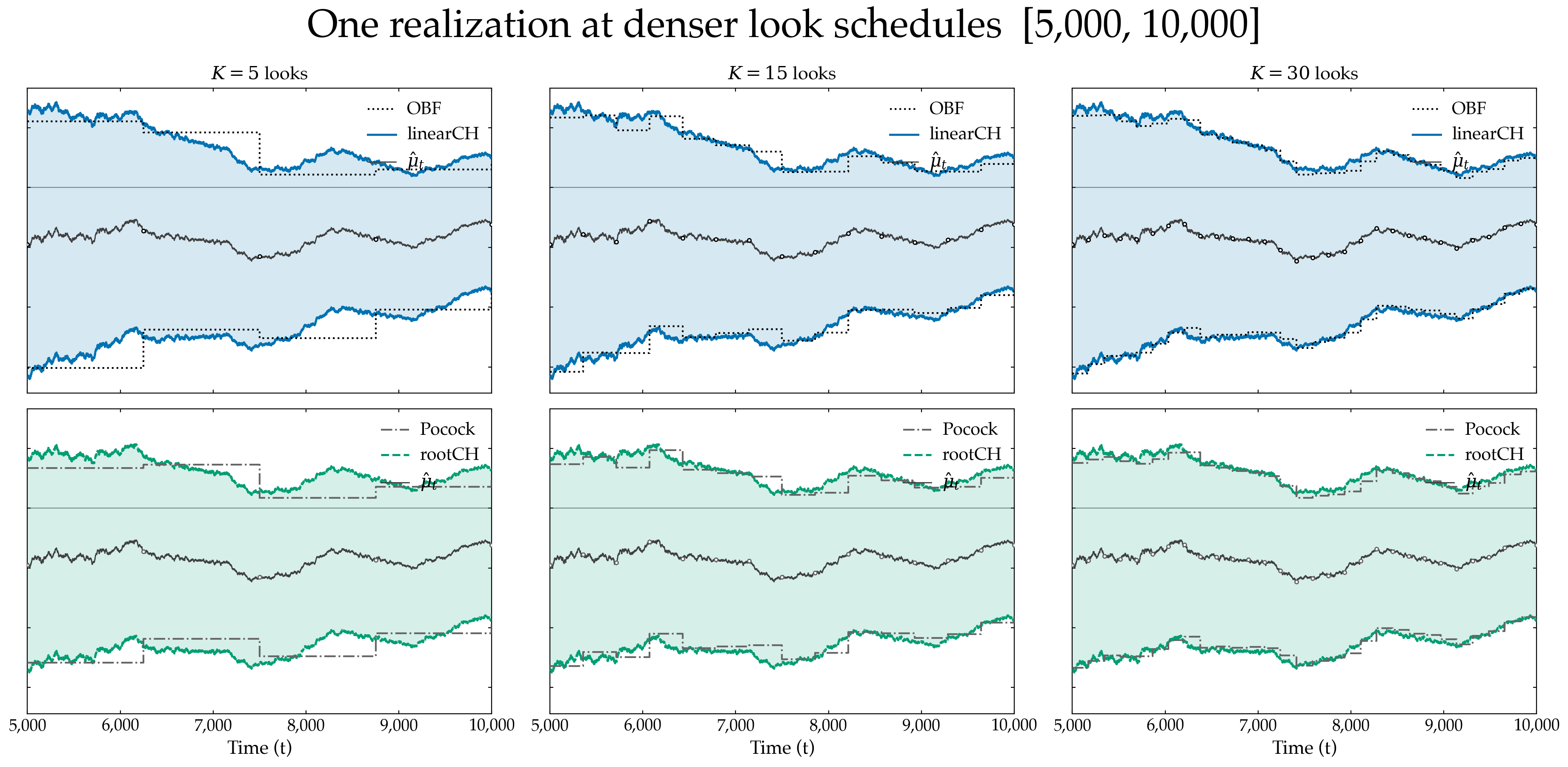}
\caption{Asymptotic confidence horizons form continuous analogues of common group sequential methods. The rootCH is the continuous analogue of Pocock's test \citep{pocock_group_1977}. The linearCH is the continuous analogue of the O'Brien--Fleming test \citep{obrien1979multiple}. \figcode{figure10}}
\label{fig:gs-comparison}
\end{figure}

\begin{remark}[On Uniformity in $K \in \N$]
 While one might suspect that the group sequential methods can be made distribution and quantile-uniform under certain assumptions, uniformity over $K \in \N$ is perhaps more surprising since it is not obvious that applications of multivariate central limit theorems should be uniform over the dimension of those vectors. The results in \cref{prop:unif-valid-gsm-ind} and \cref{prop:unif-valid-gsm-mgale} argue that there exists some uniform rate of convergence that is independent of the number of groups $K$. This implies that analysts need not worry about approximation error due to the number of groups for tests of \citet{wang_approximately_1987}.    
\end{remark}
The following result establishes that GSMs enjoy validity under martingale dependence, akin to \cref{theorem:mgale-asympcv}.

\begin{proposition}[Uniformly Valid Group Sequential Methods for Martingale-Dependent Data]\label{prop:unif-valid-gsm-mgale}
Suppose $\paren{X_n}_{n=1}^{\infty}$ is an infinite sequence of martingale-dependent random variables defined on $\pspace[\cP]$ that satisfy \cref{condition:non-vanishing-variances} with parameter $\rho$ and \cref{condition:lindeberg-type-mgale} with parameter $\kappa$. Let $\hat \mu_n$ be the sample mean and $\hat \sigma_n$ be any variance estimator that satisfies \cref{condition:strong-consistent-var-estimator}. Recalling that 
\begin{equation}
\bar C_{n_k}^{\rm WT} \equiv \paren{\hat \mu_{n_k} \pm \mathfrak{B}_t^{\rm WT}},
\end{equation}we have that $\bar C_{n_k}^{\rm WT}$ is a sharp $(1-\alpha)$-GSM on $\{n_1, \dots, n_K\}$. For a fixed distribution $\P$ and $q \neq 1/2$, the GSM is $\P$-strongly uniform and for $q = 1/2$, it is $\P$-weakly uniform.
\end{proposition}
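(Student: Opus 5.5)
The plan is to reduce \cref{prop:unif-valid-gsm-mgale} to the same strong-approximation machinery that underlies \cref{theorem:mgale-asympcv}. The GSM event is the continuous AsympCH event restricted to the finite grid $\{n_1,\dots,n_K\}\subset[m,\Delta m]$ with $m=n_1$. Write $S_t=\sum_{i\le t}(Y_i-\mu_{\P,i})$, so that $\hat\mu_t-\tilde\mu_{\P,t}=S_t/t$. Coverage at $n_k$ is then the event
\[
|S_{n_k}|\, n_k^{q-1}\le \hat\sigma_{n_k}\, n_1^{q-1/2}\,\Xi^{-1}\paren{1-\alpha;\curly{1,n_2/n_1,\dots,\Delta},q}.
\]

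The first step invokes Strassen's strong approximation \citep{strassen_almost_1967}. \cref{condition:non-vanishing-variances} and \cref{condition:lindeberg-type-mgale} license it: on an enlarged space there is a Wiener process $W$ with $S_t = W(V_{\P,t}) + o(V_{\P,t}^{1/2-\epsilon})$ almost surely for some $\epsilon>0$ depending on $\kappa$. This is exactly the coupling used in the proof of \cref{theorem:mgale-asympcv}, and I take it as already established there.

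The next step rescales time. Using \cref{condition:non-vanishing-variances}, replace $V_{\P,t}$ by $\sigma_\P^2 t$ with error $o(t^{1-\rho})$, and use the Lévy modulus of continuity to control $W(V_{\P,t})-W(\sigma^2_\P t)$. Then apply Brownian scaling $W(\sigma_\P^2 n_1 s)/(\sigma_\P\sqrt{n_1})\stackrel{d}{=}B(s)$. This gives
\[
\sup_{t\in[n_1,\Delta n_1]}\abs{\frac{|S_t|\,t^{q-1}}{\sigma_\P\, n_1^{q-1/2}}-|B(t/n_1)|(t/n_1)^{q-1}}\to 0 \quad\text{a.s.}
\]
\cref{condition:strong-consistent-var-estimator} then lets me swap $\sigma_\P$ for $\hat\sigma_{n_k}$ uniformly in $t\ge n_1$. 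The key point is that this is a supremum over the whole continuous window, so it dominates the maximum over any grid. The approximation error is therefore bounded by a quantity that does not depend on $K$ or on the grid points. This is where uniformity in $K$ comes for free. The probability of the GSM event is then
\[
\Xi\paren{x;\curly{1,\dots,\Delta},q}+o(1),
\]
with $o(1)$ independent of $K$, the grid, and $x$. The last claim (independence from $x$) requires an anti-concentration bound for $\max_k |B(s_k)|s_k^{q-1}$ that is uniform over grids.

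The anti-concentration step, together with uniformity over $\Delta$, is the main obstacle. I would first try to bound the Lévy concentration function of $\max_{s\in G}|B(s)|s^{q-1}$ by that of a single coordinate. At $s=1$ the marginal density is bounded by a constant. A max of Gaussians with unit-order variances has bounded density near its quantiles (Nazarov/Chernozhukov-type anti-concentration), but that bound grows like $\sqrt{\log K}$, which is fatal here.

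Instead I would exploit the continuous supremum directly. Sandwich the grid max between the single value $|B(1)|$ and the continuous $\zeta(\Delta,q)$, and argue on an event where the perturbation is at most $\eta_m\to0$. The probability that the grid max lies in $[x-\eta_m,x+\eta_m]$ should then be bounded via the Markov structure: condition on the path and use the Gaussian density of the increment at the first grid point. I expect this to be bounded uniformly in the grid and in $\Delta\le\bar\Delta$.

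For $q\neq1/2$, the time change $s=e^u$ turns $s^{q-1}B(s)$ into an OU process with drift whose supremum over $[0,\log\Delta]$ converges as $\Delta\to\infty$ (for $q<1/2$ it converges to a quantity determined by early times). This gives uniformity for all $\Delta$, hence strong uniformity. For $q=1/2$ the stationary OU supremum diverges as $\log\Delta\to\infty$, so only $\Delta\le\bar\Delta$ is covered, giving weak uniformity. Uniformity over $\alpha$ follows from uniform convergence of distribution functions, by Pólya's argument applied to continuous limits. Sharpness follows by construction via \cref{lemma: calc-quantile}.
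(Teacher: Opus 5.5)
Your skeleton is the paper's. The paper proves this in one line: it feeds C2 from the Strassen-based \cref{lemma:strong-approximation-mgale-dependence} and C1 from \cref{condition:strong-consistent-var-estimator} into \cref{lem:general-lemma}. That lemma is uniform over all $\cT\subseteq[1,\Delta]$ with $\inf\cT=1$ and $\sup\cT=\Delta$. As you say, $K$-uniformity comes from the coupling error being a supremum over all $t\ge n_1$.

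Your anti-concentration idea differs from the paper's Cameron--Martin bound (\cref{lemma:cameron-martin-lemma}). Once made precise, it is valid and more elementary for bounded $\Delta$. Conditionally on the increments $B(s)-B(1)$, $s\ge 1$, the grid maximum is a convex piecewise-linear function of $B(1)\sim\cN(0,1)$, with slopes of absolute value at least $\min\{1,\Delta^{q-1}\}$. So it falls in an interval of length $\eta$ with probability at most $2\eta\phi(0)\max\{1,\Delta^{1-q}\}$, uniformly over grids. That handles the weakly uniform claim. One small fix: after Brownian scaling the process $B$ depends on $n_1$, so your ``$\to 0$ a.s.'' should be the in-probability statement $\P(\sup_{t\ge n_1}\abs{\cdot}>\eps)\to 0$.

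The gap is strong uniformity for $q>1/2$, which includes O'Brien--Fleming. You claim the time-changed supremum converges as $\Delta\to\infty$ for every $q\neq 1/2$; this is false for $q>1/2$. Indeed $e^{u(q-1)}B(e^u)=e^{u(q-1/2)}U(u)$ with $U$ a stationary OU process, and $\zeta([1,\Delta],q)\ge\abs{B(\Delta)}\Delta^{q-1}\overset{d}{=}\Delta^{q-1/2}\abs{Z}$ with $Z\sim\cN(0,1)$. This has two consequences:
\begin{itemize}
\item The limit laws are not uniformly tight in $\Delta$, so \cref{lemma:dist-uniform-polya-continuity} cannot be applied on the original scale.
\item Near $t=\Delta n_1$, after multiplying by $t^{q-1}/n_1^{q-1/2}$, the coupling and variance-estimation errors are of size $\Delta^{q-1/2}$ times a vanishing factor. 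That is small relative to the quantile but not in absolute terms, and absolute terms are what your anti-concentration bound controls. On the rescaled scale your $B(1)$-density bound also degrades by a factor $\Delta^{q-1/2}$.
\end{itemize}
The missing idea is to multiply both the statistic and the quantile by $\Delta^{1/2-q}$. The errors then become at most of order $\eta\, n_1^{\gamma+1/2}\log n_1$, uniformly in $\Delta$. By time inversion, $\Delta^{1/2-q}\zeta(\cT,q)\overset{d}{=}\zeta(\Delta\cT^{-1},1-q)$ with $1-q<1/2$. This variable is dominated by the a.s.\ finite $\zeta([1,\infty),1-q)$ and is equicontinuous uniformly in $\Delta$ and $\cT$; see \cref{lemma:tilde-f-facts} and Steps 2--3a of \cref{lem:general-lemma}.

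For $q<1/2$ there is a smaller version of the same issue: your density bound grows like $\Delta^{1-q}$. Uniformity over all $\Delta$ and all grids therefore needs the truncation of \cref{lemma:anti-concentration-small-q}:
\begin{itemize}
\item By the LIL, $\sup_{s\ge R}\abs{B(s)}s^{q-1}$ is small with high probability for a fixed $R$, so anti-concentration is only needed on $\cT\cap[1,R]$.
\item Levels $x$ near $0$ are handled by $\P(\abs{B(1)}<\eps)<\eps$.
\end{itemize}
Your remark that the supremum is ``determined by early times'' is the right intuition, but it must be made uniform over grids in this way.
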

A consequence of \cref{prop:unif-valid-gsm-mgale} is that the tests of \citet{wang_approximately_1987} can be used in the Neyman allocation example discussed in \cref{sec:neyman-allocation}.

\paragraph{Tradeoffs for Peeking at Any Time on the Horizon}
One natural question is: when should an analyst use group sequential methods and when should they use confidence horizons? We find a fundamental tradeoff between the two as measured through traditional notions of power and expected stopping times. Confidence horizons empirically outperform GSMs when the metric is the expected stopping time---i.e., the expected number of samples required to reject the null---while GSMs empirically outperform confidence horizons when measuring traditional notions of power---i.e., the probability of \emph{ever} rejecting the null under the alternative. See \cref{fig:expected-power-tradeoff}.

\begin{figure}
    \centering
\includegraphics[width=0.95\linewidth]{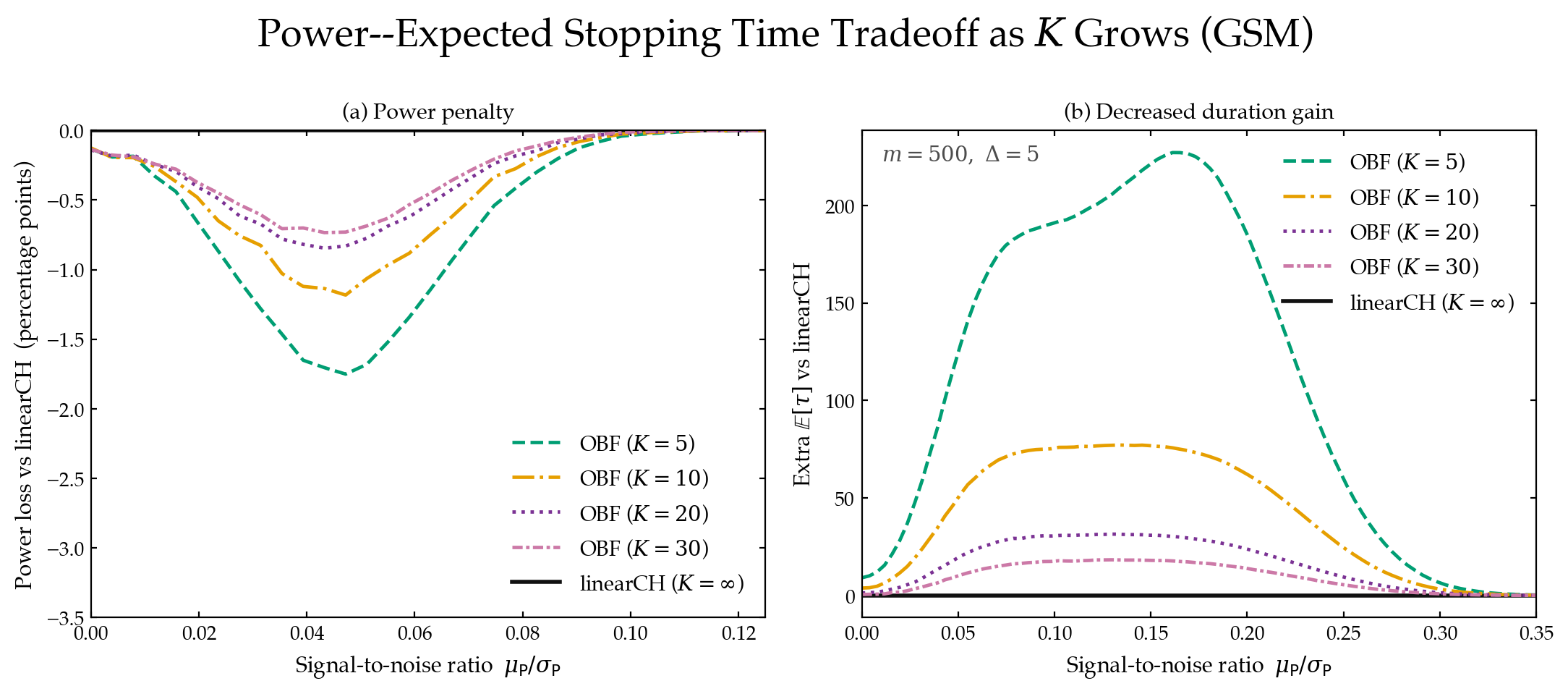}
    \caption{For a Gaussian experiment, we consider the tradeoffs between power and expected stopping time. Power is defined as the probability of rejecting at the specified peeking times. The expected stopping time, which we call $\tau$, is defined to be $\tau = \min \curly{m\Delta, \inf\curly{n: \text{Reject } H_0}}$, where $\inf \emptyset = \infty$. Notice that while the tradeoffs in power are quite small as $K$ varies, the tradeoffs for expected stopping time is more pronounced.\figcode{figure11}}
    \label{fig:expected-power-tradeoff}
\end{figure}

\section{Simulation Studies: Widths and Empirical Coverage}\label{sec:simulation-study}
We now conduct a simulation study focusing on empirical widths and coverage guarantees. We will reserve this section to compare confidence horizons 
with other confidence sequences showing the increased power and exact coverage of confidence horizons. 
We also look at the asymptotic approximation of the confidence horizons. We show the tradeoff for how large the initial time $m$ must be to have approximately valid coverage.
We note that we restrict our analysis to $q \in \{0, 1/2, 1\}$, though AsympCHs are well-defined for any $q \in \R$.
These values of $q$ yield closed-form distribution functions whose quantiles are simple to compute, and we suspect that they will be most commonly adopted in real statistical applications.

We first compare confidence horizons with established confidence sequence methods. We use as our comparators,
the nonasymptotic confidence sequences developed by \citet{howard_time-uniform_2021} and the asymptotically valid confidence sequences considered
by \citet{waudby-smith_time-uniform_2024}.

In \cref{fig:compare-widths-coverage-1}, the first empirical result, we consider the sequence $\infseqt{X} \simiid U(0,1)$.
We fix $m = 1000$ and vary the end point over $\Delta \in \{1.5, 2, 3\}$. All AsympCH methods achieve exact coverage (up to
approximation and Monte Carlo error) at the end of the window. The different $q$ values illustrate how 
$\alpha$ is spent differently over the window. When $q = 1$, for instance, the horizon is more aggressive
at the end of the window, while when $q = 0$ we are at a constant width.
\begin{figure}[!hb]
    \centering
\includegraphics[width=0.95\linewidth]{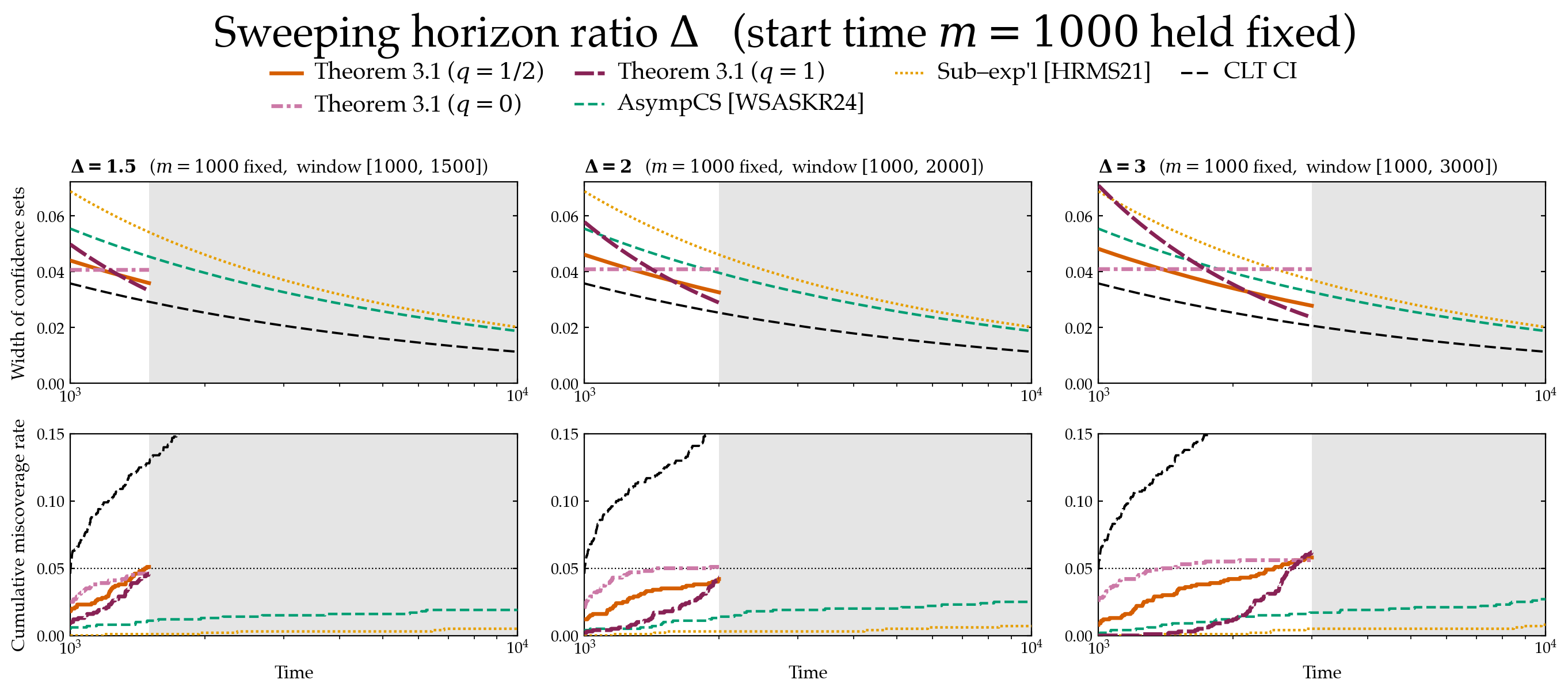}
    \caption{Comparing widths and coverage over a bounded window $[m, \Delta m]$ for $\infseqt{X} \simiid U(0,1)$. 
    Moving to the right, we let $\Delta$ increase hence changing the horizon the confidence horizon must be valid over.
    The top row shows AsympCHs uniformly have the tightest widths. The bottom row shows the coverage is exact over the window.
    The naive CLT-type sequence of confidence intervals shows large miscoverage while the confidence sequences comparators
    are conservative. \figcode{figure12}}
    \label{fig:compare-widths-coverage-1}
\end{figure}

\cref{fig:compare-widths-coverage-2}, the second empirical result, contains the same methods and data
sequences as \cref{fig:compare-widths-coverage-1}. However, this figure keeps $\Delta =2$ constant and instead lets
$m$ vary.
\begin{figure}[!hb]
    \centering
    \includegraphics[width=0.95\linewidth]{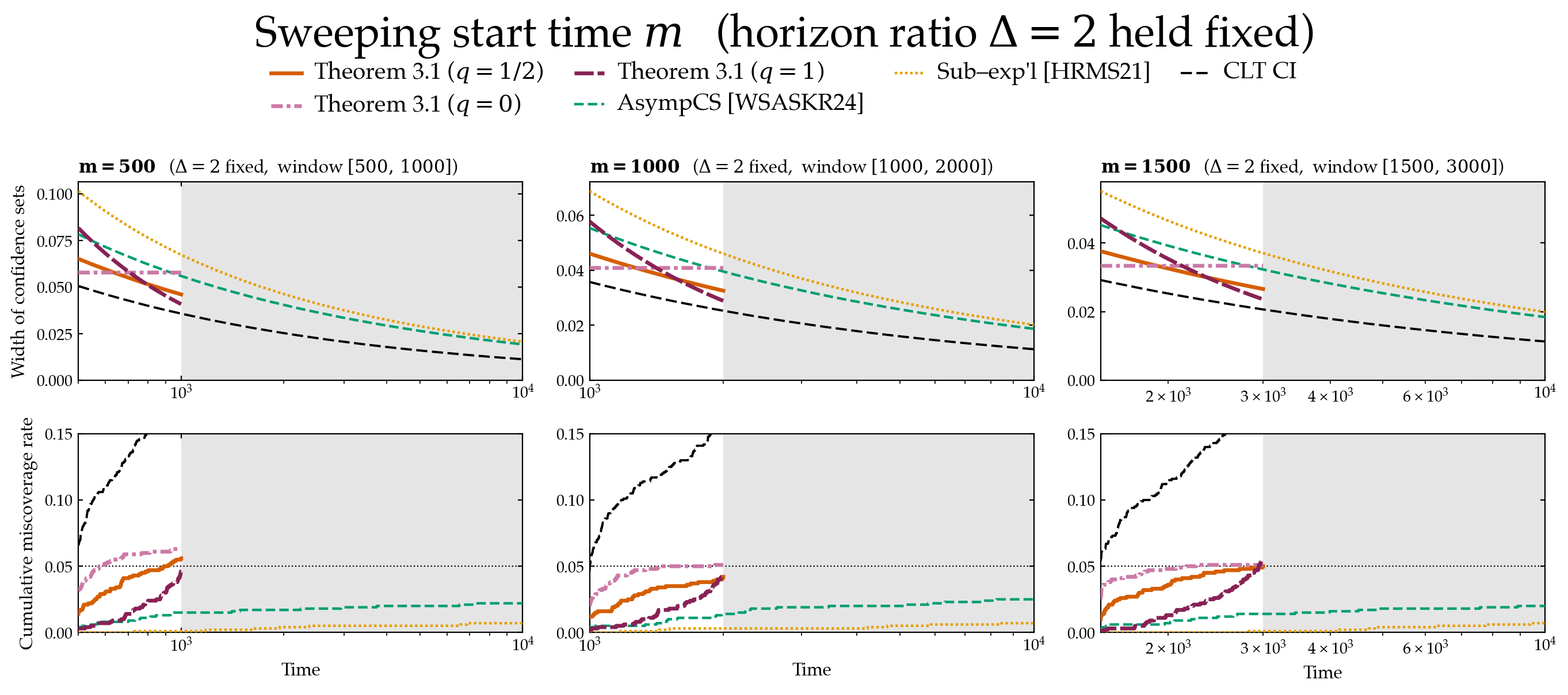}
    \caption{Comparing widths and coverage over a bounded window $[m, \Delta m]$ for $\infseqt{X} \simiid U(0,1)$. 
    Moving to the right, we let $m$ increase hence extending the horizon the confidence horizon must be valid over.
    The top row shows AsympCHs uniformly have the tightest widths. The bottom row shows the coverage is exact over the window.
    The naive CLT-type sequence of confidence intervals shows large miscoverage while the confidence sequences comparators
    are conservative. \figcode{figure13}}
    \label{fig:compare-widths-coverage-2}
\end{figure}
\begin{figure}
    \centering
\includegraphics[width=0.9\linewidth]{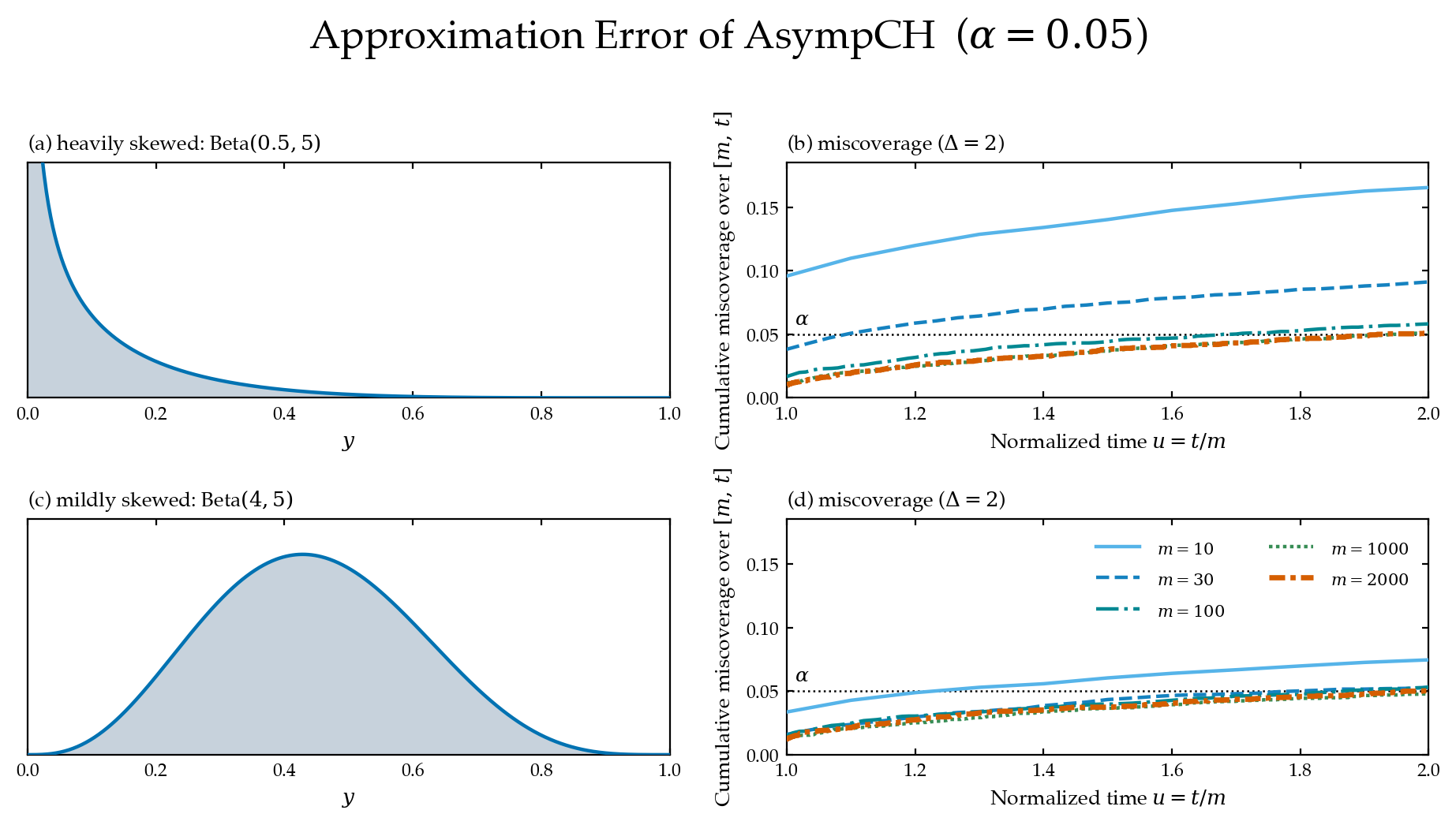}
    \caption{Approximation error for AsympCH as $m \to \infty$. For a skewed Beta distribution, 
    asymptotics kick in around $m = 100$. If the data are more symmetric, this approximation happens quite rapidly. 
    The phenomenon connects with how the third moment influences the pointwise Berry--Esseen bound \citep{berry_accuracy_1941, esseen_liapunoff_1942}. \figcode{figure14}}
    \label{fig:approximation-error}
\end{figure}

We now turn our attention to the approximation error as $m \to \infty$. The asymptotic confidence horizons are
asymptotic and therefore require $m$ to be ``large'' in order for coverage to hold. We consider $\infseqt{X} \simiid {\rm  Beta}(a,b)$,
where we vary the $a,b$ to induce skewness in the distribution. Similar to the pointwise CLT error, distributions with larger skewness
take longer for asymptotics to hold.

\section{Discussion}
In this work, we derived ``asymptotic confidence horizons'' which occupy a space within (and draw connections between) group sequential methods and confidence sequences. In many data analysis plans, data collection is limited from the outset. Using confidence horizons, this fact can be exploited to increase statistical power, reduce average stopping times, and ultimately save resources. 

We introduced a family of confidence horizons indexed by a parameter $q \in \R$ dictating the shape of the boundary in the confidence horizon. Researchers can apply confidence horizons for independent and martingale-dependent data. One such application of using confidence horizons for martingale-dependent data is under Neyman allocation. In this example, confidence horizons show substantial improvements over infinite horizon methods like AsympCSs. Appealing to strong approximations we give certain uniformity guarantees for these methods.

We show a connection between instances of asymptotic confidence horizons and Pocock's test \citep{pocock_group_1977} and O'Brien--Fleming's test \citep{obrien1979multiple}. In this way, confidence horizons can be viewed as similar in spirit to group sequential methods but with a maximal number of interim analyses. Using the techniques underlying confidence horizons, we show that group sequential methods enjoy certain uniformity guarantees and can be applied to independent and martingale-dependent data. We demonstrated a tradeoff between expected stopping time and power as a function of the number of interim peeks, solidifying confidence horizons as lying on the Pareto frontier of sequential inference methods, and as an important tool in the statistical toolbox.

\subsection*{Acknowledgments}
The authors thank Aur\'elien Bibaut, Danat Duisenbekov, and Suhas Vijaykumar for helpful discussions. IW-S gratefully acknowledges support from the Miller Institute for Basic Research in Science.

\appendix

\bibliographystyle{abbrvnat}
\bibliography{references}
\newpage
\section{Proofs of the Main Results}
We use the stochastic $\bar o$ notation suggested in \citet{waudby-smith_distribution-uniform_2026} where 
\[
X_n = \bar o_{\cP}(a_n) \iff \forall \eps > 0, \lim_{m \to \infty}\sup_{\P \in \cP}\PM{\sup_{k \geq m} a_k\n\abs{X_k} > \eps} = 0,
\]and
\[
X_n = \bar O_{\cP}(a_n) \iff \exists C > 0 \st \lim_{m \to \infty}\sup_{\P \in \cP}\PM{\sup_{k \geq m} a_k\n\abs{X_k} > C} = 0.
\]When $\cP = \P$ a singleton we drop the $\sup_{\P \in \cP}$. The $\bar o$ notation consumes the algebraic identities the regular $O$ notation gives. For instance, $o(f(t))O(g(t)) = o(f(t)g(t))$ \citep[Proposition 2.4]{waudby-smith_distribution-uniform_2026}.

We use a shorthand where if $t \not \in \Z$ and $f(t)$ is only defined on the integers we implicitly define $f(t) = f(\floor t)$. For instance, 
\[
\sup_{n \in [10, 12]} S_n = \sup_{n \in [10, 12]} S_{\floor n} = \max_{n \in \{10,11, 12\}}S_n.
\]We do this because we move from discrete to continuous by flooring. It can be shown that linear interpolation or constructing Brownian bridges are also valid.

Unless otherwise stated, $W(t)$ is a standard Wiener process with $W(0) = 0$.

\subsection{Facts about the One- and Two-Sided Distributions}
We first define more general versions of the random variables that give rise to confidence horizons and group sequential methods.
\begin{definition}\label{def:general-gauss-sup}
Let $\cT \subset (0,\infty)$ be an ordered set. Define the one- and two-sided random variables $\zeta_+, \zeta$ to be
\begin{equation}
\zeta_+(\cT, q) = \sup_{s \in \cT}W(s)s^{q-1} \quad \text{ and } \quad \zeta(\cT,q) = \sup_{s \in \cT}\abs{W(s)}s^{q-1},
\end{equation}with distribution functions
\begin{equation}
F_+(x; \cT, q) = \PM{\zeta_+(\cT, q) \leq x}\quad \text{ and } \quad F(x; \cT, q) = \PM{\zeta(\cT, q) \leq x}.
\end{equation}
\end{definition}
\begin{lemma}\label{lemma:scale-inversion-general-sup}
Let $\lambda >0$. We have that
\[
\zeta(\lambda \cT, q) \overset{d}{=}\lambda^{q-1/2}\zeta(\cT, q) \quad \text{ and }\quad \zeta(\cT\n , 1-q) \overset{d}{=}\zeta(\cT, q),
\]where $\cT \n = \curly{1/s: s \in \cT}$. The analogous statements hold for the one-sided distribution.
\end{lemma}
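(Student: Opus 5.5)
Both identities are equalities in law of suprema over $\cT$. The plan is to realize each as an equality in law of whole processes indexed by $s\in\cT$, and then pass to the supremum. The two tools are Brownian scaling and time inversion of the standard Wiener process.

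\emph{Scaling.} For $\lambda>0$ the process $\widetilde W(u)=\lambda^{-1/2}W(\lambda u)$ is again a standard Wiener process. Reparametrizing $s=\lambda u$ with $u\in\cT$ gives
\[
\zeta(\lambda\cT,q)=\sup_{u\in\cT}\abs{W(\lambda u)}(\lambda u)^{q-1}
=\sup_{u\in\cT}\lambda^{1/2}\abs{\widetilde W(u)}\lambda^{q-1}u^{q-1}
=\lambda^{q-1/2}\sup_{u\in\cT}\abs{\widetilde W(u)}u^{q-1}.
\]
The last expression is $\lambda^{q-1/2}$ times a functional of $\widetilde W$. Since $\widetilde W\overset{d}{=}W$, it has the law of $\lambda^{q-1/2}\zeta(\cT,q)$.

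\emph{Inversion.} Define $B(u)=uW(1/u)$ for $u>0$ and $B(0)=0$. Then $B$ is a standard Wiener process: it is centered Gaussian with $\operatorname{Cov}(B(u),B(v))=uv\min(1/u,1/v)=\min(u,v)$. Substituting $s=1/u$ with $u\in\cT$ gives
\[
\zeta(\cT^{-1},1-q)=\sup_{u\in\cT}\abs{W(1/u)}u^{q}=\sup_{u\in\cT}\abs{u^{-1}B(u)}u^{q}=\sup_{u\in\cT}\abs{B(u)}u^{q-1},
\]
which has the law of $\zeta(\cT,q)$. The one-sided versions follow by the same computations with the absolute values removed. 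The only facts used are that the factors $\lambda^{1/2}$, $\lambda^{q-1}$ and $u^{-1}$ are positive, and that the map $x\mapsto\sup_{u}x(u)g(u)$ with $g>0$ commutes with multiplication by positive constants.

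\emph{Main obstacle: measure-theoretic care.} Passing from equality in law of the processes to equality in law of the suprema is immediate when $\cT$ is countable (finite in the group sequential case), because the finite-dimensional distributions then determine the law of the supremum. For uncountable $\cT$ such as $[1,\Delta]$, I would use continuity of paths. Replacing $\cT$ by $\cT\cap\Q$ together with any isolated points leaves the supremum of the continuous function $s\mapsto\abs{W(s)}s^{q-1}$ unchanged almost surely. In the inversion step, I would also note that the possible discontinuity of $B$ at $0$ is irrelevant since $\cT\subset(0,\infty)$; on $(0,\infty)$ the paths of $B$ are continuous. The same reduction applies to both $W$ and its transformed versions, so the identities extend to general $\cT$.
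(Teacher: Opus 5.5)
Your proof is correct and follows the same route as the paper: Brownian scaling $W(\lambda s)\overset{d}{=}\sqrt{\lambda}\,W(s)$ and time inversion $W(s)\overset{d}{=}s\,W(1/s)$, applied inside the supremum with the reparametrizations $s=\lambda u$ and $s=1/u$ as you wrote them, and your measure-theoretic paragraph adds justification the paper omits. One small correction there: $\cT\cap\mathbb{Q}$ together with the isolated points of $\cT$ need not be dense in $\cT$ (for example, when $\cT$ is the set of irrationals in $[1,\Delta]$ it is empty), so instead take any countable dense subset of $\cT$, which always exists because every subset of $\R$ is separable.
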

\begin{proof}[Proof of \cref{lemma:scale-inversion-general-sup}]
We have the following properties of Wiener processes: 
\[
W(\lambda s) \overset{d}{=}\sqrt{\lambda}W(s) \quad \text{ and }\quad W(s) \overset{d}{=}s W(s\n). 
\]Applying these facts to $\zeta$ gives these identities.
\end{proof}
With this lemma, we can state the following as a consequence.
\begin{lemma}\label{lemma: calc-quantile}
Let $\cT = [1, \Delta]$ for some $\Delta \geq 1$. Then
\begin{equation}\label{eq:wt-correct-coverage}
\PM{\sup_{t \in m \cT} \abs{\frac{W(t)}{t}}\frac{t^q}{m^{q-1/2}} \leq x} = \Psi  \quad \text{ and } \quad  \PM{\sup_{t \in m\cT} \frac{W(t)}{t}\frac{t^q}{m^{q-1/2}} \leq x} = \Psi_{+}.
\end{equation}Letting $\Delta = n_K/n_1$ and $\cT^{\rm WT} = \curly{1, n_2/n_1 \dots, \Delta}$ gives
\begin{equation}\label{eq:xi-equivalent-wt-cutoff}
c_{\rm WT}(\alpha, \cT^{\rm WT}, q) \Delta^{q-1/2} = \Xi \n \paren{1-\alpha; \cT^{\rm WT}, q}.
\end{equation}
\end{lemma}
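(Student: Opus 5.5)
The plan is to reduce everything to the scaling identity in \cref{lemma:scale-inversion-general-sup}. For the first display in \eqref{eq:wt-correct-coverage}, I would rewrite the event as a supremum of the form appearing in \cref{def:general-gauss-sup}. For $t \in m\cT$ we have
\[
\abs{\frac{W(t)}{t}}\frac{t^q}{m^{q-1/2}} = m^{1/2-q}\abs{W(t)}t^{q-1},
\]
so the left-hand side equals $\PM{m^{1/2-q}\zeta(m\cT, q) \leq x}$. Applying \cref{lemma:scale-inversion-general-sup} with $\lambda = m$ gives $\zeta(m\cT,q) \overset{d}{=} m^{q-1/2}\zeta(\cT,q)$. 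The prefactor therefore cancels, and the probability is $\PM{\zeta([1,\Delta],q)\leq x} = F(x;[1,\Delta],q) = \Psi(x;\Delta,q)$, by comparing with \eqref{eq:x-delta-q}. The one-sided statement follows verbatim from the one-sided version of the lemma together with \eqref{eq:x-1-delta-q}, yielding $\Psi_+$.

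For \eqref{eq:xi-equivalent-wt-cutoff}, I would start from the defining equation \eqref{eq:wt-decision} in the Gaussian setting. There $(Z_1,\dots,Z_K)$ has covariance \eqref{eq:gsm-sigma}, which is exactly the law of $(W(n_k)/\sqrt{n_k})_{k\le K}$. Hence
\[
(n_k)^{q-1/2}\abs{Z_k} \overset{d}{=} \abs{W(n_k)}n_k^{q-1}
\]
jointly in $k$, so the event in \eqref{eq:wt-decision} reads $\zeta(\{n_1,\dots,n_K\},q) \ge c_{\rm WT}\, n_K^{q-1/2}$. Writing $\{n_1,\dots,n_K\} = n_1\cT^{\rm WT}$ and scaling with $\lambda=n_1$ via \cref{lemma:scale-inversion-general-sup}, this event has the same probability as $\zeta(\cT^{\rm WT},q) \ge c_{\rm WT}(n_K/n_1)^{q-1/2} = c_{\rm WT}\Delta^{q-1/2}$. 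So $1-\alpha = \Xi(c_{\rm WT}\Delta^{q-1/2};\cT^{\rm WT},q)$, and inverting gives the claim.

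The only delicate point is the inversion step, which requires $\Xi$ to be continuous and strictly increasing near the relevant quantile. The supremum of finitely many nondegenerate Gaussians has a continuous distribution function, so equality rather than an inequality holds at the $(1-\alpha)$ level. Strict monotonicity on $(0,\infty)$ follows because the joint Gaussian vector has full support. I would interpret $\Xi^{-1}$ as the generalized (left-continuous) inverse, which makes the identity hold regardless; this is the one step needing care, and I expect it to be routine.
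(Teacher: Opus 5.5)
Your proposal is correct and follows essentially the same route as the paper: the first display comes from the Brownian scaling identity of \cref{lemma:scale-inversion-general-sup} with $\lambda = m$, and the second from rewriting the defining equation \eqref{eq:wt-decision} of $c_{\rm WT}$ as $1-\alpha = \Xi(c_{\rm WT}\Delta^{q-1/2};\cT^{\rm WT},q)$ and then inverting. You also make explicit several steps the paper leaves implicit, namely identifying $(Z_k)$ with $(W(n_k)/\sqrt{n_k})$ via \eqref{eq:gsm-sigma}, scaling by $\lambda = n_1$, and the continuity and strict monotonicity of $\Xi$ needed to pass from $\geq$ to $\leq$ and to invert.
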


\begin{proof}[Proof of \cref{lemma: calc-quantile}]
The result in \cref{lemma:scale-inversion-general-sup} implies
\[
\zeta(m [1, \Delta], q) \overset{d}{=}\zeta([1, \Delta],q) m^{q-1/2},
\]which gives \cref{eq:wt-correct-coverage}.
To show \cref{eq:xi-equivalent-wt-cutoff} it is sufficient to show
\[
\PM{\sup_{s \in \cT^{\rm WT}} \abs{W(s)}s^{q-1} \leq c_{\rm WT}(\alpha, \cT, q)\Delta^{q-1/2}} = 1-\alpha.
\]
The definition of $c_{\rm WT}$ is given by 
\begin{align}
\alpha &= \PM{\sup_{s \in \cT^{\rm WT}} \abs{W(s)}s^{q-1} \geq c_{\rm WT}(\alpha, \cT, q)\Delta^{q-1/2}}\\
&= 1-\Xi \paren{c_{\rm WT}\paren{\alpha, \cT^{\rm WT}, q}\Delta^{q-1/2}; \cT^{\rm WT},q}.
\end{align}
\end{proof}

Now we show that $F/F_+$ are uniformly continuous over $\Delta$ when $q \neq 1/2$. We will use a Cameron--Martin argument to show anti-concentration, which helps prove equicontinuity.

\begin{lemma}\label{lemma:cameron-martin-lemma}
Let $0 < a < b < \infty$ and consider any $\cT \subset[a,b]$. Then for all $\eta > 0$,
\begin{align}
\sup_{x \in \R}\PM{\zeta(\cT, q) \in (x, x+\eta]}\leq 2\eta C_q(a,b) \quad \text{ and }\quad 
\sup_{x \in \R}\PM{\zeta_+(\cT, q) \in (x, x+\eta]} \leq \eta C_q(a,b),
\end{align}where $C_q^2(a,b) = a^{1-2q}+(1-q)^2\int_a^b v^{-2q}\dV$.
\end{lemma}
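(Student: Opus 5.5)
The plan is a Cameron--Martin shift argument: first prove the one-sided bound, then get the two-sided bound from it by a union bound. The key observation is that, on $[a,b]$, adding a deterministic drift proportional to $s^{1-q}$ to $W$ moves the functional $W(s)s^{q-1}$ up by the same constant at every $s$. This shifts $\zeta_+(\cT,q)$ exactly by that constant, whatever the set $\cT\subset[a,b]$ is.

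\emph{Step 1 (the shift).} For $\eta>0$ I will define $h_\eta(s) = \eta a^{-q} s$ for $s\in[0,a]$ and $h_\eta(s) = \eta s^{1-q}$ for $s\in[a,b]$. This function is continuous at $a$, vanishes at $0$, and is absolutely continuous. Its Cameron--Martin norm is
\[
\int_0^b h_\eta'(s)^2\,\dd s = \eta^2\paren{a^{-2q}\cdot a + (1-q)^2\int_a^b v^{-2q}\,\dV} = \eta^2 C_q^2(a,b).
\]
This is exactly the constant in the statement. Since $\cT\subset[a,b]$, we have $(W+h_\eta)(s)s^{q-1} = W(s)s^{q-1}+\eta$ for every $s\in\cT$. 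Hence $\zeta_+(\cT,q)[W+h_\eta] = \zeta_+(\cT,q)[W]+\eta$ pathwise.

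\emph{Step 2 (one-sided bound).} Let $A=\curly{\omega:\zeta_+(\cT,q)[\omega]\le x+\eta}$. Viewed as a subset of path space this set is measurable, because by continuity of paths the supremum over $\cT$ equals the supremum over a countable dense subset of $\cT$. By Step 1, $\PM{\zeta_+\le x} = \PM{W+h_\eta\in A}$, and therefore
\[
\PM{\zeta_+(\cT,q)\in(x,x+\eta]} = \PM{W\in A}-\PM{W+h_\eta\in A} \le d_{\rm TV}\paren{\mathrm{Law}(W),\mathrm{Law}(W+h_\eta)}.
\]
By the Cameron--Martin theorem the two laws are mutually absolutely continuous, and the KL divergence between them is $\tfrac12\norm{h_\eta}_{H}^2 = \tfrac12\eta^2C_q^2$. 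The total variation distance is then bounded either by Pinsker's inequality, which gives $\eta C_q/2$, or by the exact Gaussian-shift formula $2\Phi(\eta C_q/2)-1\le \eta C_q/\sqrt{2\pi}$. Both are at most $\eta C_q(a,b)$ uniformly in $x$.

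\emph{Step 3 (two-sided bound).} Write $\zeta(\cT,q) = \max\curly{\zeta_+(\cT,q)[W],\ \zeta_+(\cT,q)[-W]}$. If the maximum lies in $(x,x+\eta]$, then at least one of the two terms lies in that interval: the maximum is attained by one of them, and neither exceeds $x+\eta$. Since $-W$ is also a standard Wiener process, a union bound together with Step 2 gives $2\eta C_q(a,b)$.

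The only delicate points are the measurability and continuity issues in Step 2 when $\cT$ is an arbitrary subset, handled by reducing to a countable dense subset. Once the drift $h_\eta$ is chosen so that its Cameron--Martin norm reproduces $C_q$, everything else is routine.
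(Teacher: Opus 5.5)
Your proposal is correct and follows the same route as the paper. Shift $W$ by the drift $\eta s^{1-q}$, which moves $W(s)s^{q-1}$ by exactly $\eta$ on $\cT$. Bound the probability difference by the total variation distance between Wiener measure and its Cameron--Martin shift (the paper uses the exact formula $2\Phi(\norm[\cH]{g}/2)-1\le\norm[\cH]{g}$). Then union bound over $\pm W$ for the two-sided case.

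The only cosmetic difference is the norm computation. You extend the drift linearly to $[0,a]$ and compute its Cameron--Martin norm on $[0,b]$. The paper instead computes the RKHS norm of $W$ restricted to $[a,b]$ as $g(a)^2/a+\int_a^b g'(v)^2\,\dV$. Both give the same quantity, $\eta^2 C_q^2(a,b)$.
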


\begin{proof}[Proof of \cref{lemma:cameron-martin-lemma}]
We begin with
\begin{align}
\PM{\zeta_+(\cT, q) \in (x, x+\eta]} &= \PM{\sup_{s \in \cT}W(s)s^{q-1} \leq x+\eta} - \PM{\sup_{s \in \cT}W(s)s^{q-1} \leq x} \\
&=\PM{\forall s \in \cT: \paren{W - g}(t)\leq s^{1-q}x} - \PM{\forall s\in \cT: W(t)\leq s^{1-q}x}
\end{align}for $g(s) = \eta s^{1-q}$. Consequently, for $\mathsf Q$ , the law of $W_g$, the difference in probabilities is 
\begin{align}
\mathsf Q\!\paren{\forall s \in\cT: W(s) \leq s^{1-q}x} - \PM{\forall s\in \cT: W(t) \leq s^{1-q}x},
\end{align}which can be bounded by ${\rm TV}(\mathsf Q, \P)$. Using \cref{lemma:tv-wt-gt} and $\norm[\cH]{g}$, the Cameron--Martin norm of $g$, we have the following upper bound.
\[
{\rm TV}\paren{\mathsf Q, \mathsf P} = 2 \Phi\paren{\norm[\cH]{g}/2} - 1 \leq \norm[\cH]{g}/\sqrt{2\pi} \leq \norm[\cH]{g}.
\]
Using the fact that $s \wedge t$ = $a + (s-a) \wedge (t-a)$, one can calculate using tools from \citep{van_der_vaart_reproducing_2008} that
\begin{align}
\norm[\cH]{g}^2 = g(a)^2/a + \int_{a}^{b}g'(v)^2 \dV 
= \eta^2 \paren{a^{1-2q} + (1-q)^2 \int_{a}^{b}v^{-2q}\dV}.
\end{align}Union bounding gives the extra factor of $2$ for the two-sided distribution.
\end{proof}

\begin{lemma}\label{lemma:valid-continuity}
The Cameron--Martin norm, $\norm[\cH]{g}$, is finite in the following cases:
\begin{center}
\begin{tabular}{l|cc}
 & $b<\infty$ & $b=\infty$ \\
\hline
$a>0$ & all $q$ & $q>1/2$ \\
$a=0$ & $q<1/2$ & never \\
\end{tabular}
\end{center}
\end{lemma}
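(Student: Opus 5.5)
Since $g(s) = \eta s^{1-q}$ and the Cameron--Martin norm computed in the proof of \cref{lemma:cameron-martin-lemma} is
\[
\norm[\cH]{g}^2 = \eta^2\paren{a^{1-2q} + (1-q)^2\int_a^b v^{-2q}\dV},
\]
finiteness reduces to two questions: whether the boundary term $a^{1-2q}$ is finite, and whether the integral $\int_a^b v^{-2q}\dV$ converges. The plan is to treat each cell of the table separately. When $a = 0$, the boundary term is read as the limit $g(a)^2/a = \eta^2 a^{1-2q}$ as $a \downarrow 0$; equivalently, the reproducing-kernel norm on $[0,b]$ is $\int_0^b g'(v)^2\dV$, provided $g(0)=0$.

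\textbf{Case $a>0$, $b<\infty$.} The boundary term is a finite positive number. The integrand $v^{-2q}$ is continuous on the compact interval $[a,b]$, so the integral is finite for every $q \in \R$.

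\textbf{Case $a>0$, $b=\infty$.} The boundary term is still finite. The integral $\int_a^\infty v^{-2q}\dV$ converges iff $2q>1$. The exceptional value $q=1$ kills the integral via the factor $(1-q)^2$, but $q=1>1/2$ anyway. The condition is therefore exactly $q>1/2$, apart from that degenerate point, which already lies in the allowed range.

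\textbf{Case $a=0$, $b<\infty$.} We need $g(0)=\lim_{s\downarrow0}\eta s^{1-q}=0$, which requires $q<1$. Then $a^{1-2q}\to 0$ iff $q<1/2$, and $\int_0^b v^{-2q}\dV<\infty$ iff $2q<1$. Both conditions give exactly $q<1/2$.

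\textbf{Case $a=0$, $b=\infty$.} Integrability at $0$ requires $q<1/2$, while integrability at $\infty$ requires $q>1/2$. These are incompatible. At $q=1$ the integral term vanishes, but then $g(0)=\eta\neq0$ and the boundary term $a^{-1}$ diverges. The norm is therefore never finite.

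The only delicate step is the $a=0$ convention: justifying that the norm on $[0,b]$ is the limit of the $[a,b]$ norms, or equivalently that $g(0)\ne 0$ makes $g\notin\cH$. I would handle this by noting that $\cH$ on $[0,b]$ consists of absolutely continuous functions vanishing at $0$ with $g'\in L^2$, citing \citep{van_der_vaart_reproducing_2008}.
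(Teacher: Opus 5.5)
Your proposal is correct and takes the same route as the paper, which reads finiteness off the explicit formula $\norm[\cH]{g}^2 = \eta^2\paren{a^{1-2q} + (1-q)^2\int_a^b v^{-2q}\dV}$ by checking whether $\int_a^b v^{-2q}\dV$ converges at $0$ and at $\infty$. Your case analysis is more careful than the paper's one-line argument, because you also track the boundary term $a^{1-2q}$, the degenerate value $q=1$, and the requirement $g(0)=0$ when $a=0$.
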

\begin{proof}[Proof of \cref{lemma:valid-continuity}]
It is straightforward to check this by analyzing $\smash{\int_{a}^{b}}v^{-2q}\dV$. When $q = 1/2, \, \smash{\int_{a}^{b}}v^{-2q}\dV = \infty$ if $a = 0$ or $b = \infty$. Thus, when $q = 1/2$ the process is both too spread out and too concentrated; uniform continuity fails. 
\end{proof}

We now use these lemmas to show that when $q < 1/2$ the distribution functions are equicontinuous over  $\Delta \geq 1$.

\begin{lemma}\label{lemma:anti-concentration-small-q}
Let $q < 1/2$ and $\cT \subset [1,\Delta]$ where $\inf \cT = 1$ and $\sup \cT = \Delta$. Then $\forall \eps > 0$, there exists an $\eta \equiv \eta(\eps)$ such that if 
\(
\abs{x-y} \leq \eta,
\)then
\[
\abs{\PM{\zeta(\cT, q) \leq x} - \PM{\zeta(\cT, q) \leq y}} < \eps
\]
\end{lemma}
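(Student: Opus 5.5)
The constant in \cref{lemma:cameron-martin-lemma} for the interval $[1,\Delta]$ is $C_q(1,\Delta)^2 = 1+(1-q)^2\int_1^\Delta v^{-2q}\,\dV$. For $q<1/2$ this grows like $\Delta^{1-2q}$, so applying the lemma directly on $[1,\Delta]$ does not give a bound uniform in $\Delta$. Rescaling by \cref{lemma:scale-inversion-general-sup} does not help either: it only moves the factor $\Delta^{1/2-q}$ onto the window width $\eta$.

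The plan is instead to exploit that, for $q<1/2$, the process $|W(s)|s^{q-1}$ is asymptotically negligible at large $s$. Its typical size is $s^{q-1/2}\to 0$, and by the law of the iterated logarithm $\sup_{s\ge L}|W(s)|s^{q-1}\to 0$ almost surely as $L\to\infty$. The supremum over $\cT$ is therefore governed by a fixed compact piece $[1,L]$, and on that piece the lemma applies with a constant that does not depend on $\Delta$.

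Fix $\eps>0$. Since $\abs{\PM{\zeta\le x}-\PM{\zeta\le y}} = \PM{\zeta\in(\min(x,y),\max(x,y)]}$, it suffices to bound $\PM{\zeta(\cT,q)\in(x,x+\eta]}$ uniformly in $x\in\R$ and in all admissible $\cT$. The argument splits on $x$ relative to a threshold $x_0>0$.
\begin{enumerate}
\item \emph{Small $x$.} Because $1\in\cT$ (as $\inf\cT=1$; if only the infimum is attained in the limit, a continuity-of-paths argument handles it), we have $\zeta(\cT,q)\ge |W(1)|$. Hence for $x<x_0$,
\[
\PM{\zeta(\cT,q)\in(x,x+\eta]}\le \PM{|W(1)|\le x_0+\eta}\le 2(x_0+\eta)/\sqrt{2\pi}.
\]
Choosing $x_0$ and $\eta$ small makes this at most $\eps/2$. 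This also covers $x<0$.
\item \emph{Large $x$.} Using the LIL limit, choose $L=L(x_0,\eps)$ with $\PM{\sup_{s\ge L}|W(s)|s^{q-1}>x_0}<\eps/4$. For $x\ge x_0$ we have the inclusion
\[
\{\zeta(\cT,q)\in(x,x+\eta]\}\subset\{\zeta(\cT\cap[1,L],q)\in(x,x+\eta]\}\cup\Big\{\sup_{s\ge L}|W(s)|s^{q-1}>x_0\Big\}.
\]
This holds because, off the second event, the supremum over $\cT$ exceeds $x\ge x_0$ only through times in $[1,L]$, so the two suprema coincide. (If $\Delta<L$ the first event is just the original one.)
\item \emph{Bounding the first event.} \cref{lemma:cameron-martin-lemma} on $[1,L]$ bounds it by $2\eta C_q(1,L)$. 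This is independent of $\Delta$ and of $\cT$, and is made $<\eps/4$ by shrinking $\eta$.
\end{enumerate}

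The order of choices is: first $x_0$ (from case 1), then $L$ (from case 2), then $\eta$ small enough for both case 1 and case 3. With these choices the bound holds for every $x\in\R$ and every admissible $\cT$.

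The main obstacle is the uniformity in $\Delta$, which is resolved by the truncation at $L$. The only delicate point is justifying the almost-sure decay $\sup_{s\ge L}|W(s)|s^{q-1}\to 0$. This follows from the LIL, $|W(s)|=O(\sqrt{s\log\log s})$, together with $q-1/2<0$. It can also be obtained more softly by applying the time-inversion part of \cref{lemma:scale-inversion-general-sup}, which turns the large-$s$ behaviour into the continuity of $sW(1/s)$ at $0$.
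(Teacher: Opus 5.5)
Your proof is correct and follows essentially the paper's argument: you truncate at a fixed level $L$ (the paper's $R$) via the LIL, handle small $x$ with $\zeta(\cT,q)\ge\abs{W(1)}$, and apply the Cameron--Martin bound of \cref{lemma:cameron-martin-lemma} on $[1,L]$, so that $\eta$ depends only on $\eps$; the only cosmetic difference is that you decouple the threshold $x_0$ from $\eps$, whereas the paper takes both equal to $\eps/2$. One caveat concerns your closing remark: time inversion reduces the tail to showing $\abs{\tilde W(u)}u^{-q}\to 0$ as $u\downarrow 0$, and mere continuity at $0$ gives this only for $q\le 0$, so for $q\in(0,1/2)$ you need H\"older regularity of order $>q$ at $0$, or simply the LIL as in your main argument.
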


\begin{proof}[Proof of \cref{lemma:anti-concentration-small-q}]
Fix $\eps > 0$. By the LIL, there exists an $R \equiv R(\eps)$ such that 
\[
\PM{\sup_{t \geq R}\abs{W(t)}t^{q-1} > \eps/2} < \eps/2.
\]
Setting $\cT'\coloneqq \cT \cap [1, R)$ and $\cT'' \coloneqq \cT \cap [R, \infty)$ gives $\zeta(\cT, q) = \max \curly{A, B}$ where
\[
A = \sup_{s \in \cT'}\abs{W(s)}s^{q-1} \quad \text{ and }\quad  B = \sup_{s \in \cT''}\abs{W(s)}s^{q-1}.
\]Set $\eta = \min \curly{\eps/2, \eps/(4 C_q(1,R))}$.

\paragraph{Case 1:}If $x < \eps/2$ then $x+\eta \leq \eps$ by definition of $\eta$. Consequently, 
\begin{align}
\PM{\sup_{s \in \cT} \abs{W(s)}s^{q-1} \in (x, x+\eta]} &\leq \PM{\sup_{s \in \cT} \abs{W(s)}s^{q-1} \in [0, \eps)}\\
&\leq \PM{\abs{W(1)} < \eps} < \eps,
\end{align}
which follows since $\inf \cT = 1$ and $\PM{\sup_{s \in \cT} \abs{W(s)}s^{q-1} > \eps} > \PM{\abs{W(1)}1^{q-1} > \eps}$. The last inequality uses 
\[
\PM{\abs{W(1)} < x} = \int_{-x}^{x}\phi(z)\dZ \leq \int_{-x}^{x}\phi(0)\dZ = (2\pi)^{-1/2}(2x) < x.
\]

\paragraph{Case 2:} If $x \geq \eps/2$,
\begin{align}
\PM{\zeta(\cT, q) \in (x, x+\eta]} &\leq \PM{B > \eps/2} + \PM{A \in (x, x+\eta]}\\
&\leq \eps/2 + 2\eta\, C_q(1, R) \leq \eps.
\end{align}The first line uses the fact that if $B < \eps/2$ and $x \geq \eps/2$ then $\PM{\max \curly{A, B} \in (x, x + \eta]} = \PM{A \in (x, x + \eta]}$.
\end{proof}
Now, we prove the equicontinuity statement for the one-sided distribution. Like the one-sided Robbins--Siegmund distribution \citep{waudby-smith_distribution-uniform_2026}, we have continuity over a subset of the support. This restriction is benign since it enables the analyst to construct high $1-\alpha$ confidence horizons.

\begin{lemma}\label{lemma:anti-concentration-small-q-one-sided}
Let $q < 1/2$ and $\cT \subset [1,\Delta]$ where $\inf \cT = 1$ and $\sup \cT = \Delta$. Fix $\alpha_1 > 1/2$ and let $\varsigma = \Phi \n (\alpha_1)> 0$. Then $\forall \eps > 0$, there exists an $\eta \equiv \eta(\eps, \alpha_1)$ such that if $x,y > \varsigma$ and
\(
\abs{x-y} \leq \eta,
\) then
\[
\abs{\PM{\zeta_+(\cT, q) \leq x} - \PM{\zeta_+(\cT, q) \leq y}} < \eps
\]
\end{lemma}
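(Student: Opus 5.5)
The plan is to follow the proof of \cref{lemma:anti-concentration-small-q}: split the index set at a large time $R$ and control the far piece by the law of the iterated logarithm. The near piece is then handled by the one-sided Cameron--Martin bound of \cref{lemma:cameron-martin-lemma}.

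The one new feature is that $\zeta_+(\cT,q)$ can be negative and need not be small near $0$. So the ``Case 1'' argument used for the two-sided variable, which bounds $\PM{|W(1)|<\eps}$, is unavailable. This is why we work only with $x,y>\varsigma=\Phi\n(\alpha_1)>0$. The positivity of $\varsigma$, i.e.\ $\alpha_1>1/2$, is exactly what gives a strictly positive threshold below which the tail piece will be forced to sit.

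\textbf{Step 1 (tail piece).} Since $q<1/2$, the LIL gives $|W(t)|t^{q-1}\lesssim \sqrt{\log\log t}\,t^{q-1/2}\to 0$ almost surely. Hence there is $R=R(\eps,\alpha_1)>1$, independent of $\Delta$ and of $\cT$, such that
\[
\PM{\sup_{t\ge R}|W(t)|t^{q-1}>\varsigma}<\eps/2 .
\]

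\textbf{Step 2 (decomposition).} Write $\zeta_+(\cT,q)=\max\{A,B\}$, where $A=\sup_{s\in\cT\cap[1,R)}W(s)s^{q-1}$ and $B=\sup_{s\in\cT\cap[R,\infty)}W(s)s^{q-1}$, with the convention $\sup\emptyset=-\infty$. Because $\inf\cT=1<R$, the set $\cT\cap[1,R)$ is nonempty, so $A$ is a genuine random variable.

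Take $x>\varsigma$. On the event $\{B\le\varsigma\}$ we have $B<x$, so $\max\{A,B\}\in(x,x+\eta]$ forces $A\in(x,x+\eta]$. Therefore
\[
\PM{\zeta_+(\cT,q)\in(x,x+\eta]}\le \PM{B>\varsigma}+\PM{A\in(x,x+\eta]}\le \eps/2+\eta\, C_q(1,R),
\]
where the last bound applies \cref{lemma:cameron-martin-lemma} (one-sided version) to $\cT\cap[1,R)\subset[1,R]$.

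\textbf{Step 3 (choice of $\eta$).} Set $\eta=\eps/(2C_q(1,R))$, which is finite by \cref{lemma:valid-continuity} with $a=1$, $b=R<\infty$. This makes the right-hand side at most $\eps$, up to replacing $\eps$ by $\eps/2$ to get the strict inequality. For $x,y>\varsigma$ with $|x-y|\le\eta$, apply the bound with the smaller of the two as the left endpoint to obtain the claim.

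The main point to verify, and the only real obstacle, is uniformity: $\eta$ must not depend on $\Delta$ or on the particular $\cT$. This holds because $R$ is chosen from a bound on the whole path $\sup_{t\ge R}$, and $C_q(1,R)$ depends only on $q$ and $R$. A secondary detail is that $B$ may be $-\infty$ when $\Delta<R$, which is harmless since then $\{B>\varsigma\}$ is empty.
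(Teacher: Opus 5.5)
Your proposal is correct and follows the paper's proof essentially step for step: an LIL choice of $R$ that is uniform in $\Delta$ and $\cT$, the split $\zeta_+(\cT,q)=\max\{A,B\}$ at $R$, and the one-sided Cameron--Martin bound with $\eta=\eps/(2C_q(1,R))$. The differences are cosmetic. You control the tail at level $\varsigma$ and use $x>\varsigma$ strictly, where the paper uses $\varsigma/2$. Your remark about halving $\eps$ is unnecessary, since $\PM{B>\varsigma}<\eps/2$ already makes the final inequality strict.
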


\begin{proof}[Proof of \cref{lemma:anti-concentration-small-q-one-sided}]
Fix $\eps > 0$. By the LIL, there exists an $R \equiv R(\eps, \alpha_1)$ such that 
\[
\PM{\sup_{t \geq R}\abs{W(t)}t^{q-1} > \varsigma/2} < \eps/2.
\]
Set $\cT'\coloneqq \cT \cap [1, R)$ and $\cT'' \coloneqq \cT \cap [R, \infty)$. Then, $\zeta_+(\cT, q) = \max \curly{A, B}$ where
\[
A = \sup_{s \in \cT'}W(s)s^{q-1} \quad \text{ and }\quad  B = \sup_{s \in \cT''}W(s)s^{q-1}.
\]Since $\inf \cT = 1$ we have $\zeta_+(\cT, q) \geq W(1)$. This implies that  $ F_+\n(\alpha_1; \cT, q) \geq \Phi\n(\alpha_1) =\varsigma $. Thus, we focus on $x \geq \varsigma$ as the only case. Letting $x \geq \varsigma$,
\begin{align}
\PM{\zeta_+(\cT, q) \in (x, x+\eta]} &\leq \PM{B > \varsigma/2} + \PM{A \in (x, x+\eta]}\\
&\leq \eps/2 + \eta\, C_q(1, R) \leq \eps.
\end{align}The reasoning is the same as the proof of \cref{lemma:anti-concentration-small-q}. Setting $\eta \equiv \eta(\eps, \varsigma)= \eps/ (2C_q(1,R))$ gives the desired result.
\end{proof}

Lastly, we prove that a transformed random variable has uniform continuity and tightness properties if $q > 1/2$.
\begin{lemma}\label{lemma:tilde-f-facts}
Let 
\[
\Tilde \zeta(\cT, q) = 
    \zeta(\cT, q) \Delta^{1/2-q} \quad \text{ and }\quad \Tilde F(x; \cT, q) = \PM{\Tilde \zeta (\cT, q) \leq x}.
\]Then, 
\[
\Tilde \zeta(\cT, q) \overset{d}{=}\zeta \paren{\Delta\cT\n, 1-q} \quad \text{ and } \quad \Tilde F(x\Delta^{1/2-q}; \cT, q) = F(x; \cT, q).
\]If $q > 1/2$, $\Tilde F$ is uniformly tight and equicontinuous with respect to $\Delta$. The analogous statement holds for the one-sided distribution.
\end{lemma}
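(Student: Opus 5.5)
The proof has three parts: a distributional identity, a change of variable in the distribution function, and then tightness and equicontinuity, which reduce to the $q<1/2$ case already handled.

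\textbf{Distributional identity.} Apply \cref{lemma:scale-inversion-general-sup} twice. The inversion property gives
\[
\zeta(\cT,q) \overset{d}{=} \zeta(\cT^{-1}, 1-q).
\]
The scaling property with $\lambda = \Delta$, applied to the set $\cT^{-1}$ and exponent $1-q$, gives
\[
\zeta(\Delta\cT^{-1}, 1-q) \overset{d}{=} \Delta^{(1-q)-1/2}\,\zeta(\cT^{-1},1-q) = \Delta^{1/2-q}\,\zeta(\cT^{-1},1-q).
\]
Combining the two yields $\Tilde\zeta(\cT,q) = \Delta^{1/2-q}\zeta(\cT,q) \overset{d}{=} \zeta(\Delta\cT^{-1},1-q)$. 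The identity for the distribution functions is immediate from the definition:
\[
\Tilde F(x\Delta^{1/2-q};\cT,q) = \PM{\Delta^{1/2-q}\zeta \le x\Delta^{1/2-q}} = F(x;\cT,q).
\]
The one-sided statements follow verbatim, since the Wiener identities do not involve absolute values.

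\textbf{Setting for the limit properties.} Take $q>1/2$, so $q' := 1-q < 1/2$. Since $\cT\subset[1,\Delta]$ with $\inf\cT=1$ and $\sup\cT=\Delta$, the set $\Delta\cT^{-1}$ satisfies $\Delta\cT^{-1}\subset[1,\Delta]$, with $\inf = \Delta/\sup\cT = 1$ and $\sup = \Delta/\inf\cT = \Delta$. So $\Tilde\zeta(\cT,q)$ has the law of $\zeta(\cT',q')$ for a set $\cT'$ satisfying exactly the hypotheses of \cref{lemma:anti-concentration-small-q}.

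\textbf{Equicontinuity.} \cref{lemma:anti-concentration-small-q} gives an $\eta(\eps)$ that does not depend on $\Delta$ or $\cT$. Indeed, in that proof $R$ comes from the LIL applied to the whole half-line, and $C_{q'}(1,R)$ depends only on $R$ and $q'$. This is equicontinuity of $\Tilde F$ in $\Delta$. For the one-sided version I invoke \cref{lemma:anti-concentration-small-q-one-sided} instead, with the restriction $x>\varsigma$ as there.

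\textbf{Uniform tightness.} I bound
\[
\zeta(\cT',q') \le \sup_{s\ge 1}\abs{W(s)}s^{q'-1},
\]
where the right-hand side does not depend on $\Delta$. It is almost surely finite because $q'-1<-1/2$: continuity of $W$ handles compact pieces, and the LIL handles $s\to\infty$, since $\abs{W(s)}s^{q'-1} \lesssim \sqrt{s\log\log s}\,s^{q'-1}\to 0$. A single a.s.\ finite dominating variable gives tightness uniformly in $\Delta$ and $\cT$. The lower side is trivial, since $\zeta\ge 0$; for $\zeta_+$ one uses $\zeta_+ \ge W(1)$.

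The only real obstacle is checking that the constants in \cref{lemma:anti-concentration-small-q} are genuinely free of $\Delta$, which the observation about $R$ and $C_{q'}(1,R)$ above settles.
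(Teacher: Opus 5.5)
Your proposal is correct and follows the paper's proof essentially step for step. You use the same two applications of \cref{lemma:scale-inversion-general-sup} (inversion then scaling by $\Delta$, where the paper inverts $\Delta\cT^{-1}$ and scales by $\Delta^{-1}$), the same reduction of equicontinuity to \cref{lemma:anti-concentration-small-q} and \cref{lemma:anti-concentration-small-q-one-sided} for $\zeta(\Delta\cT^{-1},1-q)$, and the same tightness argument by domination with $\zeta([1,\infty),1-q)$, which is a.s.\ finite by the LIL. Your explicit check that $\Delta\cT^{-1}\subset[1,\Delta]$ has infimum $1$ and supremum $\Delta$ is a detail the paper leaves implicit, but it is needed to invoke those lemmas.
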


\begin{proof}[Proof of \cref{lemma:tilde-f-facts}]
The first equality is given by \cref{lemma:scale-inversion-general-sup}. This is because
\begin{align}
\zeta(\Delta \cT\n, 1-q) \overset{d}{=} \zeta(\Delta\n \cT, q) \overset{d}{=} \Delta^{1/2-q}\zeta(\cT, q).
\end{align}The second equality simply cancels out the $\Delta$ factors. To show continuity when $q > 1/2$. We note 
\begin{align}
\Tilde F(x; \cT, q) &= \PM{\zeta(\Delta \cT \n, 1-q) \leq x}\\
&= F(x; \Delta \cT\n, 1-q).
\end{align}
Since $q > 1/2$, $1-q <1/2$. Thus, we can invoke \cref{lemma:anti-concentration-small-q} on this set to show equicontinuity over $\Delta$. To show tightness, we note that 
\begin{align}
\Tilde \zeta(\cT, q) \overset{d}{=}\zeta(\Delta \cT\n, 1-q) \leq\zeta([1,\Delta], 1-q) \leq \zeta([1,\infty), 1-q).
\end{align}
Again, we can use the small $q$ results since $1-q < 1/2$. It follows from \cref{lemma:proper-limit} that this random variable is finite almost surely. For the one-sided distribution, we can use \cref{lemma:anti-concentration-small-q-one-sided}.
\end{proof}

We now summarize the properties of the distribution functions we are working with. 
\begin{lemma}\label{lemma:unif-continuity-psi}
Let $\Delta \geq 1, \cT \subset [1, \Delta]$  where $\inf \cT = 1$ and  $\sup \cT = \Delta$.

For $q < 1/2, F/F_+$ have the following properties
\begin{enumerate}[(i)]
    \item \emph{Equicontinuity:} $F(x; \cT, q)$ is equicontinuous and uniformly tight over $\Delta \geq 1$ for any $\cT \subset [1,\Delta]$. $F_+(x; \cT, q)$ is equicontinuous on $[a, \infty)$ for any $a > 0$ and uniformly tight for all $\Delta \geq 1$, and for any $\cT \subset [1,\Delta]$.
    \item \emph{Strictly increasing and invertible:} $F/F_+$ are strictly increasing and hence invertible.
\end{enumerate}
For $q > 1/2$, the distribution functions obey
\begin{enumerate}[(i)]
    \item \emph{Equicontinuity:} $\Tilde F(x; \cT, q)$ is equicontinuous and uniformly tight over $\Delta \geq 1$ for any $\cT \subset [1,\Delta]$. $\Tilde F_+(x; \cT, q)$ is equicontinuous on $[a, \infty)$ for any $a > 0$ and uniformly tight for all $\Delta \geq 1$, and for any $\cT \subset [1,\Delta]$.
    \item \emph{Strictly increasing and invertible:} $\Tilde F/\Tilde F_+$ are strictly increasing and hence invertible.
\end{enumerate}
For $q = 1/2$, the distribution functions have the following properties for any fixed $\Deltabar < \infty$:
\begin{enumerate}[(i)]
    \item \emph{Equicontinuity:} $ F(x; \cT, q)$ and $ F_+(x; \cT, q)$ are equicontinuous over $\Delta \in [1, \Deltabar]$ for any $\cT \subset [1,\Delta]$.
    \item \emph{Strictly increasing and invertible:} $F/ F_+$ are strictly increasing and hence invertible.
\end{enumerate}
\end{lemma}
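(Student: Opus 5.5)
The plan is to treat the three regimes $q<1/2$, $q>1/2$ and $q=1/2$ separately. In each regime the equicontinuity and tightness claims come almost directly from the anti-concentration lemmas already proved. Strict monotonicity is then a separate, uniform argument based on a support / Cameron--Martin positivity property.

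\emph{Case $q<1/2$.} Equicontinuity of $F(\cdot;\cT,q)$ over $\Delta\ge1$ and $\cT\subset[1,\Delta]$ is exactly \cref{lemma:anti-concentration-small-q}. That lemma provides an $\eta(\eps)$ that depends neither on $\Delta$ nor on $\cT$, because $R(\eps)$ comes from the LIL applied to the whole half-line $[1,\infty)$.

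For the one-sided function $F_+$ on $[a,\infty)$, I would rerun the proof of \cref{lemma:anti-concentration-small-q-one-sided} with $\varsigma$ replaced by $a$. That proof only used $x\ge\varsigma>0$ to split off the tail piece $B$, so any fixed $a>0$ works.

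Uniform tightness follows from the monotonicity $\zeta(\cT,q)\le\zeta([1,\infty),q)$, together with almost sure finiteness of the right side for $q<1/2$. The latter holds by the LIL, since $|W(s)|s^{q-1}\to0$; this is presumably the content of \cref{lemma:proper-limit}. The same domination gives tightness for $\zeta_+$.

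\emph{Case $q>1/2$.} Everything transfers through \cref{lemma:tilde-f-facts}. We have $\Tilde F(x;\cT,q)=F(x;\Delta\cT^{-1},1-q)$, and $\Delta\cT^{-1}\subset[1,\Delta]$ has infimum $1$ and supremum $\Delta$. Since $1-q<1/2$, the previous case applies verbatim, both for equicontinuity and tightness and for the one-sided version.

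\emph{Case $q=1/2$, $\Delta\le\Deltabar$.} Here I would apply \cref{lemma:cameron-martin-lemma} directly with $a=1$, $b=\Deltabar$. It gives
\[
\sup_x \PM{\zeta(\cT,1/2)\in(x,x+\eta]}\le 2\eta\, C_{1/2}(1,\Deltabar),
\]
where $C_{1/2}^2(1,\Deltabar)=1+\tfrac14\log\Deltabar<\infty$. This bound is uniform in $\Delta\in[1,\Deltabar]$ and in $\cT$, and it gives equicontinuity on all of $\R$ for both $F$ and $F_+$. Tightness is not claimed here, and indeed it fails as $\Deltabar\to\infty$, in line with \cref{lemma:valid-continuity}.

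\emph{Strict monotonicity.} For each fixed $\cT$ I must show that $\PM{\zeta\in(x,x+\eta]}>0$ for every $\eta>0$ and every $x$ in the interior of the support. The support is $(0,\infty)$ in the two-sided case and $\R$ in the one-sided case; $\zeta_+$ takes negative values, since $\zeta_+\ge W(1)$ but can be negative when $\cT$ is bounded. The plan is to reverse the Cameron--Martin argument. Let $E_x=\{\forall s\in\cT: |W(s)|\le x s^{1-q}\}$, so that $\{\zeta\le x\}=E_x$. Adding the smooth drift $g(s)=\eta s^{1-q}$, truncated at $\sup\cT$ so it stays in the Cameron--Martin space on $[1,\Delta]$, produces a law equivalent to Wiener measure. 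It therefore suffices to exhibit a set of paths of positive Wiener measure on which $\zeta\in(x,x+\eta]$. I would take paths that stay in a small tube around the deterministic path $h(s)=(x+\eta/2)s^{1-q}$ on $\cT$. Such a tube has positive Wiener measure, by the support theorem, because $h$ is absolutely continuous with square-integrable derivative on $[1,\Delta]$. On that tube, $\zeta$ lies within $\eta/2$ of $x+\eta/2$.

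The main obstacle I expect is this strict-monotonicity step in the one-sided case with $x\le0$. The tube must then hug a non-positive curve starting at time $1$; this is still fine because $W(1)$ has full support. The step also requires care with the flooring convention: when $\cT$ is a finite grid, the tube argument reduces to positivity of a nondegenerate Gaussian density on a box, which is immediate.
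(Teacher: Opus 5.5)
Your proposal is correct and takes the same route as the paper's proof. The $q<1/2$ case follows from \cref{lemma:anti-concentration-small-q}, its one-sided analogue and \cref{lemma:proper-limit}. The $q>1/2$ case follows from the inversion $\Tilde F(x;\cT,q)=F(x;\Delta\cT^{-1},1-q)$ of \cref{lemma:tilde-f-facts}, and the $q=1/2$ case from the finite Cameron--Martin constant on $[1,\Deltabar]$ (\cref{lemma:valid-continuity}). Your tube argument for strict monotonicity is the same support-of-Wiener-measure idea as \cref{lemma:support-strict-increase}, which uses linear paths $w_a(s)=as$ and a Lipschitz bound instead of your paths $c\,s^{1-q}$. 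The Cameron--Martin drift step you add there is unnecessary: positive Wiener measure of the tube around $h$, extended continuously to $[0,1]$ with $h(0)=0$, already gives $\PM{\zeta\in(x,x+\eta]}>0$.
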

\begin{proof}[Proof of \cref{lemma:unif-continuity-psi}]
The properties listed when $q < 1/2$ are shown in \cref{lemma:anti-concentration-small-q} and \cref{lemma:proper-limit}. The properties of $\Tilde F$ shown when $q > 1/2$ are proved in \cref{lemma:tilde-f-facts}. For $q = 1/2$, the distribution function is not equicontinuous. However, if we fix $\Deltabar < \infty$, we can invoke \cref{lemma:valid-continuity}, with $a > 0, b < \infty$, which holds for any $q \in \R$.
\end{proof}

Now we state an extension of a Pólya continuity lemma that allows for more general uniform convergence.
\begin{lemma}\label{lemma:dist-uniform-polya-continuity}
Let $F_n(\theta)$ be a sequence of distribution functions converging to $F$ pointwise in $x$ but uniform over $\P \in \cP$ and $\theta \in \Theta$ for $\Theta$ some generic parameter set, meaning 
\begin{equation}\label{eq:pointwise-limit-assumption}
\forall x\in \R, \lim_{n \to \infty}\sup_{\theta \in \Theta}\sup_{\P \in \cP}\abs{F_n(x; \theta) - F(x; \theta)} = 0.
\end{equation}
If $F(\theta)$ is equicontinuous and uniformly tight on $[a,\infty)$, for $a \in [-\infty, \infty)$, then
\[
\lim_{n \to \infty}\sup_{\theta \in \Theta}\sup_{x \in [a,\infty)}\sup_{\P \in \cP}\abs{F_n(x;\theta) - F(x; \theta)} = 0.
\]
\end{lemma}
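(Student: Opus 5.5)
This is the classical Pólya argument, run so that every estimate depends only on the equicontinuity and tightness constants, which are uniform in $\theta$, and never on $\P$ or $\theta$ themselves. Fix $\eps>0$. The goal is to reduce the supremum over the uncountable set $[a,\infty)$ to a supremum over a finite grid that does not depend on $\theta$. The pointwise hypothesis \eqref{eq:pointwise-limit-assumption}, which already carries the uniformity over $\Theta$ and $\cP$, then handles the grid.

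\emph{Step 1 (tightness).} By uniform tightness, choose $M<\infty$ with $M>a$ such that $\sup_{\theta}\paren{1-F(M;\theta)}<\eps$. If $a=-\infty$, also choose $L$ with $\sup_\theta F(L;\theta)<\eps$.

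\emph{Step 2 (equicontinuity).} By equicontinuity on $[a,\infty)$, choose $\eta>0$ such that $\abs{x-y}\le\eta$ implies $\abs{F(x;\theta)-F(y;\theta)}<\eps$ for all $\theta$. Take the grid $a=x_0<x_1<\dots<x_N=M$ with mesh at most $\eta$. If $a=-\infty$, start the grid at $L$ instead.

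\emph{Step 3 (finitely many points).} Since the grid is finite, \eqref{eq:pointwise-limit-assumption} gives $n_0$ such that for $n\ge n_0$,
\[
\max_{j}\sup_\theta\sup_{\P}\abs{F_n(x_j;\theta)-F(x_j;\theta)}<\eps .
\]

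\emph{Step 4 (sandwich).} For $x\in[x_{j},x_{j+1}]$, monotonicity of $F_n$ and $F$ gives
\[
F_n(x_j)-F(x_{j+1})\le F_n(x)-F(x)\le F_n(x_{j+1})-F(x_j).
\]
Each side is within $2\eps$ of zero by Steps 2 and 3, so $\abs{F_n(x)-F(x)}<2\eps$.

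\emph{Step 5 (the tail $x>M$).} Here
\[
\abs{F_n(x)-F(x)}\le \paren{1-F_n(M)}+\paren{1-F(M)}\le \abs{F_n(M)-F(M)}+2\paren{1-F(M)}<3\eps .
\]
The left tail when $a=-\infty$ is symmetric: $F_n(x)+F(x)\le F_n(L)+F(L)<3\eps$.

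\emph{Step 6 (left endpoint, $a>-\infty$).} No left tail needs handling, since $x\ge a=x_0$. The case $x=a$ is covered by the grid point $x_0$.

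Taking suprema over $x$, $\P$ and $\theta$ then gives a bound of $3\eps$ for all $n\ge n_0$, which proves the claim.

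The main obstacle is making sure that no step lets the constants depend on $\theta$ or $\P$. The grid depends only on $\eta$ and $M$, and both are uniform by hypothesis. The only place where $F_n$ enters is at the grid points, and there the hypothesis is already uniform in $\theta$ and $\P$. Monotonicity of $F_n$ is needed, and it holds because each $F_n(\cdot;\theta)$ is a distribution function; continuity of $F_n$ is not needed.
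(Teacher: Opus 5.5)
Your proof is correct and follows the paper's own argument. You use uniform tightness to truncate, take a single $\theta$-free equicontinuity modulus to build a finite grid, sandwich between grid points by monotonicity, and apply the hypothesis at finitely many grid points; the only cosmetic difference is that the paper folds the tails into the sandwich by adjoining $x_0=-\infty$ and $x_N=\infty$ (with $F_n=F$ there by convention), whereas you bound the regions $x>M$ and $x<L$ separately.
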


\begin{proof}[Proof of \cref{lemma:dist-uniform-polya-continuity}]
First assume $a = -\infty$. Fix $\eps > 0$. There exists an $R \equiv R(\eps)$ such that 
\[
F(-R; \theta) - F(-\infty, \theta) \leq \eps\quad \text{ and }\quad F(R; \theta) - F(\infty; \theta)> -\eps,
\]where we take $F/F_n(\infty; \theta) = 1$ and $F/F_n(-\infty; \theta) = 0$. By equicontinuity, there exists an $\eta \equiv \eta(\eps)$ such that if $\abs{x-y} < \eta$ then $\abs{F(x; \theta) - F(y; \theta)} < \eps$ for all $\theta$. Set $-\infty = x_0, x_1 = -R,  \dots, x_{N-1} = R, x_N = \infty$ where the mesh points in between of distance at most $\eta$. Then taking $x \in [x_{i-1}, x_i]$, we have
\begin{align}
F_n(x; \theta) - F(x; \theta) &\leq F_n(x_i; \theta) - F(x_{i-1}; \theta) \leq F_n(x_{i}; \theta) - F(x_i; \theta) + \eps\\
F_n(x; \theta) - F(x; \theta) &\geq F_n(x_{i-1}; \theta) - F(x_i; \theta) \geq F_n(x_{i-1}; \theta) - F(x_{i-1}; \theta) - \eps.
\end{align}Thus for every $x \in \R$, 
\begin{align}
\sup_{\theta \in \Theta}\sup_{\P \in \cP}\abs{F_n(x; \theta) - F(x; \theta)} &\leq \sup_{\theta \in \Theta}\sup_{i} \abs{F_{n}(x_i; \theta) - F(x_i; \theta)} + \eps
\end{align}We note that the first term can be made arbitrarily small if $n$ is sufficiently large by the assumption in \cref{eq:pointwise-limit-assumption}. 

If $F_\theta$ is equicontinuous and uniformly tight only on $[a,\infty) \subset \R$, then the existing proof works with $x_0 = a$. This remark is dedicated to show that this lemma applies to $F_+$, the one-sided distribution.
\end{proof}

\subsection{Strong Approximation Lemmas}
In this section, we provide lemmas that support the ability to swap an empirical process with a Wiener process. We begin by giving the ability to swap partial sums with Gaussian processes and end by showing the ability to swap Gaussian processes with Wiener processes. While some of these lemmas are standard, \cref{lemma:strong-approx-cumul} and \cref{lemma:max-discrete-cont-gap} have not been extensively discussed in the literature and may be interesting to examine and apply to other problems.

We first state a lemma that allows strong approximations after dividing by the cumulative parameter $\tilde \sigma^2_{\P, t}$ instead of each individual parameter $\sigma^2_{\P, i}$. 

\begin{lemma}\label{lemma:strong-approx-cumul}
Let $\infseqt{X}$ be a sequence of random variables on $\pspace[\cP]$ that, if independent satisfy \cref{condition:no-deterministic}, and otherwise satisfy \cref{condition:non-vanishing-variances} at a rate $\rho$. Moreover, for $\infseqt{Y} \simiid \cN(0,1)$ and some $a$, suppose
\begin{equation}\label{eq:strong-approximation-rate-a}
\abs{\tsum{(X_i - \mu_{\P, i})} - \tsum{Y_i \sigma_{\P, i}}} = \bar o_{\cP}\paren{t^a \log t}.
\end{equation}After (potentially) enriching $\pspace[\cP]$ to include $\infseqt{\Tilde Y}$ and defining $G_t \coloneqq \sum_{i=1}^{t} \Tilde Y_i$ that although equal to $\tsum{Y_i}$ in distribution are not the same random process, the following holds for $\gamma = \max \curly{a, (1-\rho)/2}$:
\[
\abs{\frac{\tsum{(X_i - \mu_{\P, i})}}{\tilde \sigma_{\P, t}} - G_t} = \bar o_{\cP}\paren{t^\gamma \log t}.
\]
\end{lemma}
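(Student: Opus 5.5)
The plan is to realize the Gaussian partial sums $\tsum{Y_i\sigma_{\P,i}}$ as a time-changed Brownian motion, rescale it into a standard Wiener process $W$ on the natural clock $t$, and set $G_t \coloneqq W(t)$ at integer $t$. Then $G_t$ has the law of $\tsum{Y_i}$, but it is a different process, which is why the probability space has to be enriched. With this choice the error splits into three pieces: the assumed strong-approximation error, a time-change error, and a variance-normalization error. I expect the time-change piece to be the bottleneck, and it is what produces the exponent $(1-\rho)/2$.

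\textbf{Construction.} Write $S_t=\tsum{(X_i-\mu_{\P,i})}$ and $T_t=\tsum{Y_i\sigma_{\P,i}}$. Since $T_t$ has conditionally independent Gaussian increments with variances $\sigma^2_{\P,i}$, I enrich the space by filling in independent Brownian bridges. This gives a standard Brownian motion $B$ with $T_t=B(V_{\P,t})$. In the independent case this is a deterministic time change. In the martingale case $\sigma_{\P,i}$ is predictable, and a Dambis--Dubins--Schwarz-type embedding gives the same representation with a random clock $V_{\P,t}$. Next define $W(u)\coloneqq \sigma_\P^{-1}B(\sigma_\P^2u)$, which is again a standard Wiener process, and set $\tilde Y_i=W(i)-W(i-1)$ and $G_t=W(t)$. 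Then
\[
S_t-\tilde\sigma_{\P,t}G_t=(S_t-T_t)+\sigma_\P\bigl[W(V_{\P,t}/\sigma_\P^2)-W(t)\bigr]+(\sigma_\P-\tilde\sigma_{\P,t})W(t).
\]

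\textbf{Bounding the three terms.} The first term is $\bar o_{\cP}(t^a\log t)$ by \eqref{eq:strong-approximation-rate-a}. For the third term, \cref{condition:no-deterministic} (respectively \cref{condition:non-vanishing-variances}) gives $|\tilde\sigma^2_{\P,t}-\sigma^2_\P|=o(t^{-\rho})$, uniformly over $\cP$, and the lower bound on the variances converts this into $|\tilde\sigma_{\P,t}-\sigma_\P|=o(t^{-\rho})$. The law of the iterated logarithm gives $|W(t)|=\bar O(\sqrt{t\log\log t})$. Since the law of $W$ does not depend on $\P$, this bound is automatically $\cP$-uniform. The product is $\bar o_{\cP}(t^{1/2-\rho}\log t)$, and this is dominated by $t^{(1-\rho)/2}\log t$ because $1/2-\rho\le(1-\rho)/2$.

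\textbf{The time-change term (main obstacle).} Here $|V_{\P,t}/\sigma^2_\P-t|\le h_t\coloneqq \eps t^{1-\rho}$ eventually, with $\eps$ arbitrarily small. I would apply the Csörgő--Révész modulus of continuity for Brownian increments: almost surely,
\[
\sup_{0\le s\le 2t,\ |u|\le h_t}|W(s+u)-W(s)|=O\bigl(\sqrt{h_t\log t}\bigr),
\]
with tail bounds that do not depend on $\P$. The supremum is taken over all base points $s$, and this is the key feature for the martingale case. The random clock $V_{\P,t}$ may depend on $W$, so a bound at a single point would not apply, but a bound uniform over every location and every increment of length at most $h_t$ does. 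The result is $\bar o_{\cP}(t^{(1-\rho)/2}\log t)$. What needs real care is making the "eventually" in the variance condition $\cP$-uniform, which the $\lim_m\sup_\P\sup_{t\ge m}$ form of the conditions provides. I also need to check that the Gaussian tail estimates are summable over dyadic blocks of $t$, so that they yield the $\sup_{t\ge m}$ statement required by the $\bar o$ notation. In the martingale case the statement is only $\P$-wise, so the almost-sure rate in \cref{condition:non-vanishing-variances} suffices.

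\textbf{Conclusion.} Finally, divide the decomposition by $\tilde\sigma_{\P,t}$. By \cref{condition:no-deterministic}, $\tilde\sigma_{\P,t}$ is bounded away from zero uniformly for large $t$, so dividing preserves the $\bar o_{\cP}$ rates. Taking $\gamma=\max\{a,(1-\rho)/2\}$ then gives the claimed bound.
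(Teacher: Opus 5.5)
Your proposal is correct and follows essentially the same route as the paper. You write $\sum_{i\le t} Y_i\sigma_{\P,i}$ as a time-changed Wiener process $W(V_{\P,t})$, compare it with the Wiener process run on the clock $\sigma^2_\P t$, and split the error into three parts exactly as the paper's terms $\mathrm{I}$, $\mathrm{II}$, $\mathrm{III}$: the assumed coupling error, the time-change increment (which produces the exponent $(1-\rho)/2$), and the normalization error $W(t)(\sigma_\P/\tilde\sigma_{\P,t}-1)$ handled by the LIL.

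The only difference is how you control the time-change term. You use the Cs\"org\H{o}--R\'ev\'esz modulus of continuity over all base points. The paper's \cref{lemma:wiener-increment-bound} instead union-bounds, over integer $k$, the supremum of $\abs{W(u)-W(\sigma^2_\P k)}$ over a window around the deterministic center $\sigma^2_\P k$ that contains $V_{\P,k}$ on a high-probability event. That windowed bound already handles a $W$-dependent random clock, so uniformity over base points is not actually needed.
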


\begin{proof}[Proof of \cref{lemma:strong-approx-cumul}]
Call $\tsum{X_i - \mu_{\P, i}} = S_{\P, t}^\circ$. We continue to assume that $t \in \Z$ and only work with the discrete time Gaussian process, $G_t = W(\sigma^2_{\P, t})/t$. Consequently,
\begin{align}
\frac{S_{\P, t}^\circ}{\tilde \sigma_{\P, t}} - G_t &= \frac{S_{\P, t }^\circ}{\tilde \sigma_{\P, t}} - \frac{W(\sigma^2_\P t)}{\sigma_\P}\\
\verbose{
&= \frac{S_{\P, t}^\circ}{\tilde \sigma_{\P, t}} - \frac{W(V_t)}{\tilde \sigma_{\P, t}} + \frac{W(V_t)}{\tilde \sigma_{\P, t}} - \frac{W(\sigma^2_\P t)}{\sigma_\P}\\
&= \frac{S_{\P, t}^\circ}{\tilde \sigma_{\P, t}} - \frac{W(V_t)}{\tilde \sigma_{\P, t}} + \frac{W(V_t)}{\tilde \sigma_{\P, t}} - \frac{W(\sigma^2_\P t)}{\tilde \sigma_{\P, t}} - \frac{W(\sigma^2_\P t)}{\sigma_\P} + \frac{W(\sigma^2_\P t)}{\tilde \sigma_{\P, t}}\\
}
&= \frac{S_{\P, t}^\circ}{\tilde \sigma_{\P, t}} - \frac{W(V_t)}{\tilde \sigma_{\P, t}} + \frac{W(V_t)}{\tilde \sigma_{\P, t}} - \frac{W(\sigma^2_\P t)}{\tilde \sigma_{\P, t}} + \sigma_\P W(t) \paren{\frac{1}{\tilde \sigma_{\P, t}} - \frac{1}{\sigma_\P}}\\
&=  \underbrace{\frac{S_{\P, t}^\circ}{\tilde \sigma_{\P, t}} - \frac{W(V_t)}{\tilde \sigma_{\P, t}}}_{\mathrm I} + \underbrace{\frac{W(V_t)}{\tilde \sigma_{\P, t}} - \frac{W(\sigma^2_\P t)}{\tilde \sigma_{\P, t}}}_{\mathrm{II}} + \underbrace{W(t) \paren{\frac{\sigma_\P}{\tilde \sigma_{\P, t}} - 1}}_{\mathrm{III}}.
\end{align}
By assumption, 
\[
\abs{\mathrm{I}} = \bar o_{\cP}\paren{t^a \log t}
\]and \cref{lemma:wiener-increment-bound} yields
\[
\abs{\mathrm{II}} = \bar O_{\cP}\paren{\sqrt{t^{1-\rho}\log t }}= \bar o_{\cP}\paren{t^{(1-\rho)/2} \log t}.
\]The LIL and \cref{condition:non-vanishing-variances} with rate $\rho$ gives
\[
\abs{\mathrm{III}} = \bar O_{\cP}\paren{t^{1/2 - \rho}\sqrt{\log \log t}},
\]which is dominated by $\mathrm{II}$ since $\rho > \rho/2$ for $\rho > 0 $. Summing the terms together gives
\[
\abs{\frac{S_{\P,t}^\circ}{\tilde \sigma_{\P, t}} - G_t} = \bar o_{\cP}\paren{t^\gamma \log t},
\]where $\gamma = \max\{a, (1-\rho)/2\}$.
\end{proof}
The following lemma gives the ability for a coupling with a discrete time Gaussian process to be upgraded to a coupling with a continuous time Wiener process. 

\begin{lemma}\label{lemma:wiener-strong-approx}
Suppose $\infseqt{X}$ is a sequence of random variables defined on $\pspace[\cP]$ rich enough to contain $\infseqt{Y} \simiid \cN(0,1)$ and define $G_t \equiv \sum_{i=1}^{t}Y_i$. Then if for any $a > 0$,
\[
\abs{\frac{\tsum{(X_i - \mu_{\P, i})}}{t \tilde \sigma_{\P, t}} - \frac{G_t}{t}} = \bar o_{\cP} \paren{t^{a-1} \log t},
\]the following stronger guarantee holds for all $t \in \R_+$:
\begin{equation}\label{eq:wiener-strong-approximation}
\abs{\frac{\tsum{(X_i - \mu_{\P, i})}}{t \tilde \sigma_{\P, t}} - \frac{W(t)}{t}} = \bar o_{\cP} \paren{t^{a-1} \log t}.
\end{equation}
\end{lemma}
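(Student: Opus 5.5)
The plan is to embed the discrete Gaussian walk into a Wiener process and bound the discretization gap. Enlarging the probability space if needed, realize the i.i.d.\ normals as Wiener increments, $Y_i = W(i) - W(i-1)$, so that $G_t = W(\floor t)$ for every real $t \ge 1$. This is the usual Skorokhod-type embedding: take $W$ to be the Brownian motion interpolating the random walk through independent Brownian bridges on each $[i-1,i]$, which leaves the joint law of $(X, G)$ unchanged. By the triangle inequality, for real $t \geq m$,
\[
\abs{\frac{S^\circ_{\P,\floor t}}{\floor t\, \tilde\sigma_{\P,\floor t}} - \frac{W(t)}{t}} \leq \abs{\frac{S^\circ_{\P,\floor t}}{\floor t\, \tilde\sigma_{\P,\floor t}} - \frac{G_{\floor t}}{\floor t}} + \abs{\frac{W(\floor t)}{\floor t} - \frac{W(t)}{t}},
\]
where $S^\circ_{\P,n} = \sum_{i=1}^n (X_i - \mu_{\P,i})$. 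The first term is $\bar o_\cP(t^{a-1}\log t)$ by hypothesis, since $\floor t \asymp t$. Note that the second term involves only the Wiener process, so it is automatically uniform over $\cP$.

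Next I split the second term:
\[
\frac{W(\floor t)}{\floor t} - \frac{W(t)}{t} = \frac{W(\floor t) - W(t)}{t} + W(\floor t)\paren{\frac{1}{\floor t} - \frac{1}{t}}.
\]
For the piece $W(\floor t)(\floor t^{-1} - t^{-1})$, the coefficient is $O(t^{-2})$ and the LIL gives $\abs{W(\floor t)} = \bar O(\sqrt{t\log\log t})$. Hence this piece is $\bar O(t^{-3/2}\sqrt{\log\log t})$, which is $\bar o(t^{a-1}\log t)$ for every $a > 0$ (indeed for every $a > -1/2$).

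The main obstacle is the increment term $\sup_{s\in[n,n+1]}\abs{W(s)-W(n)}/t$, and I would handle it with a Borel--Cantelli bound on unit-interval oscillations. By the reflection principle, $\PM{\sup_{s\in[0,1]}\abs{W(n+s)-W(n)} > c\sqrt{\log n}} \leq 4 e^{-c^2\log n/2}$. Choosing $c = 2$ makes these probabilities summable, and therefore
\[
\PM{\exists n \geq m: \sup_{s\in[n,n+1]}\abs{W(s)-W(n)} > 2\sqrt{\log n}} \leq 4\sum_{n\geq m} n^{-2} \to 0.
\]
Consequently the oscillation is $\bar O(\sqrt{\log t})$, and the increment term is $\bar O(t^{-1}\sqrt{\log t}) = \bar o(t^{a-1}\log t)$ because $a > 0$. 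This is the same mechanism as \cref{lemma:wiener-increment-bound}, specialized to increments of length $1$.

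Finally, I combine the three bounds using the algebra of $\bar o$ and $\bar O$ terms. Since all the pieces are controlled uniformly over $t \geq m$ and over $\cP$, this yields \cref{eq:wiener-strong-approximation}.
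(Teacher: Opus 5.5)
Your proof is correct and follows essentially the paper's route: a triangle inequality against $G_t/t$, with the discretization gap $\abs{W(t)-W(\floor{t})} = \bar O_{\cP}(\sqrt{\log t})$ controlled by the same union-bound, reflection-principle and Gaussian-tail argument the paper packages as \cref{lemma:max-discrete-cont-gap}. Your explicit Brownian-bridge embedding and your separate LIL bound on the $W(\floor{t})\paren{1/\floor{t} - 1/t}$ piece just make precise what the paper leaves implicit in its flooring convention.
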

\begin{proof}[Proof of \cref{lemma:wiener-strong-approx}]
The triangle inequality states
\begin{align}
\abs{\frac{\tsum{X_i - \mu_{\P, i}}}{t \tilde \sigma_{\P, t}} - \frac{W(t)}{t}} & \leq \underbrace{\abs{\frac{\tsum{X_i - \mu_{\P, i}}}{t \tilde \sigma_{\P, t}} - \frac{G_t}{t}}}_{\mathrm I} + \underbrace{\abs{\frac{W(t)}{t} - \frac{G_t}{t}}}_{ \mathrm {II}}\\
&= \bar o_{\cP}\paren{t^{a-1} \log t} + \bar O_{\cP} \paren{\frac{\sqrt{\log t}}{t}} = \bar o_{\cP}\paren{t^{a-1} \log t}.
\end{align}We bound $\mathrm I$ by assumption and $\mathrm{II}$ with \cref{lemma:max-discrete-cont-gap}.
\end{proof}
Now, we state the following strong approximation required for independent data \citep{komlos_approximation_1976,waudby-smith_nonasymptotic_2025}.

\begin{lemma}\label{lemma:dist-uniform-kmt}
Let $\infseqt{X}$ be a sequence of independent random variables on $\pspace[\cP]$ satisfying \cref{condition:ui-condition} with parameter $\kappa, \delta$ and \cref{condition:no-deterministic} with parameter $\rho$. Let $\cF_n = \sigma\paren{X_1, \dots, X_n}$. After potentially enriching the probability space so that it  contains a Wiener $W(t)$ it follows
\begin{equation}\label{eq:dist-unif-strong-approximation}
\abs{\frac{\tsum{( X_i - \mu_{\P, i})}}{\tilde \sigma_{\P, t} t} - \frac{W(t)}{t}} = \bar o_{ \cP}\paren{t^{\gamma} \log t},
\end{equation}
for $\gamma = \max\curly{(1+\delta)/\kappa, (1-\rho)/2}-1$. Since $\delta <(\kappa - 2)/2$ and $\rho > 0$, $\gamma <-1/2$.
\end{lemma}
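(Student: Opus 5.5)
The plan is to combine a distribution-uniform KMT/Sakhanenko-type coupling for independent, non-identically distributed summands with \cref{lemma:strong-approx-cumul} and \cref{lemma:wiener-strong-approx}.

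\textbf{Step 1: uniform coupling with the individually scaled Gaussian sums.} I would invoke the distribution-uniform strong approximation of \citet{waudby-smith_nonasymptotic_2025}, the nonasymptotic version of \citet{komlos_approximation_1976} and Sakhanenko. After enriching the space with i.i.d. standard normals $\infseqt{Y}$, this result bounds the maximal deviation
\[
\max_{i\le n}\abs{\sum_{j\le i}(X_j-\mu_{\P,j})-\sum_{j\le i}\sigma_{\P,j}Y_j}
\]
in $L^\kappa$ by a constant times $\paren{\sum_{j\le n}\EE{\abs{X_j-\mu_{\P,j}}^\kappa}[\P]}^{1/\kappa}$. The constant depends only on $\kappa$.

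To get an almost-sure-type $\bar o_\cP$ statement, I would use the dyadic-block route.
\begin{itemize}
\item On each block $(2^k,2^{k+1}]$, apply Markov's inequality to the maximal deviation at threshold $\eps\,2^{k(1+\delta)/\kappa}$.
\item Bound $2^{-k(1+\delta)}\sum_{j\le 2^{k+1}}\EE{\abs{X_j-\mu_{\P,j}}^\kappa}[\P]$ by Kronecker-type sums. These are controlled by the weighted series $\sum_n \EE{\abs{X_n-\mu_{\P,n}}^\kappa}[\P]/n^{1+\delta}$ in \cref{condition:ui-condition}.
\end{itemize}
The first half of \cref{condition:ui-condition} makes the series of block probabilities uniformly bounded. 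Its second half makes the tail $\sup_\P\sum_{k\ge\log_2 m}(\cdot)\to 0$.

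There is one subtlety: Markov applied to a cumulative sum up to $2^{k+1}$ also includes early terms. I would handle it by splitting the partial sum at $\sqrt{2^k}$, or by a Kronecker-lemma argument. The goal is
\[
\abs{\tsum{(X_i-\mu_{\P,i})}-\tsum{\sigma_{\P,i}Y_i}}=\bar o_\cP\paren{t^{(1+\delta)/\kappa}\log t}.
\]
The factor $\log t$ gives slack for the block union. Getting this uniformity over $\cP$ while staying within the given summability conditions is the step I expect to be the main obstacle.

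\textbf{Step 2: pass to cumulative scaling.} With $a=(1+\delta)/\kappa$ from Step 1, \cref{lemma:strong-approx-cumul} gives, on a further enrichment with a Gaussian random walk $G_t$,
\[
\abs{\frac{\tsum{(X_i-\mu_{\P,i})}}{\tilde\sigma_{\P,t}}-G_t}=\bar o_\cP\paren{t^{\gamma'}\log t}, \qquad \gamma'=\max\{a,(1-\rho)/2\}.
\]
This uses \cref{condition:no-deterministic} to control the variance mismatch terms. Dividing by $t$ matches the hypothesis of \cref{lemma:wiener-strong-approx} with exponent $\gamma'-1$.

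\textbf{Step 3: replace the random walk by a Wiener process.} \cref{lemma:wiener-strong-approx} then replaces $G_t/t$ by $W(t)/t$ for real $t$, with the floor convention. The discretization error is $\bar O_\cP(\sqrt{\log t}/t)$, which is absorbed. This yields \eqref{eq:dist-unif-strong-approximation} with $\gamma=\gamma'-1$.

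\textbf{Step 4: the exponent.} Finally, $\delta<(\kappa-2)/2$ gives $(1+\delta)/\kappa<1/2$, and $\rho>0$ gives $(1-\rho)/2<1/2$. Hence $\gamma<-1/2$.
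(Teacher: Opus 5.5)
Your proposal matches the paper's proof. The paper obtains the $\cP$-uniform coupling $\abs{\tsum{(X_i-\mu_{\P,i})}-\tsum{\sigma_{\P,i}Y_i}}=\bar o_{\cP}(t^{(1+\delta)/\kappa})$ from \citet{waudby-smith_nonasymptotic_2025}, applies \cref{lemma:strong-approx-cumul} and then \cref{lemma:wiener-strong-approx}, and divides by $t$. The only difference is that the paper cites Corollary~3.4 of that reference, which already gives the time-uniform $\bar o_{\cP}$ statement (without the extra $\log t$) that you rebuild from the $L^\kappa$ bound via dyadic blocks, Markov's inequality and a uniform Kronecker argument. That reconstruction is sound provided the underlying coupling holds simultaneously for all $n$, or is glued blockwise, so the step you flag as the main obstacle is handled by the citation.
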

\begin{proof}[Proof of \cref{lemma:dist-uniform-kmt}]
\citep[Corollary 3.4]{waudby-smith_nonasymptotic_2025} implies for $\infseqt{Y} \simiid \cN(0,1)$,
\[
\abs{\tsum{(X_i - \mu_{\P, i})} - \tsum{Y_i \sigma_{\P, i}}} = \bar o_{ \cP}\paren{t^{(1+\delta)/\kappa}}.
\]Using \cref{lemma:strong-approx-cumul} we can divide by the cumulative parameter to obtain
\[
\abs{\frac{\tsum{( X_i - \mu_{\P, i})}}{\tilde \sigma_{\P, t}} - G_t} = \bar o_{\cP}\paren{t^{\gamma+1} \log t}.
\]Lastly, invoking \cref{lemma:wiener-strong-approx} and dividing through $t$ gives the desired statement.
\end{proof}

We now consider strong approximation under martingale-dependence adapted from  \citep[Lemma A.2]{waudby-smith_time-uniform_2024} and \citep{strassen_almost_1967}.
\begin{lemma}\label{lemma:strong-approximation-mgale-dependence}
Let $\infseqt{X}$ be a stream of random variables defined on a single probability space $\pspace$ that satisfy \cref{condition:non-vanishing-variances} with parameter $\rho$ and \cref{condition:lindeberg-type-mgale} with rate $\kappa$. After potentially enriching the probability space so that there exists a Wiener $W(t)$, we have the property that 
\begin{equation}\label{eq:mgale-strong-approximation}
\abs{\frac{\tsum{\paren{X_i - \mu_{\P, i}}}}{\tilde \sigma_{\P, t} t} - \frac{W(t)}{t}} = \bar o_{ \P} \paren{t^{\gamma} \log t},
\end{equation}for $\gamma = \max\curly{1/4 + 1/(2\kappa), (1-\rho)/2} -1$. Since $\kappa > 2$ and $\rho > 0$, $\gamma < -1/2$.
\end{lemma}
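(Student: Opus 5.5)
The plan mirrors the proof of \cref{lemma:dist-uniform-kmt}: first obtain a Gaussian coupling for the raw partial sums with the individual conditional variances, then pass to the cumulative normalisation and to a continuous-time Wiener process using \cref{lemma:strong-approx-cumul} and \cref{lemma:wiener-strong-approx}.

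\textbf{Step 1: Strassen-type coupling.} Write $S^\circ_t=\tsum{(X_i-\mu_{\P,i})}$, a martingale with predictable quadratic variation $V_{\P,t}$. We would invoke Strassen's martingale strong approximation in the form of \citep[Lemma A.2]{waudby-smith_time-uniform_2024}, taking its hypotheses to be \cref{condition:lindeberg-type-mgale} with parameter $\kappa$ together with $V_{\P,t}\to\infty$ a.s., which follows from \cref{condition:non-vanishing-variances} since $\sigma^2_\P>0$. After enriching the space, this gives a Wiener process $B$ with
\[
\abs{S^\circ_t - B(V_{\P,t})} = o\paren{V_{\P,t}^{1/4+1/(2\kappa)}\log V_{\P,t}} \quad\text{almost surely.}
\]
\cref{condition:non-vanishing-variances} gives $V_{\P,t}\asymp t$ a.s., so the error is $o(t^{1/4+1/(2\kappa)}\log t)$ a.s. For a single $\P$, almost sure $o(\cdot)$ is the same as $\bar o_\P(\cdot)$, since $\P(\sup_{k\ge m}a_k^{-1}|X_k|>\eps)\to0$ is equivalent to a.s.\ eventual smallness. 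Define $Y_i$ as the standardized Gaussian increments, so that $B(V_{\P,t})=\tsum{Y_i\sigma_{\P,i}}$ in the required form. If the $\sigma_{\P,i}$ are random, we would instead feed $B(V_{\P,t})$ directly into the term decomposition below. This yields \cref{eq:strong-approximation-rate-a} with $a=1/4+1/(2\kappa)$.

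\textbf{Step 2: cumulative normalisation.} Apply \cref{lemma:strong-approx-cumul} in its martingale branch, using \cref{condition:non-vanishing-variances} with rate $\rho$. Its three terms are handled as follows. Term I is the Step 1 coupling divided by $\tilde\sigma_{\P,t}$, which is bounded away from zero. Term II is $B(V_{\P,t})-B(\sigma^2_\P t)$ over a random time gap of size $o(t^{1-\rho})$ a.s., controlled by \cref{lemma:wiener-increment-bound}. Term III is handled by the LIL. The result is
\[
\abs{S^\circ_t/\tilde\sigma_{\P,t}-G_t}=\bar o_\P(t^{\gamma+1}\log t), \qquad \gamma+1=\max\{1/4+1/(2\kappa),(1-\rho)/2\}.
\]

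\textbf{Step 3: continuous time.} Invoke \cref{lemma:wiener-strong-approx} with $a=\gamma+1>0$ and divide by $t$ to obtain \cref{eq:mgale-strong-approximation}. Finally, check that $\gamma<-1/2$: we have $1/4+1/(2\kappa)<1/2$ because $\kappa>2$, and $(1-\rho)/2<1/2$ because $\rho>0$.

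\textbf{Main obstacle.} The hardest part is Step 1 together with Term II. Because $V_{\P,t}$ is random, the Wiener increment bound must hold at a random time $V_{\P,t}$ that is not a stopping time for $B$. We would handle this by the deterministic envelope $|V_{\P,t}-\sigma^2_\P t|\le t^{1-\rho}$, which holds for all large $t$ almost surely, and then bound the Wiener modulus $\sup_{|u-\sigma^2_\P t|\le t^{1-\rho}}|B(u)-B(\sigma^2_\P t)|$ uniformly via Lévy/Csörgő--Révész. We would also confirm that Strassen's theorem as stated there matches the Lindeberg form of \cref{condition:lindeberg-type-mgale}, including the normalisation by $V_t^{2/\kappa}$.
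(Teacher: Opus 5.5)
Your proposal is correct and follows essentially the same route as the paper. The paper likewise invokes Strassen's embedding \citep[Theorem 4.4]{strassen_almost_1967} with $f(v)=v^{2/\kappa}$, giving $h(v)=v^{1/4+1/(2\kappa)}\log v$, replaces $V_{\P,t}$ by $t$, and then passes through \cref{lemma:strong-approx-cumul} (with Term II handled by the same deterministic-envelope union bound, \cref{lemma:wiener-increment-bound}) and \cref{lemma:wiener-strong-approx}. Your treatment of random $\sigma_{\P,i}$, feeding $B(V_{\P,t})$ directly into the three-term decomposition, is what the paper's proof of \cref{lemma:strong-approx-cumul} does implicitly.
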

\begin{proof}[Proof of \cref{lemma:strong-approximation-mgale-dependence}]
The proof relies on \citep[Theorem 4.4]{strassen_almost_1967}
which states on $ \P$
\[
\tsum{\paren{X_i - \mu_{\P, i}}} = W \paren{V_{\P, t}} + \bar o_{ \P}\paren{h(V_{\P, t})}
\]where $h(v) = \paren{vf(v)}^{1/4}\log v$ is defined for any increasing $f$ yet decreasing $f(v)/v$ that satisfies \citep[Eq 138]{strassen_almost_1967}:
\begin{align}
\sum_{n=1}^{\infty}f(V_n)\n \EE{\paren{X_n - \mu_{\P, n}}^2\indic{(X_n - \mu_{\P, n})^2 > f(V_n)} \mid \cF_{n-1}} < \infty.
\end{align}Letting $f(v) = v^{2/\kappa}$ satisfies the condition by assumption (\cref{condition:lindeberg-type-mgale}). Thus we let $h(v) = v^{1/4 + 1/(2\kappa)}\log v$. We can summarize this by claiming:
\begin{align}
\abs{\tsum{\paren{X_i - \mu_{\P, i}}} - W(V_{\P, t})} &= \bar o_{ \P} \paren{V_{\P, t}^{1/4 + 1/(2\kappa)} \log V_{\P, t}}.
\end{align} Because we assume that $V_{\P,t}/t \to \sigma^2_\P$, it follows $V_{\P, t} \asymp t$. Hence replacing $V_{\P,t}$ with $t$ gives
\begin{align}
\abs{\tsum{\paren{X_i - \mu_{\P, i}}} - W(V_{\P, t})} &= \bar o_{ \P} \paren{t^{1/4 + 1/(2\kappa)} \log t}.
\end{align}
Applying \cref{lemma:strong-approx-cumul} and dividing by $t$ gives 
\begin{align}
    \abs{\frac{\tsum{\paren{X_i - \mu_{\P, i}}}}{ t \tilde \sigma_{\P, t}} - \frac{W(V_{\P, t})}{t}} &= \bar o_{\P}\paren{t^{\gamma}\log t},
\end{align}for $\gamma = \max \curly{1/4 + 1/(2\kappa), (1-\rho)/2}-1$. Applying \cref{lemma:wiener-strong-approx} gives
 \cref{eq:mgale-strong-approximation}.
\end{proof}
\subsection{Proofs of Main Theorems: \cref{theorem:general-mean} and \cref{theorem:mgale-asympcv}}
This section gives a general lemma that when applied to independent or martingale-dependent data proves the main theorems in \cref{sec:power-q}. Moreover, it applies to the group sequential methods in \cref{sec:theoretical-conseq-gs}. It may be useful for other asymptotic sequential statistics. First, we summarize the conditions.
\begin{condition}\label{cond:general-lemma}
Let $\infseqt{X}$ be an infinite sequence of random variables defined on $\pspace[\cP]$ sufficiently rich so that $\cP$ contains a Wiener process $W(t)$. Consider the following conditions:

\begin{enumerate}
\item[\textbf{C1}]\emph{Variance Estimator Consistency.}
\[
\abs{\frac{\tilde \sigma^2_{\P, t}}{\hat \sigma_t^2} - 1} = \bar o_{\cP}\paren{\frac{1}{\log t}}.
\]\item[\textbf{C2}] \emph{Additive Gaussian Coupling at a polynomial rate.} For a $\gamma < -1/2$ 
\[
\abs{\frac{S_t - t\tilde \mu_{\P, t}}{\tilde \sigma_{\P,t} t} - \frac{W(t)}{t}} = \bar o_{\cP}\paren{t^{\gamma}\log t}.
    \]
\end{enumerate}
\end{condition}

\begin{lemma}\label{lem:general-lemma}
Let $\infseqt{X}$ be a sequence of random variables defined on $\pspace[\cP]$ that satisfy \cref{cond:general-lemma} (\emph{C1} and \emph{C2}). Let $\mathfrak{T}_\Delta \coloneqq \curly{\cT \subseteq [1,\Delta] : \inf \cT = 1,\ \sup \cT = \Delta}$ and $\alpha_1 > 1/2$. For all $q \neq 1/2$, we have
\begin{align}
&\lim_{m \to \infty}\sup_{\P \in \cP}\sup_{\alpha \in (0, 1)} \sup_{\Delta \geq 1} \sup_{\cT \in \mathfrak{T}_\Delta}\abs{\PM{\sup_{t \in m\cT}\abs{\frac{S_t - t\tilde \mu_{\P, t}}{t \hat \sigma_t}}\frac{t^{q}}{m^{q-1/2}} \leq F \n(\alpha; \cT, q)} - \alpha} = 0 \text{ and } \label{eq:two-sided-general-prop-unif-delta}\\
&\lim_{m \to \infty}\sup_{\P \in \cP}\sup_{\alpha \in [\alpha_1, 1)} \sup_{\Delta \geq 1}\sup_{\cT \in \mathfrak{T}_\Delta}\abs{\PM{\sup_{t \in m\cT}\frac{S_t - t\tilde \mu_{\P, t}}{t \hat \sigma_t}\frac{t^{q}}{m^{q-1/2}} \leq F_+ \n(\alpha; \cT, q)} - \alpha} = 0
  \end{align}

If $q =1/2$, the approximation is $\cP$-weakly uniform, meaning for $\Deltabar < \infty$,
\begin{align}
  &\lim_{m \to \infty}\sup_{\P \in \cP}\sup_{\Delta \in [1, \Deltabar)} \sup_{\cT \in \mathfrak{T}_\Delta}\sup_{\alpha \in (0, 1)}\abs{\PM{\sup_{t \in m\cT}\abs{\frac{S_t - t\tilde \mu_{\P, t}}{t \hat \sigma_t}}\frac{t^{q}}{m^{q-1/2}} \leq F \n(\alpha; \cT, q)} - \alpha} = 0 \text{ and }\label{eq:two-sided-general-prop}\\
  &\lim_{m \to \infty}\sup_{\P \in \cP}\sup_{\Delta \in [1, \Deltabar)}\sup_{\cT \in \mathfrak{T}_\Delta}\sup_{\alpha \in [\alpha_1, 1)}\abs{\PM{\sup_{t \in m\cT}\frac{S_t - t\tilde \mu_{\P, t}}{t \hat \sigma_t}\frac{t^{q}}{m^{q-1/2}} \leq F_+\n(\alpha; \cT, q)} - \alpha} = 0. \label{eq:one-sided-general-prop}
  \end{align}
The analogous statements hold when $S_t - t \tilde \mu_{\P, t}$ is replaced with $t\tilde \mu_{\P, t} - S_t$.
\end{lemma}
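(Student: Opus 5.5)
The plan is to reduce the statistic to its Wiener analogue by a sandwich argument and then use the uniform continuity properties of $F,F_+$ (or $\Tilde F,\Tilde F_+$) collected in \cref{lemma:unif-continuity-psi}. Write
\[
Z_t \coloneqq \frac{S_t - t\tilde\mu_{\P,t}}{t\tilde\sigma_{\P,t}}, \qquad
M_m \coloneqq \sup_{t\in m\cT}\abs{Z_t\frac{\tilde\sigma_{\P,t}}{\hat\sigma_t}}\frac{t^q}{m^{q-1/2}}, \qquad
M_m^W \coloneqq \sup_{t\in m\cT}\abs{\frac{W(t)}{t}}\frac{t^q}{m^{q-1/2}}.
\]
By \cref{lemma: calc-quantile}, applied with a general $\cT$ via \cref{lemma:scale-inversion-general-sup}, $M_m^W$ has distribution function exactly $F(\cdot;\cT,q)$ for every $m$. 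The target is therefore to show $|M_m - M_m^W| \le \eta_m\, b_m$, where $\eta_m\to0$ in $\bar o_{\cP}$-probability uniformly over $\Delta$, $\cT$, and $b_m$ is a deterministic or tight normalizer. Then $\PM{M_m\le x}$ is squeezed between $F(x\pm \epsilon)$ up to an additive $o(1)$, and equicontinuity finishes the proof. Evaluating at $x=F^{-1}(\alpha)$ gives the result uniformly in $\alpha$, by the argument of \cref{lemma:dist-uniform-polya-continuity}.

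\textbf{Step 1: error decomposition.} Split
\[
M_m - M_m^W \le \sup_t |Z_t - W(t)/t|\frac{t^q}{m^{q-1/2}} + \sup_t |Z_t|\,\Big|\frac{\tilde\sigma_{\P,t}}{\hat\sigma_t}-1\Big|\frac{t^q}{m^{q-1/2}} .
\]

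For the first term, C2 gives $|Z_t - W(t)/t| \le \epsilon\, t^{\gamma}\log t$ for all $t\ge m$ on an event of probability tending to one, uniformly in $\P$. Multiplying by $t^q m^{1/2-q}$ and writing $t=sm$ with $s\in[1,\Delta]$ gives the bound $\epsilon\, m^{\gamma+1/2}\log(sm)\, s^{\gamma+q}$.

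For the second term, C1 gives a factor $\epsilon/\log t$. The remaining factor $|Z_t|t^qm^{1/2-q}$ is $M^W$-like, so it is controlled by $M_m^W$ plus the first error term. Since $\sup_t|Z_t|t^q/m^{q-1/2} \cdot (1/\log m)$ is tight, this term is $o(1)$ in probability. The one-sided case is identical, without the absolute values.

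\textbf{Step 2: uniformity in $\Delta$, split by $q$.}

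For $q=1/2$ and $\Delta\le\Deltabar$, $s^{\gamma+q}$ and $\log(sm)\lesssim\log m$ are bounded. The first error is therefore $O(\epsilon m^{\gamma+1/2}\log m)\to0$, because $\gamma<-1/2$. We then conclude with the equicontinuity from \cref{lemma:unif-continuity-psi} for $q=1/2$, together with $F(\cdot;\cT,1/2)$ being tight, since it is dominated by $\zeta([1,\Deltabar],1/2)$.

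For $q<1/2$, choose C2's $\epsilon$-control and note that $\gamma+q$ may be positive. Here I would instead bound the error using $t^{\gamma+1/2}\log t \to 0$ directly: the error equals $(t/m)^{q-1/2}\cdot t^{\gamma+1/2}\log t \le m^{\gamma+1/2}\log m$ up to constants, since $(t/m)^{q-1/2}\le1$ and $t^{\gamma+1/2}\log t$ is eventually decreasing. This is uniform over all $\Delta\ge1$. Equicontinuity and tightness then follow from \cref{lemma:anti-concentration-small-q} and \cref{lemma:anti-concentration-small-q-one-sided}.

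For $q>1/2$ the scale blows up like $\Delta^{q-1/2}$. So I would rescale both $M_m$ and $M_m^W$ by $\Delta^{1/2-q}$, i.e.\ normalize by $(\Delta m)^{q-1/2}$. The error becomes $(t/(\Delta m))^{q-1/2}t^{\gamma+1/2}\log t\le m^{\gamma+1/2}\log m$, again uniform in $\Delta$. The limit law is then $\Tilde F$, which is equicontinuous and tight by \cref{lemma:tilde-f-facts}. Since $\Tilde F(x\Delta^{1/2-q})=F(x)$, the quantile identity transfers back.

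\textbf{Step 3 and main obstacle.} Deduce the uniform-in-$\alpha$ statement via the mesh argument of \cref{lemma:dist-uniform-polya-continuity}; for the one-sided case, restrict to $\alpha\ge\alpha_1$ so that quantiles lie above $\varsigma$. The main obstacle is Step 2 for $q\neq1/2$ with $\Delta$ unbounded: one must check that the normalization making the limit law tight (either $F$ itself or $\Tilde F$) also makes the coupling error vanish uniformly in $\Delta$. The variance-ratio term needs extra care, because its $1/\log t$ rate must beat a $\sup$ that is only tight after the same normalization. I would handle it by bounding that $\sup$ by the normalized Wiener supremum, which is tight by \cref{lemma:tilde-f-facts} or \cref{lemma:unif-continuity-psi}, plus the already-controlled coupling error.
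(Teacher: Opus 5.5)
Your proposal is correct. It shares the paper's skeleton: C2 for the coupling, C1 for the variance ratio, the $\Delta^{1/2-q}$ rescaling to $\Tilde F$ when $q>1/2$, the equicontinuity and tightness lemmas, and the restriction to $\Delta\le\Deltabar$ at $q=1/2$. Where you differ is in how the variance-estimation error is decomposed.

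\textbf{The paper's route.} It moves $\tilde\sigma_{\P,t}/\hat\sigma_t$ onto the threshold multiplicatively, $x\mapsto x(1\mp\eta/\log m)^{-1}$. The resulting perturbation $\eta|x|/\log m$ depends on $x$. Consequently:
\begin{itemize}
\item Steps 1--3 give only pointwise-in-$x$ convergence.
\item Uniformity in $x$, and hence in $\alpha$, then requires \cref{lemma:dist-uniform-polya-continuity}.
\item The one-sided case needs a sign-aware rewrite of Step 1.
\end{itemize}

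\textbf{Your route.} You bound $|M_m-M_m^W|$ by the coupling error plus $\sup_t|Z_t|\,|\tilde\sigma_{\P,t}/\hat\sigma_t-1|\,t^q m^{1/2-q}$. You then kill the second term as $(\eps/\log m)$ times the suitably normalized Wiener supremum plus coupling error, using uniform tightness of $F$ or $\Tilde F$. This gives an additive perturbation $\delta_m\to0$ that is free of $x$ and uniform in $\P,\Delta,\cT$. Consequently:
\begin{itemize}
\item Equicontinuity yields $\sup_x$ convergence directly, so the mesh argument you still invoke in Step 3 is redundant.
\item The one-sided case goes through verbatim, since $|\sup_t a_t-\sup_t b_t|\le\sup_t|a_t-b_t|$.
\item Writing the coupling error as $(t/m)^{q-1/2}t^{\gamma+1/2}\log t$ (with $\Delta m$ in place of $m$ after rescaling when $q>1/2$) is tidier than the paper's split on the sign of $q+\gamma$ with a separate remark for $q\in(1/2,-\gamma)$.
\end{itemize}
Both routes rest on the same equicontinuity and tightness; you simply spend tightness earlier.

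\textbf{Two small points.}
\begin{itemize}
\item For $q<1/2$ you say $\gamma+q$ ``may be positive''. That cannot happen, since $\gamma<-1/2$. It is harmless, because your bound does not use the sign.
\item In the one-sided case with $q>1/2$, you need $\Tilde F_+^{-1}(\alpha)\ge\Phi^{-1}(\alpha_1)$ uniformly in $\Delta$. Justify this via $\Tilde\zeta_+(\cT,q)\overset{d}{=}\zeta_+(\Delta\cT^{-1},1-q)\ge W(1)$. Rescaling $F_+^{-1}(\alpha)\ge\varsigma$ only gives the vanishing lower bound $\Delta^{1/2-q}\varsigma$.
\end{itemize}
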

\begin{proof}[Proof of \cref{lem:general-lemma}]
The proof is similar to the proof of \citep[Theorem 2.10]{waudby-smith_distribution-uniform_2026}. We first show the two-sided guarantee in \cref{eq:two-sided-general-prop}. 

In Step 1, we will control the error induced from variance estimation. In Step 2, we
control the error induced from the strong Gaussian approximation. In Step 3, we use the uniform continuity of 
$F$ from \cref{lemma:unif-continuity-psi} and in Step 4 we apply \cref{lemma:dist-uniform-polya-continuity} to upgrade the pointwise statement to the uniform statement. In Step 5, we will return to the case when $q = 1/2$ to show weak-uniformity.

Let $\eps > 0$ and $q \neq 1/2$. By the continuity result in \cref{lemma:unif-continuity-psi}, there exists a single $\eta_0\equiv \eta_0(\eps) > 0$ such that if
\(
\abs{y-z} < \eta_0\) then \(\abs{F(y; \cT, q) -  F(z; \cT, q)} < \eps
\) uniformly over $\cT \subset \mathfrak{T}_\Delta$. Let $\eta = \min \curly{1, \eta_0}$. Let $m > 2$, so that $m (\log m)^a$ is increasing for any power $a$.
\paragraph{Step 1: Controlling errors from variance estimation}

It follows from assumption \emph{C1} that there exists some $m_1 \equiv m_1(\eps)$ sufficiently large so that $\forall m > m_1$ we have 
\[
\sup_{\P \in \cP} \PM{\sup_{t \geq m} \abs{\frac{\tilde \sigma_t}{\hat \sigma_t}-1} \geq \frac{\eta}{\log t}} < \eps.
\] This means that for every $\P \in \cP$ we have with probability $1-\eps$ under $\P$ that
\begin{align}
\paren{1-\frac{\eta}{\log t}}\abs{\frac{S_t - t\tilde \mu_{\P, t}}{\tilde \sigma_{\P, t} t}}&\leq \abs{\frac{S_t - t\tilde \mu_{\P, t}}{\hat \sigma_t t}} \leq \abs{\frac{S_t - t\tilde \mu_{\P, t}}{\tilde \sigma_{\P, t} t}}\paren{1+\frac{\eta}{\log t}}
\end{align} Turning to the expression of interest, we have for any $m \geq m_1$ that for all $\P \in \cP$ and $x \in \R$,
\begin{align}
\PM{\sup_{t \in m\cT} \abs{\frac{S_t - t\tilde \mu_{\P, t}}{\hat \sigma_t t}} \frac{t^q}{m^{q-1/2}} \leq x} &< \PM{\sup_{ t\in m\cT} \abs{\frac{S_t - t\tilde \mu_{\P, t}}{\tilde \sigma_{\P, t} t}} \frac{t^q}{m^{q-1/2}} \leq \paren{1 - \eta/\log m}\n x} + \eps,\\
\PM{\sup_{t \in m\cT} \abs{\frac{S_t - t\tilde \mu_{\P, t}}{\hat \sigma_t t}} \frac{t^q}{m^{q-1/2}} \leq x} &> \PM{\sup_{t \in m\cT} \abs{\frac{S_t - t\tilde \mu_{\P, t}}{\tilde \sigma_{\P, t} t}} \frac{t^q}{m^{q-1/2}} \leq \paren{1 + \eta/\log m}\n x} - \eps.
\end{align}

\paragraph{Step 2: Controlling the error induced by strong Gaussian approximation}
By \emph{C2}, there exists some $m_2 \equiv m_2(\eps)$ so that $\forall m \geq m_2$ we have for all $\P$ in $\cP$ that 
\[
\PM{\sup_{t \geq m} \abs{\frac{S_t - t\tilde \mu_{\P, t}}{\tilde \sigma_{\P, t} t} -\frac{W(t)}{t}} \geq \eta t^{\gamma}\log t} < \eps.
\]We split into two cases, $q+\gamma \geq 0$ and $q + \gamma < 0$. First, if $q + \gamma \geq 0$, then for all $m \geq m_2$ we have that for all $\P \in \cP$ and  $x \in \R$
\begin{align}
& \PM{\sup_{t \in m\cT}\curly{ \abs{\frac{S_t - t\tilde \mu_{\P, t}}{\tilde \sigma_{\P, t} t}}\frac{t^{q}}{m^{q-1/2}}} \leq x}\\
&< \PM{\sup_{t \in m\cT} \curly{\abs{\frac{W(t)}{t}}\frac{t^q}{m^{q-1/2}}-\eta \frac{t^{q + \gamma} \log(t)}{m^{q-1/2}}} \leq x } + \eps\\
&= \PM{\Delta^{1/2-q}\sup_{t \in m\cT} \curly{ \abs{\frac{W(t)}{t}}\frac{t^q}{m^{q-1/2}}-\eta  \frac{t^{q + \gamma} \log(t)}{m^{q-1/2}}} \leq x\Delta^{1/2-q} } + \eps\\
&\leq \PM{\Delta^{1/2-q}\sup_{t \in m\cT} \abs{\frac{W(t)}{t}}\frac{t^q}{m^{q-1/2}}-\eta  \frac{(m\Delta)^{q + \gamma} \log(m\Delta)}{(m\Delta)^{q-1/2}} \leq x\Delta^{1/2-q} } + \eps \label{eq:maximizer-right-point}\\
&\leq \PM{\Delta^{1/2-q}\sup_{t \in m\cT} \abs{\frac{W(t)}{t}}\frac{t^q}{m^{q-1/2}}-\eta  m^{1/2 + \gamma} \log m \leq x\Delta^{1/2-q} } + \eps \label{eq:m-delta-drop-out}\\
&= \Tilde F \paren{\Delta^{1/2-q}x + \eta m^{\gamma + 1/2}\log m; \cT, q} + \eps
\end{align}
where \cref{eq:maximizer-right-point} uses that the maximizer occurs at $\sup \cT$ if $q + \gamma \geq 0$ and \cref{eq:m-delta-drop-out} uses the fact that if $a < 0$ then $(m\Delta)^a\log(m\Delta) \leq m^a\log(m)$ (for large $m$ and $\Delta \geq 1$). If $q < -\gamma$, then 
\begin{align}
\PM{\sup_{t \in m\cT} \abs{\frac{S_t - t\tilde \mu_{\P, t}}{\tilde \sigma_{\P, t} t}}\frac{t^{q}}{m^{q-1/2}} \leq x}
&< \PM{\sup_{t \in m\cT} \curly{ \abs{\frac{W(t)}{t}}\frac{t^q}{m^{q-1/2}}-\eta  \frac{t^{q + \gamma} \log(t)}{m^{q-1/2}}} \leq x} + \eps\\
&\leq \PM{\sup_{t \in m\cT} \abs{\frac{W(t)}{t}}\frac{t^q}{m^{q-1/2}}-\eta m^{1/2+ \gamma} \log(m) \leq x} + \eps \label{eq:maximizer-left-point}\\
&= F \paren{x + \eta m^{\gamma + 1/2}\log(m); \cT, q} + \eps,
\end{align}where \cref{eq:maximizer-left-point} uses that the maximizer occurs at $\inf \cT$ if $q + \gamma < 0$. Similarly, the lower bound is
\begin{align}
\PM{\sup_{t \in m\cT} \abs{\frac{S_t - t\tilde \mu_{\P, t}}{\tilde \sigma_{\P, t} t}}\frac{t^{q}}{m^{q-1/2}} \leq x}  &>\begin{cases}
    \Tilde F \paren{\Delta^{1/2-q}x - \eta m^{\gamma + 1/2}\log(m); \cT, q} - \eps & q \geq -\gamma\\
     F \paren{x - \eta m^{\gamma + 1/2}\log(m); \cT, q} -\eps & q < -\gamma
\end{cases}
\end{align}
It then follows after considering Step 1 that for all $m \geq \max \{m_1, m_2\}$ we have that for all $\P \in \cP$ and $x \in \R$, that 
\begin{align}
\PM{\sup_{t \in m\cT} \abs{\frac{S_t - t\tilde \mu_{\P, t}}{\hat \sigma_t t}} \frac{t^q}{m^{q-1/2}} \leq x}& <\begin{cases}
    \Tilde F \paren{\Delta^{1/2-q} x(1-\eta/\log m)\n + \eta m^{\gamma + 1/2}\log(m); \cT, q}  + 2\eps & q \geq -\gamma\\
    F \paren{x(1-\eta/\log m)\n + \eta m^{\gamma + 1/2}\log(m); \cT, q}  + 2\eps & q < -\gamma,
\end{cases}
\end{align}and similarly,
\begin{align}
\PM{\sup_{t \in m\cT} \abs{\frac{S_t - t\tilde \mu_{\P, t}}{\hat \sigma_t t}} \frac{t^q}{m^{q-1/2}} \leq x}& >\begin{cases}
    \Tilde F \paren{\Delta^{1/2-q} x(1+\eta/\log m)\n - \eta m^{\gamma + 1/2}\log(m); \cT, q}  -2\eps & q \geq -\gamma\\
    F \paren{ x(1+\eta/\log m)\n - \eta m^{\gamma + 1/2}\log(m); \cT, q}   -2\eps & q < -\gamma,
\end{cases}
\end{align}
\paragraph{Step 3a: Continuity of $\Tilde F$}
We again split up into cases and first assume that $q \geq -\gamma$. 
Since $1/2 + \gamma < 0$, there exists an $m_3 \equiv m_3(1/2)$ such that for all $m \geq m_3$
\begin{equation}\label{eq:second-term-decay}
m^{1/2 + \gamma}\log(m) < 1/2.
\end{equation}
Call $y \equiv \Delta^{1/2-q}x$. There exists an $m_4 \equiv m_4(1/2, y)$ such that $\forall m > m_4$ 
\begin{equation}\label{eq:m-large-of-x}
\frac{\abs{y}}{\log m-1} < 1/2.
\end{equation}
Let $m \geq \max \{m_1, m_2, m_3, m_4\}$. It follows from Step 1 and Step 2 that
\begin{align}
&\PM{\Delta^{1/2-q}\sup_{t \in m\cT} \abs{\frac{S_t - t\tilde \mu_{\P, t}}{\hat \sigma_t t}} \frac{t^q}{m^{q-1/2}} \leq y} -  \Tilde F(y; \cT, q) \\
& < \Tilde F
\paren{\underbrace{y\paren{1-\eta/\log m}\n + \eta m^{1/2 + \gamma}\log(m)}_{\mathrm I}; \cT, q}  -   \Tilde F(y; \cT, q) + 2\eps.
\end{align}If $\abs{\mathrm{I} - y} < \eta$ then $\Tilde F(\mathrm{I}; \cT, q) - \Tilde F(y; \cT, q) < \eps$. It is straightforward to show
\begin{align}
\abs{\mathrm{I }-y}
&= \eta \abs{y \paren{\frac{1}{\log m - \eta}}+ m^{1/2 + \gamma} \log(m) }\\
&\leq \eta \abs{y \paren{\frac{1}{\log m -1}} + m^{1/2 + \gamma}\log(m)}\label{eq:q-large-than-12},
\end{align}where \cref{eq:q-large-than-12} uses fact that $\eta < 1$. Then using \cref{eq:second-term-decay} and \cref{eq:m-large-of-x} one has for $m$ sufficiently large
\begin{align}
\abs{\mathrm{I} - y} = \eta \abs{y \paren{\frac{1}{\log m - \eta}}+ m^{1/2 + \gamma} \log(m ) }&\leq \eta \abs{1/2 + 1/2}=\eta.
\end{align}Hence for all $m > \max \curly{m_1, m_2, m_3, m_4}$,
\begin{align}
\abs{\PM{\sup_{t \in m\cT} \abs{\frac{S_t - t \tilde \mu_{\P, t}}{\hat \sigma_t t}}\frac{t^q}{m^{q-1/2}} \leq x} -  F(x; \cT, q)}&= \abs{\PM{\Delta^{1/2-q} \sup_{t \in m\cT} \abs{\frac{S_t - t \tilde \mu_{\P, t}}{\hat \sigma_t t}}\frac{t^q}{m^{q-1/2}} \leq y} - \Tilde F(y; \cT, q)}\\
&\leq 3\eps,
\end{align} and we can similarly show 
\begin{align}
\Tilde F(y; \cT, q) - \PM{\Delta^{1/2-q}\sup_{t \in m\cT} \abs{\frac{S_t - t \tilde \mu_{\P, t}}{\hat \sigma_t t}}\frac{t^q}{m^{q-1/2}} \leq y} &> -3\eps,
\end{align}which implies 
\begin{align}
\abs{\PM{\Delta^{1/2-q}\sup_{t \in m\cT} \abs{\frac{S_t - t \tilde \mu_{\P, t}}{\hat \sigma_t t}}\frac{t^q}{m^{q-1/2}} \leq y} - \Tilde F(y; \cT, q)} < 6\eps.
\end{align}
In addition, when $q \in (1/2, -\gamma)$, the presented argument applies since 
\[
\Delta^{1/2-q}\eta m^{\gamma + 1/2}\log m \leq \eta m^{\gamma + 1/2}\log m.
\]
\paragraph{Step 3b: Continuity of $F$}
Now, assume that $q < -\gamma$. We must now bound $\abs{\mathrm{I} - x} < \eta$ where 
\begin{align}
\abs{\mathrm{I }-x}
&= \eta \abs{x \paren{\frac{1}{\log m -\eta}} + m^{1/2 + \gamma}\log(m)}.
\end{align}There exists an $m_5 \equiv m_5(1/2, x)$ such that $\forall m > m_5$,
\begin{equation}\label{eq:m-large-of-x-b}
\frac{\abs{x}}{\log m-1} < 1/2.
\end{equation}Therefore, for $m \geq M \equiv \max \curly{m_1, m_2, m_3, m_4, m_5}$,
\begin{align}
\eta \abs{x \paren{\frac{1}{\log m - \eta}}+ m^{1/2 + \gamma}\log(m) }&\leq \eta \abs{1/2 + 1/2}=\eta.
\end{align}Hence for all $m > M$, we have for all $\P \in \cP,$
\begin{align}
\PM{\sup_{t \in m\cT} \abs{\frac{S_t - t \tilde \mu_{\P, t}}{\hat \sigma_t t}}\frac{t^q}{m^{q-1/2}} \leq x} -  F(x; \cT, q) &< 3\eps \text{ and similarly }\\
F(x; \cT, q) - \PM{\sup_{t \in m\cT} \abs{\frac{S_t - t \tilde \mu_{\P, t}}{\hat \sigma_t t}}\frac{t^q}{m^{q-1/2}} \leq x} &> -3\eps,
\end{align}which implies 
\begin{align}
\abs{\PM{\sup_{t \in m\cT} \abs{\frac{S_t - t \tilde \mu_{\P, t}}{\hat \sigma_t t}}\frac{t^q}{m^{q-1/2}} \leq x} -  F(x; \cT, q)} < 6\eps.
\end{align}
\paragraph{Step 4a: Upgrading Pointwise $\Tilde F$ Convergence to Uniform Convergence (Pólya continuity)}
In this step, we split into two \emph{different} cases: $q < 1/2$ and $q > 1/2$. First we assume that $q > 1/2$. We have shown
\[
\forall y \in \R, \lim_{m \to \infty}\sup_{\Delta \geq 1}\sup_{\cT \in \mathfrak{T}_\Delta}\sup_{\P \in \cP}\abs{\PM{\Delta^{1/2-q}\sup_{t \in m\cT}\abs{\frac{S_t - t\tilde \mu_{\P, t}}{t \hat \sigma_t}}\frac{t^{q}}{m^{q-1/2}} \leq y} - \Tilde F(y; \cT, q)} = 0,
\]which is a pointwise in $y$ statement. Given that $\Tilde F$ is equicontinuous and uniformly tight (\cref{lemma:tilde-f-facts}), we can invoke \cref{lemma:dist-uniform-polya-continuity} to upgrade to uniform convergence. Thus we can guarantee that 
\[
\lim_{m \to \infty}\sup_{\Delta \geq 1}\sup_{\cT \in \mathfrak{T}_\Delta}\sup_{y \in \R}\sup_{\P \in \cP}\abs{\PM{\Delta^{1/2-q}\sup_{t \in m\cT}\abs{\frac{S_t - t\tilde \mu_{\P, t}}{t \hat \sigma_t}}\frac{t^{q}}{m^{q-1/2}} \leq y} -  \Tilde F(y; \cT, q)} = 0,
\]
Then using $\Tilde F(\Delta^{1/2-q}x; \cT, q) = F(x;\cT,q)$ and using $F$ is invertible (\cref{lemma:unif-continuity-psi}) we have that
\[
\lim_{m \to \infty}\sup_{\Delta \geq 1}\sup_{\cT \in \mathfrak{T}_\Delta}\sup_{\alpha \in (0,1)}\sup_{\P \in \cP}\abs{\PM{\sup_{t \in m\cT}\abs{\frac{S_t - t\tilde \mu_{\P, t}}{t \hat \sigma_t}}\frac{t^{q}}{m^{q-1/2}} \leq  F \n(\alpha; \cT, q)} - \alpha} = 0.
\]

\paragraph{Step 4b: Upgrading Pointwise $F$ Convergence to Uniform Convergence (Pólya continuity)} 
Now, we assume that $q < 1/2$. We can similarly guarantee using \cref{lemma:dist-uniform-polya-continuity} that 
\[
\lim_{m \to \infty}\sup_{\Delta \geq 1}\sup_{\cT \in \mathfrak{T}_\Delta}\sup_{x \in \R}\sup_{\P \in \cP}\abs{\PM{\sup_{t \in m\cT}\abs{\frac{S_t - t\tilde \mu_{\P, t}}{t \hat \sigma_t}}\frac{t^{q}}{m^{q-1/2}} \leq x} -  F(x; \cT, q)} = 0,
\]
and using $F$ is invertible (\cref{lemma:unif-continuity-psi}) we have that
\[
\lim_{m \to \infty}\sup_{\Delta \geq 1}\sup_{\cT \in \mathfrak{T}_\Delta}\sup_{\alpha \in (0,1)}\sup_{\P \in \cP}\abs{\PM{\sup_{t \in m\cT}\abs{\frac{S_t - t\tilde \mu_{\P, t}}{t \hat \sigma_t}}\frac{t^{q}}{m^{q-1/2}} \leq  F \n(\alpha; \cT, q)} - \alpha} = 0,
\]which, along with Step 4a, proves \cref{eq:two-sided-general-prop-unif-delta}.

\paragraph{Step 5: Weak Uniformity when $q = 1/2$}\label{paragraph:step5}
Lastly, we note that the strong-uniformity proved above fails when $q = 1/2$. We note that $\eta_0$ is $\Delta$-uniform only if $q \neq 1/2$ and this is the first place uniformity breaks. Moreover, we require the distribution functions in Step 4 to be equicontinuous and uniformly tight. When $q = 1/2$, these conditions fail. However, for any fixed $\Deltabar < \infty$, we can invoke \cref{lemma:valid-continuity} as $a > 0, b < \infty$ for continuity and tightness.

\paragraph{Convergence to $F_+$}

The one-sided proof follows a similar logic to the two-sided proof, but there are some changes. We state the places in the proof that need to be changed for $F_+$.

\paragraph{One-Sided Step 1:}
The major change in the proof comes from having to reconsider controlling variance estimation in Step 1. The one-sided statistic is no longer nonnegative, so the multiplicative perturbation must be sign-aware. Because $u \mapsto u \pm \eta \abs{u}/\log m$ is increasing, we can replace $x\paren{1 \mp \eta/\log m}$ with $x \mp \eta \abs{x}/\log m$.

\paragraph{One-Sided Quantile Uniformity}
For the one-sided distribution, we obtain quantile uniformity over $\alpha \in [\alpha_1, 1)$ for any $\alpha_1 > 1/2$. This changes Step-4, which will now apply \cref{lemma:dist-uniform-polya-continuity} with $\alpha_1$ fixed.
\end{proof}

\subsubsection{Instantiating \cref{lem:general-lemma} for Confidence Horizons and GSMs}
With the lemma defined above in \cref{lem:general-lemma} we can now prove the theorems which fall out as consequences of the lemma. For confidence horizons, note that the uniformity over $\cT \subset \mathfrak{T}_\Delta$ is redundant since we take $\cT = [1,\Delta]$.

\begin{proof}[Proof of \cref{theorem:general-mean} (Two-sided Confidence horizons for independent data)]\label{proof:general-mean}
For $q \neq 1/2$, we must show 
\[
\lim_{m \to \infty} \ \sup_{\P \in \cP} \sup_{\Delta \geq 1}\sup_{\alpha \in (0, 1)}\abs{\PM{\forall t \in [m, \Delta m], \mu_\P  \in \hat \mu_t \pm \mathfrak{B}_t^{(m, \Delta)} }-(1-\alpha)} = 0.
\]Rearranging terms gives the equivalent 
\begin{align}
\lim_{m \to \infty} \ \sup_{\P \in \cP} \sup_{\Delta \geq 1}\sup_{\alpha \in (0, 1)}\abs{\PM{\sup_{t \in [m, \Delta m]} \abs{\frac{S_t - t \tilde \mu_{\P, t}}{t \hat \sigma_t}}\frac{t^q}{m^{q-1/2}} \leq \Psi \n(1-\alpha; \cT, q)} - (1-\alpha)} = 0.
\end{align} Setting $\cT = [1,\Delta]$ and applying \cref{lemma: calc-quantile}, we find the equivalent equation in \cref{eq:two-sided-general-prop-unif-delta}. Thus it is sufficient to check if \cref{cond:general-lemma} is satisfied. \emph{C1} is satisfied by \cref{condition:strong-consistent-var-estimator} and \emph{C2} is satisfied by \cref{lemma:dist-uniform-kmt}. When $q = 1/2$, \cref{lem:general-lemma} indicates we only obtain the $\cP$-weakly uniform statement.
\end{proof}
\begin{proof}[Proof of \cref{prop:one-sided-confidence-horizons} (One-sided confidence horizons for independent data)]\label{proof:one-sided-mean}
The one-sided proposition follows directly from the two-sided one after checking the relevant conditions in \cref{lem:general-lemma} and noting the $\alpha_1$ quantile uniformity restriction.
\end{proof}

\begin{proof}[Proof of \cref{theorem:mgale-asympcv}(Confidence horizons for martingale-dependent data)]\label{proof:mgale-asympch}
As explained in the proof of \cref{theorem:general-mean}, it is sufficient to check \emph{C1} and \emph{C2} in \cref{cond:general-lemma}. \emph{C1} is satisfied by \cref{condition:strong-consistent-var-estimator}. \emph{C2} is satisfied by \cref{lemma:strong-approximation-mgale-dependence}. When $q \neq 1/2$, \cref{lem:general-lemma} indicates $\P$-strong uniformity. When $q = 1/2$, we only obtain $\P$-weak uniformity.
\end{proof}

\begin{proof}[Proof of \cref{prop:unif-valid-gsm-ind} (GSMs for independent data)]\label{proof:unif-valid-gsm-ind}
We need to show that when $q \neq 1/2$,
\[
\lim_{n_1 \to \infty} \sup_{\P \in \cP} \sup_{ \Delta\geq 1}  \sup_{K \in \N} \sup_{\alpha \in (0, 1)}\abs{\PM{\sup_{n_k \in n_1\cT} \abs{\frac{S_{n_k} - n_k \tilde \mu_{\P, n_k}}{n_k \hat \sigma_{n_k}}}\frac{n_k^q}{n_1^{q-1/2}} \leq  \Xi\n(\alpha, \Delta, q)} - \alpha} = 0,
\]where $\cT = \curly{1, \dots, \Delta} \subseteq [1,\Delta]$ and $\Delta = n_K/n_1$. We note that the result of \cref{lem:general-lemma} gives the desired limit guarantee. This is because of the uniformity over $\cT \subset \mathfrak{T}_\Delta$ makes the approximation uniform over the size of the set $\cT$. Therefore, we check \emph{C1} and \emph{C2} in \cref{cond:general-lemma} are satisfied by \cref{condition:strong-consistent-var-estimator} and \cref{lemma:dist-uniform-kmt}. 
The final step is to verify that 
\[
\Xi(\alpha, n_1\cT_{n_1}, q) = \PM{\sup_{n_k \in n_1\cT_{n_1}} \abs{W(n_k)}n_k^{q-1}n_1^{1/2 - q} \leq x},
\]which we verify in \cref{lemma: calc-quantile}. 
When $q = 1/2$, we obtain $\cP$-weak uniformity.
\end{proof}

The proof of \cref{prop:unif-valid-gsm-mgale} is straightforward after the proofs of \cref{prop:unif-valid-gsm-ind} and \cref{theorem:mgale-asympcv} detailed above.

\subsection{Neyman Allocation Conditions}\label{sec:check-neyman-conditions}
This section checks the conditions required for the martingale-dependent AsympCH for the Neyman allocation illustration. We fix a $j$ and drop it for ease of notation. Recall
\begin{align}
Z_t(j) \equiv Z_{t} = \frac{\paren{Y_t - \hat \mu_{t-1}}\indic{A_t = j}}{\PM{A_t = j \mid \cF_{t-1}}} + \hat \mu_{t-1}
\end{align}Further, define $\pi_t \equiv \pi_{t-1}(j) = \PM{A_t = j \mid \cF_{t-1}}$ and $\star \pi = \lim_{t \to \infty}\pi_{t-1}$.

\paragraph{Variances Converge}
We first check \cref{condition:non-vanishing-variances}. Fix a single $j$.

\begin{align}
\Var{Z_{t} \mid \cF_{t-1}}
&= \frac{1}{\pi_{t-1}^2(j)} \Var{\paren{Y_t - \hat \mu_{t-1}}\indic{A_t = j}\mid \cF_{t-1}}
\end{align}
The first and second moment are
\begin{align}
\EE{\paren{Y_t - \hat \mu_{t-1}} \indic{A_t = j} \mid \cF_{t-1}} &= \pi_{t-1} \paren{\mu - \hat \mu_{t-1}}\\
   \EE{\paren{Y_t - \hat \mu_{t-1}}^2 \indic{A_t = j} \mid \cF_{t-1}} &= \pi_{t-1} \paren{\sigma^2 + (\mu - \hat \mu_{t-1})^2}
\end{align}
Hence 
\begin{align}
\Var{ Z_t \mid \cF_{t-1}} &= \frac{1}{\pi_{t-1}}\paren{\sigma^2_\P + \paren{\mu - \hat \mu_{t-1}}^2 - \pi_{t-1}\paren{\mu - \hat \mu_{t-1}}^2}\\
&= \underbrace{\frac{\sigma^2_\P}{\pi_{t-1}}}_{\mathrm I} + \underbrace{\frac{1-\pi_{t-1}}{\pi_{t-1}}\paren{\mu - \hat \mu_{t-1}}^2}_{\mathrm {II}} \label{eq:cond-variance-mt}
\end{align}
Under the assumption that $\inf_j \sigma_j^2 > 0$, each arm $j$ will be pulled infinitely often. Therefore by the LIL, we have that $\mathrm{II} = \bar O_{\P}(\log \log t / t)$ which will be dominated by the first term.

To see the convergence of the first term we must consider the convergence of the variance estimators. We assume $r > 2$. If $r > 4$, the LIL activates and the rate is capped at $1/2$. Thus assume $r \in (2,4)$. By the Marcinkiewicz--Zygmund SLLN and \citep[Proposition 2.2]{waudby-smith_distribution-uniform_2026}
\[
\abs{\hat \sigma^2_t - \sigma^2_\P} = \bar o_{\cP}\paren{t^{-\rho}},
\]where $\rho = 1-4/(2 + r)$.
This holds for each estimator meaning
\[
\abs{\frac{\hat \sigma_t}{\sum_{k=1}^{J} \hat \sigma_t(k)} - \frac{\sigma_\P}{\sum_{k=1}^{J}\sigma_\P(k)}} = \bar o_{\P}\paren{t^{-\rho}}.
\]Therefore, we have 
\[
\abs{\pi_{t-1} - \star \pi} = \bar o_{\P}\paren{t^{-\rho}}.
\]Hence the $\rho$ parameter needed for \cref{condition:non-vanishing-variances} is $ \rho = 1 - 4/(2 + r)$, which is greater than $0$ since $r > 2$. Note that we can invoke the Marcinkiewicz--Zygmund SLLN because the estimators are of \iid{} data.

\paragraph{Uniform Lindeberg-Summability}
By assumption, $Y_t$ has an $r$ uniformly integrable moment for $r > 2$. With this fact and uniformly positive variances, we can satisfy \cref{condition:lindeberg-type-mgale} for $\kappa \in (2,r)$. 

\subsection{Auxiliary Lemmas}
\begin{lemma}\label{lemma:proper-limit}
Let $q < 1/2$. For any $\cT \subset [a,b]$, $1 \leq a \leq b \leq \infty$, where $\inf \cT = a, \sup \cT = b$, the distribution functions
\[
F(x; \cT, q) = \PM{\sup_{s \in \cT} \abs{W(s)}s^{q-1} \leq x} \quad \text{ and } \quad F_+(x; \cT, q) = \PM{\sup_{s \in \cT} W(s)s^{q-1} \leq x}
\]are continuous and proper. In particular $F/F_+(x;[1,\infty),q)$ are continuous and proper.
\end{lemma}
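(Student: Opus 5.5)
The plan is to prove the two properties separately: properness via the law of the iterated logarithm, and continuity (absence of atoms) via the Cameron--Martin anti-concentration bound of \cref{lemma:cameron-martin-lemma}, splitting the index set into a compact piece and a tail. Throughout, $a \geq 1 > 0$, so \cref{lemma:cameron-martin-lemma} and \cref{lemma:valid-continuity} apply to any $\cT \cap [a, R]$ with $R < \infty$.

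\emph{Properness.} Since $\cT \subset [a,b] \subset [1,\infty)$, we have $\zeta(\cT,q) \le \zeta([1,\infty),q)$ and $\zeta_+(\cT,q) \le \zeta([1,\infty),q)$. It therefore suffices to show that $\zeta([1,\infty),q) < \infty$ almost surely.
\begin{enumerate}[(i)]
\item By the LIL there is an a.s.\ finite random $S$ such that $|W(s)| \le 2\sqrt{s\log\log s}$ for all $s \ge S$.
\item On $[S,\infty)$ this gives $|W(s)|s^{q-1} \le 2 s^{q-1/2}\sqrt{\log\log s}$. Because $q<1/2$, the right-hand side is bounded and in fact tends to $0$.
\item On the compact interval $[1,S]$, continuity of $W$ gives a finite supremum.
\end{enumerate}
Hence $F(x;\cT,q)\to 1$ and $F_+(x;\cT,q)\to 1$ as $x\to\infty$. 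At the other end, $F(x)=0$ for $x<0$ since $\zeta \ge 0$. For $F_+$, note $\zeta_+ \ge W(s_0)s_0^{q-1}$ for any fixed $s_0\in\cT$, and this lower bound is Gaussian, so $F_+(x)\to 0$ as $x\to-\infty$. This also yields the tail estimate used in the next step: $\sup_{s\ge R}|W(s)|s^{q-1}\to 0$ a.s.\ as $R\to\infty$.

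\emph{Continuity.} A distribution function is continuous iff it has no atoms, so we show $\P(\zeta=x)=0$ for every $x$, and likewise for $\zeta_+$.
\begin{enumerate}[(i)]
\item \emph{Case $b<\infty$.} \cref{lemma:cameron-martin-lemma} gives $\P(\zeta\in(x-\eta,x]) \le 2\eta C_q(a,b)$. Letting $\eta\downarrow 0$ shows there is no atom at $x$. The same bound with constant $C_q(a,b)$ handles $\zeta_+$.
\item \emph{Case $b=\infty$, two-sided, $x>0$.} Mimic the proof of \cref{lemma:anti-concentration-small-q}. Fix $\eps>0$ and choose $R$ with $\P(B\ge x/2)<\eps$, where $B=\sup_{s\in\cT\cap[R,\infty)}|W(s)|s^{q-1}$; this is possible by the tail estimate above. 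Write $\zeta=\max\{A,B\}$ with $A$ the supremum over $\cT\cap[a,R)$. On $\{B<x/2\}$, the event $\{\zeta=x\}$ forces $A=x$, and $A$ has no atoms by the finite-$b$ case. Hence $\P(\zeta=x)\le\eps$ for every $\eps>0$.
\item \emph{Case $b=\infty$, two-sided, $x=0$.} Pick any $s_0\in\cT$. Then $\P(\zeta=0)\le\P(W(s_0)=0)=0$.
\item \emph{Case $b=\infty$, one-sided, $x>0$.} The splitting argument of (ii) goes through unchanged with $W$ in place of $|W|$, since $B_+ \le B$.
\item \emph{Case $b=\infty$, one-sided, $x\le 0$.} Here I would show $\zeta_+>0$ almost surely. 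By the LIL, $W$ is positive at arbitrarily large times, so $\sup_{s\in\cT\cap[R,\infty)}W(s)s^{q-1}>0$ a.s. This uses that $\cT$ is unbounded when $\sup\cT=\infty$ and that $\cT$ meets arbitrarily large positive excursions of $W$. Granting this, $\zeta_+$ has no atom at any $x\le 0$.
\end{enumerate}

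The step I expect to be the main obstacle is (v), the nonpositive atoms of $F_+$ when $b=\infty$. For a general (possibly sparse) unbounded $\cT$, the fact that $W$ is positive at arbitrarily large times does not immediately say it is positive at points of $\cT$. I would first try a Blumenthal $0$--$1$ argument for the time-inverted process $sW(1/s)$ near $0$. If that fails, I would fall back on the Cameron--Martin shift restricted to $\cT\cap[a,R]$, combined with the event $\{B_+\le 0\}$, to rule out an atom at each fixed $x\le 0$. The remaining steps are direct applications of results already stated earlier in the paper.
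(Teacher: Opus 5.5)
Your properness argument and cases (i)--(iv) of the continuity argument are correct. Your route differs mildly from the paper's.

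\textbf{Comparison with the paper.} The paper gets continuity by taking the $\Delta$-uniform modulus $\eta(\eps)$ from \cref{lemma:anti-concentration-small-q} for the truncated sets $\cT\cap[a,\Delta]$. It then passes that modulus through the monotone limit $F(x;\cT\cap[a,\Delta],q)\downarrow F(x;\cT,q)$. You instead show directly that there are no atoms, splitting into a compact piece (handled by \cref{lemma:cameron-martin-lemma}) and an LIL-controlled tail. The ingredients are the same.

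Your packaging has one real advantage: it forces the one-sided case to be treated on its own. The paper cites only the two-sided lemma, and the one-sided analogue \cref{lemma:anti-concentration-small-q-one-sided} only controls $x,y>\varsigma>0$. So the paper's proof never addresses $F_+$ near $0$ when $\sup\cT=\infty$, which is exactly your step (v). Also, your use of an arbitrary $s_0\in\cT$, rather than $W(1)$, to get $F_+\to 0$ at $-\infty$ is the correct version when $a>1$.

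\textbf{Closing step (v).} As written, (v) is still open; the LIL sentence is not a proof for sparse $\cT$, as you note. Your first suggestion (Blumenthal) closes it.
\begin{enumerate}[(a)]
\item For $x<0$ there is nothing to prove. Since $W(s)s^{q-1}\to 0$ a.s.\ and $\cT$ is unbounded, $\zeta_+(\cT,q)\ge 0$ a.s. Hence $F_+(x;\cT,q)=0$ for $x<0$, and the only candidate atom is at $0$.
\item Note $\{\zeta_+(\cT,q)\le 0\}\subseteq\{W(s)\le 0\ \forall s\in\cT\}$. Let $\tilde W(u)=uW(1/u)$ and $U=\{1/s:s\in\cT\}$, so $\inf U=0$. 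Define $E=\{\exists\delta>0:\ \tilde W(u)\le 0\ \forall u\in U\cap(0,\delta]\}$; the event above is contained in $E$.
\item For every $n$, $E$ is a union of events measurable with respect to $\sigma(\tilde W(u):u\le 1/n)$; path continuity reduces each to countably many $u$. So $E$ lies in the germ field of $\tilde W$ at $0$, and $\P(E)\in\{0,1\}$ by Blumenthal's law.
\item Let $E'$ be the same event with $\ge$. Symmetry gives $\P(E')=\P(E)$. Also $E\cap E'\subseteq\bigcup_n\{\tilde W(u_n)=0\}$ for any choice $u_n\in U\cap(0,1/n]$, so $E\cap E'$ is null.
\item Hence $2\P(E)\le 1$, which forces $\P(E)=0$, and so $\P(\zeta_+(\cT,q)=0)\le \P(E)=0$.
\end{enumerate}

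\textbf{Your fallback would not work.} The Cameron--Martin route fails at $x=0$ for two reasons:
\begin{itemize}
\item The splitting in (ii)/(iv) needs $\P(B_+\ge x/2)$ small. At $x=0$ this is vacuous, because $B_+\ge 0$ a.s.\ and $B_+=0$ on $\{\zeta_+=0\}$.
\item A global shift by $\eta s^{1-q}$ is unavailable, since that function has infinite Cameron--Martin norm on $[a,\infty)$ for $q<1/2$ (\cref{lemma:valid-continuity}).
\end{itemize}
Some input beyond anti-concentration on compacts, such as Blumenthal's law, is genuinely needed here.
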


\begin{proof}[Proof of \cref{lemma:proper-limit}]
We show that the distributions are proper and continuous.
\paragraph{Properness} By the LIL, for $q<1/2$, $W(t)t^{q-1} = \bar o_{\P}(1)$. So $\zeta(\cT,q),\zeta_+(\cT,q)<\infty$ a.s.\ for any $\cT\subset[a,\infty)$. This implies 
\begin{equation}
\lim_{x \to \infty}F/F_+(x; \cT, q)
 = 1 \quad\text{ and } \quad \lim_{x \to -\infty}F/F_+(x; \cT, q) = 0.
 \end{equation}The second limit for $F_+$ uses $\zeta_+(\cT,q)\ge W(1)$ a.s. Properness for $b<\infty$ follows since the sup of a continuous function over a compact set is a.s. finite.
\paragraph{Continuity} \cref{lemma:anti-concentration-small-q} gives, for every $\eps>0$, an $\eta(\eps)$ continuity parameter that is uniform over $\Delta \geq 1$ for every
$\cT\subset[a,\Delta]$. Since $F(x;\cT\cap[a,\Delta],q)\downarrow F(x;\cT,q)$ pointwise as $\Delta\to\infty$, this $\Delta$-uniform $\eta(\eps)$ passes to the limit. Thus $F/F_+(\cdot;\cT,q)$ are continuous.
\end{proof}

\begin{lemma}\label{lemma:support-strict-increase}
Let $q \in \R$, $\Delta \in [1, \infty)$, and $\cT \subset [1,\Delta]$ where $\inf \cT = 1$ and $\sup\cT = \Delta$. Then $\zeta(\cT,q)$ has support $[0,\infty)$ and $\zeta_+(\cT,q)$ has support $\R$. Hence
\[
F(\cdot\,; \cT, q) : (0,\infty) \to (0,1) \quad \text{ and }\quad F_+(\cdot\,;\cT,q) : \R \to (0,1)
\]
are continuous, strictly increasing bijections, and $F\paren{F\n(\alpha)} = \alpha$ for every $\alpha \in (0,1)$. The same holds for $\Tilde F$ and $\Tilde F_+$.
\end{lemma}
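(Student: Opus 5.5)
The plan is to establish three properties in turn: the support of $\zeta_+(\cT,q)$ and of $\zeta(\cT,q)$, then continuity of $F$ and $F_+$, then strict monotonicity on the support. Since $\cT\subset[1,\Delta]$ with $\Delta<\infty$, the map $s\mapsto W(s)s^{q-1}$ is a.s.\ continuous on the compact closure $\bar\cT\subseteq[1,\Delta]$. Hence $\zeta$ and $\zeta_+$ are a.s.\ finite, and both suprema can be taken over a countable dense subset of $\cT$, which handles measurability.

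\textbf{Support.} First, $\zeta\ge 0$. Also $\zeta_+\ge W(1)$, which forces $F_+(x)<1$ for every $x$. To show that every nonempty open interval carries positive mass, I would use the support theorem for Wiener measure on $C[0,\Delta]$ with the sup norm: every continuous path $f$ with $f(0)=0$ lies in the support.

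Given a target $x\in\R$ for $\zeta_+$, take $f(s)=x s^{1-q}$ on $[1,\Delta]$ and extend it continuously to $[0,1]$ with $f(0)=0$. Then $\sup_{\cT} f(s)s^{q-1}=x$. Since $s^{q-1}$ is bounded above and below by positive constants on $[1,\Delta]$, any $W$ within $\delta$ of $f$ in sup norm gives $\abs{\zeta_+-x}\le C\delta$. This event has positive probability, so every interval around $x$ has positive mass and $\operatorname{supp}\zeta_+=\R$.

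For $\zeta$, the same argument with $f(s)=x s^{1-q}$ and $x\ge0$ gives $\operatorname{supp}\zeta=[0,\infty)$, using $\abs{f}=f$. Positive mass on every interval $(x,x+\eta]$ then gives that $F$ is strictly increasing on $(0,\infty)$ and $F_+$ is strictly increasing on $\R$. The same reasoning shows $F(x)>0$ for $x>0$, $F_+(x)\in(0,1)$ for all $x$, and $F(x)<1$ for all $x$.

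\textbf{Continuity.} Here I would invoke \cref{lemma:cameron-martin-lemma} with $a=1$, $b=\Delta$. The constant $C_q(1,\Delta)$ is finite for every $q$ by \cref{lemma:valid-continuity}, so $\PM{\zeta\in(x,x+\eta]}\le 2\eta C_q(1,\Delta)\to0$ as $\eta\to0$. This rules out atoms and gives (Lipschitz) continuity of $F$ and $F_+$. At the endpoint $0$, $F(0)=\PM{\zeta=0}\le\PM{W(1)=0}=0$. Together with the limits at $\pm\infty$ from a.s.\ finiteness, continuity and strict monotonicity make $F:(0,\infty)\to(0,1)$ and $F_+:\R\to(0,1)$ bijections. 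Consequently $F(F^{-1}(\alpha))=\alpha$ for every $\alpha\in(0,1)$, by the intermediate value theorem applied with the generalized inverse.

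\textbf{The tilde versions.} By \cref{lemma:tilde-f-facts}, $\Tilde F(y)=F(y\Delta^{q-1/2})$, which is a composition with a positive linear rescaling, so $\Tilde F$ inherits every property; the same holds for $\Tilde F_+$.

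\textbf{Main obstacle.} I expect the delicate step to be the support argument, specifically the quantitative claim that sup-norm closeness of $W$ to $f$ on $[1,\Delta]$ controls the weighted supremum over an arbitrary, possibly non-closed, $\cT$. This follows because the weights are bounded and the supremum over $\cT$ is $1$-Lipschitz in sup norm. The Wiener support theorem could alternatively be replaced by a Cameron--Martin shift of the small-ball event $\{\sup_{[0,\Delta]}\abs{W}<\delta\}$, which has positive probability; I would use that version to stay within the tools the paper already uses.
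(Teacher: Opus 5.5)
Your proposal is correct and follows essentially the same route as the paper. Full support comes from the Wiener support theorem combined with the Lipschitz property of the weighted supremum in sup norm. Continuity comes from \cref{lemma:cameron-martin-lemma} with $a=1$, $b=\Delta$, whose constant is finite by \cref{lemma:valid-continuity}. The tilde versions follow from the rescaling $\Tilde F(y)=F(y\Delta^{q-1/2})$.

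The only differences are cosmetic. The paper uses linear test paths $w_a(s)=as$, for which the supremum is $\abs{a}\sup_{\cT}s^q$ (two-sided) or $a\sup_{\cT}s^q\vee a\inf_{\cT}s^q$ (one-sided), rather than your $f(s)=xs^{1-q}$. You also make explicit the boundary facts ($F(0)=0$ and the limits at $\pm\infty$) that the bijection claim needs and the paper leaves implicit.
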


\begin{proof}[Proof of \cref{lemma:support-strict-increase}]

Since $s^{q-1} \leq L \coloneqq \max\curly{1, \Delta^{q-1}}$ on $[1,\Delta]$, the maps
\[
H(w) = \sup_{s \in \cT}\abs{w(s)}s^{q-1} \quad \text{ and } \quad H_+(w) = \sup_{s \in \cT}w(s)s^{q-1}
\]
are $L$-Lipschitz on $C[0,\Delta]$ under $\norm[\infty]{\cdot}$. Take the linear path of slope $a$, $w_a(s) = as$, and set $k_+ = \sup_{s\in\cT}s^q$, $k_- = \inf_{s\in\cT}s^q$. Then $H(w_a) = \abs{a}k_+$ sweeps $[0,\infty)$ and $H_+(w_a) = ak_+ \vee ak_-$ sweeps $\R$ as $a$ ranges over $\R$.

Fixing $v \in [0, \infty)$ and $\delta > 0$, we choose $a$ so that $H(w_a) = v$. Since $w_a$ is continuous with $w_a(0) = 0$, it lies in the support of Wiener measure \citep[Exercise 1.8]{morters_brownian_2010}. Thus,
\[
\PM{\abs{\zeta(\cT, q) - v} < L\delta} \geq \PM{\sup_{s \in [0, \Delta]}{\abs{W(s) - w_a(s)}} < \delta} > 0,
\]
and $v$ is in the support of $\zeta(\cT,q)$. The same argument applies to $H_+$ taking $v \in \R$.

It follows that $F(y) - F(x) = \PM{\zeta(\cT,q) \in (x,y]} > 0$ for any $x < y$ in the support, and $F$ is continuous by \cref{lemma:valid-continuity} since $a = 1 > 0$ and $b = \Delta < \infty$. A continuous, strictly increasing function onto $(0,1)$ is a bijection, so $F\n$ is well defined and $F(F\n(\alpha)) = \alpha$ exactly. Finally, $\Tilde F(x; \cT, q) = F(\Delta^{q-1/2}x; \cT, q)$ is a strictly increasing reparameterization, so the same result holds for $\Tilde F$.
\end{proof}

\begin{lemma}\label{lemma:max-discrete-cont-gap}
Let $\infseqt{Y} \simiid \cN(0,1)$ on some probability space $\pspace[\cP]$ rich enough to contain a Wiener $W(t)$ where at integer values $t, W(t) = \tsum{Y_i}$. Then,
\begin{equation}\label{eq:discretization-error-gaussian}
\abs{W(t) - W(\floor t)} = \bar O_{ \cP}(\sqrt{\log t}).
\end{equation}
\end{lemma}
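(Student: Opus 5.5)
The plan is to reduce \eqref{eq:discretization-error-gaussian} to a maximal inequality for Brownian increments over unit intervals, and then take a union bound over dyadic blocks of integers. Since $W(t) - W(\floor t) = W(t) - W(n)$ for $t \in [n, n+1)$, it suffices to control
\[
M_n \coloneqq \sup_{s \in [n, n+1]} \abs{W(s) - W(n)}, \qquad n \in \N .
\]
By the Markov property and stationarity of increments, each $M_n$ has the same law as $\sup_{s \in [0,1]}\abs{W(s)}$, and this law does not depend on $\P \in \cP$. The paper uses this fact implicitly, since the Wiener process is built on an enrichment with a fixed law. Hence every probability below is automatically uniform over $\cP$.

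The key tail estimate follows from the reflection principle:
\[
\PM{M_n > x} \leq 2\,\PM{\sup_{s\in[0,1]} W(s) > x} = 4\,\PM{W(1) > x} \leq 4 e^{-x^2/2}.
\]
Fix a constant $C > 2$ and bound the event in the definition of $\bar O_{\cP}$ by a union bound:
\[
\PM{\sup_{t \geq m} \frac{\abs{W(t) - W(\floor t)}}{\sqrt{\log t}} > C} \leq \sum_{n \geq \floor m} \PM{M_n > C\sqrt{\log n}} \leq \sum_{n \geq \floor m} 4\, n^{-C^2/2}.
\]
For $t \in [n, n+1)$ we have $\log t \geq \log n$, so replacing $\sqrt{\log t}$ by $\sqrt{\log n}$ only enlarges the event. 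Because $C^2/2 > 1$, the series on the right converges, so its tail tends to $0$ as $m \to \infty$. The bound does not depend on $\P$, so taking $\sup_{\P\in\cP}$ costs nothing. This gives exactly the definition of $\bar O_{\cP}(\sqrt{\log t})$ with constant $C$.

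There is no real obstacle here. The one point to check is that the coupling hypothesis, namely $W(n) = \tsum{Y_i}$ at integers with $Y_i$ i.i.d.\ standard normal, makes $W$ a genuine standard Wiener process on the enriched space. This holds because we may construct $W$ by inserting independent Brownian bridges between consecutive integers, which gives the Markov-increment structure used above. A minor detail is that $\log n$ is small for small $n$, but this is harmless since only $n \geq \floor m \to \infty$ matters.
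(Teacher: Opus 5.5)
Your proposal is correct and follows essentially the same route as the paper. Both arguments union-bound over unit intervals $[k,k+1]$, replace $\sqrt{\log t}$ by $\sqrt{\log k}$, and use stationarity of increments together with the reflection principle and a Gaussian tail bound to reach the summable series $\sum_k k^{-C^2/2}$. The only differences are cosmetic: you use the Chernoff bound $\PM{W(1)>x}\le e^{-x^2/2}$ instead of the Mills-ratio bound, and you take $C>2$ where $C>\sqrt{2}$ already suffices; both choices are fine.
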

\begin{proof}[Proof of \cref{lemma:max-discrete-cont-gap}]
The proof is adapted from \citep[Theorem C]{csorgo_how_1979}. The claim made in \cref{eq:discretization-error-gaussian} is equivalent to showing 
$\exists C$ so that
\[
\lim_{m \to \infty}\sup_{\P \in \cP} \PM{\sup_{t\geq m} \abs{\frac{W(t) - W(\floor t)}{\sqrt{\log t}}} \geq C} = 0.
\]Let $m > 3$. Then,
\begin{align}
\PM{\sup_{t \geq m} \abs{\frac{W(t) - W(\floor t)}{\sqrt{\log t}}} \geq C} &= \PM{\bigcup_{k = m}^{\infty} \max_{k \leq s \leq k+1} \abs{\frac{W(s) - W(k)}{\sqrt{\log s}}} \geq C}\\
&\leq \sum_{k=m}^{\infty}\PM{\max_{k \leq s \leq k+1} \abs{\frac{W(s) - W(k)}{\sqrt{\log s}}} \geq C}\\
&\leq \sum_{k=m}^{\infty}\PM{\max_{k \leq s \leq k+1} \abs{W(s) - W(k)} \geq C{\sqrt{\log k}}},
\end{align}where the last line uses the fact that $\log s \geq \log k$ and thus we clip the boundary to the left edge. We continue
\begin{align}
\sum_{k=m}^{\infty}\PM{\max_{k \leq s \leq k+1} \abs{W(s) - W(k)} \geq C{\sqrt{\log k}}} &= \sum_{k=m}^{\infty}\PM{\max_{0 \leq s \leq 1}\abs{W(s) - W(0)} \geq C\sqrt{\log k}}\\
&= \sum_{k=m}^{\infty}\PM{\max_{0 \leq s \leq 1}\abs{W(s)} \geq C\sqrt{\log k}}\\
&\leq 2\sum_{k=m}^{\infty}\PM{\max_{0 \leq s \leq 1} W(s) \geq C\sqrt{\log k}}\\
&= 4\sum_{k=m}^{\infty}\PM{Z\geq C\sqrt{\log k}}\\
&\leq \frac{4}{\sqrt {2 \pi}}\sum_{k=m}^{\infty} \frac{1}{C \sqrt{\log k}}\expp{-\frac{C^2\log k}{2}}\\
&= \frac{4}{\sqrt{2 \pi}} \sum_{k=m}^{\infty} \frac{1}{C\sqrt{\log k}}\frac{1}{k^{C^2/2}}
\end{align}
Letting $C > \sqrt{2}$ gives us the desired convergence. We remark that the third line union bounds $W(s)$, the fourth line uses the reflection principle, and fifth line uses the tail bound of a standard normal.
\end{proof}

\begin{lemma}\label{lemma:wiener-increment-bound}
Let $W(t)$ be a Wiener process on $\pspace[\cP]$. Suppose $V_{\P,t}$ has the property that $\abs{V_{\P, t} - \sigma^2_\P t} = \bar O_{\cP}(t^{1-\rho})$. Then it follows 
\begin{align}
W(V_{\P, t}) - W(\sigma^2_\P t) = \bar O_{\cP}\paren{\sqrt{t^{1-\rho}\log t }}.
\end{align}
\end{lemma}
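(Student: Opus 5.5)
The plan is to reduce the claim to a maximal inequality for Brownian increments over windows of length comparable to $t^{1-\rho}$, and then sum over a dyadic decomposition of the time axis.

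\emph{Reduction.} The hypothesis gives a constant $C_0$ and, for each $\eps>0$, an $m_0$ such that
\[
\sup_{\P\in\cP}\PM{\sup_{t\ge m}\abs{V_{\P,t}-\sigma^2_\P t}\,t^{\rho-1}>C_0}<\eps \quad\text{for all } m\ge m_0 .
\]
On the complementary event, for every $t\ge m$ we have $\abs{W(V_{\P,t})-W(\sigma^2_\P t)}\le \omega_t$, where
\[
\omega_t\coloneqq\sup\curly{\abs{W(u)-W(\sigma^2_\P t)}:\abs{u-\sigma^2_\P t}\le C_0t^{1-\rho}}.
\]
So it suffices to find $C$ with
\[
\lim_m\sup_\P\PM{\sup_{t\ge m}\omega_t/\sqrt{t^{1-\rho}\log t}>C}=0 .
\]
Here $\sigma^2_\P$ ranges over a bounded set (uniform moment bounds); if needed I would also use $\inf_\P\sigma^2_\P>0$ from \cref{condition:no-deterministic}. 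The law of $W$ does not depend on $\P$, so uniformity over $\cP$ reduces to uniformity over the scale $\sigma^2_\P\in[\underline s,\bar s]$.

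\emph{Blocking.} Mimic the proof of \cref{lemma:max-discrete-cont-gap}. Work on dyadic blocks $t\in[2^j,2^{j+1})$. The points $\sigma^2_\P t$ then lie in $[\underline s 2^j,\bar s 2^{j+1}]$, and the window half-width is at most $h_j\coloneqq C_0 2^{(j+1)(1-\rho)}$. Cover this interval by $N_j\asymp 2^{j\rho}$ grid intervals of length $h_j$. Any pair $(u,\sigma^2_\P t)$ appearing in $\omega_t$ then lies within two adjacent grid intervals. Hence
\[
\sup_{t\in\text{block } j}\omega_t\le 4\max_i \sup_{s\in I_i}\abs{W(s)-W(a_i)},
\]
where $I_i=[a_i,a_i+2h_j]$.

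\emph{Tail bound.} By the reflection principle and Brownian scaling,
\[
\PM{\sup_{s\le 2h_j}\abs{W(s)}\ge x\sqrt{2h_j}}\le 4\bar\Phi(x)\le 4e^{-x^2/2}.
\]
Take $x=c\sqrt{\log 2^j}$, so that $x\sqrt{2h_j}$ matches $C\sqrt{t^{1-\rho}\log t}$ on the block. A union bound over the $N_j$ intervals gives probability at most $\lesssim 2^{j\rho}\,2^{-jc^2/(2\ln 2)\cdot\ln 2}=2^{j(\rho-c^2/2)}$. This is summable in $j$ once $c^2>2\rho$. Summing over $j\ge\log_2 m-1$ gives a tail that tends to $0$ uniformly in $\P$. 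Adding the $\eps$ from the reduction and letting $\eps\downarrow0$ finishes the proof.

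\emph{Main obstacle.} The main obstacle is the bookkeeping in the blocking step: making the grid cover all $\sigma^2_\P t$ uniformly over $\P$, and ensuring that a random $V_{\P,t}$ (which is $\cF$-measurable, not independent of $W$) is handled deterministically via $\omega_t$. This is why I bound a supremum over a deterministic window rather than evaluating $W$ at a random time. Passing from the dyadic index $j$ to $\log t$ and $t^{1-\rho}$ only costs constants, since $t\asymp 2^j$ on each block.
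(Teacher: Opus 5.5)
Your proof is correct in substance, though it takes a different route from the paper after a shared first step. That first step is the same: on the event $\sup_{k\ge m}\abs{V_{\P,k}-\sigma^2_\P k}\,k^{\rho-1}<C_0$, you replace the random time $V_{\P,k}$ by a supremum over the deterministic window $\cI_k=[\sigma^2_\P k-C_0k^{1-\rho},\,\sigma^2_\P k+C_0k^{1-\rho}]$. This is exactly how the paper avoids the dependence between $V_{\P,k}$ and $W$.

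After that, the paper does no blocking. Since $V_{\P,t}$ is indexed by integers, it union-bounds over every $k\ge m$. It then uses \cref{lemma:involved-wiener-max-bound} and Brownian scaling to reduce each term to $\PM{\sup_{u\in[0,1]}\abs{W(u)}\ge C\sqrt{\log k}/(2\sqrt{2C_0})}\lesssim k^{-C^2/(16C_0)}$, and concludes for $C>4\sqrt{C_0}$. By stationarity of increments, each summand is free of $\sigma^2_\P$, so uniformity over $\cP$ comes for free.

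Your dyadic grid cover is a chaining-type refinement. It uses only $N_j\asymp 2^{j\rho}$ events per block instead of $2^j$, so you need only $c^2>2\rho$ rather than $c^2>2$, and it would also handle a continuum of times. The price is that $N_j$ is really of order $\sigma^2_\P 2^{j\rho}/C_0$. Your uniformity over $\cP$ therefore requires $\sup_{\P\in\cP}\sigma^2_\P<\infty$, which is not a hypothesis of the lemma as stated.

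That bound does hold wherever the lemma is used. In the martingale case $\cP$ is a singleton. In the independent case, \cref{condition:ui-condition} gives $V_{\P,m_0}/m_0\le (Bm_0^{1+\delta})^{2/\kappa}$ uniformly in $\P$, with $B$ the uniform bound on the series there, and together with \cref{condition:no-deterministic} this bounds $\sigma^2_\P$. The cleaner fix, though, is to cap $N_j$ at the $2^j$ integer times in the block, which is precisely the paper's argument.

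You never need the lower bound $\inf_\P\sigma^2_\P>0$. The factor $4$ in your covering step can also be $2$.
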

\begin{proof}[Proof of \cref{lemma:wiener-increment-bound}]
It is sufficient to show that there exists a $C < \infty$ such that 
\begin{align}
\lim_{m \to \infty}\sup_{\P \in \cP}\PM{\sup_{k \geq m} \frac{\abs{W(V_{\P, k}) - W(\sigma^2_{\P} k)}}{\sqrt{k^{1-\rho} \log k}} \geq C} = 0.
\end{align}
By our assumption, there exists an $M$ such that $\forall k> M$ we have with probability $1-\eps$,
\[
\abs{V_{\P, k} - \sigma^2_\P k} < C_0 k^{1-\rho}
\]and define this window  to be $\cI_s = [\sigma^2_\P s - C_0s^{1-\rho}, \sigma^2_\P s + C_0 s^{1-\rho}]$. First,
\begin{align}
\PM{\sup_{k \geq m} \frac{\abs{W(V_{\P, k}) - W(\sigma^2_{\P} k)}}{\sqrt{k^{1-\rho} \log k}} \geq C}  &\leq \PM{\sup_{k \geq m} \frac{\abs{W(V_{\P, k}) - W(\sigma^2_{\P} k)}}{\sqrt{k^{1-\rho} \log k}} \geq C, \abs{V_{\P, k} - \sigma^2_\P k} < C_0k^{1-\rho}} + \eps.
\end{align}Hence we will bound the first term by $\eps$ on the event $\sup_{k \geq m}\abs{V_{\P, k} - \sigma^2_\P k} < C_0k^{1-\rho}$. On this event,
\begin{align}
\PM{\sup_{k \geq m} \frac{\abs{W(V_{\P, k}) - W(\sigma^2_{\P} k)}}{\sqrt{k^{1-\rho} \log k}} \geq C}
&= \PM{\bigcup_{k = m}^{\infty}  \frac{\abs{W(V_{\P, k}) - W(\sigma^2_{\P}k )}}{\sqrt{k ^{1-\rho} \log k}} \geq C}\\
&\leq \sum_{k=m}^{\infty}\PM{\frac{\abs{W(V_{\P, k}) - W(\sigma^2_\P k)}}{\sqrt{k^{1-\rho}\log k}} \geq C}\\
&= \sum_{k =m}^{\infty}\PM{ \abs{W(V_{\P, k}) - W(\sigma^2_\P k)}\geq  C\sqrt{k^{1-\rho}\log k}}\\
&\leq \sum_{k =m}^{\infty}\PM{ \sup_{u \in \cI_k}\abs{W(u) - W(\sigma^2_\P k)}\geq  C\sqrt{k^{1-\rho}\log k}}\\
&\leq \sum_{k=m}^{\infty}\PM{2 \sup_{\smash{u \in [0, 2C_0k^{1-\rho}]}} \abs{W(u)} \geq C \sqrt{k^{1-\rho}\log k}} \label{eq:wiener-max-algebra}\\
&= \sum_{k=m}^{\infty}\PM{\sup_{u \in [0,1]} \abs{W(u)} \geq C/(2\sqrt{2C_0})\sqrt{\log k}}.
\end{align}The last expression limits to $0$ as $m \to \infty$ if $ C > 4 \sqrt{C_0}$. The step to obtain \cref{eq:wiener-max-algebra} is somewhat involved and we explain it in \cref{lemma:involved-wiener-max-bound}.

Hence $\forall \eps > 0$, there exists $M$ such that for all $m > M$ and all $\P \in \cP$,
\[
\PM{\sup_{k \geq m}\frac{\abs{W(V_{\P, k}) - W(\sigma^2_\P k)}}{\sqrt{k^{1-\rho}\log k}} \geq C} < 2\eps.
\]
\end{proof}

\begin{lemma}\label{lemma:involved-wiener-max-bound}
Let $W(t)$ be a Wiener. Then 
\[
\PM{\sup_{u \in [a - b, a + b]}\abs{W(u) - W(a)} \geq x} \leq \PM{2\sup_{u \in [0, 2 b]} \abs{W(u)} \geq x}.
\]
\end{lemma}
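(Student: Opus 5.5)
We reduce to a statement about the centered Wiener process $\widetilde W(v) \coloneqq W(a+v) - W(a)$, $v \ge -a$. On the event $\{\sup_{u \in [a-b,a+b]}\abs{W(u)-W(a)} \ge x\}$ either the supremum over the right half $[a,a+b]$ or over the left half $[a-b,a]$ is at least $x$. The plan is to use a single Wiener process started at the left endpoint $a-b$, namely $B(s) \coloneqq W(a-b+s) - W(a-b)$ for $s \in [0,2b]$. This is a standard Wiener process on $[0,2b]$ by the Markov property (stationary independent increments). We then express every increment $W(u)-W(a)$ with $u \in [a-b,a+b]$ through $B$.

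The key identity is that for $u = a-b+s$ with $s \in [0,2b]$,
\[
W(u) - W(a) = \paren{W(a-b+s) - W(a-b)} - \paren{W(a) - W(a-b)} = B(s) - B(b).
\]
Hence, by the triangle inequality,
\[
\sup_{u \in [a-b,a+b]} \abs{W(u)-W(a)} \le \sup_{s \in [0,2b]}\abs{B(s)} + \abs{B(b)} \le 2\sup_{s\in[0,2b]}\abs{B(s)}.
\]
It follows that the event on the left-hand side is contained in $\{2\sup_{s \in [0,2b]}\abs{B(s)} \ge x\}$. Since $B \overset{d}{=} W$ on $[0,2b]$, the probability of this event equals $\PM{2\sup_{u\in[0,2b]}\abs{W(u)} \ge x}$, which is the claim.

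The only subtlety is the case $a-b<0$, where $W$ is not defined to the left of zero. In the paper's application $a = \sigma^2_\P k$ and $b = C_0 k^{1-\rho}$ with $\rho>0$, so $a - b > 0$ for large $k$, and we will state the lemma under the implicit assumption $a \ge b$. If needed, one can alternatively extend $W$ to a two-sided Brownian motion, and the same argument goes through verbatim. There is no real obstacle here: the entire content is the choice of the left endpoint as the new origin. The factor $2$ and the doubled interval length $2b$ are exactly what this crude triangle inequality produces. (A sharper constant via the reflection principle applied separately to each half is possible but unnecessary.)
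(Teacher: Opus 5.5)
Your proof is correct and follows the paper's argument: you re-center at the left endpoint $a-b$, write $W(u)-W(a)$ as a difference of two increments of the shifted process, apply the triangle inequality, absorb the fixed term $\abs{W(a)-W(a-b)}$ into the supremum (it is the value at $s=b$), and conclude by stationarity of increments. Your remark that one needs $a\ge b$ (or a two-sided Brownian motion) is a point the paper leaves implicit; it holds in the application for large $k$, since $b/a \propto k^{-\rho}\to 0$.
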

\begin{proof}[Proof of \cref{lemma:involved-wiener-max-bound}]
We begin with
\begin{align}
\abs{W(u) - W(a)} \leq \abs{W(u) - W(a-b)} + \abs{W(a) - W(a-b)}.
\end{align}Then let $s = u + b-a$  to obtain 
\begin{align}
\sup_{u \in [a-b, a+b]} \abs{W(u) - W(a)} &\leq \sup_{s \in [0, 2b]}\abs{W(s + a-b) - W(a-b)} + \abs{W(a) - W(a-b)}\\
&\leq \sup_{s \in [0, 2b]}2\abs{W(s+ a - b) - W(a-b)}\label{eq:one-point-in-maximum}\\
&\overset{d}{=} 2\sup_{s \in [0, 2b]}\abs{W(s)},
\end{align}where \cref{eq:one-point-in-maximum} uses that when $s = b$, $W(a) = W(s + a-b)$ and the final line uses the translation invariance of a Wiener.
\end{proof}

\begin{lemma}\label{lemma:suff-mgale-condition}
Let $\infseqt{Y}$ be a sequence of random variables on $\pspace$. Suppose there exists and $r > 2$ and $B < \infty$ such that for all $t \in \N$
\[
\EE{\abs{Y_t}^r \mid \cF_{t-1}} < B.
\]Moreover, assume \cref{condition:non-vanishing-variances}. Then for any $\kappa \in (2,r)$,
\[
\sum_{t=1}^{\infty}\frac{\EE{\paren{Y_t - \mu_{\P, t}}^2\indic{\paren{Y_t - \mu_{\P, t}}^{2} > V_t^{2/\kappa}}\mid \cF_{t-1}}[\P]}{V_t^{2/\kappa}} < \infty.
\]
\end{lemma}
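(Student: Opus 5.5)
Write $\xi_t = Y_t - \mu_{\P,t}$ and $\tau_t = V_t^{2/\kappa}$. The plan is to bound each summand by a conditional $r$-th moment term and then show the resulting series is summable almost surely.

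\textbf{Markov-type truncation.} On the event $\{\xi_t^2 > \tau_t\}$ we have $(\xi_t^2/\tau_t)^{(r-2)/2} \geq 1$. Since $\tau_t$ is $\cF_{t-1}$-measurable ($V_t$ is a sum of conditional variances, hence predictable), we may pull it out of the conditional expectation and obtain
\[
\frac{\EE{\xi_t^2\indic{\xi_t^2>\tau_t}\mid\cF_{t-1}}}{\tau_t} \le \frac{\EE{\abs{\xi_t}^r\mid\cF_{t-1}}}{\tau_t^{r/2}} .
\]

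\textbf{Uniform control of the numerator.} By the $c_r$-inequality and conditional Jensen,
\[
\EE{\abs{\xi_t}^r\mid\cF_{t-1}} \le 2^{r-1}\paren{\EE{\abs{Y_t}^r\mid\cF_{t-1}} + \abs{\mu_{\P,t}}^r} \le 2^r B .
\]

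\textbf{Growth of the denominator.} \cref{condition:non-vanishing-variances} gives $V_t/t\to\sigma^2_\P>0$ almost surely. So on a probability-one event there is a random $T$ with $V_t \ge \sigma_\P^2 t/2$ for all $t\ge T$. Hence, for $t \ge T$,
\[
\tau_t^{r/2} = V_t^{r/\kappa} \ge c\, t^{r/\kappa}, \qquad c>0 .
\]
Because $\kappa<r$, the exponent satisfies $r/\kappa>1$. The tail $\sum_{t\ge T} 2^rB\,c^{-1}t^{-r/\kappa}$ therefore converges.

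\textbf{The finitely many early terms.} These are the main obstacle: one needs $V_t>0$ so that the ratio is well defined. I would handle them as follows.
\begin{itemize}
\item For indices with $V_t>0$, each term is finite by the first display.
\item If $V_t=0$ at some early $t$, then $\xi_t=0$ almost surely conditionally, so the numerator vanishes; adopt the convention $0/0=0$.
\item Alternatively, replace $V_t$ by $V_t\vee 1$. This changes only finitely many terms, and Strassen's condition is insensitive to such a modification.
\end{itemize}
Combining the finite early sum with the convergent tail gives almost-sure finiteness of the full series, which is the claim.
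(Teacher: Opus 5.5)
Your proposal is correct and follows essentially the same route as the paper's proof: the pointwise truncation bound $\xi_t^2\indic{\xi_t^2>V_t^{2/\kappa}} \le \abs{\xi_t}^r V_t^{(2-r)/\kappa}$, a uniform bound on the conditional $r$-th moment via the $c_r$-inequality, and $V_t \asymp t$ from \cref{condition:non-vanishing-variances}, so that $r/\kappa>1$ gives summability. You are slightly more careful than the paper on two points it glosses over: you justify $\abs{\mu_{\P,t}}^r \le B$ by conditional Jensen, whereas the paper only asserts the numerator is bounded, and you treat the finitely many early indices where $V_t$ could vanish.
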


\begin{proof}[Proof of \cref{lemma:suff-mgale-condition}]
Let $\kappa \in (2,r)$. On the event that $\paren{Y_t - \mu_{\P, t}}^2 > V_t^{2/\kappa}$ we have $\abs{Y_t -\mu _{\P, t}} > V_t^{1/\kappa}$. Since $2-r < 0$ it follows that $\abs{Y_t - \mu_{\P, t}}^{2-r} < V_t^{(2-r)/\kappa}$ and $(Y_t - \mu_{\P, t})^2 < \abs{Y_t - \mu_{\P, t}}^r V_t^{(2-r)/\kappa}$ Therefore
\begin{align}
\frac{\EE{\paren{Y_t - \mu_{\P, t}}^2\indic{\paren{Y_t - \mu_{\P, t}}^{2} > V_t^{2/\kappa}}\mid \cF_{t-1}}}{V_t^{2/\kappa}} &\leq \frac{\EE{\abs{Y_t - \mu_{\P, t}}^r V_t^{(2-r)/\kappa}\mid \cF_{t-1}}[\P] }{V_t^{2/\kappa}}\\
&= \frac{\EE{\abs{Y- \mu_{\P, t}}^{r}\mid \cF_{t-1}}[\P] V_t^{(2-r)/\kappa}}{V_t^{2/\kappa}}\\
&\leq \frac{2^{r-1}\EE{\abs{Y_t}^r + \abs{\mu_{\P, t}}^r\mid \cF_{t-1}}}{V_t^{r/\kappa}}\\
&\asymp \frac{2^{r-1}\EE{\abs{Y_t}^r + \abs{\mu_{\P, t}}^r\mid \cF_{t-1}}}{t^{r/\kappa}}
\end{align}
The numerator is upper bounded by some constant. Thus if $\kappa \in (2, r)$ then the series is summable.
\end{proof}

\begin{lemma}\label{lemma:tv-wt-gt}
Let $g$ lie in the Cameron--Martin space. For $\P$ the standard Wiener measure and $\mathsf Q$ the measure of $(W - g)(t)$ we have 
\[
{\rm TV}\paren{\P, \mathsf Q} = 2\Phi \paren{\norm[\cH]{g}/2} - 1.
\]
\end{lemma}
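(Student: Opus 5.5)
The plan is to reduce the total variation computation to a one-dimensional Gaussian problem via the Cameron--Martin theorem.

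\textbf{Step 1 (density).} Since $g$ lies in the Cameron--Martin space $\cH$, say $g(t)=\int_0^t \dot g(s)\,\dd s$ with $\dot g\in L^2$ and $\norm[\cH]{g}^2=\int \dot g^2$, the Cameron--Martin--Girsanov theorem gives $\mathsf Q\ll\P$ and $\P\ll\mathsf Q$, with
\[
\frac{\dQ}{\dP}(w)=\expp{-I(w)-\tfrac12\norm[\cH]{g}^2},\qquad I(w)=\int \dot g(s)\,\dd w(s).
\]
Here $I$ is the Paley--Wiener integral. Under $\P$ we have $I\sim\cN(0,\norm[\cH]{g}^2)$, and under $\mathsf Q$ (the law of $W-g$) we have $I\sim\cN(-\norm[\cH]{g}^2,\norm[\cH]{g}^2)$.

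\textbf{Step 2 (sufficiency of $I$).} The likelihood ratio depends on $w$ only through $I(w)$. Write $\mathrm{TV}=\sup_A|\P(A)-\mathsf Q(A)|$, which is the normalization matching $2\Phi(\cdot)-1\in[0,1]$. Then
\[
\mathrm{TV}(\P,\mathsf Q)=\int\paren{1-\frac{\dQ}{\dP}}_+\dP=\mathrm{TV}\paren{\cN(0,\sigma^2),\cN(-\sigma^2,\sigma^2)},\qquad \sigma=\norm[\cH]{g}.
\]
Concretely, the supremum is attained at $A=\{\dQ/\dP<1\}=\{I>-\sigma^2/2\}$, so no information is lost in passing to $I$.

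\textbf{Step 3 (Gaussian computation).} Rescale by $\sigma$: the problem becomes two normals with unit variance whose means differ by $\sigma$. The TV distance between such normals is
\[
\Phi(\sigma/2)-\Phi(-\sigma/2)=2\Phi(\sigma/2)-1,
\]
which is obtained by evaluating both probabilities on the half-line $\{I>-\sigma^2/2\}$ from Step 2.

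\textbf{Main obstacle.} The only delicate point is Step 1. The stochastic integral $I$ must be defined on path space as an $L^2(\P)$ limit rather than pathwise. I would handle this by citing the Cameron--Martin theorem directly (e.g.\ as in \citet{morters_brownian_2010}) rather than reproving it, and then note that measurability of the set $\{I>c\}$ suffices for Step 2.
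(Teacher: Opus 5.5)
Your proof is correct and follows the same route as the paper. The paper cites Bogachev's Cameron--Martin formula to write $\dQ/\dP$ as a function of the single Gaussian functional $\hat g$. It then reduces ${\rm TV}(\P,\mathsf Q)$ to the distance between two equal-variance normals whose means differ by $\norm[\cH]{g}^2$, and applies $2\Phi(\abs{\mu_1-\mu_2}/(2\sigma))-1$. Your version is slightly more careful in two places: you get the sign of the mean under $\mathsf Q$ right (the paper writes $+\norm[\cH]{g}^2$, which does not affect TV), and you justify the reduction to $I$ explicitly by exhibiting the optimal event $\{I>-\norm[\cH]{g}^2/2\}$.
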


\begin{proof}[Proof of \cref{lemma:tv-wt-gt}]
Corollary 2.4.3 from \citet[Corollary 2.4.3]{bogachev_gaussian_1998} implies that
\[
\frac{\dQ}{\dP}(x) = \expp{\hat g - \frac{1}{2}\norm[\cH]{g}^2}
\]where $\hat g \sim \cN(0, \norm[\cH]{g}^2)$ under $\P$. Therefore, 
\[
{\rm TV}\paren{\P, \Q} = {\rm TV}\paren{\cN\paren{0, \norm[\cH]{g}^2},\, \cN\paren{\norm[\cH]{g}^2, \norm[\cH]{g}^2}}.
\]Using the fact that for $Z_1, Z_2$ standard normals
\[
{\rm TV}(\sigma(Z_1 - \mu_1), \sigma(Z_2- \mu_2)) = 2\Phi \paren{\frac{\abs{\mu_1 - \mu_2}}{2\sigma}}-1,
\]gives the result.
\end{proof}

\section{Distribution Functions for Constant, Root, and Linear CH}
This section derives closed-form distribution functions for $\Psi$ and $\Psi_+(x; \Delta, q)$ when $q \in \curly{0, 1/2, 1}$.
\paragraph{Notation.} For the proofs of stochastic processes below, we use the notation $\PM{\cdot}[x]$ to denote that the process started (at $t = 0$) at $x$. If the subscript is omitted, then $\PM{\cdot} = \PM{\cdot}[0]$. In addition, we use the bivariate normal notation where 
\[
\Phi_2(z_1, z_2; \rho) = \PM{ Z_1 \leq z_1,  Z_2 \leq z_2},
\]where $( Z_1,  Z_2) $ are bivariate unit variance normals with correlation $\rho$.
\subsection{Distribution Functions When \texorpdfstring{$q\in \curly{0,1}$}{q in {0,1}}}\label{sec:exact-quantiles-q01}

\subsubsection{Two-Sided Distributions}
We are first interested in the distributions of $\zeta(\Delta, q)$ when $q \in \{0, 1\}$.
When $q = 0$ we can use the fact that a certain scaled Wiener process equals itself in distribution. It is straightforward to show using $W(s) = sW(1/s)$ that
\[
\PM{\sup_{s \in [1, \Delta]} \abs{W(s)}s^{-1} \leq x} = \PM{\sup_{s \in [1, \Delta]} \abs{W(s)} \leq x \sqrt{\Delta}}.
\]
Therefore $\PM{\zeta(\Delta, 0) \leq x} = \PM{\zeta(\Delta, 1) \leq x \sqrt{\Delta}}$ and thus results for $q = 1$ give results for $q = 0$.

\begin{lemma}
Let $W(t)$ be a standard Wiener process. Define
\[
g_{k,x}(c; \rho) = \Phi_2\paren{\frac{(4k+c)x}{\sqrt{\Delta}}, x; \rho} - \Phi_2 \paren{\frac{(4k+c)x}{\sqrt{\Delta}}, -x; \rho}
\]Then 
\begin{align}
\PM{\sup_{s \in [1, \Delta]} \abs{W(s)} \leq x} &= \sum_{k=-\infty}^{\infty} g_{k,x}(1; \Delta^{-1/2}) - g_{k,x}(-1; \Delta^{-1/2}) - g_{k,x}(3; -\Delta^{-1/2}) + g_{k,x}(1; -\Delta^{-1/2}).
\end{align}
\end{lemma}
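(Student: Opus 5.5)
The plan is to condition on the value of the process at the left endpoint and apply the classical method of images for Brownian motion in a strip. Write $W(1)=y$, so $y\sim\cN(0,1)$. By the Markov property, $B(u)=W(1+u)-W(1)$ for $u\in[0,T]$, with $T=\Delta-1$, is a Brownian motion independent of $W(1)$. The event $\{\sup_{s\in[1,\Delta]}\abs{W(s)}\le x\}$ is then the event $\{\abs{y}\le x\}$ intersected with the event that a Brownian motion started at $y$ stays in $[-x,x]$ on $[0,T]$. For that path I would use the standard image expansion of the killed transition density on the strip: for $y,z\in(-x,x)$,
\[
p_T^{(-x,x)}(y,z)=\sum_{k=-\infty}^{\infty}\Big[\varphi_T\big(z-y-4kx\big)-\varphi_T\big(z+y-(4k+2)x\big)\Big],
\]
where $\varphi_T$ is the $\cN(0,T)$ density. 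I will cite this from a standard text (e.g.\ Feller, or Borodin--Salminen) rather than rederive it. The series converges absolutely and uniformly on compacts because the Gaussian tails decay like $e^{-ck^2x^2/T}$.

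Next, integrate $z$ over $[-x,x]$ against $p_T^{(-x,x)}(y,z)$, and then integrate $y$ over $[-x,x]$ against $\varphi(y)$. Fubini/Tonelli lets me swap the sum with both integrals, since the series of absolute values is dominated by a summable Gaussian series. Each resulting term is the probability of a rectangle event for a bivariate Gaussian vector.

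\begin{itemize}
\item \emph{Direct terms.} The term with $\varphi_T(z-y-4kx)$ gives $\PM{\abs{W(1)}\le x,\ \abs{W(\Delta)-4kx}\le x}$. Here $z-4kx = y+B_T$, which is equal in law to $W(\Delta)$ jointly with $W(1)$. Standardizing by $\sqrt{\Delta}$, the pair $(W(\Delta)/\sqrt\Delta,\,W(1))$ is a unit-variance bivariate normal with correlation $\Delta^{-1/2}$.
\item \emph{Reflected terms.} The term with $\varphi_T(z+y-(4k+2)x)$ corresponds to the reflected starting point $-y$. It yields a pair $(\,\cdot\,,-W(1))$, which has correlation $-\Delta^{-1/2}$.
\end{itemize}

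Each rectangle probability $\PM{a<Z_1\le b,\ -x<Z_2\le x}$ is then expanded by inclusion--exclusion into four $\Phi_2$ evaluations. The $Z_2$-part of this expansion produces exactly the difference $\Phi_2(\cdot,x;\rho)-\Phi_2(\cdot,-x;\rho)$ built into $g_{k,x}(c;\rho)$. The $Z_1$-endpoints, after dividing by $\sqrt\Delta$, are of the form $(4k\pm1)x/\sqrt\Delta$ for the direct terms and $(4k+2\pm1)x/\sqrt\Delta$ for the reflected terms. The reflected endpoints are therefore $(4k+1)x/\sqrt\Delta$ and $(4k+3)x/\sqrt\Delta$, which gives the four terms $g_{k,x}(1;\Delta^{-1/2})-g_{k,x}(-1;\Delta^{-1/2})-g_{k,x}(3;-\Delta^{-1/2})+g_{k,x}(1;-\Delta^{-1/2})$.

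The step I expect to be the main obstacle is the sign and index bookkeeping in the reflected terms. The sign of the correlation, the orientation of the interval after the substitution $y\mapsto -y$, and a possible reindexing $k\mapsto -k$ all interact. I must check these carefully so that the reflected contribution appears as $-g(3;-\rho)+g(1;-\rho)$ rather than a shifted variant. As a sanity check I will verify the boundary case $\Delta=1$. There the two coordinates coincide (correlation $\pm1$), and the formula should collapse to $\PM{\abs{W(1)}\le x}=2\Phi(x)-1$.
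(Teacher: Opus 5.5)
Your proposal is correct and follows essentially the same route as the paper: condition on $W(1)$, use the Markov property and the image series for Brownian motion confined to $(-x,x)$ over time $\Delta-1$, then integrate against $\phi$ so that each image term becomes a bivariate-normal rectangle probability with correlation $\pm\Delta^{-1/2}$. The only difference is cosmetic. The paper first integrates out the endpoint, using the strip survival series adapted from M\"orters--Peres, and then applies a one-dimensional identity for $\int_{-x}^{x}\Phi(c_1 z+c_2)\phi(z)\,dz$, whereas you read the rectangles directly off the killed density; your bookkeeping (a harmless $k\mapsto -k$ in the direct terms, and $k\mapsto -k-1$ or equivalently the sign flip $W(1)\mapsto -W(1)$ in the reflected terms) does reproduce exactly $-g_{k,x}(3;-\Delta^{-1/2})+g_{k,x}(1;-\Delta^{-1/2})$.
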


\begin{proof}
We condition on $W(1)$ and write
\begin{align}
\PM{\sup_{1 \leq s \leq \Delta} \abs{W(s)} < x} &=
\int_{-\infty}^{\infty}\PM{\sup_{1 \leq s \leq \Delta} \abs{W(s)} < x \mid W(1) = z}\phi(z)\dZ\\
&= \int_{-x}^{x}\PM{\sup_{1 \leq s \leq \Delta} \abs{W(s)} < x \mid W(1) = z}\phi(z)\dZ\\
&= \int_{-x}^{x}\PM{\sup_{0 \leq s \leq \Delta-1}\abs{W(s)} < x \mid W(0) = z}\phi(z)\dZ\\
&= \int_{-x}^{x}\PM{\sup_{0 \leq s \leq \Delta-1}\abs{W(s)} < x}[z]\phi(z)\dZ
\end{align}Invoking \cref{lemma:wiener-escapes-box} on the last line gives
\begin{align}
\PM{\sup_{1 \leq s \leq \Delta} \abs{W(s)} < x} &= \int_{-x}^{x}\sum_{k=-\infty}^{\infty} \Phi \paren{\frac{(4k + 1)x - z}{\sqrt{\Delta-1}}} - \Phi \paren{\frac{(4k-1)x - z}{\sqrt{\Delta-1}}} - &\\
&\Phi \paren{\frac{(4k+ 3)x + z}{\sqrt{\Delta-1}}} + \Phi \paren{\frac{(4k+1)x + z}{\sqrt{\Delta-1}}}\phi(z)\dZ
\end{align}
One can check that Fubini applies and we can interchange integral and series. By \cref{lemma:normal-cdf-integral} we have a closed-form expression for this integral.\footnote{If we had chosen the Fourier series expression, we would have to integrate a cosine against a normal density which results in complex functions.}

We go through term by term now. 
\begin{align}
\int_{-x}^{x}\Phi\paren{\frac{(4k+1)x - z}{\sqrt{\Delta-1}}}\phi(z)\dZ
  &= \Phi_2\paren{\frac{(4k+1)x}{\sqrt{\Delta}}, x; \Delta^{-1/2}}
   - \Phi_2\paren{\frac{(4k+1)x}{\sqrt{\Delta}}, -x; \Delta^{-1/2}},\\
\int_{-x}^{x}\Phi\paren{\frac{(4k-1)x - z}{\sqrt{\Delta-1}}}\phi(z)\dZ
  &= \Phi_2\paren{\frac{(4k-1)x}{\sqrt{\Delta}}, x; \Delta^{-1/2}}
   - \Phi_2\paren{\frac{(4k-1)x}{\sqrt{\Delta}}, -x; \Delta^{-1/2}},\\
\int_{-x}^{x}\Phi\paren{\frac{(4k+3)x + z}{\sqrt{\Delta-1}}}\phi(z)\dZ
  &= \Phi_2\paren{\frac{(4k+3)x}{\sqrt{\Delta}}, x; -\Delta^{-1/2}}
   - \Phi_2\paren{\frac{(4k+3)x}{\sqrt{\Delta}}, -x; -\Delta^{-1/2}},\\
\int_{-x}^{x}\Phi\paren{\frac{(4k+1)x + z}{\sqrt{\Delta-1}}}\phi(z)\dZ
  &= \Phi_2\paren{\frac{(4k+1)x}{\sqrt{\Delta}}, x; -\Delta^{-1/2}}
   - \Phi_2\paren{\frac{(4k+1)x}{\sqrt{\Delta}}, -x; -\Delta^{-1/2}}.
\end{align}
\end{proof}

\subsubsection{One-Sided Distribution Functions}
There are many ways to find this distribution function, and we present a novel way that bypasses integration, instead relying on bivariate normal functions.

\begin{proposition}
Let $W(t)$ be a standard Wiener process. Then 
\[
\PM{\sup_{s \in [1, \Delta]} W(s) \leq x} = 2 \Phi_2 \paren{x, \frac{x}{\sqrt{\Delta}} ; \frac{1}{\sqrt{\Delta}}} - \Phi(x)
\]
\end{proposition}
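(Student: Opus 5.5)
Condition on $W(1)$ and use the reflection principle on the remaining interval $[1,\Delta]$. Let $Z = W(1) \sim \cN(0,1)$ and $B(u) = W(1+u) - W(1)$ for $u \in [0, \Delta - 1]$; $B$ is a standard Wiener process independent of $Z$. The event $\{\sup_{s \in [1,\Delta]} W(s) \leq x\}$ is the event that $Z \leq x$ and that $Z + B(u) \leq x$ for all $u \in [0,\Delta-1]$. Conditionally on $Z = z \leq x$, the reflection principle gives
\[
\PM{\sup_{0 \leq u \leq \Delta - 1} B(u) \leq x - z} = 2\Phi\paren{\frac{x-z}{\sqrt{\Delta-1}}} - 1.
\]
Hence
\[
\PM{\sup_{s \in [1,\Delta]} W(s) \leq x} = \int_{-\infty}^{x}\paren{2\Phi\paren{\frac{x-z}{\sqrt{\Delta-1}}} - 1}\phi(z)\dZ = 2\int_{-\infty}^{x}\Phi\paren{\frac{x-z}{\sqrt{\Delta-1}}}\phi(z)\dZ - \Phi(x).
\]

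To finish, I will recognize the remaining integral as a bivariate normal probability instead of evaluating it directly. Let $Z' \sim \cN(0,1)$ be independent of $Z$. Then $\Phi\paren{(x-z)/\sqrt{\Delta-1}} = \PM{Z + \sqrt{\Delta-1}\,Z' \leq x \mid Z = z}$, so
\[
\int_{-\infty}^{x}\Phi\paren{\frac{x-z}{\sqrt{\Delta-1}}}\phi(z)\dZ = \PM{Z \leq x,\ Z + \sqrt{\Delta-1}\,Z' \leq x}.
\]
The variable $U = (Z + \sqrt{\Delta-1}Z')/\sqrt{\Delta}$ is standard normal, and $\mathrm{Cov}(Z,U) = 1/\sqrt{\Delta}$. (This is just the law of $W(\Delta)/\sqrt{\Delta}$ jointly with $W(1)$.) So the probability equals $\Phi_2\paren{x, x/\sqrt{\Delta}; 1/\sqrt{\Delta}}$, which yields the claimed formula. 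This is the same device as the paper's \cref{lemma:normal-cdf-integral}, which could be cited instead.

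The only delicate point is the degenerate case $\Delta = 1$, where $\sqrt{\Delta - 1} = 0$ and the conditional step is trivial. I will check it separately: the formula gives $2\Phi_2(x,x;1) - \Phi(x) = \Phi(x)$, which is correct. Otherwise the argument is routine. Measurability and the use of conditioning are justified by the Markov property and the independence of $B$ from $Z$.
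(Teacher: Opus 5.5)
Your proof is correct. It reaches the same key identity as the paper by a somewhat different route.

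The paper never conditions on $W(1)$. It splits the paths with $W(1)<x$ into \emph{candidates} ($W(\Delta)<x$) and \emph{imposters}, which are paths that reach $x$ on $[1,\Delta]$ but end below it. It reflects each imposter at its first hitting time of $x$ after time $1$, showing $\PM{\textsf{imposter}} = \PM{W(1)<x,\ W(\Delta)>x}$. This gives $F(x) = 2\PM{W(1)<x,\ W(\Delta)<x} - \PM{W(1)<x}$ with no integral anywhere.

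You instead condition on $W(1)=z$ and apply the textbook running-maximum formula to the independent increment process. You then undo the integral by recognizing $\int_{-\infty}^{x}\Phi\paren{(x-z)/\sqrt{\Delta-1}}\phi(z)\dZ$ as $\PM{W(1)\le x,\ W(\Delta)\le x}$. Since $2\Phi\paren{(x-z)/\sqrt{\Delta-1}}-1 = \PM{W(\Delta)\le x \mid W(1)=z} - \PM{W(\Delta)>x\mid W(1)=z}$, your integrand is exactly the conditional form of the paper's candidate-minus-imposter decomposition.

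Your version uses the reflection principle only for a Brownian motion started at $0$ on a fixed horizon. The conditioning step then follows directly from independence of increments and Fubini. By contrast, the paper's ``symmetry'' step is $\PM{W(1)<x,\ A,\ W(\Delta)>x} = \PM{W(1)<x,\ A,\ W(\Delta)<x}$, where $A=\{\sup_{s\in[1,\Delta]}W(s)\ge x\}$. That step tacitly needs the strong Markov property at the hitting time $\tau\ge 1$, using that $\{W(1)<x\}\in\cF_\tau$.

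Your route is also the template the paper itself uses for the two-sided $q\in\{0,1\}$ formula and for $q=1/2$: condition on $W(1)$, then integrate against $\phi$ (cf.\ \cref{lemma:normal-cdf-integral}). The paper's route gives a purely pathwise, integration-free derivation, which is what the authors present as the novel point. Your separate check of $\Delta=1$ is correct and genuinely needed, since $\sqrt{\Delta-1}=0$ there.
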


\begin{proof}

First consider two types of paths. One set of realizations are above $x$ at time $1$ meaning $W(1) > x$. The complement set of paths are those that are below $x$ at time $1$. This second set of paths are the candidate paths whose maximum could be less than $x$. If $W(1) > x$ then there is no more work to be done.

\begin{figure}[!ht]
    \centering
    \includegraphics[width=0.5\linewidth]{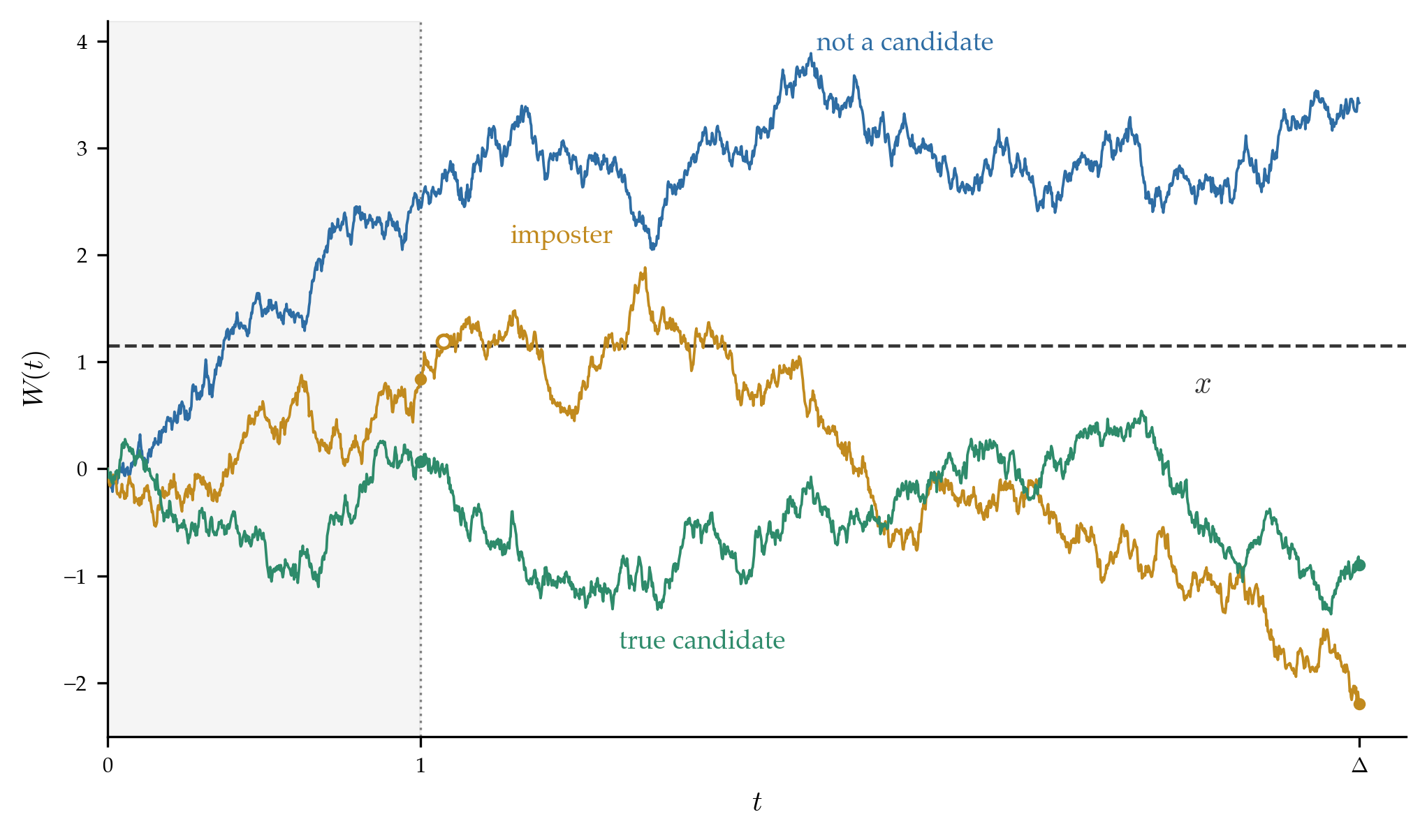}
    \caption{Illustration of candidates and imposters.}
\end{figure}

Within this second set of paths is another partition. If $W(\Delta) > x$ then clearly by the intermediate value theorem it must have crossed in $[1, \Delta]$. Therefore we are interested in the set of paths that begin below $x$ and end below $x$. These are the \emph{true} candidate paths. However, we cannot just count up these paths. This is because there are some ``imposter" paths who begin below $x$ and end below $x$ but hit $x$ in $[1, \Delta]$ by chance.  

Therefore, we claim that 
\begin{align}
F(x) &= \PM{\sup_{s \in [1, \Delta]} W(s) \leq x}\\
&= \PM{\textsf{candidate}} - \PM{\textsf{imposter}}\\
&= \PM{W(1) < x, W(\Delta) <x} - \PM{\textsf{imposter}}.
\end{align}
For the candidate probability we have two normals with known covariance so we can calculate a bivariate normal. For the imposter probability, we must invoke the reflection principle. 

An imposter is a path that begins below $x$, ends below $x$ and hits $x$ in $[1, \Delta]$. Let $A = \{\sup_{s \in [1, \Delta]} W(s) \geq x\}$. We consider
\begin{align}
\PM{W(1) < x, A} &= \PM{W(1) < x, A, W(\Delta) > x} + \PM{W(1) < x, A, W(\Delta) < x}\\
&= 2\PM{W(1) < x, W(\Delta) > x, A}\\
&= 2\PM{W(1) < x, W(\Delta) > x},
\end{align}where the first line uses total probability, the second line uses the symmetry of a Wiener process, and the third line uses the intermediate value theorem. 

Now we also can consider
\begin{align}
\PM{W(1) < x, A} &= \PM{W(1) < x, A, W(\Delta) > x} + \PM{W(1) < x, A, W(\Delta) < x}\\
&=  \PM{W(1) < x, W(\Delta) > x} + \PM{W(1) < x, A, W(\Delta) < x}\\
&= \PM{W(1) < x, W(\Delta) > x} + \PM{\textsf{imposter}}.
\end{align}The first line again uses total probability. The second line uses the intermediate value theorem to remove the $A$. The third line uses the definition of an imposter path: a path that starts below $x$ and ends below $x$ but crosses $x$ between $[1, \Delta]$. Therefore
\begin{align}
\PM{\textsf{imposter}} &= \PM{W(1) < x, A} - \PM{W(1) < x, W(\Delta) > x}\\
&= 2 \PM{W(1) < x, W(\Delta) > x} - \PM{W(1) < x, W(\Delta) > x}\\
&= \PM{W(1) < x, W(\Delta) > x},
\end{align}where the first line uses the last set of algebra and the second line uses the equality of the first term developed in the first set of algebra. 

Bringing everything together, yields
\begin{align}
\PM{\sup_{s \in [1, \Delta]} W(s) \leq x} &= \PM{W(1) < x, W(\Delta) < x} - \PM{W(1) < x, W(\Delta) > x}\\
&= 2\PM{W(1) < x, W(\Delta) < x} - \PM{W(1) < x}\\
&= 2\Phi_2 \paren{x, \frac{x}{\sqrt{\Delta}} ; \frac{1}{\sqrt{\Delta}}} - \Phi(x)
\end{align}where the second line uses $\PM{A \cap B^c} = \PM{A} - \PM{A\cap B}$ and the third line uses the known correlation of a Wiener process. 

Therefore, the closed-form one-sided quantile for $q = 1$ is
\begin{equation}\label{eq:df-one-sided-maximum}
\PM{\sup_{s \in [1, \Delta]} W(s) \leq x} = 2 \Phi_2 \paren{x, \frac{x}{\sqrt{\Delta}} ; \frac{1}{\sqrt{\Delta}}} - \Phi(x),
\end{equation}and when $q = 0$ we have 
\[
\PM{\sup_{s \in [1, \Delta]} W(s)s\n \leq x} = 2 \Phi_2 \paren{x \sqrt{\Delta}, x; \frac{1}{\sqrt{\Delta}}} - \Phi \paren{x \sqrt{\Delta}}.
\]

\end{proof}

\subsection{Distribution Functions When \texorpdfstring{$q = 1/2$}{q = 1/2}}\label{sec:exact-quantiles-q12}
In this section we give the distribution functions of $\Psi, \Psi_+$ when $q = 1/2$. The following calculations amount to extending some recent work on finding the maximum of an Ornstein--Uhlenbeck (OU) process \citep{blanchet-scalliet_distribution_2025, blanchet-scalliet_pseudo-likelihood_2024}. Thus, we first begin with a lemma that allows us to use these OU results.

\begin{lemma}\label{lemma:maximum-equiv-lamperti}
Let $X(t)$ be an Ornstein--Uhlenbeck process with parameters $\lambda = 1, \sigma = \sqrt{2}$. If $X(0) \sim \cN(0,1)$, then 
\begin{equation}\label{eq:maximum-equiv-12-2}
\PM{\sup_{s \in [1, \Delta]} \abs{W(s)}s^{-1/2} \geq x} = \int_{-\infty}^x\PM{\sup_{t \in [0, \log \Delta /2]} \abs{X(t)} \geq x}[z]\phi(z)\dZ + \Phi(-x),
\end{equation}and 
\begin{equation}\label{eq:maximum-equiv-12-1}
\PM{\sup_{s \in [1, \Delta]} W(s)s^{-1/2} \geq x} = \int_{-\infty}^x\PM{\sup_{t \in [0, \log \Delta /2]} X(t) \geq x}[z] \phi(z)\dZ + \Phi(-x).
\end{equation}
\end{lemma}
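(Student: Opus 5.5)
The plan is to use the Lamperti (Doob) transform to turn $W(s)s^{-1/2}$ into a stationary Ornstein--Uhlenbeck process, and then to condition on the value at the left endpoint $s=1$. Set $X(t) = e^{-t}W(e^{2t})$ for $t \ge 0$. Two facts follow from $\operatorname{Cov}(W(e^{2t}),W(e^{2u})) = e^{2\min(t,u)}$:
\begin{itemize}
\item $\operatorname{Cov}(X(t),X(u)) = e^{-|t-u|}$, so $X$ is a centered Gaussian Markov process with this covariance;
\item $X$ is therefore the stationary OU process solving $dX = -X\,dt + \sqrt{2}\,dB$, which has $\lambda=1$, $\sigma=\sqrt2$ and stationary law $\cN(0,1)$, matching the hypothesis $X(0)\sim\cN(0,1)$.
\end{itemize}
The first point is a direct covariance computation. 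The second follows either from It\^o's formula applied to $e^{-t}W(e^{2t})$, using that $\int_0^t e^{-u}\,dW(e^{2u})$ has quadratic variation $2t$ and is hence $\sqrt{2}B(t)$, or simply from the fact that a Gaussian process is determined by its covariance.

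Under the substitution $s = e^{2t}$, the interval $s\in[1,\Delta]$ corresponds to $t\in[0,\tfrac12\log\Delta]$. This gives the pathwise identities
\[
\sup_{s\in[1,\Delta]} W(s)s^{-1/2} = \sup_{t\in[0,\log\Delta/2]} X(t), \qquad \sup_{s\in[1,\Delta]} \abs{W(s)}s^{-1/2} = \sup_{t\in[0,\log\Delta/2]} \abs{X(t)}.
\]

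Next, condition on $X(0) = W(1) = z \sim \phi$ and use the Markov property of $X$. The conditional law of $(X(t))_{t\ge0}$ given $X(0)=z$ is the OU law started at $z$, which is what $\PM{\cdot}[z]$ denotes. Split the integral at $z = x$.
\begin{itemize}
\item One-sided case: if $z\ge x$, the supremum is already at least $x$ at $t=0$, so this part contributes $\int_x^\infty\phi = \Phi(-x)$. The part $z<x$ gives the stated integral, proving \eqref{eq:maximum-equiv-12-1}.
\item Two-sided case: the same split is used, with the event $\{\sup\abs{X}\ge x\}$ in place of $\{\sup X\ge x\}$.
\end{itemize}

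The main obstacle is the bookkeeping in the two-sided formula \eqref{eq:maximum-equiv-12-2}. When $z\le -x$, the path already satisfies $\abs{X(0)}\ge x$, so that region contributes $\Phi(-x)$. However, the stated integral runs over all $z<x$, including $z\le -x$, where the integrand equals $1$. Taken literally, the right-hand side would count the lower tail a second time.

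I would handle this as follows. For $x > 0$, I would write the decomposition carefully as
\[
\int_{-x}^{x}\PM{\sup_t\abs{X(t)}\ge x}[z]\,\phi(z)\,\dZ + 2\Phi(-x),
\]
and then check how the paper's expression should be read. Either the stated form intends the $z\le-x$ region to be absorbed into the integral, leaving the single $\Phi(-x)$ for $z\ge x$ (which is consistent, since the integrand is $1$ there), or the statement should be read with the split at $\abs{z}<x$. I would also confirm that the event $\{\sup = x\}$ is null, by continuity from \cref{lemma:support-strict-increase}, so that strict and non-strict inequalities are interchangeable. The remaining steps are routine: measurability of the conditioning and the use of Fubini.
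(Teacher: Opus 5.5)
Your proposal is correct and is essentially the paper's argument. The paper conditions on $W(1)=z$, splits at $z=x$ to get the $\Phi(-x)$ term, and identifies the post-time-$1$ path under $s=e^{2t}$, namely $ze^{-t}+e^{-t}W(e^{2t}-1)$, as the OU process started at $z$ via Doob--Lamperti; your stationary transform $e^{-t}W(e^{2t})$ followed by conditioning on $X(0)$ is the same argument in a different order. Your double-counting worry in the two-sided case is unfounded, though: read literally, the right-hand side of the two-sided identity already equals $\int_{-x}^{x}\PM{\sup_t\abs{X(t)}\geq x}[z]\phi(z)\dZ+2\Phi(-x)$, because the integrand is $1$ on $z\le -x$ (one $\Phi(-x)$ inside the integral) and the added $\Phi(-x)$ covers $z\ge x$, so no reinterpretation is needed.
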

Therefore, finding distribution functions for $q = 1/2$ amounts to finding them for the Ornstein--Uhlenbeck process and then integrating against a normal density.

\begin{proof}[Proof of \cref{lemma:maximum-equiv-lamperti}]
\begin{align}
\PM{\sup_{s \in [1, \Delta]} \frac{W(s)}{\sqrt s} \geq x} &= \int_{-\infty}^{\infty}\PM{\sup_{s \in [1, \Delta]} \frac{W(s)}{\sqrt s} \geq x\mid W(1) = z}\phi(z)\dZ\\
&= \int_{-\infty}^{x}\PM{\sup_{s \in [1, \Delta]} \frac{W(s)}{\sqrt s} \geq x\mid W(1) = z}\phi(z)\dZ + \int_{x}^{\infty}\phi(z)\dZ.
\end{align}The second term is $1-\Phi(x) = \Phi(-x)$ so we focus on the integrand in the first integral. More specifically we will analyze the probability statement within that integrand.

\begin{align}
\PM{\sup_{s \in [1, \Delta]} \frac{W(s)}{\sqrt s} \geq x\mid W(1) = z}&= \PM{\sup_{s \in [1, \Delta]} \frac{W(1 + s-1)}{\sqrt s} \geq x\mid W(1) = z}\\
&= \PM{\sup_{s \in [1, \Delta]} \frac{W(1) + W(s-1)}{\sqrt s} \geq x\mid W(1) = z}\\
&= \PM{\sup_{s \in [1, \Delta]} \frac{z}{\sqrt s} + \frac{W(s-1)}{\sqrt s} \geq x\mid W(1) = z}.
\end{align}Now, consider the change of variables letting $s = e^{2t}$. This gives $t = \frac{\log s}{2}$ changing the bounds to range to $t \in [0, \frac{\log \Delta}{2}]$. Putting it together gives 
\[
\PM{\sup_{t \in [0, \frac{\log \Delta}{2}]} \underbrace{\frac{z}{e^t} + \frac{W(e^{2t}-1)}{e^t}}_{X(t)} \geq x}[z].
\]Calling $X(t)$ our new time transformed process, we use a result originally from \citet{doob_brownian_1942} and also from \citet{lamperti_semi-stable_1962} that says $X(t)$ is an OU process starting at $z$ with parameters $\lambda = 1, \sigma = \sqrt{2}$. The two-sided proof with absolute values follows similarly.
\end{proof}

\subsubsection{One-Sided Distribution When \texorpdfstring{$q = 1/2$}{q = 1/2}}
There is an active probability literature dedicated to finding closed-form distribution functions for an OU process. \citet{alili_representations_2005} is a common source for the infinite series representation using parabolic cylinder functions. Recently, \citet{blanchet-scalliet_pseudo-likelihood_2024} gives a similar expression to the present one by way of integrating the density, though the application is much different than the present one. \citet{blanchet-scalliet_joint_2020} is another important reference for discussing the maximum of OU-processes.

\begin{proposition}[Distribution Function of $\Psi_+(x; \Delta, 1/2)$]
Let $D_v(\cdot)$ be the parabolic cylinder function with index $v$. With a fixed $x$, we can consider $\paren{v_{j,x}}_{j = 1}^{\infty}$ the ordered sequence of positive zeroes of the function $v \mapsto D_v(-x)$.  Write $\nu_j = v_{j,x}$ and $D'_{\nu_j} = \partial_\nu D_{\nu}(-x) \mid_{\nu = \nu_j}$ is the derivative of $D_{\nu_j}(\cdot)$ with respect to the index $\nu_j$. With this notation, we have
\[
1-\Psi_+(x; \Delta, 1/2) = \Phi(-x) - \phi(x)\sum_{j=1}^{\infty}\frac{D_{\nu_{j}-1}(-x)}{D'_{\nu_j}(-x)}\frac{1-\expp{-\log \Delta\nu_j/2}}{\nu_j} 
\]
\end{proposition}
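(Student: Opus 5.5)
The plan is to reduce the claim to a hitting-time computation for an Ornstein--Uhlenbeck process via \cref{lemma:maximum-equiv-lamperti}, plug in the known spectral (eigenfunction) expansion of the OU first-passage density, and then integrate the result against the Gaussian initial law in closed form.

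\textbf{Step 1 (reduction).} By \eqref{eq:maximum-equiv-12-1}, with $T \coloneqq \log\Delta/2$,
\[
1-\Psi_+(x;\Delta,1/2) = \Phi(-x) + \int_{-\infty}^{x} \PM{\tau_x \leq T}[z]\,\phi(z)\dZ ,
\]
where $\tau_x$ is the first passage time of level $x$ by the OU process $\dd X = -X\,\dT + \sqrt{2}\,\dd B$ started at $z<x$. Its stationary law is $\cN(0,1)$, which is why the initial law is $\phi$. Comparing with the target formula, it suffices to show
\[
\int_{-\infty}^{x}\PM{\tau_x\le T}[z]\phi(z)\dZ = -\phi(x)\sum_{j\ge1}\frac{D_{\nu_j-1}(-x)}{D'_{\nu_j}(-x)}\,\frac{1-e^{-\nu_j T}}{\nu_j}.
\]

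\textbf{Step 2 (first-passage density).} I will invoke the result of \citet{alili_representations_2005} (see also \citet{blanchet-scalliet_pseudo-likelihood_2024}), rescaled to our parameters $\lambda=1$, $\sigma=\sqrt2$. It gives the Laplace transform
\[
\EE{e^{-\beta\tau_x}}[z] = e^{(z^2-x^2)/4}\,\frac{D_{-\beta}(-z)}{D_{-\beta}(-x)}, \qquad z<x .
\]
The poles of this transform sit at the zeros of the index map $v\mapsto D_v(-x)$, which the statement labels $\nu_j$. Inverting by residues gives the density
\[
f_z(t) = -\sum_{j} e^{(z^2-x^2)/4}\,\frac{D_{\nu_j}(-z)}{D'_{\nu_j}(-x)}\,e^{-\nu_j t}.
\]
I will need to match the sign and index conventions of the statement, in particular whether the indices appear as $\nu_j$ or $-\nu_j$; I expect this to be a bookkeeping exercise. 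Integrating in $t$ over $[0,T]$ then yields
\[
\PM{\tau_x\le T}[z] = -\sum_j e^{(z^2-x^2)/4}\,\frac{D_{\nu_j}(-z)}{D'_{\nu_j}(-x)}\,\frac{1-e^{-\nu_j T}}{\nu_j}.
\]

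\textbf{Step 3 (Gaussian integral).} Use the recurrence $\frac{\dd}{\dd w}\bigl[e^{-w^2/4}D_{\nu}(w)\bigr] = -e^{-w^2/4}D_{\nu+1}(w)$. With $w=-z$ this becomes
\[
\frac{\dd}{\dZ}\bigl[e^{-z^2/4}D_{\nu-1}(-z)\bigr] = e^{-z^2/4}D_\nu(-z).
\]
Since $e^{z^2/4}\phi(z) = e^{-z^2/4}/\sqrt{2\pi}$, and the boundary term at $-\infty$ vanishes because $D_{\nu-1}(-z)$ grows only polynomially as $z\to-\infty$, we get
\[
\int_{-\infty}^{x} e^{z^2/4}D_{\nu_j}(-z)\phi(z)\dZ = \frac{e^{-x^2/4}D_{\nu_j-1}(-x)}{\sqrt{2\pi}}.
\]
Multiplying by the prefactor $e^{-x^2/4}$ produces exactly $\phi(x)D_{\nu_j-1}(-x)$. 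Summing over $j$ gives the claimed series.

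\textbf{Main obstacle.} The delicate part is justifying the two interchanges: the term-by-term integration in $t$, and the swap of $\sum_j$ with $\int\dZ$. The eigen-expansion of $f_z(t)$ converges poorly as $t\to 0$ and as $z\to x$. My first attempt will avoid the pointwise density altogether and work with the integrated quantity. The function $\int\PM{\tau_x\le T}[z]\phi(z)\dZ$ has Laplace transform in $T$ equal to $\beta^{-1}\int e^{(z^2-x^2)/4}\frac{D_{-\beta}(-z)}{D_{-\beta}(-x)}\phi(z)\dZ$. By Step 3 this equals $\beta^{-1}\phi(x)\,D_{-\beta-1}(-x)/D_{-\beta}(-x)$, a meromorphic function of $\beta$ alone. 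I will then invert this single-variable transform by residues at $\beta=\nu_j$, using the known asymptotics $\nu_j\sim$ linear in $j$ together with bounds on $D'_{\nu_j}(-x)$ from \citet{alili_representations_2005}. These should show that the residue series converges absolutely uniformly in $T\ge 0$, because of the extra factor $1/\nu_j$ and the decay of the ratio $D_{\nu_j-1}/D'_{\nu_j}$. If that decay bound proves hard to obtain, the fallback is to apply dominated convergence separately on $T\in[\delta,\infty)$, where the factor $e^{-\nu_j t}$ gives absolute convergence, and then let $\delta\to0$ using continuity of both sides in $T$.
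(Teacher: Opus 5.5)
Your proposal is correct and follows the paper's proof essentially step for step. The shared steps are the Lamperti reduction of \cref{lemma:maximum-equiv-lamperti}, the Alili et al.\ eigen-expansion of the OU first-passage density (your residue computation reproduces \eqref{eq:density-hitting-time-ou} with the correct sign), term-wise integration in $t$, and the recurrence $\frac{\dd}{\dd w}[e^{-w^2/4}D_\nu(w)] = -e^{-w^2/4}D_{\nu+1}(w)$ for the Gaussian integral, exactly as in \cref{lemma:parabolic-integral}.

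Your plan to integrate out $z$ in the Laplace domain first is a cleaner way to justify the interchanges than the paper's brief appeal to uniform convergence and Tonelli; it yields the single meromorphic function $\phi(x)D_{-\beta-1}(-x)/(\beta D_{-\beta}(-x))$. Two corrections are needed there. First, the poles sit at $\beta=-\nu_j$, not at $\beta=\nu_j$. Second, the pole at $\beta=0$ contributes $\Phi(x)$. Recovering the stated $(1-e^{-\nu_j T})$ form therefore also requires the identity $\Phi(x) = -\phi(x)\sum_j D_{\nu_j-1}(-x)/(\nu_j D'_{\nu_j}(-x))$. That identity follows from convergence of this series together with the fact that the $z$-integrated hitting probability vanishes at $T=0$.
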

\begin{proof}

We state Theorem 3.1 from \citet{alili_representations_2005} as a Lemma. Their theorem gives closed-form expressions for hitting times of OU processes.
\begin{lemma}\label{lemma:density-hitting-times-ou}
Let $X(t)$ be an OU process with $X(0) = z$ and parameters $\lambda = 1$. Define the hitting time of level $a$ to be 
\[
\tau_a = \inf\{t > 0: X(t) = a\}.
\]Let $p_{\tau_a, \sigma}(t)$ be the density of this random variable. Then 
\[
p_{\tau_a, \sigma}(t) = -\expp{\frac{z^2}{2\sigma^2}} \expp{\frac{-a^2}{2\sigma^2}}\sum_{j=1}^{\infty}\frac{D_{v_j, -a \sqrt{2}\sigma\n}\paren{-z \sqrt{2}\sigma \n}}{D'_{v_j,-a \sqrt{2}\sigma \n }\paren{-a \sqrt{2}\sigma \n}}\expp{-tv_{j, -a\sqrt{2}\sigma \n}}.
\]We can simplify this for our specific parameters:
\begin{equation}\label{eq:density-hitting-time-ou}
p_{\tau_a}(t) = -\expp{\frac{z^2}{4}}\expp{\frac{-a^2}{4}}\sum_{j=1}^{\infty}\frac{D_{v_j, -a}\paren{-z }}{D'_{v_j,-a  }\paren{-a }}\expp{-tv_{j, -a}}.
\end{equation}
\end{lemma}
We note that \citet{alili_representations_2005} consider the simple case when $\sigma = 1$ but it can be extended for general $\sigma$ by rescaling $X(t)$ by $\sigma$. This amounts to dividing $x,a$ by $\sigma$ in \citep[Theorem 3.1]{alili_representations_2005}.

Using the equivalence between stopping times and supremum we write
\begin{equation}\label{eq:equivalence-integral}
\PM{\tau_a < \beta} = \int_{0}^{\beta}p_{\tau_a}(t)\dT = \PM{\sup_{t \in [0, \beta]} X(t) \geq a}.
\end{equation}
If we did not have to integrate over $z$ the proof would be finished. However, we must integrate over and will presently show how to do so while maintaining the closed-form. For notational convenience write $ \log \Delta /2 = a'$ and $\nu_j = v_{j, x}$ the ordered set of zeroes of $v \mapsto D_v(-x)$.

\begin{align}
\int_{-\infty}^{x} \PM{\sup_{t \in [0, a']} X(t) \geq x}[z]\phi(z)\dZ
&= \int_{-\infty}^{x} \PM{\tau_{x} \leq a'}[z]\phi(z)\dZ\\
&= \int_{-\infty}^{x}\int_{0}^{a'}p_{\tau_x}(t)\dT \phi(z)\dZ,
\end{align}where the second line uses the equivalence discussed in \cref{eq:equivalence-integral}.
Now we input \cref{eq:density-hitting-time-ou}; the series converges uniformly on
$[\delta, \infty)$ for every $\delta > 0$ \citep[Thm.~3.1]{alili_representations_2005},
so we can integrate each term.

\begin{align}
\int_{-\infty}^{x}\int_{0}^{a'}p_{\tau_x}(t)\dT \phi(z)\dZ &= \int_{-\infty}^{x}\int_{0}^{a '} -\expp{\frac{z^2}{4}}\expp{\frac{-x^2}{4}}\sum_{j=1}^{\infty}\frac{D_{\nu_j}\paren{-z }}{D'_{\nu_j}\paren{-x}}\expp{-t\nu_j}\dT \phi(z)\dZ\\
&= \int_{-\infty}^{x} -\expp{\frac{z^2}{4}}\expp{\frac{-x^2}{4}}\sum_{j=1}^{\infty}\frac{D_{\nu_j}\paren{-z }}{D'_{\nu_j}\paren{-x}}\int_{0}^{a '}\expp{-t\nu_j}\dT \phi(z)\dZ.
\end{align}
Integrating with respect to $t$ inside the series, we obtain
\[
\int_{0}^{a'}\expp{-t\nu_j}\dT = \frac{1-\expp{-a'\nu_j}}{\nu_j}.
\]The series converges absolutely by \citep{alili_representations_2005} and therefore interchanging limits is warranted by Tonelli. With that (and more algebra)
\begin{align}
\int_{-\infty}^{x}\int_{0}^{a'}p_{\tau_x}(t)\dT \phi(z)\dZ &= \int_{-\infty}^{x} -\expp{\frac{z^2}{4}}\expp{\frac{-x^2}{4}}\sum_{j=1}^{\infty}\frac{D_{\nu_j}\paren{-z }}{D'_{\nu_j}\paren{-x}}\frac{1-\expp{-a' \nu_j}}{\nu_j} \phi(z)\dZ\\
&= -\expp{\frac{-x^2}{4}}\sum_{j=1}^{\infty} \frac{1-\expp{-a' \nu_j}}{D'_{\nu_j}(-x)\nu_j} \int_{-\infty}^{x} D_{\nu_j}(-z)\expp{\frac{z^2}{4}}\phi(z)\dZ
\end{align}

We now consider the last integral, which is a function of $z$. We have by \cref{lemma:parabolic-integral} that 
\begin{align}
\int_{-\infty}^{x} D_{\nu_j}(-z) \expp{\frac{z^2}{4}}\phi(z)\dZ
&= D_{\nu_j -1}(-x)\expp{\frac{x^2}{4}}\phi(x).
\end{align}

After pulling out the the constants with respect to $j$ and simplifying to $\phi(x)$, we have
\begin{align}
\int_{-\infty}^{x}\PM{\sup_{t \in [0, a']} X(t) \geq x}[z] \phi(z)\dZ = -\phi(x) \sum_{j=1}^{\infty} \frac{1-\expp{-a' \nu_j}}{D'_{\nu_j}(-x)\nu_j} D_{\nu_j -1}(-x).
\end{align}
By \cref{lemma:maximum-equiv-lamperti} and recalling $a' = \log \Delta /2$, we have that
\[
1-\Psi_+(x; \Delta, 1/2)= \Phi(-x) - \phi(x)\sum_{j=1}^{\infty}\frac{D_{\nu_j -1}(-x)}{D'_{\nu_j}(-x)}\frac{1-\expp{-\log \Delta\nu_j/2}}{\nu_j} 
\]
\end{proof}

\subsubsection{Two-Sided Distribution When \texorpdfstring{$q = 1/2$}{q = 1/2}}
Similar to the one-sided distribution we will use \cref{lemma:maximum-equiv-lamperti} and focus on integrating against the distribution of an OU maximum. Therefore, it is sufficient to show the following.

\begin{lemma}\label{lemma:ou-stationary-interval}
Let $X(t)$ be an OU process where $X(0) \sim \cN(0,1)$. Then 
\begin{equation}
\PM{\sup_{s \in [0, w]} \abs{X(s)} < x} = - 2\sqrt{2}\phi(x)\sum_{n \geq 0}\frac{\expp{-\nu_n w} B_{\nu_n-1}(x)}{\nu_n \partial_\mu A_{\nu_n}(x)}.
\end{equation}
\end{lemma}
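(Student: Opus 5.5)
The plan mirrors the one-sided argument. We use the two-sided (strip) analogue of the Alili--Patie--Pedersen representation for the exit time of the OU process from $(-x,x)$, integrate the resulting eigenfunction expansion in time and against the Gaussian starting density, and then collapse the spatial integral with a parabolic-cylinder identity. Here $A_\mu(x)$ denotes the symmetric eigenfunction combination (e.g.\ $A_\mu(x)=D_\mu(x)+D_\mu(-x)$, or the corresponding even Weber solution), whose positive zeros in $\mu$ are the $\nu_n$. Similarly $B_{\mu-1}$ is the combination of $D_{\mu-1}(\pm x)$ produced by integrating the eigenfunctions; I take $B_{\nu-1}(x)=D_{\nu-1}(-x)-D_{\nu-1}(x)$ as the working guess.

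\textbf{Step 1: spectral expansion.} For $X(0)=z\in(-x,x)$ with $\lambda=1$ and $\sigma=\sqrt2$, the generator is $f''-uf'$. After the substitution $f(u)=e^{u^2/4}g(u)$ this becomes Weber's equation. The Dirichlet problem on $(-x,x)$ has eigenvalues $\nu_n$, the zeros of $\mu\mapsto A_\mu(x)$, because only even eigenfunctions contribute once we average against the symmetric start law. The standard expansion then gives
\[
\PM{\sup_{s\le w}\abs{X(s)}<x}[z]=\sum_n c_n\, e^{z^2/4}\bigl(D_{\nu_n}(-z)+D_{\nu_n}(z)\bigr)e^{-\nu_n w}.
\]
The coefficients $c_n=-e^{-x^2/4}/(\nu_n\,\partial_\mu A_{\nu_n}(x))$ come from the residues of the Laplace transform $\EE{e^{-\mu\tau}}[z]=e^{(z^2-x^2)/4}A_{-\mu}(z)/A_{-\mu}(x)$, exactly as in \cref{lemma:density-hitting-times-ou}. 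The factor $1/\nu_n$ appears because we expand the survival function rather than the density.

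\textbf{Step 2: integrate out $z$.} Multiply by $\phi(z)$ and integrate over $(-x,x)$. Exchanging sum and integral is justified by uniform convergence for $w>0$ (absolute summability as in \citep{alili_representations_2005}). Apply \cref{lemma:parabolic-integral}, $\int_{-\infty}^{y}D_\nu(-z)e^{z^2/4}\phi(z)\,\dd z=D_{\nu-1}(-y)e^{y^2/4}\phi(y)$, and its reflection $z\mapsto-z$, evaluating between $-x$ and $x$. The Gaussian factors combine to $e^{x^2/4}\phi(x)$, which cancels the $e^{-x^2/4}$ in $c_n$ up to constants. The symmetric combination of the two boundary terms yields $B_{\nu_n-1}(x)$, and the surviving normalizations ($\sqrt{2\pi}$ versus $\phi$, the doubling from the two boundary evaluations, and the $\sqrt2$ from the $\sigma$-rescaling) assemble into the prefactor $2\sqrt2$. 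The result is the stated formula with $w=\log\Delta/2$ via \cref{lemma:maximum-equiv-lamperti}.

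\textbf{Main obstacle.} The delicate part is Step 1 in the two-sided setting. I must verify that the Laplace transform of the exit time from the strip has the ratio form above, that its poles are simple and located exactly at the zeros of $A_\mu(x)$, and that residues give a convergent series at every $w>0$ (the series is not expected to converge at $w=0$). I would first derive the Laplace transform by solving the ODE with boundary conditions $f(\pm x)=1$, which the even Weber solution satisfies. I would then invoke known results on Hermite/Weber zero asymptotics ($\nu_n\sim n^2$ growth) to justify interchanging sum, time integral and $z$-integral. Fixing the exact constant $2\sqrt2$ and the sign is routine bookkeeping once the eigenfunction normalization is pinned down; I would check it against the $w\to\infty$ decay and the $w\to0^+$ limit $\Phi(x)-\Phi(-x)$.
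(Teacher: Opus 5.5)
Your route is essentially the paper's. You expand the survival function of the exit time from $(-x,x)$ in eigenfunctions, keep only the even ones (the zeros of $\mu\mapsto A_\mu(x)$), and integrate against $\phi$ using an exact antiderivative. The difference is that the paper does not derive the expansion. It specializes \citet[Theorem~5.4]{blanchet-scalliet_distribution_2025} to $c_1=-c_2=-x$ and discards the zeros of $B_\mu(x)$, because $F_{0,\mu}(z)=2B_\mu(x)A_\mu(z)$ vanishes there. So the step you flag as the main obstacle is imported rather than proved. A small point on your justification: the odd eigenfunctions drop out for every fixed $z$, because the initial datum $1$ is even. This is not an effect of averaging over $z$, although your fixed-$z$ formula is correct.

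One correction to your constant bookkeeping. In the statement, $A$ and $B$ are the paper's Kummer functions, $A_\mu(y)=M(-\tfrac{\mu}{2};\tfrac12;\tfrac{y^2}{2})$ and $B_\mu(y)=\tfrac{y}{\sqrt2}M(\tfrac{1-\mu}{2};\tfrac32;\tfrac{y^2}{2})$. The paper integrates using $\nu A_\nu(z)=z\tfrac{d}{dz}A_\nu(z)-\tfrac{d^2}{dz^2}A_\nu(z)$ together with $\tfrac{d}{dy}A_\mu(y)=-\sqrt2\,\mu B_{\mu-1}(y)$, which gives $\int_{-x}^{x}A_\nu\phi\,\dZ=2\sqrt2\,\phi(x)B_{\nu-1}(x)$. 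With your choices $\tilde A_\mu(y)=D_\mu(y)+D_\mu(-y)$ and $\tilde B_{\mu-1}(y)=D_{\mu-1}(-y)-D_{\mu-1}(y)$, your own computation yields prefactor $2$, not $2\sqrt2$. The two conventions are consistent: $e^{y^2/4}\tilde A_\mu(y)=c(\mu)A_\mu(y)$ and $\tilde B_{\mu-1}(y)=\sqrt2\,c(\mu)e^{-y^2/4}B_{\mu-1}(y)$, with $c(\mu)=2^{\mu/2+1}\sqrt\pi/\Gamma(\tfrac{1-\mu}{2})$. Hence $\tilde B_{\nu-1}(x)/\partial_\mu\tilde A_\nu(x)=\sqrt2\,B_{\nu-1}(x)/\partial_\mu A_\nu(x)$ at each eigenvalue. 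The $\sqrt2$ therefore comes from the normalization of $B_\mu$, not from a $\sigma$-rescaling; none is needed, since the generator is already $f''-uf'$.

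The Kummer normalization also avoids the spurious zeros of $\tilde A_\mu(x)$ at odd integers $\mu$, where $c(\mu)=0$. These are harmless generically, since those terms vanish. They do produce a $0/0$ term at the isolated values of $x$ for which a true eigenvalue is an odd integer.
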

\begin{proof}[Proof of \cref{lemma:ou-stationary-interval}]
We first recount some of the proof from \citet{blanchet-scalliet_distribution_2025}. The authors build the theory for a general $(c_1, c_2)$ and general types of OU processes, but our situation has $c_1 = -c_2$ and the ``canonical" OU process, which simplifies some expressions. 

First, we must go through the notation. The authors give the building blocks $G_{0, \mu}$ in Section 3. Our OU process has the parameters (in their paper's notation) $a = 0, b = 1, \sigma = \sqrt{2}$. This leaves us with
\[
G_{0, \mu}(x, y) = B_\mu(y) A_\mu(x) - B_\mu(x) A_\mu(y),
\]where 
\[
B_\mu(y) = \frac{y}{\sqrt{2}} M \paren{\frac{1-\mu}{2}; 1.5; \frac{y^2}{2}},
\]and 
\[
A_\mu(y) = M\paren{\frac{-\mu}{2}; 0.5; \frac{y^2}{2}},
\]where $M$ is the hypergeometric (Kummer's) function. First, recall a property of $A_\mu(y)$.

\begin{lemma}\label{lemma:deriv-amu}
Let $A_\mu(y) =  M\paren{\frac{-\mu}{2}; 0.5; \frac{y^2}{2}}$. 
\[
\frac{d}{dy}A_\mu(y) = -\mu y M \paren{\frac{2-\mu}{2}; 1.5; \frac{y^2}{2}} = -\sqrt{2} \mu B_{\mu-1}(y).
\]
\end{lemma}

One can check that 
\begin{align}
G_{0, \mu}(-x, x) &= B_\mu(x) A_{\mu}(-x) - B_\mu(-x) A_\mu(x)\\
&= 2 A_{\mu}(x)B_\mu(x),
\end{align}since $B$ is an odd function and $A$ is an even one.
Now the authors define 
\[
F_{0, \mu}(z) = G_{0, \mu}(-x, z) + G_{0, \mu}(z, x) = 2 B_\mu(x) A_\mu(z),
\]again using the fact that $A$ is even and $B$ is odd.

Let $\paren{\mu_n}_{n \geq 0}$ be the ordered sequence of zeros of $\mu \mapsto G_{0, \mu}(-x, x)$. 
The last line in \citep[Theorem 5.4]{blanchet-scalliet_distribution_2025} gives
\begin{align}
\PM{\tau_{-x, x} \geq w}[z] = -\sum_{n \geq 0}\expp{-\mu_n w}\frac{F^{(0)}_{0, \mu_n}(z)}{\mu_n \partial_\mu G_{0, \mu_n}(-x, x)},
\end{align}
 where
\begin{align}
\partial_\mu G_{0, \mu_n}(-x, x) &= \frac{\partial }{\partial \mu} G_{0, \mu}(-x, x) \mid_{\mu =\mu_n}.
\end{align}Calling $B'_\mu(x)$ and $A'_\mu(x)$ the partial derivatives we use the product rule to write
\[
\partial_\mu G_{0, \mu_n}(-x, x) = 2 \paren{A_{\mu_n}(x) B'_{\mu_n}(x) + A'_{\mu_n}(x)B_{\mu_n}(x)}.
\]Then we write the fraction
\begin{align}
\frac{F^{(0)}_{0, \mu_n}(z)}{\mu_n \partial_\mu G_{0, \mu_n}(-x, x)} &= \frac{B_{\mu_n}(x)A_{\mu_n}(z)}{\mu_n \paren{A_{\mu_n}(x) B'_{\mu_n}(x) + A'_{\mu_n}(x)B_{\mu_n}(x)}}.
\end{align}We can now simplify this equation greatly using the fact that $\mu_n$ is a zero of $\mu \mapsto G_{0, \mu}(-x, x)$. Using this fact, it must be that either (or both) $A_{\mu_n}(x) = 0$ or $B_{\mu_n}(x) = 0$. If $A_{\mu_n}(x) = 0$ we have the fraction is 
\[
\frac{B_{\mu_n}(x)A_{\mu_n}(z)}{\mu_n \paren{A_{\mu_n}(x) B'_{\mu_n}(x) + A'_{\mu_n}(x)B_{\mu_n}(x)}} = \frac{A_{\mu_n}(z)}{\mu_n A'_{\mu_n}(x)}.
\]If $B_{\mu_n}(x) = 0$, the whole term cancels out. Therefore consider the sequence of zeros $\paren{\nu_n}_{n \geq 0}$ which are the set of zeros of $\mu \mapsto A_{\mu}(x)$. Therefore for any $\nu_n$ we have 
\[
\frac{F^{(0)}_{0, \nu_n}(z)}{\nu_n \partial_\mu G_{0, \nu_n}(-x, x)} = \frac{A_{\nu_n}(z)}{\nu_n A'_{\nu_n}(x)},
\]giving that 
\[
\PM{\tau_{-x, x} \geq w}[z] = - \sum_{n \geq 0} \expp{-\nu_n w} \frac{A_{\nu_n}(z)}{\nu_n A'_{\nu_n}(x)}.
\]Now, we must integrate against a normal density.

\begin{align}
\PM{\tau_{-x, x} \geq w} &= \int_{-x}^{x}\PM{\tau_{-x, x} \geq w}[z] \phi(z)\dZ\\
&= \int_{-x}^{x} - \sum_{n \geq 0}\expp{-\nu_n w} \frac{A_{\nu_n}(z)}{\nu_n A'_{\nu_n}(x)}\phi(z)\dZ\\
&= -\sum_{n \geq 0}\expp{-\nu_n w}\frac{1}{\nu_n A'_{\nu_n}(x)} \int_{-x}^{x}A_{\nu_n}(z)\phi(z)\dZ.
\end{align}
We use the differential identities of $A$ and $\phi$ to help with the integral. First, we have that $\phi'(x) = -x\phi(x)$. 
We begin with
\begin{align}
\nu_n A_{\nu_n}(z) &= z \frac{d}{dz}A_{\nu_n}(z) - \frac{d^2}{dz^2}A_{\nu_n}(z)\\
\phi(z)\nu_nA_{\nu_n}(z) &= \phi(z) z \frac{d}{dz}A_{\nu_n}(z) - \phi(z) \frac{d^2}{dz^2}A_{\nu_n}(z)\\
&= - \frac{d}{dz}\phi(z)\frac{d}{dz}A_{\nu_n}(z) - \phi(z) \frac{d^2}{dz^2}A_{\nu_n}(z)\\
&= \frac{d}{dz}\paren{-\frac{d}{dz}A_{\nu_n}(z)\phi(z)},
\end{align}which implies that 
\[
\phi(z) A_{\nu_n}(z) = \frac{d}{dz}\paren{\frac{-\frac{d}{dz}A_{\nu_n}(z)\phi(z)}{\nu_n}} = \frac{d}{dz}\paren{\sqrt{2} B_{\nu_n-1}(z) \phi(z)},
\]where we use \cref{lemma:deriv-amu} in the final equality.
Thus the anti-derivative of $\phi(z) A_{\nu_n}(z)$ is $\sqrt{2}B_{\nu_n -1}(z)\phi(z)$. Evaluating at the boundary conditions gives 
\begin{align}
\int_{-x}^{x} A_{\nu_n}(z)\phi(z)
\dZ&= \sqrt{2}\phi(x)\paren{B_{\nu_n-1}(x) - B_{\nu_n-1}(-x)}\\
&= 2 \sqrt{2}\phi(x) B_{\nu_n-1}(x)
\end{align}where we use the symmetry of $\phi(x)$ and the fact that $B_\mu(x)$ is an odd function. Putting everything together gives 
\[
\PM{\tau_{-x, x} \geq w}[z \sim \cN(0,1)] = - 2\sqrt{2}\phi(x)\sum_{n \geq 0}\frac{\expp{-\nu_n w} B_{\nu_n-1}(x)}{\nu_n \partial_\mu A_{\nu_n}(x)}.
\]

\end{proof}

\subsection{Auxiliary Lemmas}
\begin{lemma}\label{lemma:wiener-escapes-box}
Let $W(s)$ be a Wiener process where $W(0) = x$. Then 
\begin{align}
\PM{\sup_{0 \leq s \leq t} \abs{W(s)} < a}[x] &= \sum_{k=-\infty}^{\infty} \Phi \paren{\frac{(4k + 1)a - x}{\sqrt{t}}}  
-\Phi \paren{\frac{(4k-1)a - x}{\sqrt{t}}} - \\
&\Phi \paren{\frac{(4k +3)a + x}{\sqrt{t}}} + \Phi \paren{\frac{(4k+1)a + x}{\sqrt{t}}}. 
\end{align}
\end{lemma}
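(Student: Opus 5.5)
The statement is the classical formula for the probability that Brownian motion started at $x \in (-a,a)$ stays inside the strip $(-a,a)$ up to time $t$. I would derive it with the method of images: iterated reflection of the free Gaussian kernel across the two barriers $\pm a$.

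\textbf{Step 1: the killed transition density.} Let $p_t(y) = t^{-1/2}\phi(y/\sqrt t)$ be the heat kernel. I will show that the density of $W(t)$ on the event $\{\sup_{0\le s\le t}\abs{W(s)}<a\}$, for the process started at $x$, is
\[
q_t(x,y) = \sum_{k=-\infty}^{\infty}\Big[p_t\big(y - x - 4ka\big) - p_t\big(y + x - (4k+2)a\big)\Big], \qquad y \in (-a,a).
\]
The second family of terms consists of the reflections of $x$ across $a$, namely $2a - x$ shifted by multiples of $4a$.

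To justify this formula I would do one of two things. The first option is the standard PDE characterization. The series converges uniformly together with all its derivatives on compact subsets of $t>0$, because of the Gaussian decay in $k$, so it solves the heat equation. It vanishes at $y=\pm a$, since the images pair off under reflection, and it tends to $\delta_x$ as $t\downarrow 0$. Uniqueness for the Dirichlet heat problem on an interval then identifies $q_t$ with the killed density. The second option is more probabilistic: apply the reflection principle repeatedly to the alternating sequence of hitting times of $a$ and $-a$, and use inclusion--exclusion. The tail of that expansion vanishes because $k$ alternating crossings in time $t$ have probability decaying like $e^{-ck^2a^2/t}$.

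\textbf{Step 2: integrate over the strip.} Integrating over $y\in(-a,a)$ term by term is justified by Fubini, using absolute summability from the Gaussian tails. It gives
\[
\int_{-a}^{a}p_t(y-x-4ka)\,dy=\Phi\!\left(\frac{(4k+1)a-x}{\sqrt t}\right)-\Phi\!\left(\frac{(4k-1)a-x}{\sqrt t}\right),
\]
\[
\int_{-a}^{a}p_t(y+x-(4k+2)a)\,dy=\Phi\!\left(\frac{(4k+3)a+x}{\sqrt t}\right)-\Phi\!\left(\frac{(4k+1)a+x}{\sqrt t}\right).
\]
For the second integral I substitute $u=y+x-(4k+2)a$ and use $\Phi(-v)=1-\Phi(v)$. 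Depending on the sign convention, this may require reindexing $k\mapsto -k-1$. Summing the two displays over $k$ produces exactly the four-term expression in the statement.

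\textbf{Main obstacle.} The real content is Step 1: rigorously identifying the image series with the killed density. I would take the PDE route and cite the uniqueness of bounded solutions to the heat equation with Dirichlet boundary conditions. As a sanity check, the result can be compared with the alternative form in the paper's reference \citep{csorgo_how_1979}. The remaining work is bookkeeping of signs and indices.
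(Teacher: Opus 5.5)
Your argument is correct, but it takes a different route from the paper. The paper derives nothing from first principles. It quotes a textbook formula (M\"orters--Peres) for the probability that Brownian motion started at $x$ stays in the strip $(0,a)$ up to time $t$, replaces $a$ by $2a$, and moves the starting point to $x+a$. By translation invariance, staying in $(0,2a)$ from $x+a$ is the same event as staying in $(-a,a)$ from $x$, and the four $\Phi$-terms then match the statement term by term. You instead build the killed density on the symmetric strip by the method of images and integrate it. In effect this re-proves the cited result directly in the coordinates you need. The paper's route is a two-line change of variables that outsources all the content. Yours is self-contained and yields more: the sub-probability density of $W(t)$ on the survival event, not just its mass. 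The price is two points you must actually carry out. First, PDE uniqueness alone does not identify the image series with the killed density; you also need a verification step. For example, since the series extends smoothly to all of $\R$ and vanishes at $\pm a$, apply It\^o's formula to $s\mapsto u(t-s,W(s\wedge\tau))$ for the candidate solution $u$; alternatively, use the iterated-reflection/inclusion--exclusion argument you mention. Second, in Step 2 both displays hold only after reindexing: $k\mapsto -k$ in the first and $k\mapsto -k-1$ in the second. This is harmless, because each family consists of Gaussian masses of disjoint intervals and is therefore absolutely summable.
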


\begin{proof}[Proof of \cref{lemma:wiener-escapes-box}]
We begin with \citep[Theorem 7.45]{morters_brownian_2010}, which states for a Wiener process where $W(0) = x$,
\begin{align}
\PM{\sup_{0 \leq s \leq t}W(s) \in (0,a)}[x] &= \sum_{k=-\infty}^{\infty} \Phi \paren{\frac{2ka + a - x}{\sqrt{t}}} - \Phi \paren{\frac{2ka - x}{\sqrt{t}}} \\
&\quad- \Phi \paren{\frac{2ka + a + x}{\sqrt{t}}} + \Phi \paren{\frac{2ka + x}{\sqrt{t}}}
\end{align}

Letting $a$ grow to $2a$ it follows
\begin{align}
\PM{\sup_{0 \leq s \leq t}W(s) \in (0,2a)}[x] &= \sum_{k=-\infty}^{\infty} \Phi \paren{\frac{4ka + 2a - x}{\sqrt{t}}} - \Phi \paren{\frac{4ka - x}{\sqrt{t}}} \\
&\quad- \Phi \paren{\frac{4ka + 2a + x}{\sqrt{t}}} + \Phi \paren{\frac{4ka  + x}{\sqrt{t}}},
\end{align}then shifting the starting point $x$ up to $x + a$ 
\begin{align}
\PM{\sup_{0 \leq s \leq t}W(s) \in (-a,a)}[x] &= \sum_{k=-\infty}^{\infty} \Phi \paren{\frac{(4k + 1)a - x}{\sqrt{t}}} \\
&\quad- \Phi \paren{\frac{(4k-1)a - x}{\sqrt{t}}} - \Phi \paren{\frac{(4k +3)a + x}{\sqrt{t}}} \\
&\quad+ \Phi \paren{\frac{(4k+1)a + x}{\sqrt{t}}}.
\end{align}
\end{proof}

\begin{lemma}\label{lemma:normal-cdf-integral}
Let $\Phi, \phi$ be the distribution function and density of a standard normal. Then 
\[
\int_{-a}^{a}\Phi\paren{c_1 x + c_2}\phi(x)\dX = \Phi_2 \paren{\frac{c_2}{\sqrt{1 + c_1^2}}, a; \frac{-c_1}{\sqrt{1 + c_1^2}}} - \Phi_2 \paren{\frac{c_2}{\sqrt{1 + c_1^2}}, -a; \frac{-c_1}{\sqrt{1 + c_1^2}}}
\]
\end{lemma}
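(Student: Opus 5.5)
The natural route is probabilistic. Write $\int_{-a}^{a}\Phi(c_1 x + c_2)\phi(x)\dX$ as a probability involving two independent standard normals $Z$ and $X$. Since $\Phi(c_1x+c_2) = \PM{Z \leq c_1 x + c_2}$, Fubini gives
\[
\int_{-a}^{a}\Phi\paren{c_1 x + c_2}\phi(x)\dX = \PM{Z - c_1 X \leq c_2,\ -a < X \leq a}.
\]
We then standardize the first coordinate by setting $U = (Z - c_1 X)/\sqrt{1+c_1^2}$. The pair $(U, X)$ is jointly Gaussian with unit variances, and its correlation is $\operatorname{Cov}(U,X) = -c_1/\sqrt{1+c_1^2}$. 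This is exactly the correlation parameter appearing in the claimed formula.

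Next, the event $\{-a < X \leq a\}$ splits as $\{X \leq a\}\setminus\{X \leq -a\}$. This gives
\[
\PM{U \leq c_2/\sqrt{1+c_1^2},\ X \leq a} - \PM{U \leq c_2/\sqrt{1+c_1^2},\ X \leq -a}.
\]
By the bivariate normal notation $\Phi_2(z_1,z_2;\rho)$ introduced at the start of this section, this is precisely the stated difference of $\Phi_2$ terms. The endpoint conventions ($<$ versus $\leq$) are irrelevant because the distributions are continuous.

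There is no real obstacle here. The only points needing care are (i) justifying the Fubini/conditioning step, which is immediate since the integrand is bounded and measurable, and (ii) getting the sign of the correlation right. Because $Z - c_1X$ has covariance $-c_1$ with $X$, the correlation is negative. This matches the later applications: with $c_1 = -1/\sqrt{\Delta-1}$ it yields correlation $\Delta^{-1/2}$ and threshold $c_2\sqrt{(\Delta-1)/\Delta}$, and those can be checked against the term-by-term integrals in the two-sided $q\in\{0,1\}$ lemma as a sanity check.
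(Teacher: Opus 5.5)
Your proposal is correct and follows the paper's proof essentially verbatim. Like the paper, you write the integral as $\mathsf{P}(Z - c_1 X \le c_2,\ -a < X < a)$ for independent standard normals, split the event as $\{X < a\}$ minus $\{X < -a\}$, and read off a bivariate normal with correlation $-c_1/\sqrt{1+c_1^2}$. Your explicit standardization of $Z - c_1X$ by $\sqrt{1+c_1^2}$ makes visible the rescaled threshold $c_2/\sqrt{1+c_1^2}$, which the paper leaves implicit.
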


\begin{proof}[Proof of \cref{lemma:normal-cdf-integral}]
When the bounds are infinite this identity is well-known. We prove it again here for finite bounds. Let $X \sim \cN(0,1)$ and $U \sim \cN(0,1)$ independently.
\begin{align}
\int_{-a}^{a}\Phi \paren{c_1x + c_2} \phi(x)\dX &= \EE{\Phi \paren{c_1 X + c_2} \indic{-a < X < a}}\\
&= \EE{\EE{\Phi \paren{c_1 X + c_2}\indic{-a < X < a} \mid X}}\\
&= \EE{\PM{U \leq c_1 X + c_2\mid X}\indic{-a < X < a}}\\
&= \PM{U \leq c_1 X + c_2, -a < X < a}\\
&= \PM{U - c_1X \leq c_2, -a < X < a}\\
&= \PM{U- c_1X \leq c_2, X < a} - \PM{U- c_1X \leq c_2, X < -a}.
\end{align}
Now, we can use properties of sums of normals to write this as a bivariate normal. Writing $\widetilde U = U - c_1X$, we have $Cov(\widetilde U, X) = -c_1$ and $\Var{U} = 1 + c_1^2$. Therefore, the correlation of the two random variables is $-c_1/\sqrt{1 + c_1^2}$ and we can write
\begin{align}
\PM{U - c_1X \leq c_2, X < a} - &\PM{U - c_1X \leq c_2, X < -a}= \PM{\widetilde U \leq c_2, X < a} - \PM{\widetilde U \leq c_2, X < -a}\\
&= \Phi_2 \paren{\frac{c_2}{\sqrt{1 + c_1^2}}, a; \frac{-c_1}{\sqrt{1 + c_1^2}}} - \Phi_2 \paren{\frac{c_2}{\sqrt{1 + c_1^2}}, -a; \frac{-c_1}{\sqrt{1 + c_1^2}}}
\end{align}
\end{proof}

\begin{lemma}\label{lemma:parabolic-integral}
Let $D_\nu(\cdot)$ be the parabolic cylinder function with index $\nu$. Then

\begin{align}\label{eq:parabolic-integral}
\int_{-\infty}^{x}D_{\nu}(-z)\expp{z^2/4}\phi(z)\dZ = D_{\nu-1}(-x)\expp{x^2/4}\phi(x)
\end{align}
\end{lemma}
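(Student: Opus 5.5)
The plan is to exhibit an explicit antiderivative of the integrand, i.e.\ to recognize it as an exact derivative, and then evaluate at the endpoints. First simplify the weight: since $\phi(z) = (2\pi)^{-1/2}e^{-z^2/2}$, we have
\[
D_\nu(-z)\expp{z^2/4}\phi(z) = (2\pi)^{-1/2}\, e^{-z^2/4} D_\nu(-z).
\]
Likewise the right-hand side of \eqref{eq:parabolic-integral} equals $(2\pi)^{-1/2}e^{-x^2/4}D_{\nu-1}(-x)$. It therefore suffices to show
\[
\int_{-\infty}^{x} e^{-z^2/4}D_\nu(-z)\dZ = e^{-x^2/4}D_{\nu-1}(-x).
\]

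The key step is the standard recurrence for parabolic cylinder functions,
\[
D_\mu'(u) - \tfrac{u}{2}D_\mu(u) + D_{\mu+1}(u) = 0 ,
\]
which holds for all real (indeed complex) index $\mu$ and can be checked directly from the integral or confluent-hypergeometric representation of $D_\mu$. Multiplying by $e^{-u^2/4}$ turns it into the derivative identity
\[
\frac{d}{du}\paren{e^{-u^2/4}D_\mu(u)} = -e^{-u^2/4}D_{\mu+1}(u).
\]
Apply this with $\mu = \nu-1$ and the substitution $u=-z$. The chain rule contributes a sign flip, which gives
\[
\frac{d}{dz}\paren{e^{-z^2/4}D_{\nu-1}(-z)} = e^{-z^2/4}D_\nu(-z).
\]
So $z \mapsto e^{-z^2/4}D_{\nu-1}(-z)$ is an antiderivative of the integrand, up to the constant $(2\pi)^{-1/2}$.

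By the fundamental theorem of calculus the integral equals $e^{-x^2/4}D_{\nu-1}(-x)$ minus the limit of this antiderivative as $z\to-\infty$. It remains to show that this lower boundary term vanishes. As $z\to-\infty$ we have $u=-z\to+\infty$, and the classical asymptotic $D_{\nu-1}(u)\sim u^{\nu-1}e^{-u^2/4}$ gives
\[
e^{-z^2/4}D_{\nu-1}(-z) \sim u^{\nu-1}e^{-u^2/2} \to 0 .
\]
The same asymptotic shows that the integrand is integrable at $-\infty$, so the improper integral converges.

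I expect no serious obstacle. The one point requiring care is using the correct recurrence with the correct sign convention for $D_\mu$, since the sign flip from $u=-z$ must cancel exactly. I would double-check this by expanding both sides at $\nu=0$, where $D_0(u)=e^{-u^2/4}$, as a sanity check. Multiplying the resulting identity back by $(2\pi)^{-1/2}$ and rewriting $(2\pi)^{-1/2}e^{-x^2/4} = \phi(x)e^{x^2/4}$ then recovers exactly the form stated in \eqref{eq:parabolic-integral}.
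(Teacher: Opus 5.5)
Your proposal is correct and follows the paper's proof: both rewrite the integrand as $(2\pi)^{-1/2}e^{-z^2/4}D_\nu(-z)$, and both use the identity $\frac{d}{dw}\bigl(e^{-w^2/4}D_\mu(w)\bigr) = -e^{-w^2/4}D_{\mu+1}(w)$ with the substitution $u=-z$ to get an explicit antiderivative. Both also make the lower boundary term vanish via the decay of $e^{-u^2/4}D_{\nu-1}(u)$ as $u\to\infty$; your sign check at $\nu=0$ and the asymptotic $D_{\nu-1}(u)\sim u^{\nu-1}e^{-u^2/4}$ just spell out steps the paper leaves terse.
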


\begin{proof}[Proof of \cref{lemma:parabolic-integral}]
The main work is to recall the identity
\begin{align}
    \frac{d}{dw}\expp{-w^2/4}D_{\nu}(w) = -\expp{-w^2/4}D_{\nu+1}(w).
\end{align}
Then integrand of the left hand side is equivalent to 
\[
\frac{1}{\sqrt{2\pi}}D_{\nu}(-z) \expp{-z^2/4}.
\]Making a substitution $u=-z$ and noticing that $\lim_{x \to\infty} \expp{-x^2/4}D_{\nu-1}(x) = 0$ gives the proof.
\end{proof}

\section{Additional Discussions}

\subsection{Multiple Disjoint Bounded Windows}
In this section, we briefly (and informally) show the ability to have multiple confidence horizons without union bounding, hence interpolating between group sequential methods and confidence horizons. 

To summarize confidence horizons, we ask for a given horizon $[m, \Delta m]$ how to construct a sequence of confidence intervals valid over this region. What if instead we have a set of horizons that are disjoint:
\[
[m_1, \overline m_1], [m_2, \overline m_2], \dots, [m_K, \overline m_K].
\]This situation may arise if ones ``prior" on the outcome is diffuse or bimodal. The analyst may think that the data have very high signal and few samples are needed to show this, or the analyst may be suspicious of high signal and will suspect a much larger set of samples is needed to estimate the parameter.

We can use the exact same confidence horizon tools developed. Scaling the boundary to be $m$ independent gives the horizon is 
\[
[1, \Delta_1], [\Delta_2, \Delta_3], \dots, [\Delta_{2K-2}, \Delta_{2K-1}].
\]
Thus we need to calculate the distribution of 
\[
\PM{\sup_{t \in\cT_K} W(t)t^{q-1} \leq x}
\]where $\cT_K =  [1, \Delta_1] \cup[\Delta_2, \Delta_3], \dots, [\Delta_{2K-2}, \Delta_{2K-1}]$. We can calculate this through simulation.

\subsection{Challenges of Numerical Integration for Large $K$}\label{sec:ld-approx-comparison}
\begin{figure}[!ht]
    \centering
\includegraphics[width=0.75\linewidth]{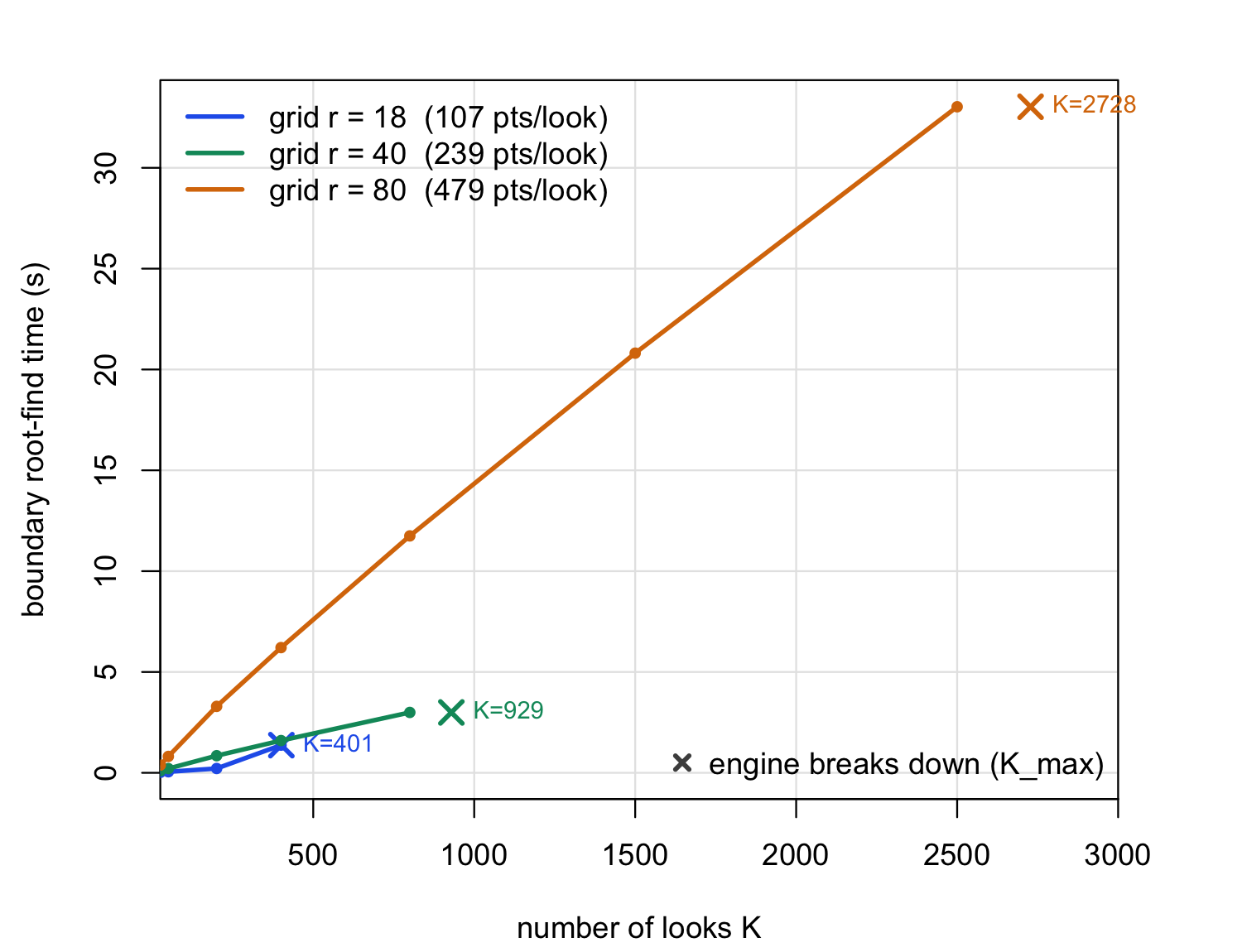}
    \caption{Illustration of numerical calculation of group sequential quantiles as $K$ increases \figcode[R]{figure15}}
    \label{fig:integration-runtime}
\end{figure}
\begin{figure}[!ht]
    \centering
    \includegraphics[width=0.95\linewidth]{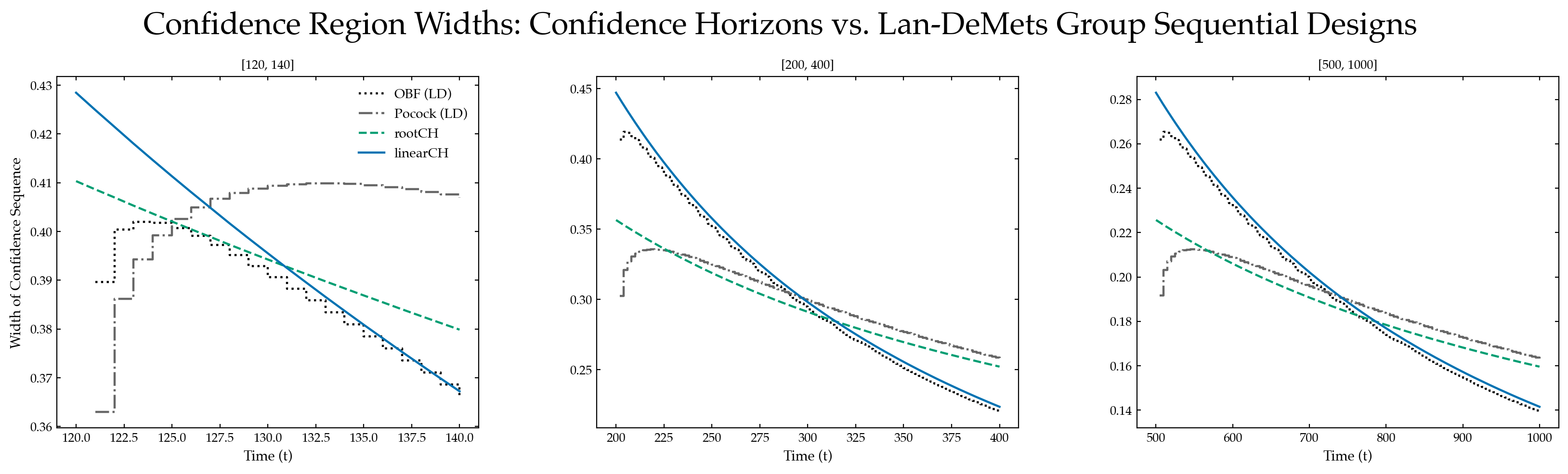}
    \caption{Comparison of confidence horizons with the $\alpha$-spending approximation from \citet{lan_discrete_1983}\figcode{figure16}}
    \label{fig:ld-ch}
\end{figure}

\subsection{Neyman Allocation Experiment Figures}
\begin{figure}[H]
    \centering
\includegraphics[width=0.8\linewidth]{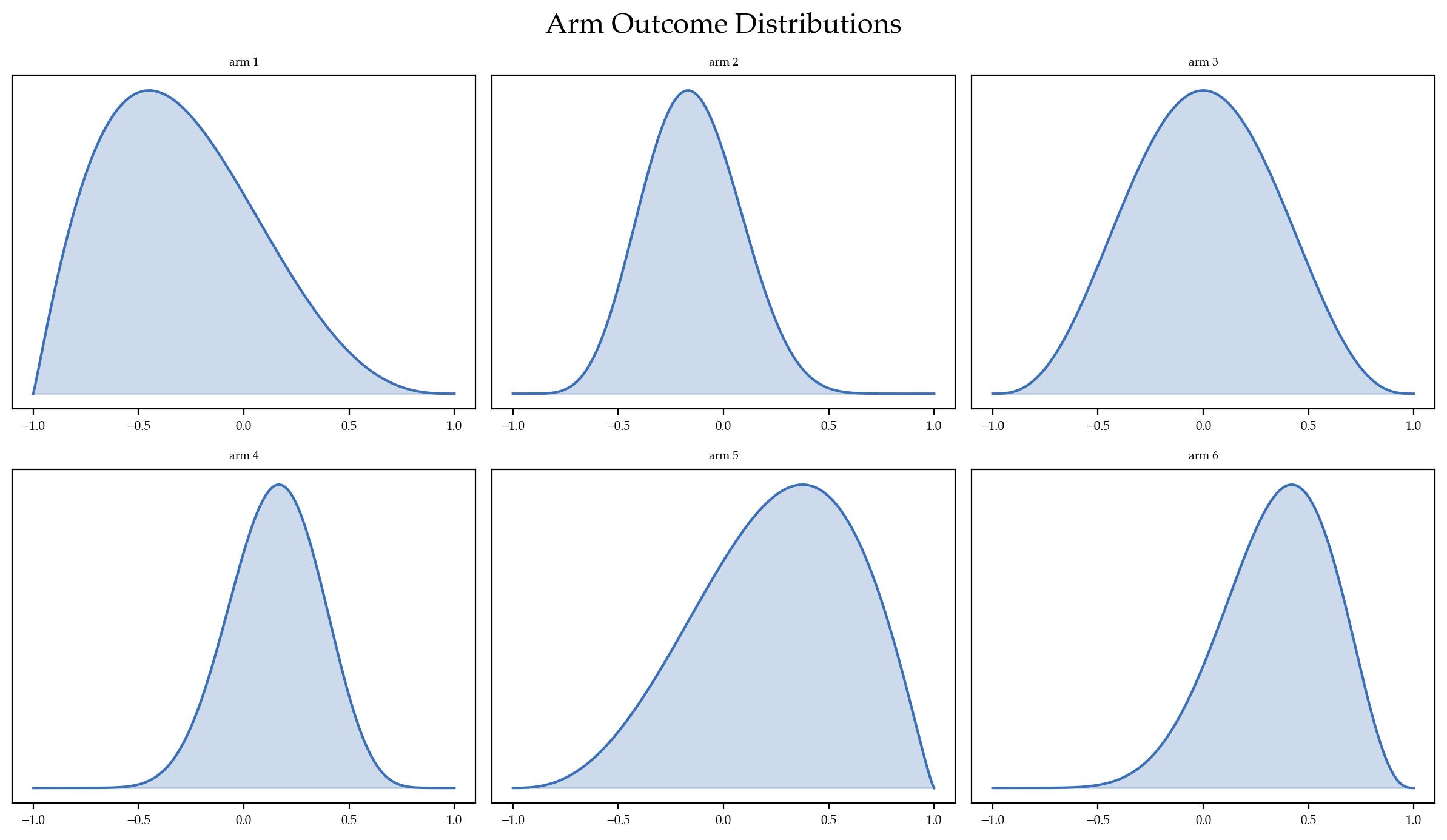}
    \caption{Distributions of the $J = 6$ arms. Notice that while arm 5 and arm 6 have similar means, arm 5 has a much larger spread compared to arm 6. Therefore, \citep[Algorithm 1]{kato2020efficient} should sample from arm 5 more than arm 6. In addition, the arms have different levels of skewness and are evidently not Gaussian. Thus, the confidence horizon coverage will only hold in the limit. \figcode{figure7}}
    \label{fig:outcome-y-arms}
\end{figure}

\begin{figure}[H]
        \centering
        \includegraphics[width=0.95\linewidth]{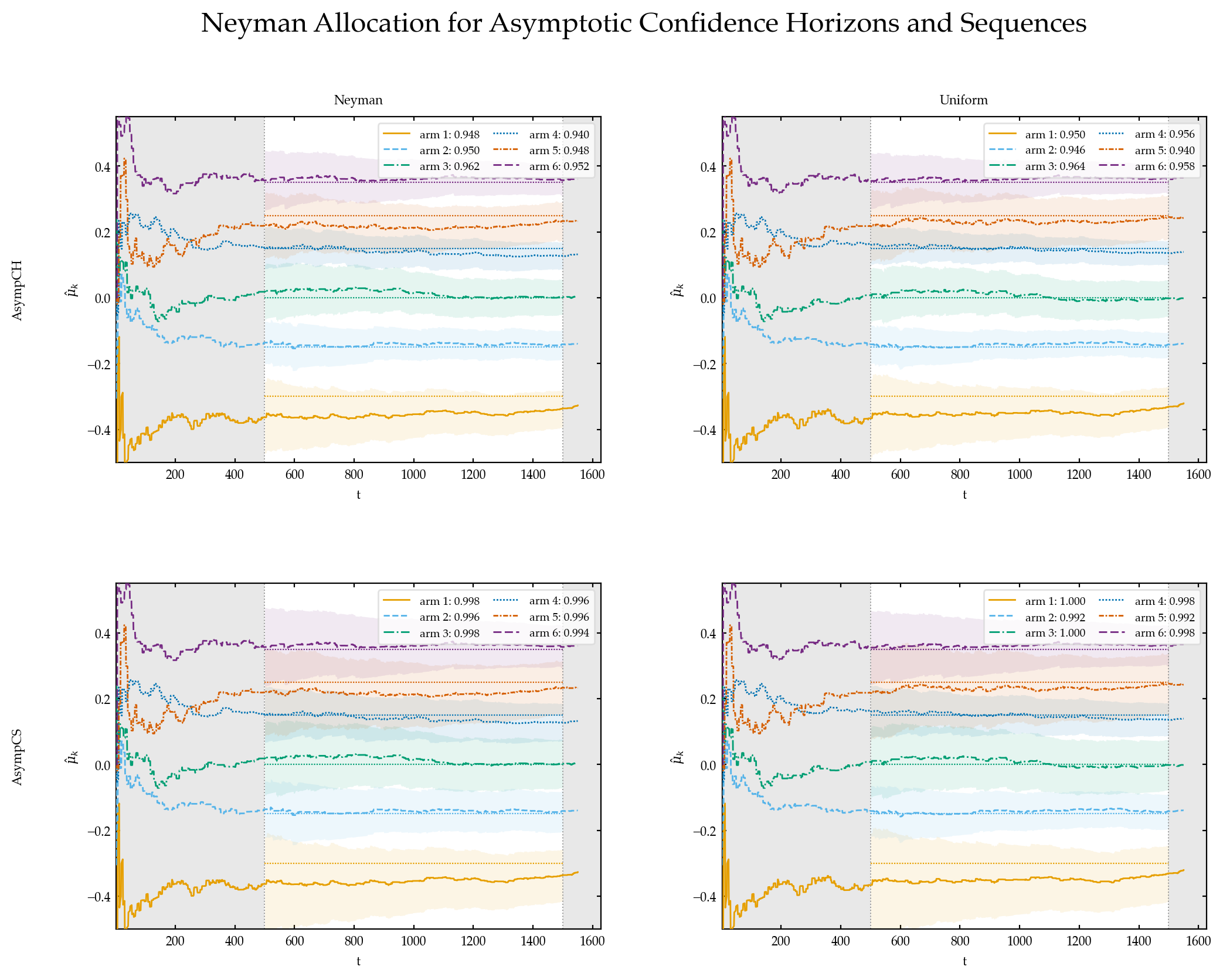}
        \caption{The full Neyman allocation experiment across AsympCS, AsympCH: \figcode{figure17}}
        \label{fig:full-neyman-experiment}
    \end{figure}

\subsection{\texorpdfstring{Power Analysis: The Impact of $\Delta, m$, and $q$}{Power Analysis: The Impact of Delta, m, and q}}
In this section, we answer the question on how to pick parameters to achieve a certain power guarantee. We will use the language of confidence regions throughout this section to maintain consistency with the other sections, although the present problem is fundamentally a testing problem.

For illustration purposes, we will consider a simple but useful example. It is straightforward to obtain the extensions to the broader work we discuss. Let $\infseqt{X} \simiid \P$ with mean $\mu_\P$ and variance $\sigma^2_\P$. An experimenter wishes to construct an AsympCH for the mean $\mu_\P$. The experimenter has, in general, three degrees of freedom. The experimenter can choose $m$, $\Delta$, and $q$. The experimenter must make sure $m$ is sufficiently large, but is otherwise unrestricted. In a typical power analysis the experimenter supposes that the true mean $\mu_\P$ is some fixed value $\mu_1$. Then, if the mean $\mu_\P$ is $\mu_1$, what is the probability the experimenter fails to reject the null? This probability is usually denoted as $\beta$ and power is $1-\beta$. 

We assume that the null distribution is the set of all $\P$ such that $\mu_\P = \mu_0$. Because we are in this asymptotic regime, we will consider \emph{local} alternatives of the form $\mu_\P = \mu_0 + \frac{\mu_1}{\sqrt{m}}$. We can precisely state for some $q \in \R$, 
\[
\beta = \lim_{m \to \infty}\PM{\forall t \in [m, \Delta m]: \mu_0 \in \bar C_{t}^{(m, \Delta)}}.
\]We give a formula for $\beta$ that depends on the experimenter's three degrees of freedom and the signal-to-noise ratio. Hence, we can invert this formula to give for a fixed $\beta$ ($\beta = 0.2$) what values of $(m, \Delta, q)$ are required to obtain that $\beta$.

\begin{proposition}[Power Analysis for an AsympCH]\label{prop:power-analysis-two-sided-asympch}
Let $\infseqt{X} \simiid \P$ with mean $\mu_\P = \mu_0 + \frac{\mu_1}{\sqrt{m}}$, variance $\sigma^2_\P$ and uniformly integrable $\kappa > 2$ moment. The asymptotic power of a two-sided $(1-\alpha)$-AsympCH is 
\begin{equation}
1-\beta = 1-\PM{\sup_{s \in [1, \Delta]} \abs{W(s)s^{q-1} + s^q\frac{\mu_1}{\sigma_\P}} \leq \Psi\n(1-\alpha; \Delta, q)}.
\end{equation}
\end{proposition}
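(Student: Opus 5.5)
Under the local alternative, decompose the centered statistic as $S_t - t\mu_0 = (S_t - t\mu_\P) + t\mu_1/\sqrt m$. Non-rejection means $\mu_0 \in \bar C_t^{(m,\Delta)}$ for all $t\in[m,\Delta m]$. After rearranging as in the proof of \cref{theorem:general-mean}, this event is
\[
\sup_{t\in[m,\Delta m]} \abs{\frac{S_t - t\mu_\P}{t\hat\sigma_t} + \frac{\mu_1}{\sqrt m\,\hat\sigma_t}}\frac{t^q}{m^{q-1/2}} \leq \Psi\n(1-\alpha;\Delta,q).
\]
The plan is to show that the left-hand side converges in distribution to $\sup_{s\in[1,\Delta]}\abs{W(s)s^{q-1} + s^q\mu_1/\sigma_\P}$. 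Then we argue that the limit law is continuous at the threshold.

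\textbf{Step 1: centered part.} The data are i.i.d.\ under a $\P$ that depends on $m$, so we work with the triangular family $\cP=\{\P_m\}$. It has a common variance and a uniformly integrable $\kappa$-th moment, so \cref{condition:ui-condition}, \cref{condition:no-deterministic} and \cref{condition:strong-consistent-var-estimator} hold uniformly. Hence \cref{lemma:dist-uniform-kmt} gives a coupling with $W$ uniform over the family. Combined with the scaling $s=t/m$ and $W(ms)\overset{d}{=}\sqrt m W(s)$, as in \cref{lemma: calc-quantile}, the first term becomes $W(s)s^{q-1}$ up to $\bar o$ errors. As in Steps 1--2 of the proof of \cref{lem:general-lemma}, those errors are $O(m^{1/2+\gamma}\log m)$ and $O(1/\log m)$ multiplicatively, uniformly on $[1,\Delta]$ because $\Delta$ is fixed. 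Uniformity over the family matters here, because $\P$ moves with $m$.

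\textbf{Step 2: drift part.} The drift term equals $(\mu_1/\hat\sigma_t)(t/m)^q = s^q\mu_1/\hat\sigma_t$. By uniform consistency of $\hat\sigma_t$ this is $s^q\mu_1/\sigma_\P + o_\P(1)$ uniformly over $t\in[m,\Delta m]$, since $s^q$ is bounded on $[1,\Delta]$. The map $w\mapsto \sup_{s\in[1,\Delta]}\abs{w(s)s^{q-1}+s^q c}$ is Lipschitz in sup norm on $C[1,\Delta]$, so the two approximations transfer to the supremum. The discretization from flooring is handled by \cref{lemma:max-discrete-cont-gap}.

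\textbf{Step 3: continuity at the threshold (main obstacle).} The limit is a probability at the fixed point $x=\Psi\n(1-\alpha;\Delta,q)$, so we need the limiting variable to put no mass at $x$. The natural route is the Cameron--Martin anti-concentration of \cref{lemma:cameron-martin-lemma}, applied to $W$ shifted by the deterministic drift. The drift $g(s)=s\mu_1/\sigma_\P$ lies in the Cameron--Martin space on $[1,\Delta]$. Therefore the shifted law is mutually absolutely continuous with the unshifted law, and atomlessness transfers from the $\mu_1=0$ case. That case is continuous by \cref{lemma:valid-continuity} with $a=1$, $b=\Delta<\infty$.

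Given this continuity, the sandwich bounds of Steps 1--2 (perturbing $x$ by $\pm\eta$) give convergence of the probability to $\beta$. The power is then $1-\beta$.
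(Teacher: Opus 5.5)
Your proposal is correct and follows the paper's route: write $S_t - t\mu_0$ as the centered sum plus the local drift $t\mu_1/\sqrt m$, couple the centered part to $W$ via \cref{lemma:dist-uniform-kmt}, and rerun Steps 1--2 of \cref{lem:general-lemma} with the drifted Wiener process. Beyond the paper's sketch, you spell out two points it calls ``straightforward'': treating the $m$-dependent law through distribution-uniformity, and using a Cameron--Martin argument to show the drifted limit has no atom at $\Psi^{-1}(1-\alpha;\Delta,q)$.
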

The proof of \cref{prop:power-analysis-two-sided-asympch} is in \cref{proof:power-analysis}. While the above statement is true, the limiting argument abandons the dependence on $m$, an important knob for the experimenter. This is mostly a technicality for the limiting argument to make sense. Instead we give the more intuitive corollary that uses $m$.

\begin{corollary}
Let $\infseqt{X} \simiid \P$. Let $h = \mu_\P - \mu_0$. Then for a fixed finite $m$, we have
\begin{equation}\label{eq:beta-power-two-sided}
1-\beta \approx 1- \PM{\sup_{s \in [1, \Delta]} \abs{W(s)s^{q-1} + s^q m^{1/2}h/\sigma_\P} \leq \Psi \n(1-\alpha; \Delta, q)}.
\end{equation}
\end{corollary}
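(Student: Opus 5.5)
The corollary is a heuristic reading of \cref{prop:power-analysis-two-sided-asympch}. The plan is to rerun the argument behind that proposition at a fixed (large) $m$ and then identify the drift term with $h=\mu_1/\sqrt m$. The corollary carries ``$\approx$'', so the target is an asymptotic identity along the local alternative sequence $h=h_m=\mu_1/\sqrt m$. It is then read at the realized $m$ by substituting $\mu_1=\sqrt m\,h$.

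The key step is to rewrite the non-coverage event under the alternative. Write $S_t-t\mu_0=(S_t-t\mu_\P)+th$. The event $\{\forall t\in[m,\Delta m]:\mu_0\in\bar C_t^{(m,\Delta)}\}$ is then
\[
\sup_{t\in[m,\Delta m]}\abs{\frac{S_t-t\mu_\P}{t\hat\sigma_t}+\frac{h}{\hat\sigma_t}}\frac{t^q}{m^{q-1/2}}\le\Psi\n(1-\alpha;\Delta,q).
\]
Next I would apply \cref{lemma:dist-uniform-kmt} (KMT coupling, $\gamma<-1/2$) to replace $(S_t-t\mu_\P)/(t\sigma_\P)$ by $W(t)/t$ with error $\bar o_\P(t^\gamma\log t)$. \cref{condition:strong-consistent-var-estimator} handles $\hat\sigma_t/\sigma_\P$, exactly as in Steps~1--2 of \cref{lem:general-lemma}. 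After the time change $t=ms$ and Brownian scaling $W(ms)\overset{d}{=}\sqrt m\,W(s)$ (as in \cref{lemma: calc-quantile}), the statistic becomes
\[
\sup_{s\in[1,\Delta]}\abs{W(s)s^{q-1}+s^q m^{1/2}h/\sigma_\P}+\text{(error)},
\]
and this is precisely the quantity inside \eqref{eq:beta-power-two-sided}. Under $h=\mu_1/\sqrt m$ the drift $s^q m^{1/2}h/\sigma_\P=s^q\mu_1/\sigma_\P$ does not depend on $m$. So the coupling error vanishes and we recover the proposition's limit. The corollary follows by substituting back $\mu_1=\sqrt m\,h$.

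The main obstacle is controlling the variance-estimation error in the presence of the drift. The multiplicative factor $\sigma_\P/\hat\sigma_t=1+\bar o(1/\log t)$ now multiplies a term of size $\sqrt m\,h$, which is $O(1)$ only along local alternatives. So the approximation is rigorous only when $\sqrt m\,h$ stays bounded. For fixed $h$ with $m$ large, both sides are near $1$ and the ``$\approx$'' is trivially accurate. To finish, I would invoke continuity of the drifted law at the level $\Psi\n(1-\alpha;\Delta,q)$. Its Cameron--Martin shift has finite norm by \cref{lemma:valid-continuity} (since $a=1$, $b=\Delta<\infty$), so the anticoncentration of \cref{lemma:cameron-martin-lemma} applies to the shifted process and the vanishing perturbations do not change the limiting probability.
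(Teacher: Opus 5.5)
Your proposal is correct and takes essentially the paper's route. The paper treats the corollary as \cref{prop:power-analysis-two-sided-asympch} read at finite $m$ with $\mu_1=\sqrt m\,h$, that proposition being proved by shifting the coupling of \cref{lemma:dist-uniform-kmt} by the drift and rerunning Steps~1--4 of \cref{lem:general-lemma}; your time change and Brownian scaling reproduce exactly the drifted functional, and your Cameron--Martin anticoncentration for the drifted law fills in the continuity step the paper calls ``straightforward.'' One small correction: the variance-estimation error is not the obstacle you describe, since Step~1 absorbs $\sigma_\P/\hat\sigma_t$ multiplicatively into the fixed threshold $x$ whatever the size of $\sqrt m\,h$.
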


While it is straightforward to interpret the formula in \cref{eq:beta-power-two-sided} for parameters $m, h, \sigma_\P$ it is less clear the effect of $\Delta, q$ on power. The parameters $\Delta$ and  $q$ appear on either side of the equation so it is not obvious the effects on power. Hence we turn to simulations since the probability statement in \cref{eq:beta-power-two-sided} is unlikely to have a closed-form formula.

\begin{figure}[H]
    \centering
\includegraphics[width=0.95\linewidth]{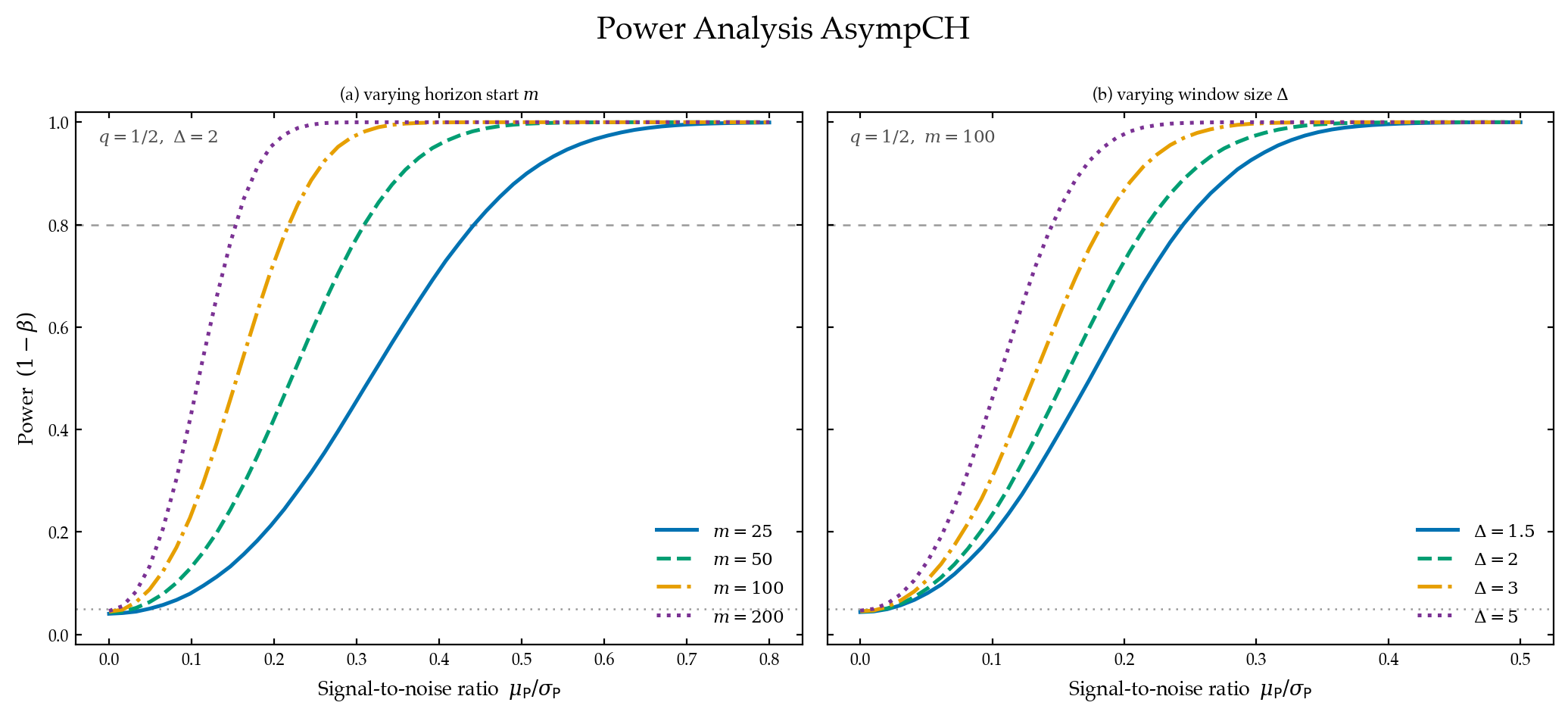}
    \caption{Power analysis over $m, \Delta$. As $m$ increases the AsympCH is more powerful. Taking $m\to \infty$ will yield a power 1 test. In addition, if $\Delta$ increases, the AsympCH is more powerful. \figcode{figure5}}
    \label{fig:power-analysis-m-delta}
\end{figure}

\paragraph{Power Stopping-time Tradeoff}

Power is one metric an experimenter can use to design an experiment. Power has been used to motivate traditional sequential tests from Wald \citep{wald_sequential_1945} and also group sequential methods. In contrast, the anytime-valid inference and e-values literatures find that expected stopping time is the  fundamental object of study \citep{ramdas_hypothesis_2025, waudby-smith_universal_2025}. Partially, the emphasis on expected stopping time is a result of the fact that anytime-valid inference methods should have ``power-one" in the limit \citep{agrawal_stopping_2025, darling_iterated_1967}. The results from \cref{fig:power-analysis-q} suggest that to maximize power, the analyst should take a large $q$. If the experimenter instead uses the expected rejection time as their metric, the results from \cref{fig:expected-stopping-time-q} suggest to take $q  = 1/2$. This raises the question about a tradeoff between power and expected rejection time. We find a similar tradeoff when comparing AsympCHs against group sequential methods.

\begin{figure}[H]
    \centering
    \begin{subfigure}[t]{0.48\linewidth}
        \centering
        \includegraphics[width=\linewidth]{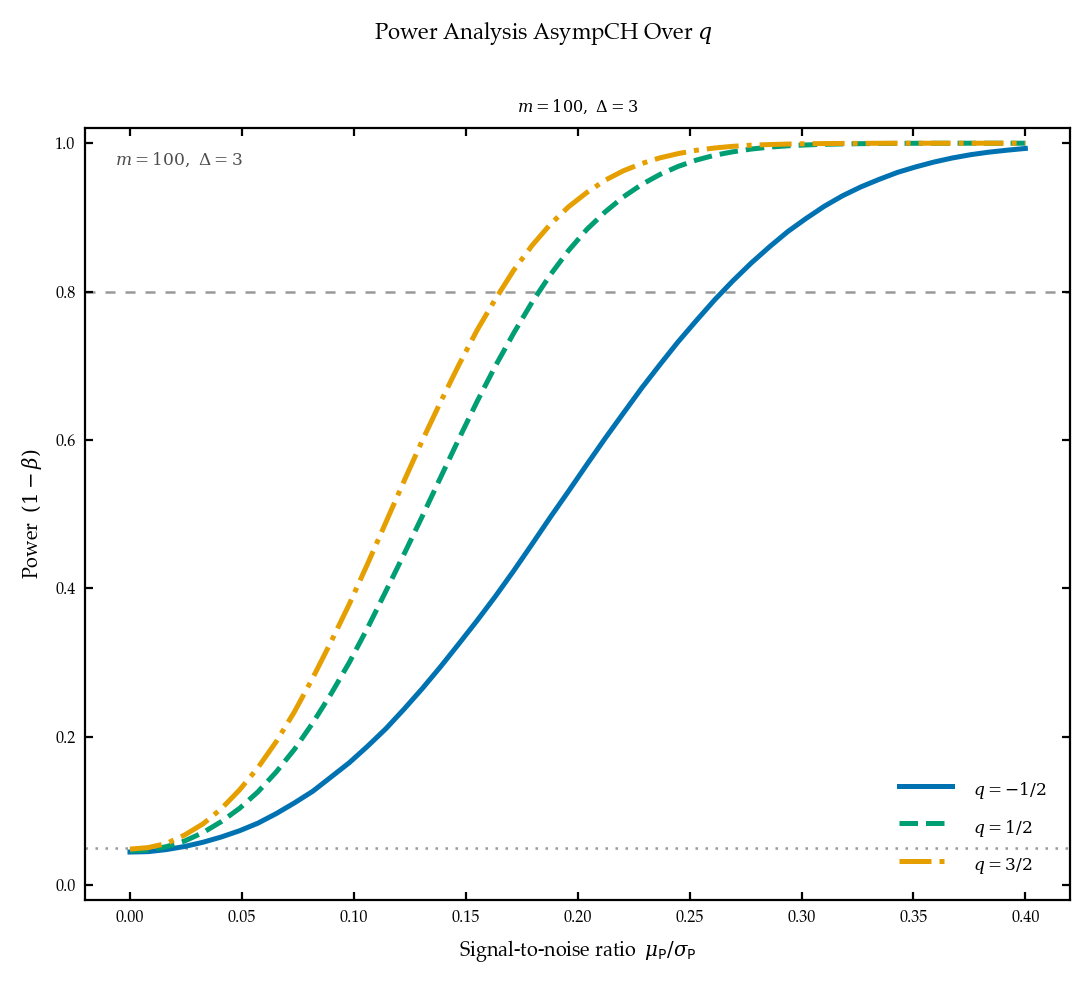}
        \caption{Power of the confidenece horizon while changing $q$.}
        \label{fig:power-analysis-q}
    \end{subfigure}
    \hfill
    \begin{subfigure}[t]{0.48\linewidth}
        \centering
        \includegraphics[width=\linewidth]{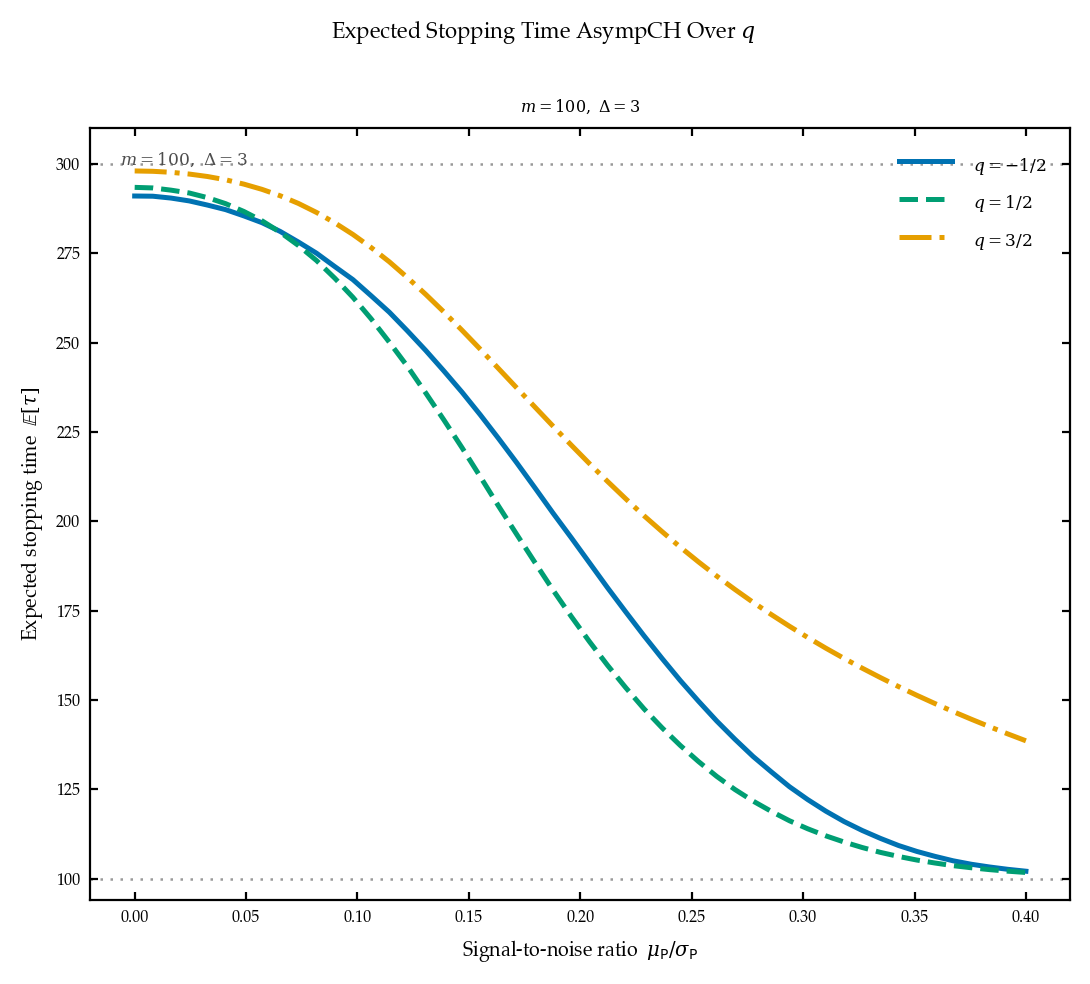}
        \caption{Expected stopping time while changing $q$.}
        \label{fig:expected-stopping-time-q}
    \end{subfigure}
    \caption{Power analysis for the AsympCH for different values of $q$. We see that as $q$ increases, the AsympCH becomes more powerful. This trend remains as $m$ and $\Delta$ vary.\figcode{figure6}
    }
\end{figure}

\begin{proof}[Proof of \cref{prop:power-analysis-two-sided-asympch}]\label{proof:power-analysis}

\begin{align}
\lim_{m \to \infty}\PM{\forall t \in [m, \Delta m]: \mu_0 \in \bar C_{t}^{(m, \Delta)}}
&= \lim_{m \to \infty}\PM{\forall t \in [m, \Delta m]: \mu_0 \in \hat \mu_t \pm \frac{\hat \sigma_t m^{q-1/2}\Psi\n (1-\alpha; \Delta, q)}{t^q}}\\
&= \lim_{m \to \infty}\PM{\forall t \in [m, \Delta m]: \abs{\hat \mu_t - \mu_0}\leq \frac{\hat \sigma_t m^{q-1/2}\Psi\n (1-\alpha; \Delta, q)}{t^q}}\\
&= \lim_{m \to \infty}\PM{\forall t \in [m, \Delta m]: \abs{\frac{S_t - t\mu_0}{t \hat\sigma_t}}\frac{t^q}{m^{q-1/2}}\leq \Psi\n (1-\alpha; \Delta, q)}
\end{align}
When we proved that the AsympCH controlled type-I error, we appealed to the fact that $S_t - t \mu_\P$ was a mean zero random variable. Presently, we do not assume that $\EE{S_t - t\mu_0}[\P] = 0$. Instead we have that $\EE{X_1}[\P] = \mu_\P = \mu_0 + \frac{\mu_1}{\sqrt{m}}$. Nevertheless, we can rearrange \cref{lemma:dist-uniform-kmt} for the result. Recall \cref{lemma:dist-uniform-kmt} states
\[
\abs{\frac{\tsum{X_i - t\mu_\P}}{t \sigma_\P} - \frac{W(t)}{t}} = \bar o_{\cP}\paren{\frac{1}{\sqrt{t}}},
\]which implies that 
\begin{align}
\bar o_{\cP}\paren{\frac{1}{\sqrt{t}}} &= 
\abs{\frac{\tsum{X_i}}{t \sigma_\P} - \frac{W(t)}{t} - \frac{\mu_\P}{\sigma_\P}}\\
&=\abs{\frac{\tsum{X_i} - t\mu_0}{t \sigma_\P} - \frac{W(t)}{t} - \frac{\mu_\P - \mu_0}{\sigma_\P}}\\
&= \abs{\frac{\tsum{X_i} - t\mu_0}{t \sigma_\P} - \frac{W(t)}{t} - \frac{\mu_1}{m^{1/2}\sigma_\P}}\label{eq:partial-sum-wiener-drift}
\end{align}

Returning to Step 2 in \cref{lem:general-lemma} we can say that 

\[
\PM{\sup_{t \in m\cT} \abs{\frac{S_t - t\tilde \mu_{\P, t}}{\tilde \sigma_{\P, t}t}}\frac{t^q}{m^{q-1/2}} \leq x} < \PM{\sup_{t \in m\cT} \abs{\frac{W(t)}{t} + \frac{\mu_1}{m^{1/2}\sigma_\P}}\frac{t^q}{m^{q-1/2}} \leq x} + \eps.
\]It is straightforward after following the subsequent steps in the proof that 
\[
\lim_{m \to \infty}\PM{\forall t \in [m, \Delta m]: \mu_0 \in \bar C_{t}^{(m, \Delta)}} = \PM{\sup_{s \in [1, \Delta]} \abs{W(s)s^{q-1}  + \frac{s^q\mu_1}{\sigma_\P}} \leq \Psi \n(1-\alpha; \Delta, q)}.
\]

\end{proof}

\end{document}